\documentclass[11pt, reqno]{amsart}

\usepackage[utf8]{inputenc}
\usepackage[T1]{fontenc}
\usepackage[english]{babel}
\usepackage{amsmath, amssymb, amsthm} 
\usepackage{mathtools} 
\usepackage{mathrsfs}   
\usepackage{bbm}        
\usepackage{dsfont}     
\usepackage{braket}     
\usepackage[colorlinks=true, linkcolor=blue, citecolor=red]{hyperref}
\usepackage{enumitem}
\usepackage{physics}
\usepackage{stmaryrd}
\usepackage{empheq} %encadrer plusieurs lignes
\usepackage{tikz}        % Pour l'environnement {tikzpicture} et les dessins
\usepackage{pgfplots}
\pgfplotsset{compat=1.18}
\usetikzlibrary{arrows.meta,calc,patterns,angles,quotes,decorations.pathreplacing}
\usepgfplotslibrary{fillbetween}
\usepackage{subcaption}  % Pour l'environnement {subfigure}
\usepackage{makecell}

\usepackage{mdframed}
\surroundwithmdframed[
  linewidth=0.5pt,
  skipabove=12pt,         % Espace entre le texte précédent et le cadre
  skipbelow=12pt,         % Espace entre le cadre et le texte suivant
  innertopmargin=0pt,     % Espace entre le haut du cadre et le titre
  innerbottommargin=5pt,
  innerleftmargin=5pt,
  innerrightmargin=5pt
]{theorem}

\usepackage{geometry}
\usepackage{etoolbox}

\patchcmd{\section}{\normalfont}{\normalfont\LARGE}{}{}

\makeatletter
\renewcommand{\subsection}{\@startsection{subsection}{2}%
  \z@{.7\linespacing\@plus\linespacing}% Espace AVANT le titre
  {.5\linespacing}% Espace APRÈS le titre (Positif = saut de ligne)
  {\normalfont\large\bfseries}}% Style (Gras et Large)
\makeatother

\usepackage{graphicx} 
\usepackage{float}      
\usepackage{subcaption} 
\graphicspath{ {./Figures/} }

\newcommand{\pder}[2]{\frac{\partial #1}{\partial #2}}

\newcommand{\xu}{x_1} %axe x_1
\newcommand{\xd}{x_2} %axe x_2 (comme ça, x = (x_1,x_2))
\newcommand{\R}{\mathbb{R}} %réels
\newcommand{\C}{\mathbb{C}} %complexes
\newcommand{\N}{\mathbb{N}}%entiers
\newcommand{\D}{\mathcal{D}}%domaine d'opérateur
\newcommand{\Q}{\mathcal{Q}}%domaine de forme
\newcommand{\q}{\mathfrak{q}} %forme quadratique
\newcommand{\Rx}{\mathbb{R}_x}%espace des configurations petit système
\newcommand{\Ry}{\mathbb{R}_y}%espace des configurations champ
\newcommand{\Rk}{\mathbb{R}_k}%espace des moments
\newcommand{\supp}{\mathrm{supp}\mspace{2mu}} %support fonction
\DeclareMathOperator{\Ran}{Ran} %range of an operator
\DeclareMathOperator{\argmin}{argmin}
\newcommand{\ext}{\mathrm{ext}} %text ext
\newcommand{\tot}{\mathrm{tot}} %text tot
\newcommand{\slim}[1]{\underset{#1}{\mathrm{s-lim }}} %limite forte
\newcommand{\dist}{\mathrm{dist}}
\newcommand{\Hi}{\mathscr{H}} %espace de Hilbert
\newcommand{\Ha}{\mathcal{H}} %Hamiltonien
\newcommand{\ha}{\mathfrak{h}} %Hamiltonien petit système
\newcommand{\dP}{d \mspace{1mu}\mathbb{P}} %mesure de probabilité infinitésimale
\newcommand{\F}{\mathcal{F}} %Transfo de fourier
\newcommand{\ac}{a^{\ast}} %opérateur création
\newcommand{\Fs}{\mathscr{F}_s}%espace de fock(input)
\newcommand{\Fsf}{\mathscr{F}^{\mathrm{fin}}_s} %espace de fock bosonique nombre fini de particules
\newcommand{\ldd}{L^2(\R^2)}%L22 passe pas je mets les initiales
\newcommand{\ciz}{\mathscr{C}^{\infty}_0} %fonctions C_inf à support compact
\newcommand{\ci}{\mathscr{C}^{\infty}}

\newcommand{\free}{\mathrm{free}} %pour pas réécrire le \mathrm à chaque fois
\newcommand{\eff}{\mathrm{eff}} %pour pas réécrire le \mathrm à chaque fois
\newcommand{\fin}{\mathrm{fin}} %pour pas réécrire le \mathrm à chaque fois
\newcommand{\el}{\mathrm{el}}

\newcommand{\fib}{\mathrm{fib}}
\newcommand{\ind}{\mathds{1}} % Indicatrice, utilise le package dsfont  
\newcommand{\alten}{\,\widehat{\otimes}\,} %produit tensoriel algébrique
\newcommand{\sess}{\sigma_{\mathrm{ess}}}%spctre essentiel
\newcommand{\sdisc}{\sigma_{\mathrm{disc}}}%spectre discret
\newcommand{\dG}{\mathrm{d}\Gamma} %opérateur de seconde quantification
\newcommand{\Nb}{\mathcal{N}} %opérateur nombre de particules
\newcommand{\Gc}{\check{\Gamma}} %foncteur de seconde quantification check
\newcommand{\Cut}{\mathrm{Cut}} %cutlocus
\newcommand{\straight}{\mathrm{straight}} %straightening operator
\newcommand{\fsum}{\dot{+}} %somme au sens des formes

\newcommand{\cc}[1]{\overline{#1}} %conjugué complexe
\newcommand{\cRM}[1]{\MakeUppercase{\romannumeral #1}} %nombre romain
\newcommand{\iph}[2]{\left\langle #1 \middle\vert #2 \right\rangle_{\Hi}}

\newcommand{\intd}[2]{\int_{\R^{#1}}^{\oplus} #2 \, dx} %intégrale directe

\newcommand{\nld}[1]{\left\lVert #1\right\rVert_{L^2}} %norme L^2
\newcommand{\nop}[1]{\left\lVert #1 \right\rVert_{\mathrm{op}}} %norme d'opérateur
\newcommand{\snfs}[1]{\left\lVert #1\right\rVert^2_{\mathscr{F}_s}} %squared norm espace de fock bosonique
\newcommand{\nldfs}[1]{\left\lVert #1 \right\rVert_{L^2 \otimes \mathscr{F}_s}} %norm produit tensoriel L^2 fock bosonique
\newcommand{\snldfs}[1]{\left\lVert #1 \right\rVert^2_{L^2 \otimes \mathscr{F}_s}}%squared norm produit tensoriel L^2 fock bosonique
\newcommand{\ipld}[2]{\left\langle #1 \middle\vert #2 \right\rangle_{L^2}} %produit scalaire L^2
\newcommand{\ipldfs}[2]{\left\langle #1 \middle\vert #2 \right\rangle_{L^2 \otimes \mathscr{F}_s}} %produit scalaire produit tensoriel L^2 fock bosonique
\newcommand{\ens}[2]{\left\{ #1 \,\middle\vert\, #2 \right\}}

\providecommand{\littletaller}{\vphantom{\sum}} 

\newcommand*\fonction[5]{
#1 \, \colon \left\{\begin{alignedat}{2}  #2   &\to      #3\\
                                #4    &\mapsto  #5
\end{alignedat} \right. \kern-\nulldelimiterspace}
\newcommand\restr[2]{{% we make the whole thing an ordinary symbol
  \left.\kern-\nulldelimiterspace % automatically resize the bar with \right
  #1 % the function
  \littletaller % pretend it's a little taller at normal size
  \right|_{#2} % this is the delimiter
  }}

\newtheorem{theorem}{Theorem}[section]
\newtheorem{lemma}[theorem]{Lemma}
\newtheorem{proposition}[theorem]{Proposition}
\newtheorem{corollary}[theorem]{Corollary}

\theoremstyle{definition}
\newtheorem{definition}[theorem]{Definition}
\newtheorem{remark}[theorem]{Remark}

\newtheorem*{notation}{Notation}

\newtheorem{innercustomthm}{Theorem}

\surroundwithmdframed[
  linewidth=0.5pt,
  skipabove=12pt,
  skipbelow=12pt,
  innertopmargin=0pt,
  innerbottommargin=5pt,
  innerleftmargin=5pt,
  innerrightmargin=5pt
]{innercustomthm}

\title[Spectral Analysis of a Waveguide Coupled to a Quantum Field]{Spectral Analysis of a Soft  Quantum Waveguide Coupled to a Massive Quantum Scalar Field}

\author[B. Alvarez]{Benjamin ALVAREZ}
\address[Benjamin Alvarez]{%
  Univ Toulon, Aix Marseille Univ, CNRS, CPT, Toulon, France
  \newline
  \normalfont\ttfamily benjamin.alvarez@univ-tln.fr}

\author[H. Gouttenegre]{Hugo GOUTTENEGRE}
\address[Hugo Gouttenegre]{%
  Univ Toulon, Aix Marseille Univ, CNRS, CPT, Toulon, France
  \newline
  \normalfont\ttfamily hugo-gouttenegre@etud.univ-tln.fr}

\date{\today}

\begin{document}

\begin{abstract}
We study a Nelson-type Hamiltonian describing a massive spinless non-relativistic particle moving in a planar soft quantum waveguide and linearly coupled to a massive scalar bosonic field. Our spectral analysis relies on comparing the curved waveguide with its straight counterpart. For the straight Hamiltonian, longitudinal translation invariance yields a fiber decomposition, and we prove that its spectrum is purely essential and gapless. For the curved Hamiltonian, we establish an HVZ-type formula for the bottom of the essential spectrum, involving that of the straight system and the one-boson threshold. We also derive a sufficient condition for the existence of non-empty discrete spectrum in the curved case, expressed in terms of an effective one-dimensional Schrödinger operator.
\end{abstract}

\maketitle
\setcounter{tocdepth}{2} % mettre 2 pour avoir les sous-sections
\tableofcontents

%================================================================
% --- SECTIONS ---
%================================================================

% 1) Introduction
\section{Introduction}
\label{chap:intro}

A standard framework in mathematical physics is what physicists call non-relativistic quantum electrodynamics (NRQED), which describes a particle interacting with a quantized electromagnetic field at low energies, while neglecting relativistic phenomena and particle-antiparticle pair creation. To study, for example, the fine structure of hydrogen or the behavior of light in matter, it is much more efficient to simplify the theory by keeping only the truly relevant degrees of freedom: slow electrons and transverse photons. Consequently, NRQED is an effective field theory that captures the relevant low-energy physics while avoiding unnecessary complications, and the work presented here falls within this framework.

\bigbreak

In this context, we focus on a physical setup closely related to models
considered in the physics literature (see, e.g., \cite{physics_waveguide_QED}), but which, in the form studied here, has yet to be investigated from a rigorous mathematical perspective. More precisely, we consider a \textit{spinless, charged, massive non-relativistic particle propagating in a waveguide and coupled to a quantum field.} In the context of quantum mechanics, waveguides have been widely studied (see, e.g.,~\cite{pv_book} and references therein), but this is not the case for quantum waveguides coupled to a quantum field: to the best of our knowledge, this is the first mathematical study of its kind. This leads us to \textit{propose a simplified model, for which we establish preliminary results}. These will serve as a methodological foundation, to be refined and extended as the model itself becomes more elaborate in future work.

\bigbreak

To construct this model, we must make specific choices regarding both the field interaction and the nature of the waveguide. Regarding the interaction, after the simplifications justified in Appendix~\ref{app:model} (and references therein), we obtain a \textit{Nelson-type Hamiltonian}. In particular, these simplifications lead us to consider a \textit{massive scalar bosonic quantum field and a linear coupling between the particle and the field}.

\bigbreak

Regarding the nature of the waveguide, one can distinguish between two types of waveguides: hard waveguides and soft waveguides. Hard waveguides are those where the particle is physically confined in the waveguide, meaning that the free Hamiltonian is described for instance by a Dirichlet Laplacian; they are comprehensively treated in \cite{pv_book}. Soft waveguides are described through a potential well playing the role of the waveguide: there is no physical boundary for the particle, but its propagation is constrained by the potential well; see, e.g., the review~\cite{review_soft_waveguides} by Exner. Physically, soft waveguides often provide a more realistic description, but they can be approximated by hard waveguides when the potential well is sufficiently deep. Mathematically, we are interested in them because they enable us to define the problem over the whole configuration space rather than just within the waveguide: it is then much simpler to define the quantum field. \\
Consequently, as a first step in the study of this physics problem, \textit{we will consider soft waveguides}. However, they have only been investigated recently, the first extensive study on the subject being the paper \cite{pv_spsqv} by Exner. This implies that these objects are not yet fully understood, which constitutes both a primary challenge and a motivation for the model under consideration.

\bigbreak

Beyond these intrinsic features, it turns out that our model does not fit perfectly into the established mathematical frameworks of the literature, which is precisely what makes it of interest. For instance, these standard frameworks typically fall into two main categories. The first considers a confined particle, leading to a small system with compact resolvent (see, e.g.,~\cite{acqftmpfh, lemma_power_dG_gerar_moller}). The second is the atomic setting, where the potential --- such as one generated by a nucleus --- vanishes at infinity and is thus relatively compact with respect to the Laplacian (see, e.g.,~\cite{analisa, gsnrqed}). Our model falls into neither of these standard categories, since we consider a waveguide extending infinitely in the longitudinal direction and generated by a bounded potential well, so that the resulting potential is non-confining and does not vanish at infinity. Although this may seem like a minor detail, it actually complicates the standard proof strategies. \\
A final standard framework is one where the Hamiltonian is translation invariant (see, e.g.,~\cite{moller_translation_invariant_nelson_model, thomas_thesis}). Once again, our model does not fit into this setting, although it remains partially applicable to a specific case that will be highly useful to us, namely the straight waveguide. Indeed, in such a case, the system exhibits translation invariance along the longitudinal direction of the guide.

\bigbreak

Because standard frameworks are not perfectly suited to our model, we rely on an additional source of inspiration: the methods developed for the spectral analysis of quantum waveguides. In the usual theory of quantum waveguides, a central idea is to treat a curved guide as a geometric perturbation of its associated straight guide, since the curvature may affect the propagation of the particle. This geometric intuition remains relevant for our model, but the coupling to a quantized field enriches the mathematical structure and requires adapted arguments.

\bigbreak

Following this general idea, Section~\ref{sec:ess_spec} exploits this comparison strategy to obtain results regarding the essential spectrum. To address the more delicate question of bound states for the curved system, we derive in Section~\ref{sec:binding} a \textit{binding condition}, which should be understood as a first step toward proving the existence of geometry-induced bound states, a problem left for future work. Sections~\ref{sec:model} and~\ref{sec:prelim} are devoted to presenting the model and preliminary results.

\bigbreak

We conclude this introduction by summarizing the main results of the paper. First, exploiting the longitudinal translation invariance of the straight waveguide, we decompose the straight Hamiltonian into momentum fibers and prove that its spectrum is purely essential and gapless; see Theorem~\ref{sec:ess_spec:theorem_spectrum_straight_waveguide}. For the curved waveguide, we prove an HVZ-type theorem by combining IMS localization with Persson's formula in order to compare the curved system with the straight one at infinity. We obtain that the bottom of the essential spectrum of the curved Hamiltonian is determined by the bottom of the essential spectrum of the straight system and by the one-boson threshold; see Theorem~\ref{sec:ess_spec:HVZ_theorem_curved_waveguide}. \\
Finally, we derive a sufficient condition for the existence of discrete spectrum below the essential spectrum of the curved Hamiltonian. We first prove that the infimum of the spectrum of the straight Hamiltonian is attained by a momentum fiber and that the corresponding fiber Hamiltonian admits a ground state; see Theorem~\ref{sec:binding:theorem_grond_state_fibered_hamiltonian}. We further show that this infimum is attained at zero longitudinal momentum; see Theorem~\ref{sec:binding:eta_0_equal_0}. Starting from the corresponding fiber ground state, we construct a trial state $\Upsilon$ for the curved Hamiltonian $\Ha_C$. Evaluating $\Ha_C$ on this trial state leads to an inequality involving an effective one-dimensional Schrödinger operator: if this effective operator has a negative-energy state, then $\Ha_C$ has non-empty discrete spectrum; see Theorem~\ref{sec:binding:theorem_binding_condition}. We refer to this negative-spectrum criterion as the binding condition. 
% 2) Physical and Mathematical Model
\section{Presentation of the model}
\label{sec:model}

\subsection{The quantum waveguide}\label{subsec:pres_waveguide}

We consider \textit{a spinless, non-relativistic, charged and massive particle} propagating in an \textit{infinite planar soft quantum waveguide without self-intersections}. Following the construction of \cite{pv_spsqv}, let $\gamma:\R\to\R^2$ be a unit-speed parametrization of a planar curve without self-intersections, satisfying the following hypotheses:
\begin{enumerate}[label=(H\arabic*$_{\mathrm{geom}}$), leftmargin=*, itemsep=10pt]
    \item\label{h1_geom_regularity_curve} \underline{Regularity:} in order to have a well-defined $\mathscr{C}^2(\R)$ signed curvature $\kappa$, we require the curve $\gamma$ to be $\mathscr{C}^4(\R)$. In particular $\kappa : \R \to \R$ is well defined and given by $\kappa(s)  = \dot{\gamma}_2(s)\ddot{\gamma}_1(s) - \dot{\gamma}_1(s) \ddot{\gamma}_2(s)$, where $\gamma = (\gamma_1,\gamma_2)$.  
    \item\label{h2_geom_compact_curvature} \underline{Compact localization of the curvature:} requiring $\kappa \in \mathscr{C}^2_0(\R)$ localizes the curved part of the waveguide in a compact region of the plane.
\end{enumerate}
Denoting by $\varphi : \R^2 \to \R^2$ the map expressing a point $x$ of the plane in the curvilinear coordinates $(s,u)$ (see Appendix \ref{app:curve:curv_coord}), i.e. $x = \varphi(s,u)$, we build an infinite planar strip of width $2a$ by setting:  
\begin{align*}
    \Omega^a := \ens{x\in\R^2}{\dist(x,\gamma)<a}.
\end{align*}
Unique curvilinear coordinates $(s,u)$ can be defined for almost every point in $\R^2$, except on the cut locus of the curve $\gamma$, denoted $\Cut(\gamma)$, which is a Lebesgue-null set (see Appendix \ref{app:curve:cutlocus} for the precise definition and relevant properties of the cut locus). By definition, there exists an open set $\mathcal{O}_\gamma$ such that 
\begin{align*}
    \fonction{\varphi}{\mathcal{O}_\gamma}{\R^2\setminus \Cut(\gamma)}{(s,u)}{\varphi(s,u) =: x} \hspace{1cm} \text{is a diffeomorphism}.
\end{align*}
Using those curvilinear coordinates, we make sure that the strip $\Omega^a$ \textit{does not intersect itself} with the following assumption: 
\begin{enumerate}[label=(H\arabic*$_{\mathrm{geom}}$), leftmargin=*]
    \setcounter{enumi}{2}
	\item\label{h3_geom_no_self_intersection} \underline{No self-intersection:} no point of the strip $\Omega^a$ is in the cut locus of the curve $\gamma$, i.e. 
    \begin{align*}
        \Omega^a \subset \R^2\setminus \Cut(\gamma) \iff \R \hspace{2pt}\times \hspace{2pt} ]-a,a[ \hspace{4pt} \subset \mathcal{O}_\gamma, 
    \end{align*}
    or equivalently $\varphi |_{\R \times ]-a,a[}$ is a diffeomorphism onto $\Omega^a$.
\end{enumerate}

\begin{remark}\label{sec:model:remark_jacobian}
    Let $J_\varphi$ be the Jacobian of the map $\varphi$: one has $J_\varphi(s,u) = 1  +  u \hspace{1pt} \kappa(s)$ (see, e.g., \cite[Section 1.1]{pv_book}). If $\varphi$ is a global diffeomorphism on $\mathcal{O}_\gamma$, it is in particular a local one; hence its Jacobian cannot vanish and $1 \hspace{2pt} + \hspace{2pt} u \hspace{1pt} \kappa(s) \neq 0$ for all $(s,u) \in \mathcal{O}_\gamma$.
\end{remark}

\noindent Since the quantum waveguide is soft, it is defined through a potential, the latter being defined as a potential well on the transverse variable replicated along the longitudinal one: 
\begin{enumerate}[label=(H\arabic*$_{\mathrm{pot}}$), leftmargin=*]
    \setcounter{enumi}{0}
    \item\label{hyp:potential} \underline{Potential:}  the transverse potential well is a \textit{non-positive} function $V \in L^\infty(\R)$, $V\not\equiv 0$, supported in $[-a,a]$. If the waveguide is straight, its potential $V_S$ is obtained by replicating $V$ along the longitudinal direction $\xu$:
    \begin{align*}
        V_S(x_1,x_2) := V(x_2).
    \end{align*}
    If the waveguide is curved, its potential $V_C$ is obtained by replicating $V$ along the arc-length coordinate $s$:
    \begin{align*}
        \forall x = \varphi(s,u) \in \R^2\setminus \Cut(\gamma),  \hspace{5pt} V_C(x) = V_C\hspace{-2pt}\left(\varphi(s,u)\right) := V(u). 
    \end{align*}
\end{enumerate}

\begin{notation}
    In the following, $V_\bullet$ stands for either $V_S$ or $V_C$, and all statements involving $V_\bullet$ hold for both choices.
\end{notation}

\noindent One can see that this definition of the potential is \textit{non-confining}, meaning that there can be quantum tunneling between different points of the waveguide. To simplify the computation of the essential spectrum, we want to avoid quantum tunneling at infinity (since a residual coupling would mix the transverse bound states). We therefore require the distance between the two branches of the waveguide to grow indefinitely, so that they asymptotically decouple and each becomes equivalent to a straight waveguide. To do so, we assume: 
\begin{enumerate}[label=(H\arabic*$_{\mathrm{geom}}$), leftmargin=*]
    \setcounter{enumi}{3}
    \item\label{h4_geom_not_u_shaped} \underline{Not U-shaped}: assuming $\abs{\gamma(s)-\gamma(s')} \to +\infty$ as $\abs{s-s'} \to +\infty$ ensures that the curve is not U-shaped. Together with Assumption~\ref{h2_geom_compact_curvature}, it means that the cut radii maps $c_\pm$ go to infinity with the arc length (see Proposition \ref{prop:appendixA_cut_radius_infty}):
    \begin{align*}
         c_\pm(s) \underset{\abs{s} \to + \infty}{\longrightarrow} + \infty.
    \end{align*}
\end{enumerate}

\noindent Consequently, our small system is described by the \fbox{\textit{Schrödinger operator} $-\Delta + V_\bullet(X)$}, where, for $V_\bullet(X)$, we use the following notation:

\begin{notation}
    For a function $f$ on the configuration space, we denote by $f(X)$ the associated multiplication operator on $L^2(\R^n, dx)$. If it is defined on $\mathcal{O}_\gamma$, we denote by $f(S,U)$ the associated multiplication operator on $L^2(\mathcal{O}_\gamma, ds \mspace{2mu} du)$.
\end{notation}
\begin{notation}
    The measures $dx$ and $ds \mspace{2mu} du$ are dropped from now on, unless clarity  requires otherwise.
\end{notation}

Thanks to all of this, we can define the \textit{unitary straightening operator}: 
\begin{align*}
    \fonction{U_{\text{straight}}}{L^2\big(\R^2 \big)}{L^2(\mathcal{O}_\gamma)}{f}{\abs{\det J_\varphi}^{1/2}\,  f\circ\varphi } 
     \text{ with inverse }  \hspace{0.2cm}       \fonction{U^{\ast}_{\text{straight}}}{L^2(\mathcal{O}_\gamma)}{L^2\big(\R^2\big)}{g}{\abs{\det J_{\varphi^{-1}}}^{1/2}\,  g\circ\varphi^{-1} }.
\end{align*}
The latter enables us to transform our Schrödinger operator by mapping the Laplacian to the modified Laplace--Beltrami operator $-\Delta_\gamma$. On $\ciz(\mathcal{O}_\gamma)$, it acts as follows (see~\cite[(1.7), (1.8)]{pv_book}):
\begin{align*}
    U_\straight (-\Delta) U_\straight^\ast
    =: -\Delta_\gamma
    = - \partial_s (1 + u \kappa)^{-2} \partial_s
      - \partial_u^2 + V_\kappa(S,U).
\end{align*}
Here, the curvature-induced potential $V_\kappa$ is given by
\begin{align*}
    V_\kappa(s,u)
    := -\frac{\kappa^2(s)}{4\left(1 + u\,\kappa(s)\right)^2}
       - \frac{5}{4}
         \frac{u^2 \dot{\kappa}(s)^2}
              {\left(1 + u\,\kappa(s)\right)^4}
       + \frac{u\,\ddot{\kappa}(s)}
              {2\left(1 + u\,\kappa(s)\right)^3},
\end{align*}
which is well defined pointwise thanks to Remark~\ref{sec:model:remark_jacobian} and Hypothesis~\ref{h3_geom_no_self_intersection}. As for the potential $V_C$, it is mapped to the transverse well $V$ by $U_{\text{straight}}$, so that:
\begin{align*}
U_\straight (-\Delta + V_C(X)) U_\straight^\ast = -\Delta_\gamma + V(U).
\end{align*}

Having switched to $(s,u)$ coordinates, we can now see the influence of the curvature and understand in what sense the curved waveguide is a perturbation of the straight one:
\begin{itemize}
\item \underline{Straight waveguide}: it is characterized by $\kappa = 0$ everywhere so that $-\Delta_\gamma = -\partial_s^2-\partial_u^2$, the cut locus $\Cut(\gamma)$ is empty and $\mathcal{O}_\gamma = \R^2$ since for all $s$, $c_\pm(s) = +\infty$. The Hamiltonian of the system is: 
\begin{align*}
\ha_{\mspace{1mu}\el,S} = -\partial_s^2 - \partial_u^2 + V(U), 
\end{align*}
on the Hilbert space $L^2(\R^2)$.
\item \underline{Curved waveguide}: it is characterized by $\kappa$ not identically zero, meaning that the cut locus $\Cut(\gamma)$ is not empty,  $\mathcal{O}_\gamma \neq \R^2$ (since $c_\pm$ are not identically $+\infty$) and the curvature's effects are encoded in $-\Delta_\gamma$. The Hamiltonian of the system is:
\begin{equation}\label{eq:curved_electronic_hamiltonian}
    \ha_{\mspace{1mu}\el,C} = 
    \left\{
    \begin{array}{@{}l@{}} 
        -\partial_s (1 + u \kappa)^{-2} \partial_s - \partial_u^2 + V_\kappa(S,U) + V(U) = -\Delta_\gamma+V(U) \quad \text{when } s \in \supp \, \kappa \\[10pt]
        -\partial_s^2 - \partial_u^2 + V(U) \hfill \text{when } s \in \R\setminus \supp \, \kappa 
    \end{array}
    \right.,
\end{equation}
on the Hilbert space $L^2(\mathcal{O}_\gamma) \neq L^2(\R^2)$.
\end{itemize}
Since the action of the Hamiltonian for the curved waveguide exactly matches that of the straight waveguide outside the support of the curvature (modulo the slight difference in their underlying Hilbert spaces), it can naturally be viewed as a localized perturbation of the latter. 

\medbreak

Of course, everything can also be expressed in Cartesian coordinates, and we might switch between them as necessary. \textit{By abuse of notation, we will omit the straightening operator}: 

\begin{table}[H]
    \centering
    \renewcommand{\arraystretch}{1.5}
    \begin{tabular}{|c|c|c|}
        \hline
         & \textbf{Curvilinear coordinates} & \textbf{Cartesian coordinates} \\
         & Hilbert space: $L^2(\mathcal{O}_\gamma, ds\,du)$
         & Hilbert space: $L^2(\R^2,dx)$ \\
        \hline
        \makecell{\textbf{Straight} \\ \textbf{waveguide}}
        &
        \makecell{
            \rule{0pt}{15pt}
            $\ha_{\mspace{1mu}\el,S}
            = -\partial_s^2 \otimes I
            + I \otimes \bigl(-\partial_u^2 + V(U)\bigr)$
            \\[10pt]
            \small
            $\mathcal{O}_\gamma=\R^2$ and the variables separate
            \rule[-5pt]{0pt}{0pt}
        }
        &
        \makecell{
            \rule{0pt}{25pt}
            $\displaystyle
            \begin{aligned}
                \ha_{\mspace{1mu}\el,S}
                &= -\Delta + V_S(X) \\
                &= -\partial_{\xu}^2 \otimes I
                + I \otimes \bigl(-\partial_{\xd}^2 + V(X_2)\bigr)
            \end{aligned}$
            \\[15pt]
            \small
            The variables separate
            \rule[-5pt]{0pt}{0pt}
        }
        \\
        \hline
        \makecell{\textbf{Curved} \\ \textbf{waveguide}}
        &
        \makecell{
            \rule{0pt}{15pt}
            $\ha_{\mspace{1mu}\el,C} = -\Delta_\gamma + V(U)$
            \\[5pt]
            \small
            $\mathcal{O}_\gamma\neq\R^2$ and the variables do not separate
            \rule[-6pt]{0pt}{0pt}
        }
        &
        \makecell{
            \rule{0pt}{15pt}
            $\ha_{\mspace{1mu}\el,C} = -\Delta + V_C(X)$
            \\[5pt]
            \small
            The variables do not separate
            \rule[-6pt]{0pt}{0pt}
        }
        \\
        \hline
    \end{tabular}
    \caption{Comparison of the Hamiltonian representations in curvilinear and Cartesian coordinates.}
    \label{tab:hamiltonian_comparison_coordinates}
\end{table}
\noindent One notices that for the straight waveguide, the curvilinear coordinates $(s,u)$ and the Cartesian ones $(\xu,\xd)$ are exactly the same (i.e. $U_\straight$ is the identity), so that $-\Delta_\gamma = - \Delta$ in that case.

%%% ------------------------------------------------------------
%===============================================================
%%% ------------------------------------------------------------

\subsection{Coupling with the quantum field} 

We now couple the particle in the quantum waveguide to a quantized field. Since the particle is non-relativistic, the physically natural framework would be NRQED, or equivalently the Pauli--Fierz model. In the present work, however, we use the simplified model discussed in Appendix~\ref{app:model}: the electromagnetic field is replaced by a scalar massive bosonic field, and the coupling is taken to be linear. We are therefore led to a Nelson model. \\
For simplicity, we take the bosonic field to be two-dimensional, as is the
configuration space of the particle. Allowing a three-dimensional field would
mainly affect notation and threshold formulas, but not the basic
operator-theoretic arguments.

\smallbreak

As usual in quantum field theory, we describe each boson in the momentum space representation, so the one boson Hilbert space is $L^2(\R^2,dk)$, and the Hilbert space for the bosonic quantum field $\Hi_b$ is the symmetric Fock space over $L^2(\R^2,dk)$ (see Appendix~\ref{app:fock} for the Fock space notations and properties used throughout the paper): 
\begin{align*}
    \Hi_b := \Fs\hspace{-2pt}\left( L^2\big(\R^2, dk\big) \right) =  \bigoplus_{n=0}^{+\infty} \sideset{}{_s}\bigotimes_{i=1}^n L^2(\R^2, dk). 
\end{align*}
If needed, we might describe the quantum field in the configuration space, where the spatial variable is $y = i \nabla_k$, i.e. we might use $\Hi_b \cong \Fs(L^2(\R^2,dy))$. 

\begin{notation}
    From now on, we denote $L^2(\Rk^2) := L^2(\R^2,dk)$ and $L^2(\Ry^2) := L^2(\R^2,dy)$ the momentum and configuration space for the quantum field respectively; we denote $L^2(\Rx^2) := L^2(\R^2,dx)$ the configuration space for the particle. 
\end{notation}

 The free energy observable for the quantum field is given by the \fbox{\textit{second quantization} $\dG(\omega(K))$} of the operator $\omega(K)$ corresponding to the \textit{free energy of a massive boson}:
\begin{enumerate}[label=(H\arabic*$_{\mathrm{field}}$), leftmargin=*]
    \setcounter{enumi}{0}
    \item\label{h1_field_massive} \underline{The quantum field is massive} and the free energy of a massive boson is given by the multiplication operator $\omega(K) = \sqrt{K^2+m^2}$ ($m>0$) on the Hilbert space $L^2(\Rk^2)$. If one works in the configuration space representation, one considers $\omega(-i\mspace{1mu} \nabla_y) = \sqrt{-\Delta_y+m^2}$.
\end{enumerate}
 
\begin{notation}
    For a function $f$ on the momentum space, we denote by $f(K)$ the associated multiplication operator on $L^2(\Rk^2)$.
\end{notation}

The coupling between the quantum field and the particle is assumed to be \textit{scalar and linear}, so it is done through a \textit{Segal field operator at each point} $x \in \R^2$:
\begin{align*}
  \boxed{\phi(v_x)  =  \frac{1}{\sqrt{2}}\,\overline{\vphantom{\big|} a(v_x) + \ac(v_x) }}\, .
\end{align*}
In order to make sense of the latter, we impose the following assumption on $v_x$:
\begin{enumerate}[label=(H\arabic*$_{\mathrm{field}}$), leftmargin=*]
    \setcounter{enumi}{1}
    \item\label{h2_field_interaction} \underline{Interaction kernel $v_x$ and form factor $v$:} for all $x \in \R^2$ and almost all $k$, $v_x(k) = v(k) e^{-ik\cdot x}$, where $v \in L^2(\Rk^2)$, so that the Segal field operator is well defined. If one works in the configuration space representation, one considers $v_x = v(\cdot-x)$. The map $(x,k) \mapsto v_x(k)$ is called the interaction kernel; $v$ is called the form factor.
\end{enumerate}

The standard, physically motivated expression for $v$ is $\chi \mspace{2mu} \omega^{-1/2}$, with $\chi \in \ciz(\R^2)$ a radial ultraviolet cutoff (see Appendix \ref{app:model} or \cite[(3.4)]{acqftmpfh}, \cite[Section 14.5]{Arai}, \cite[Assumption 4.2]{hiroshima2019ground}). Given this physically motivated form, we introduce the following assumptions, which will be used in certain parts of the sequel and are verified for the physical expression:

\begin{enumerate}[label=(H\arabic*$_{\mathrm{field}}$), leftmargin=*]
    \setcounter{enumi}{2}
    \item\label{h3_field_rotation_invariance} \underline{Rotation invariance of the form factor $v$}:  $v(\mathcal{R} \mspace{3mu} \cdot) = v$ for every rotation matrix $\mathcal{R}$ acting on $\R^2$ (see, e.g., \cite[Condition 1.4]{moller_translation_invariant_nelson_model}).
\end{enumerate}

\begin{enumerate}[label=(H\arabic*$_{\mathrm{field}}$), leftmargin=*]
    \setcounter{enumi}{3}
    \item\label{h4_field_real_valued_configuration_space} \underline{The form factor $v$ is real-valued on the configuration space} i.e. $v(-k) = \cc{v(k)}$ in momentum space (see again \cite[Assumption 4.2]{hiroshima2019ground}).
\end{enumerate}
\vspace{0.3cm}

Throughout the remainder of this work, \textbf{hypotheses} \ref{h1_geom_regularity_curve}-\ref{h4_geom_not_u_shaped}, \ref{hyp:potential}, \ref{h1_field_massive}, \ref{h2_field_interaction} \textbf{will be assumed to hold} without further mention, whereas hypotheses \ref{h3_field_rotation_invariance} and \ref{h4_field_real_valued_configuration_space} will be explicitly invoked whenever they are needed.

%%% ------------------------------------------------------------
%===============================================================
%%% ------------------------------------------------------------

\subsection{The model}\label{sec:mode:subsec_the_model}

Combining the two previous sections, we can state the Hilbert space of the problem and the associated Hamiltonians, depending on the choice of coordinates for the waveguide (see Table \ref{tab:hamiltonian_comparison_coordinates}):
\begin{itemize}
    \item \underline{The Hilbert space of the problem:} 
    \begin{empheq}[box=\fbox]{alignat*=2}
    \Hi &= L^2(\Rx^2) \otimes \Fs\left( L^2\big(\Rk^2\big) \right) 
         &\hspace{0.5cm}\text{or}\hspace{0.5cm} 
     \Hi &= L^2(\mathcal{O}_\gamma) \otimes \Fs\left( L^2\big(\Rk^2\big) \right) \\
     &\cong \int_{\R^2}^{\oplus} \Fs\left( L^2\big(\Rk^2\big) \right) dx 
         &
     &\cong \int_{\mathcal{O}_\gamma}^{\oplus} \Fs\left( L^2\big(\Rk^2\big) \right) ds \mspace{2mu} du. 
    \end{empheq}

    \item \underline{The free Hamiltonian} is given by the Hamiltonian of the small system and the one of the free bosonic field:  
    \begin{alignat*}{2}
     \hspace{1cm} \Ha_\free &= \underbrace{(- \Delta + V_\bullet(X)) \otimes I}_{:= \, \Ha_\el} + \underbrace{I  \otimes \dG(\omega(K))}_{:= \, \Ha_f}  &\hspace{0.5cm} \text{or} \hspace{0.5cm}
     \Ha_\free &= \underbrace{(- \Delta_\gamma + V(U)) \otimes I}_{:= \, \Ha_\el} + I  \otimes \dG(\omega(K)) \\ 
     &:= \Ha_\el + \Ha_f
               &
     &:= \Ha_\el + \Ha_f.
    \end{alignat*}

    \item \underline{The interaction Hamiltonian} is given by the direct integral of the Segal field operator: 
    \begin{alignat*}{2}
    \hspace{1cm} \Ha_I &= \int_{\R^2}^{\oplus} \phi(v_x) \, dx  
     &\hspace{0.7cm} \text{or} \hspace{1cm} 
     \Ha_I &= \int_{\mathcal{O}_\gamma}^{\oplus} \phi(v_{\varphi(s,u)}) \, ds \mspace{2mu} du \\
     &= \int_{\R^2}^{\oplus} \frac{1}{\sqrt{2}}\,\overline{ a(v_x) + \ac(v_x) }\,dx 
    &
    &= \int_{\mathcal{O}_\gamma}^{\oplus} \frac{1}{\sqrt{2}}\,\overline{ a(v_{\varphi(s,u)}) + \ac(v_{\varphi(s,u)}) } \, ds \mspace{2mu} du.
    \end{alignat*}

    \item \underline{The Hamiltonian of the problem} is then given by the sum of the free one and the interaction one, with the intensity of the coupling controlled by a coupling constant $g>0$: 
    \begin{empheq}[box=\fbox]{align*}
        \Ha &=  \Ha_\el + \Ha_f + g\Ha_I = \Ha_\free +  g\Ha_I \\
            &=  (- \Delta + V_\bullet(X)) \otimes I  + I \otimes \dG(\omega(K)) + g\int_{\R^2}^{\oplus} \phi(v_x)\,dx \\
            &= (- \Delta_\gamma + V(U)) \otimes I  + I \otimes \dG(\omega(K)) + g\int_{\mathcal{O}_\gamma}^{\oplus} \phi(v_{\varphi(s,u)})\,ds \mspace{2mu} du. 
    \end{empheq}
\end{itemize}

%%% ------------------------------------------------------------
%               General ideas
%%% ------------------------------------------------------------

\subsection{General ideas and goals}\label{model:subsec_general_ideas}

In this subsection, we outline the spectral picture that guides the rest of the paper. For quantum waveguides without a quantized field, the curvature may alter the behavior of the particle and can in particular slow it down or even trap it, thereby creating \textit{curvature-induced bound states}. This phenomenon is especially meaningful since the curved and straight waveguides have the same essential spectrum (\cite[Proposition 3.1]{pv_spsqv}):
\begin{equation}\label{eq:sess_curved_straight_equal}
    \sigma\hspace{-1pt}\left(\ha_{\mspace{1mu}\el,S}\right) = \sess\hspace{-1pt}\left(\ha_{\mspace{1mu}\el,S}\right)
    =
    \sess\hspace{-1pt}\left(\ha_{\mspace{1mu}\el,C}\right)
    =
    \left[\varepsilon_0,+\infty\right[, 
\end{equation}
where $\varepsilon_0$ is the infimum of the transverse Schrödinger operator $-\partial_{\xd}^2+V(X_2)$. This equality is natural: the essential spectrum is governed by the behavior of the system at infinity (see Subsection \ref{prelim:subsec_persson}), and the curvature has compact support, so the waveguide is straight at infinity. Moreover, the essential spectrum has no gaps: it is a closed half-line.

\smallbreak

When the particle is coupled to a quantized field, we expect this gapless structure to persist. Physically, the geometry is still straight at infinity, and there is no apparent mechanism that should open gaps in the essential spectrum. However, the bottom of the essential spectrum is no longer clear a priori, since the interaction with the field may shift the relevant thresholds. The purpose of \textbf{Section~\ref{sec:ess_spec}} is therefore to \textbf{determine and compare the essential spectra} of the coupled Hamiltonians associated with the straight and curved waveguides.

\smallbreak

Once the essential spectrum is understood, the next question is whether the curved coupled system has a non-empty discrete spectrum. \textbf{Section~\ref{sec:binding}} derives an abstract \textbf{sufficient condition ensuring the existence of such a discrete spectrum}. Turning this condition into an explicit geometric criterion on the waveguide is left for future work.

\smallbreak

The following figures illustrate the existing spectral results for quantum waveguides, which serve as the reference picture for the coupled model.

% --- Figure 1 : Cas droit ---
\begin{figure}[H]
    \centering
    \resizebox{\textwidth}{!}{% [inline block 0: 2 envs, 71790 chars in 2 pieces, piece 1 here, a bare % at each other -> data_tex | \begin{tikzpicture}   \begin{axis}[...]

 }
    \caption{Straight case: operator $ -\Delta + V_S(X)$.}
    \label{fig:straight_waveguide_and_spectrum}
\end{figure}

% --- Figure 2 : Cas courbé ---
\begin{figure}[H]
    \centering
    \resizebox{\textwidth}{!}{%
 }
    \caption{Curved case: operator $-\Delta + V_C(X)$.}
    \label{fig:curved_waveguide_and_spectrum}
\end{figure}

%===============================================================
%           À FAIRE DANS CETTE SECTION
%===============================================================

%================================================================
 
% 3) Hypothèses et simplifications
\section{Preliminary results}
\label{sec:prelim}

In this section, we establish some straightforward preliminary results in order to get a better understanding of the subject.

%==========================================================
%       Self-adjointness
%==========================================================

\subsection{Self-adjointness, domain and associated quadratic form}\label{prelim:subsec_domain}
The first preliminary result is obviously the self-adjointness of $\Ha$, for which we need the self-adjointness and lower semi-boundedness of $\Ha_\el$, $\Ha_f$.

\begin{lemma}\label{sec:prelim:hfree_self_adjoint}
    The operators $\Ha_\el$ and $\Ha_f$ strongly commute so that $\Ha_\el$, $\Ha_f$ and $\Ha_\free$ (with domain $\D(\Ha_\free) = \D(\Ha_\el) \cap \D(\Ha_f)$\textup{)} are self-adjoint and lower semi-bounded. 
\end{lemma}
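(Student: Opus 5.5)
The plan is to first establish self-adjointness and lower semi-boundedness of each factor operator separately, then transfer these properties to the tensor product setting, and finally use strong commutativity to handle the sum.

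\textbf{The electronic factor.} In Cartesian coordinates, $\ha_{\mspace{1mu}\el,\bullet} = -\Delta + V_\bullet(X)$ on $L^2(\Rx^2)$. Under \ref{hyp:potential}, $V_\bullet$ is bounded: $V_S(x)=V(x_2)$ and $V_C(\varphi(s,u)) = V(u)$ a.e., so $\|V_\bullet\|_\infty \le \|V\|_\infty$. By the Kato--Rellich theorem (a bounded perturbation), $\ha_{\mspace{1mu}\el,\bullet}$ is self-adjoint on $H^2(\R^2)$ and bounded below by $-\|V\|_\infty$. The curvilinear version is obtained by conjugating with the unitary $U_\straight$, which preserves both properties.

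\textbf{The field factor.} $\omega(K) = \sqrt{K^2+m^2}$ is a self-adjoint multiplication operator with $\omega \ge m>0$. By the standard properties of second quantization recalled in Appendix~\ref{app:fock}, $\dG(\omega(K))$ is self-adjoint and nonnegative, since it acts on the $n$-particle sector as $\sum_i \omega(k_i) \ge 0$.

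\textbf{Tensor products and the sum.} Set $\Ha_\el = \ha_{\mspace{1mu}\el,\bullet}\otimes I$ and $\Ha_f = I \otimes \dG(\omega(K))$. If $A$ is self-adjoint, then $A\otimes I$ is self-adjoint with the same spectrum; for instance, its spectral measure is $E_A(\cdot)\otimes I$. Hence both operators are self-adjoint and bounded below. Their spectral projections $E_{\el}(\Delta_1)\otimes I$ and $I\otimes E_f(\Delta_2)$ commute, which is exactly strong commutativity. One can then form the joint spectral measure $E(\Delta_1\times\Delta_2) = E_{\el}(\Delta_1)\otimes E_f(\Delta_2)$ on $\R^2$ and define $\Ha_\free = \int (\lambda+\mu)\, dE(\lambda,\mu)$ via functional calculus. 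This operator is self-adjoint, and since the joint measure is supported in $[\inf\sigma(\ha_{\mspace{1mu}\el,\bullet}),\infty)\times[0,\infty)$, it satisfies $\Ha_\free \ge \inf\sigma(\ha_{\mspace{1mu}\el,\bullet})$.

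\textbf{Main obstacle: the domain.} The remaining point is that this functional-calculus operator coincides with the operator sum on $\D(\Ha_\el)\cap\D(\Ha_f)$. The functional-calculus domain is $\{\psi : \int |\lambda+\mu|^2 \, d\|E\psi\|^2 < \infty\}$. Since both variables are bounded below, we have $|\lambda+\mu+c|^2 \ge (\lambda+c_1)^2 + (\mu+c_2)^2$ once shifted to be nonnegative. This inequality shows that the functional-calculus domain equals the intersection of the domains of $\lambda$ and $\mu$, which are $\D(\Ha_\el)$ and $\D(\Ha_f)$ respectively. On that intersection the operators act as the sum. This yields the stated domain $\D(\Ha_\free)=\D(\Ha_\el)\cap\D(\Ha_f)$ and completes the proof.
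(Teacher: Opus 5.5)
Your proposal is correct and follows essentially the paper's route: the paper cites Arai's tensor-product results (self-adjointness of $A\otimes I$, strong commutativity of $A\otimes I$ and $I\otimes B$, and self-adjointness and lower semi-boundedness of their sum on $\D(A\otimes I)\cap\D(I\otimes B)$) together with Kato--Rellich for the bounded potential, which is exactly what you reprove through the joint spectral measure. One small gap in the domain step: your inequality only gives the inclusion of the functional-calculus domain in the intersection, and the reverse inclusion also needs the elementary bound $(a+b)^2\le 2a^2+2b^2$ for the shifted nonnegative variables.
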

\begin{proof}
    One uses \cite[Proposition 3.5(ii), Theorem 3.5(iv), Theorem 3.8(iv)]{Arai} to argue as in \cite[Proposition 2.4(1)]{hiroshima2020feynman2}, since $V_\bullet(X)$ is bounded and hence relatively $-\Delta$-bounded with bound $< 1$. 
\end{proof}

\noindent We also record the self-adjointness of the interaction Hamiltonian, which will allow us to use its associated quadratic form below.

\begin{lemma}
    The interaction Hamiltonian $\Ha_I$ is self-adjoint.
\end{lemma}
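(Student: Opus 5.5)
The plan is to treat $\Ha_I$ as a direct integral of self-adjoint fiber operators and use the general theory of decomposable operators. For each fixed $x\in\R^2$ (or $(s,u)\in\mathcal{O}_\gamma$), Hypothesis~\ref{h2_field_interaction} gives $v_x\in L^2(\Rk^2)$ with $\|v_x\|_{L^2}=\|v\|_{L^2}$. Then the Segal field operator $\phi(v_x)$ is essentially self-adjoint on the finite-particle space $\Fsf$. This follows from the standard result (e.g. \cite{Arai}), via Nelson's analytic vector theorem, since $\|a^\sharp(f)\psi\|\le \|f\|\,\|(\Nb+1)^{1/2}\psi\|$ makes every finite-particle vector analytic. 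Its closure $\phi(v_x)$ is therefore self-adjoint.

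The next step is to check measurability of the fiber family $x\mapsto\phi(v_x)$, which I expect to be the only real point requiring care. I will show that $x\mapsto (\phi(v_x)-i)^{-1}$ is weakly measurable. The map $x\mapsto v_x=v\,e^{-ik\cdot x}$ is continuous from $\R^2$ to $L^2(\Rk^2)$ by dominated convergence. Hence, for $\psi\in\Fsf$, the map $x\mapsto\phi(v_x)\psi$ is continuous in $\Fs$, by the bound $\|\phi(f)\psi-\phi(g)\psi\|\le\sqrt2\,\|f-g\|\,\|(\Nb+1)^{1/2}\psi\|$.

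Strong resolvent continuity then follows from the core property of $\Fsf$. One can use the second resolvent identity on the common core, together with the fact that $\Fsf$ is a common core, and apply \cite[Theorem VIII.25]{} type arguments. Concretely, $(\phi(v_y)-i)^{-1}-(\phi(v_x)-i)^{-1}$ applied to $(\phi(v_x)-i)\psi$ with $\psi\in\Fsf$ equals $(\phi(v_y)-i)^{-1}(\phi(v_x)-\phi(v_y))\psi$, which tends to $0$. Since $\{(\phi(v_x)-i)\psi:\psi\in\Fsf\}$ is dense and the resolvents are uniformly bounded by $1$, strong continuity follows, hence measurability.

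In curvilinear coordinates, $(s,u)\mapsto\varphi(s,u)$ is continuous, so the same holds on $\mathcal{O}_\gamma$. The result is also unitarily equivalent via $U_\straight$, since $\Ha_I$ is multiplication-type in $x$.

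Finally, I invoke the standard theorem on direct integrals of measurable families of self-adjoint operators (e.g. \cite[Theorem XIII.85]{} or \cite[Theorem 3.8]{Arai}). It states that $\int^\oplus\phi(v_x)\,dx$, with domain $\{\Psi:\Psi(x)\in\D(\phi(v_x))\text{ a.e.},\ \int\|\phi(v_x)\Psi(x)\|^2dx<\infty\}$, is self-adjoint. As a remark, the bound $\|\phi(v_x)\Psi(x)\|\le\sqrt2\|v\|\,\|(\Nb+1)^{1/2}\Psi(x)\|$ is uniform in $x$. This shows that $\D(I\otimes\Nb^{1/2})\subset\D(\Ha_I)$, which will be useful later for relative boundedness with respect to $\Ha_f$ using $m>0$.
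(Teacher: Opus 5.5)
Your argument is correct. Its skeleton coincides with the paper's: each fiber $\phi(v_x)$ is self-adjoint, $x\mapsto(\phi(v_x)\pm i)^{-1}$ is measurable, and the general theorem on direct integrals of measurable families of self-adjoint operators then applies.

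Where you differ is the measurability step. The paper uses the translation covariance of the kernel, $v_x=e^{-iK\cdot x}v$, to write $(\phi(v_x)+i)^{-1}=e^{-i\dG(K)\cdot x}(\phi(v)+i)^{-1}e^{i\dG(K)\cdot x}$. Strong continuity of the resolvent is then inherited from the strongly continuous unitary group $x\mapsto e^{i\dG(K)\cdot x}$, with no estimate at all. You instead use three ingredients:
\begin{itemize}
\item the norm continuity of $x\mapsto v_x$ in $L^2(\Rk^2)$;
\item the $(\Nb+1)^{1/2}$-bound on $\phi(f)$ on finite-particle vectors;
\item the common-core criterion for strong resolvent convergence, using that $\Fsf$ is a core for every $\phi(v_y)$.
\end{itemize}

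Your route is a bit longer but more robust. It works for any family of form factors depending continuously on $x$ in $L^2$, not only for translates of a fixed $v$. It also yields, as a by-product, the uniform estimate giving $\D(I\otimes\Nb^{1/2})\subset\D(\Ha_I)$. The paper's route is shorter and exploits exactly the translation structure imposed on the interaction kernel. Both arguments transfer immediately to the curvilinear representation, since $\varphi$ is continuous.
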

\begin{proof}
    For all $x \in \R^2$, $v_x \in L^2(\R_k^2)$ so $\phi(v_x)$ is self-adjoint (\cite[Theorem 5.23]{Arai}) and \cite[Theorem 2.7]{Arai} gives the result upon proving that $x \mapsto \left(\phi(v_x)+i\right)^{-1}$ is measurable. To show this, it suffices to write, thanks to \cite[Theorems 1.32, 4.15, 5.32]{Arai}, that for all $x \in \R^2$: 
    \begin{align*}
        \phi(v_x) = \Gamma\big(e^{\mspace{1mu}-i \mspace{1mu} K \cdot \mspace{2mu} x}\big) \mspace{2mu}\phi(v) \mspace{2mu}\Gamma\big(e^{\mspace{1mu}i \mspace{1mu} K \cdot \mspace{2mu} x}\big) 
        \Longrightarrow \left(\phi(v_x)+i\right)^{-1} = e^{\mspace{2mu}-\mspace{1mu}i \mspace{3mu} \dG(K) \mspace{2mu}\cdot \mspace{3mu} x}\mspace{2mu}\left(\phi(v) + i\right)^{-1} \mspace{2mu} e^{\mspace{2mu} i \mspace{3mu} \dG(K) \mspace{2mu}\cdot \mspace{3mu} x}.
    \end{align*}
    The maps $x \mapsto e^{\mspace{2mu} \pm \mspace{2mu}i \mspace{3mu} \dG(K) \mspace{2mu}\cdot \mspace{3mu} x}$ are strongly continuous (as evolution groups) and hence measurable; therefore, so is the map $x \mapsto \left(\phi(v_x)+i\right)^{-1}$.
\end{proof}

\vspace{-10pt}

\begin{theorem}\label{prelim:th:self-adjointness_Ha}
    The Hamiltonian $\Ha$ with domain $\D(\Ha) = \D(\Ha_\free)$ is self-adjoint and lower semi-bounded for any value of the coupling constant $g>0$.
\end{theorem}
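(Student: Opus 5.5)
The plan is to apply the Kato--Rellich theorem: we show that $g\Ha_I$ is $\Ha_\free$-bounded with relative bound $0$. Self-adjointness on $\D(\Ha_\free)$ and lower semi-boundedness then follow for every $g>0$, because Lemma~\ref{sec:prelim:hfree_self_adjoint} already gives that $\Ha_\free$ is self-adjoint and bounded below. Since $\Ha_\el$ does not interact with the field variables, all the needed control comes from $\Ha_f = I\otimes\dG(\omega(K))$, fiberwise in $x$. The key ingredient is the standard estimate on creation and annihilation operators. Because $\omega\ge m>0$ and $v\in L^2$, we have $\omega^{-1/2}v_x\in L^2$, with norm $\|\omega^{-1/2}v\|_{L^2}\le m^{-1/2}\|v\|_{L^2}$, and this norm is independent of $x$ since $|v_x|=|v|$. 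The standard bounds then read, for every $\psi\in\D(\dG(\omega)^{1/2})$:
\begin{align*}
\|a(v_x)\psi\| \le \|\omega^{-1/2}v\|\,\|\dG(\omega)^{1/2}\psi\|,\qquad
\|\ac(v_x)\psi\| \le \|\omega^{-1/2}v\|\,\|\dG(\omega)^{1/2}\psi\| + \|v\|\,\|\psi\|.
\end{align*}

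First, we combine these bounds into a fiberwise estimate. Summing them and using $\dG(\omega)\ge 0$ together with $\|\dG(\omega)^{1/2}\psi\|^2\le \varepsilon\|\dG(\omega)\psi\|^2+\frac{1}{4\varepsilon}\|\psi\|^2$, we get, for each $x$ and every $\varepsilon>0$,
\[
\|\phi(v_x)\psi\|\le \varepsilon\|\dG(\omega)\psi\| + C_\varepsilon\|\psi\|,
\]
where $C_\varepsilon$ is independent of $x$.

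Second, we lift this to $\Hi$ using the direct-integral structure. For $\Psi\in\D(\Ha_f)$ we square the fiberwise bound and integrate over $x$ (or over $(s,u)\in\mathcal{O}_\gamma$, with $v_{\varphi(s,u)}$; uniformity in $x$ is exactly what makes this work). This gives $\|\Ha_I\Psi\|\le \varepsilon\|\Ha_f\Psi\|+C_\varepsilon\|\Psi\|$. Some care is needed here: $\Ha_I$ is the direct integral of the closures, so we should check that $\D(\Ha_f)\subset\D(\Ha_I)$. This holds because for almost every $x$ the fiber $\Psi(x)$ lies in $\D(\dG(\omega))\subset\D(\phi(v_x))$, and the integrated bound ensures $x\mapsto\phi(v_x)\Psi(x)$ is square integrable.

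Third, we bound $\|\Ha_f\Psi\|$ by $\|\Ha_\free\Psi\|$, and this is the step I expect to be the main (mild) obstacle. The two operators $\Ha_\el$ and $\Ha_f$ strongly commute (Lemma~\ref{sec:prelim:hfree_self_adjoint}), and $\Ha_\el\ge -c$ with $c=\|V\|_\infty$ in the straight case. In the curved case we use $\Ha_\el\ge \inf\sigma(\ha_{\el,C})>-\infty$, as Lemma~\ref{sec:prelim:hfree_self_adjoint} states. Working with the joint spectral measure, set $A=\Ha_\el+c\ge0$ and $B=\Ha_f\ge0$. Then $\|B\Psi\|^2\le\|(A+B)\Psi\|^2$, since the cross term $2\,\mathrm{Re}\langle A\Psi,B\Psi\rangle\ge0$ by commutativity and positivity. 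Hence $\|\Ha_f\Psi\|\le\|\Ha_\free\Psi\|+c\|\Psi\|$ on $\D(\Ha_\free)$. Combining the three steps gives
\[
\|g\Ha_I\Psi\|\le g\varepsilon\|\Ha_\free\Psi\|+g(C_\varepsilon+\varepsilon c)\|\Psi\|.
\]
Since $\varepsilon$ is arbitrary, the relative bound is $0$ for any $g>0$, and Kato--Rellich concludes the proof.
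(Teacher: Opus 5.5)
Your proposal is correct and follows the same route as the paper: you apply Kato--Rellich, with $\Ha_I$ shown to be infinitesimally $\Ha_\free$-bounded using the standard estimates on $a$ and $\ac$ (valid since $\omega\geq m>0$) and the strong commutativity of $\Ha_\el$ and $\Ha_f$, which is exactly the argument the paper borrows from its cited reference. The only difference is bookkeeping. The paper applies the interpolation $\lambda^{1/2}\le\delta\lambda+\frac{1}{4\delta}$ globally, via the functional calculus, to obtain the explicit bound $b_\varepsilon=\varepsilon\abs{1-E_\el}+\frac{C^2}{4\varepsilon}$, which it reuses later for the perturbation radius. You interpolate fiberwise on $\dG(\omega(K))$; if you integrate in $x$ with Minkowski's inequality instead of squaring, your constants stay exactly $\varepsilon$ and $C_\varepsilon$ rather than picking up a harmless factor $\sqrt{2}$.
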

\begin{proof}
    It is sufficient to apply the Kato-Rellich theorem, proceeding as in \cite[Proposition 2.4(2)]{hiroshima2020feynman2}. As it will be useful in the sequel, we compute explicitly the relative bound $b_\varepsilon$. To do so, replace in their computations $\varphi / \sqrt{\omega}$ by our form factor $v$. Let $C = 2 \nld{v / \sqrt{\omega}} + \nld{v} $, which is well defined because $\omega^{-1/2} \leq m^{-1/2}$ and $v \in L^2(\Rk^2)$ by Assumption~\ref{h2_field_interaction}. Using the inequality $\lambda^{1/2} \leq \delta \lambda  + \frac{1}{4\delta}$ ($\delta > 0$), the functional calculus, and setting $\delta = \varepsilon / C$ ($\varepsilon > 0$), we find: 
    \begin{align*}
        b_\varepsilon = \varepsilon \abs{1 - E_\el} + \frac{C^2}{4\varepsilon},
    \end{align*}
    where $E_\el = \inf \sigma(\Ha_\el)$. Upon multiplying everything by $g>0$ and taking $\varepsilon>0$ small enough, one gets the result.
\end{proof}

Thanks to the fact that $V_\bullet(X)$ is $-\Delta$-bounded with relative bound $<1$, the abstract domain obtained above can be described more explicitly. We collect in the following proposition the corresponding domain identification, a useful splitting of the Schrödinger operator, and the resulting core properties. The vector-valued Sobolev and $L^2$ spaces used below are defined in Appendix \ref{app:vector-valued}. From this point on, we shall rely on the framework, notation, and conventions introduced in that appendix.

\begin{proposition}\label{sec:prelim:domain_splitting_and_cores}
The following assertions hold.
\begin{enumerate}[label=(\roman*)]
\item If $W(X)$ is a multiplication operator on $L^2(\Rx^2)$ which is $-\Delta$-bounded with relative bound $<1$, then:
\begin{align*}
    (-\Delta + W(X)) \otimes I 
    = -\partial_{\xu}^2 \otimes I 
    - \partial_{\xd}^2 \otimes I 
    + W(X) \otimes I
\end{align*}
and $\D\big((-\Delta + W(X)) \otimes I\big) = \D(-\Delta \otimes I)$.
\item One has:
\begin{align*}
\D(\Ha)
= \D(-\Delta \otimes I) \cap \D(\Ha_f)
\cong  H^2(\R^2,\Fs) \cap L^2\hspace{-2pt}\left(\R^2,\D\big(\dG(\omega(K))\big) \right).
\end{align*}
\item Any core for $-\Delta \otimes I + I \otimes \dG(\omega(K))$ is a core for $\Ha$. In particular, if $\D$ is a core for $\dG(\omega(K))$, then $\ciz(\R^2) \alten \mathcal \D$ is a core for $\Ha$.
\end{enumerate}
\end{proposition}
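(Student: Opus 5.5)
For (i), I will work with the tensor product structure $L^2(\Rx^2)\otimes\Fs$ and use the standard facts about operators of the form $A\otimes I$ (e.g. \cite[Theorem 3.5, Theorem 3.8]{Arai}). The first point is that $W(X)\otimes I$ is relatively $-\Delta\otimes I$-bounded with the same relative bound, and so with bound $<1$. Indeed, $\|(W\otimes I)\Psi\|\le a\|(-\Delta\otimes I)\Psi\|+b\|\Psi\|$ holds on the algebraic tensor product $\D(-\Delta)\alten\Fs$: one checks it via the direct integral picture $\Hi\cong L^2(\R^2,\Fs)$, or by expanding in an orthonormal basis of $\Fs$ and using the scalar inequality componentwise in squared form $\|W f\|^2\le (a+\epsilon)^2\|\Delta f\|^2+C_\epsilon\|f\|^2$. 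The inequality then extends by closure, since $\D(-\Delta)\alten\Fs$ is a core for $-\Delta\otimes I$.

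Kato–Rellich then shows that $-\Delta\otimes I+W(X)\otimes I$ is self-adjoint on $\D(-\Delta\otimes I)$. This operator is a self-adjoint extension of the restriction of $(-\Delta+W)\otimes I$ to $\D(-\Delta)\alten\Fs$. That restriction is essentially self-adjoint, because $\D(-\Delta+W)=\D(-\Delta)$ and the algebraic tensor product of the domain with $\Fs$ is a core for $A\otimes I$. Since both operators are self-adjoint and coincide on a common core, they are equal, and in particular their domains agree. Finally, writing $-\Delta\otimes I=-\partial_{\xu}^2\otimes I-\partial_{\xd}^2\otimes I$ on $\D(-\Delta\otimes I)$ follows from Fourier transform in $x$: both sides are the multiplication operator by $\xi_1^2+\xi_2^2$. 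Here I would note that the domain of the sum is the intersection, which equals $\D(-\Delta\otimes I)$, because $\xi_j^2\le |\xi|^2$.

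For (ii), apply (i) with $W=V_\bullet$, which is bounded. Combined with Theorem~\ref{prelim:th:self-adjointness_Ha} and Lemma~\ref{sec:prelim:hfree_self_adjoint}, this gives $\D(\Ha)=\D(\Ha_\free)=\D(\Ha_\el)\cap\D(\Ha_f)=\D(-\Delta\otimes I)\cap\D(\Ha_f)$. For the identification, I will use the unitary $L^2(\Rx^2)\otimes\Fs\cong L^2(\R^2,\Fs)$ from Appendix~\ref{app:vector-valued}. Under it, $-\Delta\otimes I$ becomes the vector-valued Laplacian, so its domain is $H^2(\R^2,\Fs)$ via the vector-valued Fourier transform. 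Likewise $I\otimes\dG(\omega(K))$ becomes the decomposable operator with constant fiber, so its domain is $L^2(\R^2,\D(\dG(\omega(K))))$.

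For (iii), the plan is to view $\Ha$ as a relatively bounded perturbation of $H_0:=-\Delta\otimes I+I\otimes\dG(\omega(K))$. Here $V_\bullet\otimes I$ is bounded, and $g\Ha_I$ is $\Ha_\free$-bounded with arbitrarily small bound by the proof of Theorem~\ref{prelim:th:self-adjointness_Ha}. Since $\Ha_\free=H_0+V_\bullet\otimes I$ and the graph norms of $\Ha_\free$ and $H_0$ are equivalent on the common domain from (ii), the graph norms of $\Ha$ and $H_0$ are equivalent on $\D(\Ha)=\D(H_0)$. A core, being dense in the graph norm, is therefore a core for both. For the "in particular" statement, I will take $\ciz(\R^2)$, a core for $-\Delta$, together with $\D$, a core for $\dG(\omega(K))$. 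For strongly commuting nonnegative operators of the form $A\otimes I+I\otimes B$, the product $\ciz\alten\D$ is a core (\cite[Theorem 3.8]{Arai}).

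The main obstacle I expect is the careful justification of the tensor-product relative bound and the core argument in (i). The remaining steps are bookkeeping of domains.
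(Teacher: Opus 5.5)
Your proposal is correct and follows essentially the paper's route. In (i) you combine the tensor-product relative bound, Kato--Rellich and a common-core argument; the paper cites its appendix propositions for these, and you prove the relative bound directly. In (ii) you combine (i) with $\D(\Ha)=\D(\Ha_\free)$ and the vector-valued identifications, and in (iii) you treat $\Ha$ as a Kato--Rellich perturbation of $-\Delta\otimes I+I\otimes\dG(\omega(K))$. The only cosmetic differences are that you transfer the infinitesimal bound from $\Ha_\free$ through equivalent graph norms rather than redoing the estimate, and that you make explicit the tensor-sum core theorem behind the ``in particular'' clause, which the paper leaves implicit.
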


\begin{proof}
Statement $(i)$ follows from Proposition~\ref{app:op:standard_properties}(iii) (since $-\partial_{\xu}^2$, $-\partial_{\xd}^2$ strongly commute and are positive) and Proposition~\ref{app:op:tensor_prod_relative_boundedness}. \\
We prove $(ii)$. Since $V_\bullet(X)$ is a bounded multiplication operator, $(i)$ gives 
$\D(\Ha_\el)=\D(-\Delta\otimes I)$. Using Appendix \ref{app:vector-valued}, we find:
\begin{align*}
\D(\Ha_f)
\cong
L^2\hspace{-2pt}\left(\R^2,\D\big(\dG(\omega(K))\big)\right), \hspace{1cm} \D(-\Delta \otimes I)\cong H^2(\R^2,\Fs).
\end{align*}
Since Theorem \ref{prelim:th:self-adjointness_Ha} gives $\D(\Ha)=\D(\Ha_\free)=\D(\Ha_\el)\cap\D(\Ha_f)$, the claim follows. \\
Finally, we prove $(iii)$. Set $\Ha_0:=-\Delta\otimes I+I\otimes\dG(\omega(K))$. The same argument as in the proof of Theorem \ref{prelim:th:self-adjointness_Ha}, with $-\Delta \otimes I$ and $\Ha_0$ instead of $\Ha_\el$ and $\Ha_\free$ respectively, shows that $\Ha_I$ is infinitesimally $\Ha_0$-bounded. Since $V_\bullet(X) \otimes I$ is bounded, the Kato-Rellich theorem gives $\D(\Ha)=\D(\Ha_0)$, with any core for $\Ha_0$ being a core for $\Ha$.
\end{proof}

The natural subsequent step is to introduce the quadratic form associated with the model, which we will use at the beginning of Section~\ref{sec:ess_spec}. 

\begin{notation}
    The domain of a quadratic form $\q$ is denoted $\D(\q)$, and if it is associated with a self-adjoint operator $A$, we denote it $\q_A$, with domain $\Q(A)$. One has $\D(\q_A) = \Q(A) = \D\big(\abs{A}^{1/2}\big)$. \\
    The form sum of two self-adjoint operators, when it is well defined, is denoted $A \fsum B$.
\end{notation}

\begin{remark}
    Note that the operator $-i \nabla$ extends to $L^2(\R^2,\Fs)$: see again Appendix \ref{app:vector-valued}.
\end{remark}

\begin{proposition}\label{prelim:prop:general_quadratic_form}
    The Hamiltonian $\Ha = \Ha_\el + \Ha_f + g \Ha_I$ is the unique self-adjoint operator associated with the lower semi-bounded quadratic form $\q_\Ha$ defined by:
    \begin{align*}
        \q_\Ha[\psi] = \nldfs{\nabla\psi}^2 + \q_V[\psi]  + \q_{\Ha_f}[\psi] + g\,\q_{\Ha_I}[\psi] =  \q_{\Ha_\el}[\psi]  + \q_{\Ha_f}[\psi] + g \, \q_{\Ha_I}[\psi],
    \end{align*}
    with form domain:
    \begin{align*}
        \Q(\Ha) = \Q(\Ha_\el) \cap \Q(\Ha_f) = \Q(-\Delta \otimes I) \cap \Q(\Ha_f) = H^1(\R^2,\Fs) \cap L^2\hspace{-2pt}\left(\R^2,\D\big(\dG(\omega(K))^{1/2}\big) \right).
    \end{align*}
\end{proposition}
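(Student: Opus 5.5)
The plan is to derive the form representation from the operator-level results already established, chiefly Theorem~\ref{prelim:th:self-adjointness_Ha}, rather than construct a form and then identify its operator. Since $\Ha$ is self-adjoint and bounded below on $\D(\Ha)=\D(\Ha_\free)$, it has a closed, lower semi-bounded form $\q_\Ha$ with domain $\Q(\Ha)=\D\bigl(\abs{\Ha - c}^{1/2}\bigr)$ for $c<\inf\sigma(\Ha)$. By the representation theorem, $\Ha$ is the unique self-adjoint operator associated with it. Two tasks remain: identify $\Q(\Ha)$, and show that $\q_\Ha$ splits into the stated sum on that domain.

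\textbf{Step 1: the form domain.} By Lemma~\ref{sec:prelim:hfree_self_adjoint}, $\Ha_\el$ and $\Ha_f$ strongly commute and are bounded below. The joint spectral calculus then gives $\Q(\Ha_\free)=\Q(\Ha_\el)\cap\Q(\Ha_f)$ and $\q_{\Ha_\free}=\q_{\Ha_\el}+\q_{\Ha_f}$.

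Since $V_\bullet(X)\otimes I$ is bounded, $\Q(\Ha_\el)=\Q(-\Delta\otimes I)$ and $\q_{\Ha_\el}[\psi]=\nldfs{\nabla\psi}^2+\q_V[\psi]$. Appendix~\ref{app:vector-valued} identifies $\Q(-\Delta\otimes I)$ with $H^1(\R^2,\Fs)$ and $\Q(\Ha_f)$ with $L^2\bigl(\R^2,\D(\dG(\omega(K))^{1/2})\bigr)$, in the same way Proposition~\ref{sec:prelim:domain_splitting_and_cores}(ii) identified the operator domains.

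\textbf{Step 2: form-boundedness of $\Ha_I$.} This is the main point. I would show that $\Ha_I$ is infinitesimally form-bounded with respect to $\Ha_f$, hence with respect to $\Ha_\free$. The standard estimates for a fixed $x$ are
\[
\|a(v_x)\Phi\|\le \nld{v/\sqrt{\omega}}\,\|\dG(\omega)^{1/2}\Phi\|, \qquad \|\ac(v_x)\Phi\|^2=\|a(v_x)\Phi\|^2+\nld{v}^2\|\Phi\|^2 .
\]
Here $\nld{v_x/\sqrt\omega}=\nld{v/\sqrt\omega}$, which is finite because $\omega\ge m>0$. These give
\[
\abs{\langle\Phi,\phi(v_x)\Phi\rangle}\le \sqrt2\,\nld{v/\sqrt{\omega}}\,\|\dG(\omega)^{1/2}\Phi\|\,\|\Phi\|\le \varepsilon\,\q_{\dG(\omega)}[\Phi]+C_\varepsilon\|\Phi\|^2 ,
\]
uniformly in $x$. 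Integrating over $x$ in the direct integral yields $\abs{\q_{\Ha_I}[\psi]}\le\varepsilon\,\q_{\Ha_f}[\psi]+C_\varepsilon\|\psi\|^2$ on $\Q(\Ha_f)$; in particular $\Q(\Ha_f)\subset\Q(\Ha_I)$.

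By the KLMN theorem, the form $\q_{\Ha_\free}+g\,\q_{\Ha_I}$ on $\Q(\Ha_\free)$ is then closed and bounded below for every $g>0$. It defines a unique self-adjoint operator $\Ha_\free\fsum g\Ha_I$.

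\textbf{Step 3: identification with $\Ha$.} For $\psi\in\D(\Ha)=\D(\Ha_\free)$ and $\chi\in\Q(\Ha_\free)$, we have $\langle\chi,\Ha\psi\rangle=\q_{\Ha_\free}(\chi,\psi)+g\,\q_{\Ha_I}(\chi,\psi)$. This holds because each piece is an operator acting on $\psi$, and $\D(\Ha_I)\supset\D(\Ha_\free)$ by the relative boundedness shown in Theorem~\ref{prelim:th:self-adjointness_Ha}. Hence $\Ha\subset\Ha_\free\fsum g\Ha_I$. Both operators are self-adjoint, so they are equal.

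It follows that $\Q(\Ha)=\Q(\Ha_\free)$ with the stated splitting of $\q_\Ha$. The only delicate step is the uniform-in-$x$ form bound in Step 2, together with measurability when integrating over $x$. Measurability is handled exactly as in the proof that $\Ha_I$ is self-adjoint, via the translation representation $\phi(v_x)=\Gamma(e^{-iK\cdot x})\phi(v)\Gamma(e^{iK\cdot x})$.
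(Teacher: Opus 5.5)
Your proof is correct and follows the same route as the paper. Both use KLMN on $\Q(\Ha_\el)\cap\Q(\Ha_f)$, then the inclusion of the operator sum $\Ha$ in the form sum $\Ha_\free \fsum g\,\Ha_I$ together with the fact that a self-adjoint operator has no proper self-adjoint extension, and finally the identification of the form domain via $\Q(\Ha_\el)=\Q(-\Delta\otimes I)$ and Appendix~\ref{app:vector-valued}. The only difference is that you derive the infinitesimal form bound on $\Ha_I$ directly from the $a(v_x)$, $\ac(v_x)$ estimates, uniformly in $x$, whereas the paper deduces it from the operator relative bound of Theorem~\ref{prelim:th:self-adjointness_Ha} via \cite[Theorem \cRM{10}.18]{RS}.
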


\begin{proof}
Since $\Ha_I$ is $\Ha_\free$-bounded in the operator sense, it is also
$\Ha_\free$-bounded in the form sense \cite[Theorem \cRM{10}.18]{RS}. Hence, by the KLMN theorem, the form $\q_\Ha=\q_{\Ha_\el}+\q_{\Ha_f}+g \, \q_{\Ha_I}$ is closed and lower semi-bounded on $\Q(\Ha_\el)\cap\Q(\Ha_f)$, and therefore defines a unique self-adjoint operator $\Ha_\el \fsum \Ha_f \fsum g\mspace{2mu}\Ha_I$. \\
The operator sum is a restriction of the form sum; see, e.g.,
\cite[Proposition 10.22]{Sch}. But $\Ha$ is self-adjoint by
Theorem~\ref{prelim:th:self-adjointness_Ha}, and since a self-adjoint operator has no proper self-adjoint extension, we get the equality of the operator sum and the form sum. Consequently,
$\Ha$ is the unique self-adjoint operator associated with $\q_\Ha$. \\
It remains to identify the form domain. The previous proposition gives $\D(\Ha_\el) = \D(-\Delta\otimes I)$, so \cite[Theorem 1.34]{Arai} yields
$\Q(\Ha_\el)=\Q(-\Delta\otimes I) = \D(-i \mspace{1mu}\nabla \otimes I)$. The vector-valued identifications of
Appendix~\ref{app:vector-valued} then give the result.
\end{proof}

%==========================================================
%       Ground state when small coupling constant
%==========================================================

\subsection{Existence of a ground state at small coupling constant}

Another quick result to obtain is the existence of a ground state \textit{if the coupling constant is small enough}. Indeed, perturbation theory (see \cite[Chapter \cRM{12}]{RS}) enables us to know that eigenvalues remain eigenvalues under small perturbations. In our case, \textit{since we chose massive bosons}, we know that $\Ha_{\free}$ has non-empty discrete spectrum at the bottom of its spectrum, provided we assume the same for $-\Delta+V_C(X)$. This is what we prove in this subsection, beginning with the following lemma:

\begin{lemma}
One has:
\begin{enumerate}[label=(\roman*)]
\item The spectrum of $\dG(\omega(K))$ is given by $\sigma\big(\dG(\omega(K))\big) = \{0\} \cup [m,+\infty[$ and the vacuum $\Omega$ is a non-degenerate ground state.

\item $\sigma(\Ha_\free) = \sigma(-\Delta+V_\bullet(X)) + \sigma\big(\dG(\omega(K))\big)$.  

\item  In the curved case, assume $\sdisc(-\Delta+V_C(X)) \neq \emptyset$. Then $\sdisc(\Ha_\free) \neq \emptyset$.
\end{enumerate}
\end{lemma}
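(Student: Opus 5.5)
For (i), I will reduce to the $n$-particle sectors. Since $\dG(\omega(K))$ leaves each $n$-boson subspace invariant, it is the direct sum over $n$ of its restrictions there. On the vacuum sector $n=0$ it is the zero operator, which gives the eigenvalue $0$ with eigenvector $\Omega$. On the symmetric $n$-boson subspace it acts as multiplication by $\omega(k_1)+\dots+\omega(k_n)$, and the closure of the range of this function is $[nm,+\infty[$. The spectrum of a direct sum of self-adjoint operators is the closure of the union of the individual spectra, so
\begin{align*}
\sigma\big(\dG(\omega(K))\big)=\{0\}\cup\bigcup_{n\ge1}[nm,+\infty[=\{0\}\cup[m,+\infty[.
\end{align*}
Each sector with $n\ge1$ satisfies $\dG(\omega(K))\ge nm\ge m>0$ there. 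Hence the kernel of $\dG(\omega(K))$ is exactly the vacuum sector $\C\Omega$, which shows that $\Omega$ is a non-degenerate ground state.

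For (ii), Lemma~\ref{sec:prelim:hfree_self_adjoint} already gives that $\Ha_\el$ and $\Ha_f$ strongly commute. Moreover $\Ha_\free$ is the closure of $A\otimes I+I\otimes B$ with $A=-\Delta+V_\bullet(X)$ and $B=\dG(\omega(K))$. I will then invoke the standard theorem on sums of tensor products (Reed--Simon VIII.33, or the corresponding statement in \cite{Arai}), which gives $\sigma(A\otimes I+I\otimes B)=\overline{\sigma(A)+\sigma(B)}$. The only thing left is to drop the closure. Both spectra are closed and bounded below, so a sequence $a_j+b_j$ converging to a limit forces $a_j$ and $b_j$ to stay in bounded sets. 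Extracting convergent subsequences shows the sum set is already closed.

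For (iii), set $\varepsilon_0=\inf\sess(A)$ (equal to the value in \eqref{eq:sess_curved_straight_equal}) and let $E_0=\inf\sigma(A)$. By hypothesis $E_0<\varepsilon_0$, and $E_0$ is an isolated eigenvalue of finite multiplicity. By (i) and (ii),
\begin{align*}
\sigma(\Ha_\free)=\sigma(A)\cup\big(\sigma(A)+[m,+\infty[\big),
\end{align*}
and its infimum is $E_0$. I will show that $E_0$ is an isolated eigenvalue of $\Ha_\free$ of finite multiplicity.

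\emph{Isolation.} Near $E_0$, the part $\sigma(A)+[m,\infty[$ lies in $[E_0+m,\infty[$. Since $E_0$ is isolated in $\sigma(A)$, there is $\delta>0$ with $\sigma(\Ha_\free)\cap\,]E_0-\delta,E_0+\delta[\,=\{E_0\}$.

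\emph{Finite multiplicity.} This is the main point to check. I will use the joint spectral measure of the strongly commuting pair: $\ker(\Ha_\free-E_0)$ is the range of the spectral projection of $(A,B)$ onto the set $\{(a,b): a+b=E_0\}$. Since $a\ge E_0$ and $b\ge0$, this set is the single point $(E_0,0)$. The corresponding projection is $P_A(\{E_0\})\otimes P_B(\{0\})$, whose range is $\ker(A-E_0)\otimes\C\Omega$. That space is finite-dimensional, so $E_0\in\sdisc(\Ha_\free)$.
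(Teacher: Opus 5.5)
Your proof is correct and follows essentially the same route as the paper: (ii) is exactly the paper's argument (the spectral theorem for tensor sums plus closedness of sums of lower-bounded closed sets). For (i) and (iii) you prove by hand, via the $n$-boson sector decomposition and the joint spectral measure $E^A\otimes E^B$, what the paper obtains by citing Arai's results on $\dG$ and on discrete spectra of tensor sums, and you use the same key input that $E_0$ and $0$ are isolated bottoms of their respective spectra. The one step you leave implicit, that $\sdisc(A)\neq\emptyset$ forces $E_0<\varepsilon_0$, holds precisely because of the gapless form $\sess(A)=[\varepsilon_0,+\infty[$ in \eqref{eq:sess_curved_straight_equal}, which you already invoke.
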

\begin{proof}
    $(i)$ follows from $\omega(K) \geq m > 0$ (see \ref{h1_field_massive}) and \cite[Theorem 4.9(\romannumeral 3)]{Arai}; about the vacuum, see, e.g., \cite[Section \cRM{13}.12, Example 3]{RS}. \\
    $(ii)$ follows from \cite[Theorem 3.8(\romannumeral 1)]{Arai} and the fact that the sum of two lower bounded closed sets in $\R$ is closed. \\
    $(iii)$ is a direct application of \cite[\S 8.2.3]{C0commutatorGeorgescuAmrein} or \cite[Theorem 3.14(\romannumeral 1)]{Arai}, together with the fact that the infima of both spectra, $E_\el$ and $0$, are isolated in $\sigma(-\Delta+V_C(X))$ and $\sigma\big(\dG(\omega(K))\big)$, respectively.
\end{proof}

This is summed up in the following scheme: 

\begin{figure}[h!]
\centering
% --- Subfigure 1 ---
\begin{subfigure}[b]{0.32\textwidth}
    \centering
    \begin{tikzpicture}[scale=0.75] % Echelle ajustée pour matcher la 3ème
        % Draw the half-axis starting from epsilon_0
        \draw[->] (1.1,0) -- (4,0);
    
        % Mark the starting point at epsilon_0 with a bracket on the axis and epsilon_0 below it
        \draw (1.1,0) node[] {$[$} node[yshift=-12pt] {$\varepsilon_0$} ;
    
           % Draw three crosses aligned to the left of the axis
        \draw (-0.5, 0) node[] {$\times$} node[yshift=-12pt] {$E_{\el}$};
        \draw (0.05, 0) node[] {$\cdots$};
        \draw (0.6, 0) node[] {$\times$};      
    \end{tikzpicture}
    \caption{$\sigma(-\Delta+V_C(X))$}
\end{subfigure}
\hfill
% --- Subfigure 2 ---
\begin{subfigure}[b]{0.32\textwidth}
    \centering
    \begin{tikzpicture}[scale=0.75] % Echelle ajustée
        % Draw the half-axis starting from epsilon_0
        \draw[->] (1,0) -- (4,0);
    
        % Mark the starting point at epsilon_0 with a bracket on the axis and epsilon_0 below it
        \draw (1,0) node[] {$[$} node[yshift=-12pt] {$m$} ;
    
           % Draw three crosses aligned to the left of the axis
        \draw (0, 0) node[] {$\times$} node[yshift=-12pt] {$0$};
        
    \end{tikzpicture}
    \caption{$\sigma\big(\dG(\omega(K))\big)$}
    \label{sigma_dgamma}
\end{subfigure}
\hfill
% --- Subfigure 3 ---
\begin{subfigure}[b]{0.32\textwidth}
    \centering
    \begin{tikzpicture}[scale=0.75]
        % Draw the half-axis starting from epsilon_0
        \draw[->] (1.1,0) -- (4,0);
    
        % Mark the starting point at epsilon_0 with a bracket on the axis and epsilon_0 below it
        \draw (1.1,0) node[] {$[$} node[yshift=-12pt] {$\alpha$} ;
    
           % Draw three crosses aligned to the left of the axis
        \draw (-0.5, 0) node[] {$\times$} node[yshift=-12pt] {$E_{\el}$};
        \draw (0.05, 0) node[] {$\cdots$};
        \draw (0.6, 0) node[] {$\times$};        
    \end{tikzpicture}
    \caption{$\sigma(\Ha_{\free})$}
    \label{sp_hfree}
\end{subfigure}
\caption{Spectra of the components of the free Hamiltonian, $\alpha = \min(\varepsilon_0, E_{\el}+m)$.}
\label{fig:three_spectra}
\end{figure}

\begin{remark}
    Note that \textit{this only holds because we consider massive bosons}: otherwise, we would have $\sigma\big(\dG(\omega(K))\big) = [0,+\infty[$ and $\Ha_\free$ would not have any discrete spectrum. \\
    Likewise, from a geometric standpoint, we only focus on the curved case since the straight potential $V_S$ is known to yield an empty discrete spectrum (see Equation \eqref{eq:sess_curved_straight_equal}, Figure \ref{fig:straight_waveguide_and_spectrum}).
\end{remark}

Now that we know that $\Ha_\free$ has a non-empty discrete spectrum (provided we assume the same for $-\Delta+V_C(X)$), we know that the same will be true for $\Ha_C$ for small enough values of the coupling constant $g$.

\begin{proposition}
Let $\lambda$ be a discrete eigenvalue of $\Ha_{\free}$ of multiplicity $p$. Then there are $p$ not necessarily distinct single-valued functions $\lambda^{(1)}, \dots,\lambda^{(p)}$, analytic near $g=0$, with $\lambda^{(k)}(0) = \lambda$, that are, for $g$ near 0, the only eigenvalues (counted with multiplicity) of $\Ha_C$ in a neighborhood of $\lambda$. Let $d = \frac{1}{2} \text{dist}\big(\lambda, \sigma(\Ha_{\free} )\setminus \{\lambda\}\big)$ and $P(g)$ the Riesz projection onto the eigenspaces associated with $\lambda^{(1)}, \dots,\lambda^{(p)}$. The radius of convergence $r$ of the Taylor series of $P(g)$ at $g=0$ satisfies:
\begin{align*}
r \geq \frac{d}{C \sqrt{\abs{\lambda} + 2d + \abs{1-E_\el}}}, 
\end{align*} 
where $C = 2\nld{v / \sqrt{\omega}} + \nld{v} $, $E_\el = \inf\sigma\bigl(-\Delta+V_C(X)\bigr)$. 
\end{proposition}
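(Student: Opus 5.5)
The plan is to apply analytic perturbation theory for holomorphic families of type (A) in the sense of Kato, as presented in Reed--Simon \cite[Chapter \cRM{12}]{RS}. First, $g \mapsto \Ha_C(g) = \Ha_\free + g\Ha_I$ is such a family. Indeed, the domain $\D(\Ha_\free)$ is independent of $g$ by Theorem~\ref{prelim:th:self-adjointness_Ha}. Moreover, the proof of that theorem gives, for every $\varepsilon>0$,
\begin{align*}
\|\Ha_I\psi\| \leq \varepsilon \|\Ha_\free\psi\| + b_\varepsilon\|\psi\|, \qquad b_\varepsilon = \varepsilon\abs{1-E_\el} + \frac{C^2}{4\varepsilon}.
\end{align*}
Since $\lambda$ is an isolated eigenvalue of finite multiplicity $p$ of $\Ha_\free$, Kato's theorem (\cite[Theorem \cRM{12}.8]{RS}, and the Kato--Rellich-type statement \cite[Theorem \cRM{12}.9]{RS}) gives the $p$ analytic branches $\lambda^{(k)}(g)$. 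It also gives that, for $g$ near $0$, these are the only eigenvalues of $\Ha_C(g)$ near $\lambda$. Finally, the Riesz projection
\begin{align*}
P(g) = -\frac{1}{2\pi i}\oint_{\Gamma}(\Ha_C(g)-z)^{-1}\,dz, \qquad \Gamma = \{|z-\lambda| = d\},
\end{align*}
is analytic. The first part of the statement is therefore a direct citation.

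For the radius, I would use the Neumann series. On $\Gamma$, write
\begin{align*}
(\Ha_\free + g\Ha_I - z)^{-1} = (\Ha_\free - z)^{-1}\bigl(I + g\Ha_I(\Ha_\free - z)^{-1}\bigr)^{-1}.
\end{align*}
This series, and hence the Taylor series of $P(g)$, converges whenever $\abs{g}\,\sup_{z\in\Gamma}\|\Ha_I(\Ha_\free-z)^{-1}\| < 1$. The key estimate is
\begin{align*}
\|\Ha_I(\Ha_\free-z)^{-1}\| \leq \varepsilon\|\Ha_\free(\Ha_\free-z)^{-1}\| + b_\varepsilon\|(\Ha_\free-z)^{-1}\|.
\end{align*}
For $z\in\Gamma$, one has $\dist(z,\sigma(\Ha_\free)) \geq d$, because $\Gamma$ lies at distance $d$ from $\lambda$ and at distance at least $2d-d$ from the rest of the spectrum. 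Hence $\|(\Ha_\free-z)^{-1}\|\leq 1/d$. Also $\|\Ha_\free(\Ha_\free-z)^{-1}\| \leq 1 + \abs{z}/d \leq (\abs{\lambda}+2d)/d$, since $\abs{z}\leq\abs{\lambda}+d$.

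It remains to optimize over $\varepsilon$. We get
\begin{align*}
\sup_\Gamma\|\Ha_I(\Ha_\free-z)^{-1}\| \leq \frac{1}{d}\Bigl(\varepsilon(\abs{\lambda}+2d+\abs{1-E_\el}) + \frac{C^2}{4\varepsilon}\Bigr).
\end{align*}
Minimizing at $\varepsilon = C/\bigl(2\sqrt{\abs{\lambda}+2d+\abs{1-E_\el}}\bigr)$ gives the bound $C\sqrt{\abs{\lambda}+2d+\abs{1-E_\el}}/d$, whose reciprocal is the claimed lower bound on $r$.

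The main obstacle I expect is bookkeeping rather than substance. One point is making sure that the relative bound from Theorem~\ref{prelim:th:self-adjointness_Ha} is stated with respect to $\Ha_\free$, not $\Ha_\el$ alone. Another is checking that it holds for $\Ha_I$ in the curved coordinates. Here unitarity of $U_\straight$ makes the Cartesian computation suffice, because the bound $\|\phi(v_x)\eta\|\leq C'\|(\dG(\omega)+1)^{1/2}\eta\|$ is uniform in $x$. A further point is getting the constant $\abs{\lambda}+2d$ exactly rather than $\abs{\lambda}+d+\dots$. If this loses a factor, I would instead bound $\|\Ha_\free(\Ha_\free-z)^{-1}\|\leq \sup_{\mu\in\sigma}\abs{\mu}/\abs{\mu-z}$ directly, which remains below $(\abs{\lambda}+2d)/d$ since $\abs{\mu}\leq\abs{\mu-z}+\abs{z}$.
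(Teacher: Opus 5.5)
Your proposal is correct and follows the paper's argument: the $\Ha_\free$-relative bound $b_\varepsilon$ from Theorem~\ref{prelim:th:self-adjointness_Ha} gives a type (A) family, and the bound $r \geq d/\bigl(\varepsilon(\abs{\lambda}+2d)+b_\varepsilon\bigr)$ on the circle $\abs{z-\lambda}=d$ is then optimized in $\varepsilon$ exactly as you do. The paper cites \cite[Theorem \cRM{12}.11]{RS}, extended to the degenerate case through Kato's remarks on the Riesz projection, whereas you rederive that bound from the Neumann series; one small correction is that for degenerate $\lambda$ the $p$ single-valued analytic branches come from Rellich's theorem for self-adjoint analytic families, \cite[Theorem \cRM{12}.13]{RS} (which uses that $\Ha_\free+g\Ha_I$ is self-adjoint for real $g$), not from the nondegenerate Kato--Rellich theorem \cite[Theorem \cRM{12}.8]{RS}.
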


\begin{proof}
Using the relative boundedness proved in Theorem \ref{prelim:th:self-adjointness_Ha}, \cite[Vol. IV, Lemma on p. 16]{RS} and the paragraph above it, we deduce that $\Ha_\free + g\Ha_I$ defines an analytic family of type (A) near $g=0$ and that we can apply \cite[Theorem \cRM{12}.13]{RS}. This yields the first part of the proposition. \\ 
Using \cite[Remarks 2.9 and 2.10 in Section 7.2.3]{kato_book} (with the closed curve $\Gamma$ being the circle of center $\lambda$ and radius $d$), we can extend the formula for the radius of convergence given in \cite[Theorem \cRM{12}.11]{RS} to the degenerate case, provided we consider the Riesz projection $P(g)$ onto the eigenspaces associated with $\lambda^{(1)}, \dots,\lambda^{(p)}$ rather than the non-degenerate eigenvalue from the theorem. Because of the infinitesimal boundedness, we can choose the relative bounds, so we find:
\begin{align*}
%if a,x \geq 0, \sqrt(a^2+x^2) \leq \sqrt(a^2+x^2+2ax) = a+x
r \geq \underset{\varepsilon >0}{\max} \, \frac{d}{\varepsilon(\abs{\lambda}+2d) + \varepsilon \abs{1-E_\el} + \frac{C^2}{4\varepsilon}} = \frac{d}{C \sqrt{\abs{\lambda} + 2d + \abs{1-E_\el}}}. 
\end{align*} 
\end{proof}

%====================================================================
%    Another proof of the essential spectrum of the small system
%====================================================================

\subsection{A first example of Persson's formula}\label{prelim:subsec_persson}

To conclude this section, we introduce a \textit{method of proof that will be used in the next section}, by applying it first to an already known result. More precisely, we give a new proof of the equality of the bottoms of the essential spectra of the straight and curved soft waveguides, established in \cite[Proposition 3.1]{pv_spsqv}, without using the Neumann bracketing technique of the original proof. This provides a simplified prototype of the argument that will later be adapted to the coupled system, where the quantum field has to be incorporated. The details are given here in the uncoupled setting so that, in the next section, we will only have to explain the additional points due to the field. The method is based on Persson's formula, first stated in \cite{Persson1960}; we shall use the version stated in \cite[Theorem 3.12]{SchroOPGlobalGeom} and \cite[Theorem 14.11]{hislop1996introduction}.

\begin{proposition}[Persson's formula]
    Let $V$ be a real-valued potential that is $-\Delta$-bounded with relative bound $< 1$ and let us denote $\D_R := \ciz\big(\R^n \setminus B(0,R)\big)$, where $B(0,R)$ is the ball of center $0$ and radius $R> 0$. Then:
    \begin{align*}
     \inf\sess (-\Delta+V(X)) = \lim_{\rule{0pt}{1.6ex} R \to +\infty} \hspace{0.2cm} \inf_{\substack{\rule{0pt}{1.6ex} \psi \, \in \, \D_R \\[2.5pt] \norm{\psi} \, = \, 1}} \: \ip{\psi}{(-\Delta+V(X))\psi}   :=  \underset{R \to +\infty}{\lim} \Sigma_R.
    \end{align*}
\end{proposition}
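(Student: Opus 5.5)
We prove the two inequalities $\lim_R \Sigma_R \le \inf\sess(H)$ and $\inf\sess(H) \le \lim_R \Sigma_R$ separately, where $H := -\Delta+V(X)$. Because $V$ is $-\Delta$-bounded with bound $<1$, $H$ is self-adjoint on $H^2(\R^n)$ and bounded below, and $\ciz(\R^n)$ is a form core. The map $R\mapsto \Sigma_R$ is non-decreasing, since $\D_{R'}\subset \D_R$ for $R'\ge R$. It is also bounded above as long as $\D_R$ is nonempty (translate a bump function far away). Hence the limit $\Sigma:=\lim_R\Sigma_R$ exists in $\R\cup\{+\infty\}$.

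\textbf{Step 1: $\Sigma\le\inf\sess(H)$.} Let $\lambda\in\sess(H)$ and use Weyl's criterion to pick a singular Weyl sequence $\psi_j$ with $\norm{\psi_j}=1$, $\psi_j\rightharpoonup 0$ and $\norm{(H-\lambda)\psi_j}\to0$. Fix $R$ and a smooth cutoff $\chi_R$ equal to $1$ on $B(0,R)$ and $0$ outside $B(0,2R)$, and set $\eta_R:=\sqrt{1-\chi_R^2}$.

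The key observation is that $\chi_R\psi_j\to0$ in norm. Indeed $(\psi_j)$ is bounded in $H^2$, because the graph norm of $H$ is equivalent to the $H^2$ norm. Multiplication by $\chi_R$ composed with the embedding $H^1\hookrightarrow L^2$ is compact by Rellich, and compact operators map weakly null sequences to norm null ones. The same argument shows $\nabla\chi_R\,\psi_j\to0$.

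Next apply the IMS formula $H=\chi_RH\chi_R+\eta_RH\eta_R-|\nabla\chi_R|^2-|\nabla\eta_R|^2$. This gives $\langle\psi_j,H\psi_j\rangle\ge\langle\eta_R\psi_j,H\eta_R\psi_j\rangle+o(1)$, using that $H\ge -c$ to bound the $\chi_R$ term from below by $-c\norm{\chi_R\psi_j}^2\to0$. Since $\norm{\eta_R\psi_j}\to1$, and $\eta_R\psi_j$ can be approximated in form norm by elements of $\D_R$, we get $\lambda\ge\Sigma_R$. Letting $R\to\infty$ gives $\lambda\ge\Sigma$.

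\textbf{Step 2: $\inf\sess(H)\le\Sigma$.} Assume for contradiction that $\Sigma_R>\mu:=\inf\sess(H)+\delta$ for some $R$ and some $\delta>0$. The spectral projection $P:=\ind_{(-\infty,\mu)}(H)$ then has infinite rank. We will show instead that $P$ is finite rank; this is the main obstacle.

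Bound $H$ from below with the same IMS partition:
\begin{align*}
H\ge \chi_RH\chi_R+\eta_RH\eta_R-\tfrac{C}{R^2}\ge \chi_R(H-\Sigma_R)\chi_R+\Sigma_R-\tfrac{C}{R^2}.
\end{align*}
Here the second inequality uses $\eta_RH\eta_R\ge\Sigma_R\eta_R^2$, which holds by density of $\D_R$ in the form domain restricted to the support of $\eta_R$, together with $\chi_R^2+\eta_R^2=1$. Set $K:=\chi_R(H-\Sigma_R)\chi_R$; it is relatively form-compact, since $\chi_R(H+c)^{-1/2}$ is compact by Rellich. Taking $R$ large enough that $\Sigma_R-C/R^2>\mu$, Weyl's theorem applied to $\Sigma_R-CR^{-2}+K$ gives
\begin{align*}
\inf\sess(H)\ge \inf\sess\big(\Sigma_R-CR^{-2}+K\big)=\Sigma_R-CR^{-2}>\mu,
\end{align*}
via the min–max comparison. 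This contradicts the definition of $\mu$.

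The delicate points are the density and cutoff arguments needed to apply Weyl's theorem to $K$ in the form sense. All of them rest only on Rellich compactness and on the relative boundedness of $V$.
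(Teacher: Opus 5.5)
Your Step 1 is a correct proof of $\inf\sess(H)\ge\Sigma$, with one cosmetic fix: $\sqrt{1-\chi_R^2}$ need not be smooth, so take a smooth pair $\chi_R,\eta_R$ with $\chi_R^2+\eta_R^2=1$ from the start. The gap is Step 2, which never proves the inequality it announces, $\inf\sess(H)\le\Sigma$. To get that inequality by contradiction you would have to assume $\inf\sess(H)>\Sigma$. Instead you assume $\Sigma_R>\inf\sess(H)+\delta$, which is the negation of the Step 1 inequality and is already ruled out by Step 1. Accordingly, your argument ends with $\inf\sess(H)\ge\Sigma_R-CR^{-2}$, which is again a \emph{lower} bound on the essential spectrum. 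The IMS estimate $H\ge\chi_R(H-\Sigma_R)\chi_R+\Sigma_R-C/R^2$, combined with relative compactness of the localized term, is the standard alternative proof of $\inf\sess(H)\ge\Sigma$. An estimate of this type can never exhibit spectrum of $H$ near $\Sigma$. As written, you therefore have two proofs of the same inequality and none of the converse.

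The missing direction requires using the functions of $\D_R$ as trial states. Choose $R_k\to+\infty$ and normalized $\psi_k\in\D_{R_k}$ with $\langle\psi_k,H\psi_k\rangle\le\Sigma_{R_k}+1/k\le\Sigma+1/k$. Take each $R_{k+1}$ larger than the radius of a ball containing $\supp\psi_k$, so that the supports are pairwise disjoint. Since $-\Delta+V(X)$ is local, $\langle\psi_k,H\psi_l\rangle=0$ for $k\neq l$. Hence $\langle\phi,H\phi\rangle\le(\Sigma+1/k_0)\norm{\phi}^2$ on the infinite-dimensional space spanned by $\{\psi_k\}_{k\ge k_0}$. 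By the min--max principle, $\ind_{(-\infty,\Sigma+\varepsilon]}(H)$ has infinite rank for every $\varepsilon>0$, which forces $\inf\sess(H)\le\Sigma$.

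An equivalent route goes through a finite-rank projection. Suppose $\inf\sess(H)>\Sigma+2\varepsilon$. Then $P:=\ind_{(-\infty,\Sigma+2\varepsilon)}(H)$ has finite rank, so $\norm{P\psi}\to0$ uniformly over normalized $\psi\in\D_R$ as $R\to+\infty$. The bound $\langle\psi,H\psi\rangle\ge(\Sigma+2\varepsilon)\norm{(1-P)\psi}^2+\inf\sigma(H)\norm{P\psi}^2$ then contradicts the existence, for every $R$, of some $\psi\in\D_R$ with $\langle\psi,H\psi\rangle\le\Sigma+\varepsilon$.
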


Persson's formula expresses the fact that the bottom of the essential spectrum is determined by the behavior of the Schrödinger operator at infinity. For a waveguide with compactly supported curvature (Hypothesis~\ref{h2_geom_compact_curvature}), infinity corresponds to the straight part of the guide. One therefore expects the curved and straight waveguides to have the same essential spectrum, as stated in \cite[Proposition 3.1]{pv_spsqv} and recalled in Equation~\eqref{eq:sess_curved_straight_equal}. The figure below illustrates this idea.

\begin{figure}[H]
\centering
\resizebox{1\textwidth}{!}{% [inline block 1: 1 envs, 70392 chars -> data_tex | \begin{tikzpicture}   \begin{axis}[...]

 }
\caption{The use of Persson's formula for the quantum waveguide.}
\label{fig:persson}
\end{figure}

We now turn this idea into a new proof of \cite[Proposition 3.1]{pv_spsqv}, as announced above.

\begin{proposition}\label{sec:prelim:prop_persson_small_system}
    The Hamiltonians $\ha_{\mspace{1mu}\el,S}$ and $\ha_{\mspace{1mu}\el,C}$, corresponding to the straight and curved waveguides respectively, have the same essential spectrum. 
\end{proposition}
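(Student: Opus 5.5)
We argue via Persson's formula, applied to both Hamiltonians in Cartesian coordinates, where $\ha_{\mspace{1mu}\el,S} = -\Delta + V_S(X)$ and $\ha_{\mspace{1mu}\el,C} = -\Delta + V_C(X)$ act on the same space $L^2(\R^2)$ with bounded potentials. This gives $\inf\sess(\ha_{\mspace{1mu}\el,\bullet}) = \lim_{R\to\infty}\Sigma_R^\bullet$. The statement then reduces to two claims. First, $\lim_R \Sigma_R^C = \lim_R\Sigma_R^S = \varepsilon_0$. Second, $\sess(\ha_{\mspace{1mu}\el,C})$ is the whole half-line $[\varepsilon_0,+\infty[$.

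For the straight guide we use separation of variables, $\ha_{\mspace{1mu}\el,S} = -\partial_{\xu}^2\otimes I + I\otimes(-\partial_{\xd}^2+V(X_2))$. This gives $\sigma(\ha_{\mspace{1mu}\el,S}) = [0,\infty[ + \sigma(-\partial_{\xd}^2+V) = [\varepsilon_0,\infty[$. This set has no isolated points, so it is purely essential. We also get $\ip{\psi}{\ha_{\mspace{1mu}\el,S}\psi}\ge\varepsilon_0\norm{\psi}^2$ for every $\psi$, and hence $\Sigma_R^S\ge\varepsilon_0$.

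For the curved guide we take $\psi\in\ciz(\R^2\setminus B(0,R))$ with $R$ large.

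\textbf{Lower bound.} By \ref{h2_geom_compact_curvature}, only the two straight half-branches of $\gamma$ lie outside a large ball. By \ref{h4_geom_not_u_shaped}, these branches are far apart. Their tubular neighbourhoods of width $a$ do not meet $\Cut(\gamma)$ and are disjoint far out. We split $\psi$ with an IMS partition $\chi_1^2+\chi_2^2+\chi_3^2=1$. Here $\chi_1,\chi_2$ are supported near the two branches in regions where the curvilinear coordinates are Euclidean (the branch is a straight half-line), and $\chi_3$ is supported away from $\Omega^a$, where $V_C=0$. The partition is chosen at scale growing with $R$, so that $\sum_j\abs{\nabla\chi_j}^2 = O(1/R^2)$. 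Then the IMS formula gives
\begin{align*}
\ip{\psi}{\ha_{\mspace{1mu}\el,C}\psi} \ge \sum_{j} \ip{\chi_j\psi}{\ha_{\mspace{1mu}\el,C}\chi_j\psi} - O(R^{-2})\norm{\psi}^2 .
\end{align*}
On $\supp\chi_{1,2}$, the operator $\ha_{\mspace{1mu}\el,C}$ coincides, after a rigid motion, with $\ha_{\mspace{1mu}\el,S}$, so these terms are at least $\varepsilon_0\norm{\chi_j\psi}^2$. On $\supp\chi_3$, the term is at least $0\ge\varepsilon_0\norm{\chi_3\psi}^2$, since $\varepsilon_0\le 0$ because $V\le0$. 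Hence $\liminf_R \Sigma_R^C\ge\varepsilon_0$.

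\textbf{Upper bound and gaplessness.} We use Weyl sequences. Let $\phi_0$ be a (quasi-)minimizer of $-\partial_u^2+V$, localized to $|u|<a'$ with error $\delta$. Take $\psi_n(s,u)=n^{-1/2}\theta((s-s_n)/n)e^{iks}\phi_0(u)$ with $s_n\to\infty$ along one straight branch. These have disjoint supports far out, and by the above they are straight-guide states. They satisfy $\norm{(\ha_{\mspace{1mu}\el,C}-\varepsilon_0-k^2)\psi_n}\to0$, up to the localization error, which is then sent to zero by a diagonal argument. This puts every $\varepsilon_0+k^2$ in $\sess(\ha_{\mspace{1mu}\el,C})$, so $[\varepsilon_0,\infty[\subset\sess(\ha_{\mspace{1mu}\el,C})$. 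Combined with the lower bound, this gives $\sess(\ha_{\mspace{1mu}\el,C})=[\varepsilon_0,\infty[=\sess(\ha_{\mspace{1mu}\el,S})$.

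The main obstacle is the geometric construction of the partition of unity at infinity. We need cutoffs adapted to the two branches whose gradients are small, and whose supports stay inside $\mathcal O_\gamma\setminus\supp\kappa$ so that the operator is exactly straight there. This relies on Proposition~\ref{prop:appendixA_cut_radius_infty} ($c_\pm(s)\to\infty$). We would first try cutoffs of the form $\eta(u/c(s))$ in curvilinear coordinates, combined with a radial cutoff $|x|>R$, and check that the gradient bound is $O(1/\min(R,c))$.
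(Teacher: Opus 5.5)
Your proposal is correct and follows essentially the paper's argument: Persson's formula with an IMS partition of unity adapted to the two straight branches (transverse width controlled by $c_\pm(s)\to+\infty$), plus Weyl sequences along a branch for the inclusion $[\varepsilon_0,+\infty[\subset\sess(\ha_{\el,C})$. The differences are economies. You bound the branch pieces by the explicit straight threshold $\varepsilon_0$ (from separation of variables) rather than by $\Sigma^S_{R-b}$, which removes the support-shift bookkeeping. You also let the Weyl sequences supply $\inf\sess(\ha_{\el,C})\leq\varepsilon_0$, which makes the paper's symmetric converse Persson argument unnecessary.
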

\begin{proof}
    We rely on the Weyl-sequence construction from the proof of \cite[Proposition 3.1]{pv_spsqv}, which shows that every energy above the straight threshold belongs to both essential spectra, so it only remains to show that their infima are equal, which we establish using the previous proposition (whose hypotheses are satisfied since $V_\bullet(X)$ is bounded). Denote $\Sigma^S_R$ and $\Sigma^C_R$ the quantities from Persson's formula for the straight and the curved waveguide respectively. It suffices to show $\underset{R \to +\infty}{\lim}\Sigma_R^C \geq \underset{R \to +\infty}{\lim}  \Sigma_R^S$ and $\underset{R \to +\infty}{\lim}\Sigma_R^C \leq \underset{R \to +\infty}{\lim}  \Sigma_R^S$, which we will do using the IMS localization formula and a partition of $\R^2$ close to the one used in \cite[Proposition 3.1]{pv_spsqv}. Since we take the limit $R \to +\infty$, all the arguments only need to hold for $R$ sufficiently large.
    \smallbreak
    The curved part of the waveguide corresponds to the curvilinear coordinates in $\supp \kappa \times ]-a,a[$, which is a subset of $\mathcal{O}_\gamma$ by Assumption~\ref{h3_geom_no_self_intersection}. Taking $R_0 = \underset{\supp \, \kappa}{\sup} \abs{\gamma} + a$, we see that all of this curved part is contained in $B(0,R_0)$ (see Appendix \ref{app:curve} for curvilinear coordinates).  
    \smallbreak
    \noindent Consequently for $R \geq R_0$ we are left, outside of $B(0,R)$, with 3 parts: the two \textit{straight} branches of the waveguide outside the ball and the rest, where $V_C = 0$. Because we excluded the curvature with $B(0,R)$, the Hamiltonian is $-\partial_s^2-\partial_u^2+V(U)$ on the two straight branches (see \eqref{eq:curved_electronic_hamiltonian}), but it does not exactly correspond to the straight operator $\ha_{\mspace{1mu}\el,S}$ since the transverse variable $u$ does not live in $\R$ but only in ${]-c_-(s), c_+(s)[}$. However, since we assumed that the waveguide is not U-shaped in Assumption~\ref{h4_geom_not_u_shaped}, we have $c_\pm(s) \underset{\abs{s} \to + \infty}{\longrightarrow}+\infty$, so at infinity, the transverse variable runs through $\R$ and we should retrieve the straight Hamiltonian. This is also the core idea of the proof from \cite{pv_spsqv}, where the arc-length parameter must be taken arbitrarily far along the branches so that the transverse variable can range over arbitrarily large intervals. Here, we choose to use the IMS localization formula. 

    \smallbreak
    
    For the remainder of the proof, we assume $R \geq R_0$ and we fix an origin $s=0$ for the arc-length parametrization of $\gamma$. For $R$ large enough that $\varphi( \{0 \} \times [-a,a]) \subset B(0,R)$, let $s_+(R)>0$ and $s_-(R)<0$ denote the exit parameters of the positive and negative branches from $B(0,R)$, that is, the first arc-length parameters encountered when moving from $s=0$ toward $+\infty$ and $-\infty$, respectively, for which the transverse segment $\varphi(\{s\}\times[-a,a])$ meets $\partial B(0,R)$: 
    \begin{align*}
    s_+(R) &:= \sup\ens{ 
        s > 0 \,}{\, \varphi\bigl( [0,s] \times [-a,a] \bigr) \subset B(0,R)} \\
    s_-(R) &:= \inf\ens{ 
        s < 0 \,}{\, \varphi\bigl( [s,0] \times [-a,a] \bigr) \subset B(0,R) }.
    \end{align*}
    Observe that since each branch is straight at infinity, it differs from the straight waveguide by a fixed Euclidean transformation, so one has $s_\pm(R) = \pm R + O(1)$ as $R \to +\infty$.  Now let $S_\pm(R):=s_\pm(R)\mp\sqrt{R}$ denote the longitudinal transition thresholds used below in the construction of the partition of unity and let
    \begin{align*}
        u_+(R) = \frac{1}{2} \: \, \underset{s \, \geq \, S_{+}(R)}{\inf}\min \big(c_-(s), c_+(s)\big) \\
        u_-(R) = \frac{1}{2} \: \, \underset{s \, \leq \, S_{-}(R)}{\inf}\min \big(c_-(s), c_+(s)\big)
    \end{align*}
    denote minimal safe half-widths ensuring that the strips $\Omega_{R,+}$ (defined below) avoid the cut locus. We set: 
    \begin{align*}
        \Omega_{R,+} = \ens{\varphi(s,u)}{(s,u) \in \mathcal{O}_\gamma, \, s \geq S_+(R), \,u \in [-u_+(R), u_+(R)]} \subset \R^2 \setminus \Cut(\gamma)\\
        \Omega_{R,-} = \ens{\varphi(s,u)}{(s,u)\in \mathcal{O}_\gamma, \, s \leq S_-(R), \,u \in [-u_-(R), u_-(R)]} \subset \R^2 \setminus \Cut(\gamma).
    \end{align*}
    Let $\theta\in\ci(\R,[0,1])$ with $\theta(t)=0$ for $t\leq-1$ and $\theta(t)=1$ for $t\geq0$, and let $\rho\in\ci(\R,[0,1])$ with $\rho=1$ on $[-\frac12,\frac12]$ and $\rho=0$ outside $[-1,1]$. Set
    \begin{align*}
    \eta_{R,+}(s,u) &:= \theta\left(\frac{s-s_+(R)}{\sqrt R}\right) \rho\left(\frac{u}{u_+(R)}\right)\\
    \eta_{R,-}(s,u) &:= \theta\left(\frac{s_-(R)-s}{\sqrt R}\right) \rho\left(\frac{u}{u_-(R)}\right)\\
    N_R &:= \left( \eta_{R,+}^2+\eta_{R,-}^2 +\left(1-\eta_{R,+}-\eta_{R,-}\right)^2 \right)^{1/2},
    \end{align*}
    and define
    \begin{align*}
    \chi_{R,\pm} &:= \frac{\eta_{R,\pm}}{N_R} & \chi_R &:= \frac{1-\eta_{R,+}-\eta_{R,-}}{N_R}.
    \end{align*}
    For $R$ sufficiently large, the supports of $\eta_{R,+}$ and $\eta_{R,-}$ are disjoint; hence $1-\eta_{R,+}-\eta_{R,-}\in[0,1]$, and whenever one of the functions $\eta_{R,\pm}$ equals one, the other vanishes. Moreover, the Cauchy--Schwarz inequality gives $1 = \left( \eta_{R,+}+\eta_{R,-} +1-\eta_{R,+}-\eta_{R,-} \right)^2 \leq 3N_R^2$ so that $N_R\geq 1/\sqrt3$. Hence these functions are smooth and satisfy $\chi_{R,+}^2+\chi_{R,-}^2+\chi_R^2=1$. Furthermore,
    \begin{align*}
    \supp(\chi_{R,\pm}\circ\varphi^{-1}) = \supp(\eta_{R,\pm}\circ\varphi^{-1}) \subset\Omega_{R,\pm}.
    \end{align*}
    Then $\tilde{j}_{R,\pm}:=\chi_{R,\pm}\circ\varphi^{-1}$ and $\tilde{j}_{R}:=\chi_R\circ\varphi^{-1}$ ($\varphi$ is the change of variables from Subsection~\ref{subsec:pres_waveguide}) define a partition of unity (see Definition~3.1 in \cite{SchroOPGlobalGeom}) adapted to the decomposition of $\R^2\setminus B(0,R)$ into $\Omega_{R,+}$, $\Omega_{R,-}$, and $\R^2\setminus \left(B(0,R)\cup\Omega_{R,+}\cup\Omega_{R,-}\right)$. The functions $\tilde{j}_{R,\pm}$ vanish in a neighborhood of the cut locus, so their extensions by zero across it are smooth on $\R^2$, whereas $\tilde{j}_{R}$ is equal to one in a neighborhood of the cut locus and therefore extends smoothly by one across it. The geometry of this partition is illustrated in the following figure:

    \begin{figure}[H]
    \centering
    \resizebox{1\textwidth}{!}{% [inline block 2: 1 envs, 74776 chars -> data_tex | \begin{tikzpicture}   \begin{axis}[...]

 }
    \caption{Schematic representation of the partition of unity used in the proof. The cutoffs $\widetilde j_{R,\pm}$ are supported in $\Omega_{R,\pm}$, whereas $\widetilde j_R$, which is not shown in the figure, is supported outside the green regions. Only part of the cut locus is depicted for illustration, and the figure is not to scale.}
    \label{fig:partition_of_unity}
    \end{figure}

    Using the IMS localization formula for Schrödinger operators (see, e.g., \cite[Theorem 3.2]{SchroOPGlobalGeom}), one has, for $\psi \in \D_R$ with $\norm{\psi}=1$: 
    \begin{align*}
         \ip{\psi}{\ha_{\mspace{1mu}\el,C}\mspace{1mu}\psi} &= \ip{\tilde{j}_{R}\mspace{1mu}\psi}{\ha_{\mspace{1mu}\el,C}\mspace{1mu}\tilde{j}_{R}\mspace{1mu}\psi} + \sum_{i=+,-}\ip{\tilde{j}_{R,i}\mspace{1mu}\psi}{\ha_{\mspace{1mu}\el,C}\mspace{1mu}\tilde{j}_{R,i}\mspace{1mu}\psi} +\underset{R \, \to \,  +\infty}{o\mspace{2mu}(1)}.
    \end{align*}
    Let us look at each term separately (in particular, we justify below that the last term is indeed negligible). \\  
    For $R$ sufficiently large, the values $\abs{S_\pm(R)}$ are large enough to ensure that $\tilde{j}_{R,\pm}\mspace{2mu}\psi$ are supported in disjoint regions where $\kappa = 0$. Upon passing to curvilinear coordinates via the unitary straightening map, and canonically identifying a function with support in $\mathcal{O}_\gamma$ through $L^2(\mathcal{O}_\gamma, ds \mspace{1mu} du) \hookrightarrow L^2(\R^2, ds \mspace{1mu} du)$, this means:
    \begin{align*}
        \ip{\tilde{j}_{R,\pm}\mspace{1mu}\psi}{\ha_{\mspace{1mu}\el,C}\mspace{1mu}\tilde{j}_{R,\pm}\mspace{1mu}\psi}= \ip{\tilde{j}_{R,\pm}\mspace{1mu}\psi}{\ha_{\mspace{1mu}\el,S}\mspace{1mu}\tilde{j}_{R,\pm}\mspace{1mu}\psi}.
    \end{align*}
    However, after the above change of coordinates and identification with the reference straight waveguide, the support of $\tilde{j}_{R,\pm}\mspace{1mu}\psi$ need not remain outside $B(0,R)$. Since this identification is given on each straight branch by a fixed Euclidean transformation, there exists a constant $b>0$, independent of $R$, such that $\supp (\tilde{j}_{R,\pm}\mspace{1mu}\psi) \subset \R^2\setminus B(0,R-b)$ in $(s,u)$ coordinates. Hence, by definition of $\Sigma_{R-b}^S$:
    \begin{align*}
        \ip{\tilde{j}_{R,\pm}\mspace{1mu}\psi}{\ha_{\mspace{1mu}\el,C}\mspace{1mu}\tilde{j}_{R,\pm}\mspace{1mu}\psi}= \ip{\tilde{j}_{R,\pm}\mspace{1mu}\psi}{\ha_{\mspace{1mu}\el,S}\mspace{1mu}\tilde{j}_{R,\pm}\mspace{1mu}\psi} \geq \Sigma_{R-b}^S \norm{\tilde{j}_{R,\pm}\psi}^2.  
    \end{align*}
    Since $S_\pm(R) \underset{R\to + \infty}{\to} \pm \infty$ and the cut radii maps go to infinity with the arc length, we have $u_\pm(R)\underset{R\to + \infty}{\to} +\infty$ and $u_\pm(R)\geq 2a$ for $R$ sufficiently large. Hence, at every point of $\Omega^a\setminus B(0,R)$, the corresponding curvilinear coordinates satisfy either $s\geq s_+(R)$ or $s\leq s_-(R)$, and $\abs{u}\leq a$. Therefore, either $\tilde{j}_{R,+}=1$ or $\tilde{j}_{R,-}=1$ at such a point, so that $\tilde{j}_{R}=0$. It then follows from $\psi = 0$ on $B(0,R)$ that $\supp (\tilde{j}_{R} \mspace{1mu} \psi) \subset \R^2 \setminus (\Omega^a \cup B(0,R))$. On this region, $V_C= 0$, $\ha_{\mspace{1mu}\el,C} = -\Delta$, and since $V_S \leq 0$, we find $\ha_{\mspace{1mu}\el,C} = -\Delta \geq - \Delta + V_S = \ha_{\mspace{1mu}\el,S}$. This yields, by definition of $\Sigma_R^S$:
    \begin{align*}
        \ip{\tilde{j}_{R}\mspace{1mu}\psi}{\ha_{\mspace{1mu}\el,C}\mspace{1mu}\tilde{j}_{R}\mspace{1mu}\psi} \geq \ip{\tilde{j}_{R}\mspace{1mu}\psi}{\ha_{\mspace{1mu}\el,S}\mspace{1mu}\tilde{j}_{R}\mspace{1mu}\psi} \geq  \Sigma_R^S \norm{\tilde{j}_{R} \mspace{1mu}\psi}^2.
    \end{align*}
    Combining the above estimates and using that $R\mapsto\Sigma_R^S$ is nondecreasing (so that $\Sigma_R^S \geq \Sigma_{R-b}^S$), we obtain from the IMS localization formula that: 
    \begin{align*}
        \ip{\psi}{\ha_{\mspace{1mu}\el,C}\mspace{1mu}\psi}  &\geq \Sigma_{R-b}^S\underbrace{\left(\norm{\tilde{j}_{R,+}\psi}^2 + \norm{\tilde{j}_{R,-}\psi}^2 +\norm{\tilde{j}_{R}\psi}^2\right)}_{=\norm{\psi}^2=1} + \underset{R\to +\infty}{o\,(1)} = \Sigma_{R-b}^S + \underset{R\to +\infty}{o\,(1)}.
    \end{align*}
    The negligible term comes from the bounded gradients of the functions from the partition of unity (see again \cite[Definition 3.1 and Theorem 3.2]{SchroOPGlobalGeom}): since $N_R \geq 1 / \sqrt{3}$, the normalization preserves the gradient estimates up to a constant factor, and the term is indeed negligible because $u_\pm(R)$, $\sqrt{R}$ go to $+\infty$ as $R$ does. Then, taking the infimum over $\psi \in \D_R$ with $\norm{\psi}=1$ yields: 
    \begin{align*}
        \Sigma_R^C \geq \Sigma_{R-b}^S + \underset{R\to +\infty}{o\,(1)}.
    \end{align*}
    Taking $R \to +\infty$ in this inequality and using that $b>0$ is independent of $R$, one finds $\underset{R \to +\infty}{\lim}\Sigma_R^C \geq \underset{R \to +\infty}{\lim}  \Sigma_R^S$ as wanted.

    \smallbreak

    For the converse inequality, one can apply exactly the same method by adapting the previous geometric decomposition to the straight waveguide: we take $s_\pm(R)=\pm\sqrt{R^2-a^2}$ and $S_\pm(R):=s_\pm(R)\mp\sqrt R$, and, in order to transport the localized functions to the curved branches within a region where the curvilinear coordinates are one-to-one, we define $u_\pm(R)$ as before through the cut radii maps of the curved waveguide and use the corresponding normalized partition of unity. The estimates on the straight strips are obtained by exchanging the roles of the straight and curved Hamiltonians; the same fixed Euclidean transformations yield the same constant $b>0$, and hence
    \begin{align*}
        \ip{\tilde{j}_{R,\pm}\mspace{1mu}\psi}{\ha_{\mspace{1mu}\el,S}\mspace{1mu}\tilde{j}_{R,\pm}\mspace{1mu}\psi}
        = \ip{\tilde{j}_{R,\pm}\mspace{1mu}\psi}{\ha_{\mspace{1mu}\el,C}\mspace{1mu}\tilde{j}_{R,\pm}\mspace{1mu}\psi}         \geq \Sigma_{R-b}^C\norm{\tilde{j}_{R,\pm}\psi}^2.
    \end{align*}
    For the remaining term, one replaces $\Omega^a$ in the previous argument by the straight strip $\R\times[-a,a]$: outside $B(0,R)$, either $\tilde{j}_{R,+}=1$ or $\tilde{j}_{R,-}=1$ on this strip, so that
    \begin{align*}
        \supp(\tilde{j}_{R}\mspace{1mu}\psi) \subset \R^2\setminus\left(\bigl(\R\times[-a,a]\bigr)\cup B(0,R)\right).
    \end{align*}
    On this support, $V_S=0$, whereas $V_C\leq0$, and therefore
    \begin{align*}
        \ip{\tilde{j}_{R}\mspace{1mu}\psi}{\ha_{\mspace{1mu}\el,S}\mspace{1mu}\tilde{j}_{R}\mspace{1mu}\psi}
        \geq \ip{\tilde{j}_{R}\mspace{1mu}\psi}{\ha_{\mspace{1mu}\el,C}\mspace{1mu}\tilde{j}_{R}\mspace{1mu}\psi}
        \geq \Sigma_R^C\norm{\tilde{j}_{R}\mspace{1mu}\psi}^2.
    \end{align*}
    The rest of the proof is exactly as above, with $S$ and $C$ exchanged, and gives
    \begin{align*}
        \underset{R\to+\infty}{\lim}\Sigma_R^S \geq \underset{R\to+\infty}{\lim}\Sigma_R^C,
    \end{align*}
    as wanted. Having both inequalities and Persson's formula enables us to conclude that
    \begin{align*}
        \inf\sess(\ha_{\mspace{1mu}\el,S}) = \inf\sess(\ha_{\mspace{1mu}\el,C}).
    \end{align*}
    Together with the Weyl-sequence construction from the proof of \cite[Proposition 3.1]{pv_spsqv}, which gives the gapless structure of both essential spectra, this proves
    \begin{align*}
        \sess(\ha_{\mspace{1mu}\el,S}) = \sess(\ha_{\mspace{1mu}\el,C}).
    \end{align*}
\end{proof}

%%%%%%%%%%%%%%%%%%%%%%%%%%

%À FAIRE DANS CETTE SECTION

%%%%%%%%%%%%%%%%%%%%%
 
% 4) Détermination du spectre essentiel
\section{Determination of the essential spectrum}\label{sec:ess_spec}

\begin{notation}
    We denote $\Ha_S$ (resp. $\Ha_C$) the Nelson Hamiltonian corresponding to the straight waveguide (resp. the curved waveguide), i.e. with the small system Hamiltonian given by $\ha_{\mspace{1mu}\el,S}$ (resp. $\ha_{\mspace{1mu}\el,C}$) - see Table \ref{tab:hamiltonian_comparison_coordinates}. When a statement applies to both $\Ha_S$ and $\Ha_C$, we simply write $\Ha$.
\end{notation}

In this section, our goal is to determine the essential spectra of the Nelson Hamiltonians $\Ha_C$ and $\Ha_S$, in order to establish the spectral picture announced in Subsection~\ref{model:subsec_general_ideas}. More precisely, we prove that the essential spectra have a gapless structure, relate the bottom of $\sess(\Ha_C)$ to the bottom of $\sigma(\Ha_S)$, and show that $\Ha_S$ has no discrete spectrum. \\
The analysis proceeds differently in the two cases. The straight Hamiltonian is studied through its direct-integral decomposition into longitudinal-momentum fibers. For the curved waveguide, the strategy is based on the Persson-type argument introduced in the previous section. In particle-field models with decaying potentials, the limit of the localized quantities $\Sigma_R$ is usually interpreted as the ionization threshold; see for instance \cite[Theorem 6]{editnrqed}, \cite[Proposition 4.2]{analisa} and \cite[Hypothesis 2]{AC_rayleigh_scatt_1}. This is the framework from which our argument is mainly inspired: for massive Nelson Hamiltonians, one typically obtains, under suitable assumptions, an HVZ-type lower bound of the form $\inf \sess(\Ha) \geq \min(\Sigma,\inf\sigma(\Ha)+m)$; see for example \cite[Theorem 2]{AC_rayleigh_scatt_1} and \cite[Theorem 6.1]{analisa}. However, our waveguide potential is not decaying in the longitudinal direction, so these results cannot be applied directly. We therefore adapt the Persson argument to the present geometry and to the coupling with the quantized field.

\bigbreak 

We first recall the definition of the ionization threshold, following, e.g., \cite[(3)]{editnrqed}: 
\vspace{5pt}
\begin{mdframed}[innertopmargin=-2pt]
\begin{definition}[Ionization threshold]\label{sec:ess:def_ionization_threshold}
Let $\Q_R = \ens{\psi \in \Q(\Ha)}{\psi = 0  \text{ a.e. on }B(0,R)}$. Then the ionization threshold $\Sigma$ is given by: 
\begin{align*}
    \Sigma = \lim_{\rule{0pt}{1.6ex} R \to +\infty} \Sigma_R = \lim_{\rule{0pt}{1.6ex} R \to +\infty} \hspace{0.2cm} \inf_{\substack{\rule{0pt}{1.6ex} \psi \, \in \, \Q_R \\[2.5pt] \norm{\psi} \, = \, 1}} \hspace{3pt} \q_\Ha[\psi].  
\end{align*}
\end{definition}
\end{mdframed}
\vspace{7pt}
\noindent To prepare the localization argument of Proposition~\ref{sec:ess_spec:prop_equality_ionization_thresholds}, we first identify $\Q_R$ explicitly and show that the infimum defining $\Sigma$ may be restricted to vectors that are smooth and compactly supported in the particle variable.

\begin{proposition}\label{sec:ess:prop_computation_sigma_D_R}
    One has, as a subset of $L^2(\R^2,\Fs)$: 
    \begin{align*}
        \Q_R = H^1_0(\R^2 \setminus \overline{B(0,R)},\Fs) \cap L^2\hspace{-2pt}\left(\R^2\setminus \overline{B(0,R)},\D\big(\dG(\omega(K))^{1/2}\big) \right). 
    \end{align*}
    Moreover, it is enough to compute $\Sigma$ on $\D_R = \ciz\big(\R^2 \setminus \overline{B(0,R)}\big) \alten \D$, i.e.: 
    \begin{align*}
        \Sigma = \lim_{\rule{0pt}{1.6ex} R \to +\infty} \hspace{0.2cm} \inf_{\substack{\rule{0pt}{1.6ex} \psi \, \in \, \D_R \\[2.5pt] \norm{\psi} \, = \, 1}}  \ipldfs{\psi}{\Ha \psi},  
    \end{align*}
    where $\D$ is a core for $\dG(\omega(K))$.
\end{proposition}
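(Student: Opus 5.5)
The plan has two parts: first identify $\Q_R$ explicitly, then show that the infimum over $\Q_R$ equals the infimum over $\D_R$ for each fixed $R$, so the limits agree.

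\textbf{Identification of $\Q_R$.} By Proposition~\ref{prelim:prop:general_quadratic_form}, $\Q(\Ha)=H^1(\R^2,\Fs)\cap L^2(\R^2,\D(\dG(\omega(K))^{1/2}))$. For the second factor, the condition $\psi=0$ a.e.\ on $B(0,R)$ simply restricts the $x$-domain to $\R^2\setminus\overline{B(0,R)}$; the sphere is Lebesgue-null. For the first factor I must show that $\{f\in H^1(\R^2,\Fs): f=0 \text{ a.e. on } B(0,R)\}$ coincides with $H^1_0(\R^2\setminus\overline{B(0,R)},\Fs)$. This is the vector-valued version of the classical fact that for an exterior domain with smooth (Lipschitz) boundary, an $H^1(\R^2)$ function vanishing on the complement lies in $H^1_0$. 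I would prove it by reducing to the scalar case. Take an orthonormal basis $(e_n)$ of $\Fs$ and expand $f=\sum_n f_n e_n$ with $f_n=\langle e_n,f\rangle_{\Fs}\in H^1(\R^2)$ and $\|f\|_{H^1}^2=\sum_n\|f_n\|_{H^1}^2$. Each $f_n$ vanishes on the ball, so $f_n\in H^1_0$ scalar-wise. Truncating the sum then approximates $f$ in $H^1(\R^2,\Fs)$ by finite sums of elements of $H^1_0\otimes\Fs$. The converse inclusion is immediate, since $H^1_0$ is a closure of compactly supported functions and a.e.\ vanishing passes to $H^1$-limits.

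\textbf{Density.} For fixed $R$, set
\[
\tilde\Sigma_R:=\inf_{\psi\in\D_R,\ \norm{\psi}=1}\ipldfs{\psi}{\Ha\psi}.
\]
Since $\D_R\subset\D(\Ha)\cap\Q_R$, we have $\ipldfs{\psi}{\Ha\psi}=\q_\Ha[\psi]$ on $\D_R$, hence $\Sigma_R\le\tilde\Sigma_R$. For the reverse inequality it suffices to show that $\D_R$ is dense in $\Q_R$ for the form norm $\|\psi\|_{+}^2=\q_\Ha[\psi]+(c+1)\|\psi\|^2$, with $c$ a lower bound of $\Ha$. By continuity of $\q_\Ha$ and of the norm in this topology, the infima then agree; normalization is preserved along approximating sequences because the norms converge. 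By KLMN (proof of Proposition~\ref{prelim:prop:general_quadratic_form}), $\|\cdot\|_+$ is equivalent to the free form norm
\[
\|\nabla\psi\|^2+\q_{\Ha_f}[\psi]+\|\psi\|^2,
\]
since $V$ is bounded and $\Ha_I$ is form-bounded with small bound. It therefore suffices to approximate in $H^1\cap L^2(\D(\dG(\omega)^{1/2}))$.

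\textbf{Approximation argument.} The steps, in order:
\begin{enumerate}[label=(\alph*)]
\item Approximate $\psi\in\Q_R$ by vectors in the algebraic tensor product $\ciz(\R^2\setminus\overline{B(0,R)})\alten\D(\dG(\omega)^{1/2})$. The free form operator is $(-\Delta_{\mathrm{Dir}})\otimes I+I\otimes\dG(\omega)$ on the exterior domain, and the algebraic tensor product of cores of strongly commuting nonnegative operators is a form core for their sum (Appendix results, as in Proposition~\ref{sec:prelim:domain_splitting_and_cores}(iii)). Here $\ciz$ of the exterior domain is a form core for the Dirichlet Laplacian by the first part.
\item Replace $\D(\dG(\omega)^{1/2})$ by $\D$, using that a core for $\dG(\omega)$ is a form core, i.e.\ dense in $\D(\dG(\omega)^{1/2})$ for the graph norm.
\end{enumerate}
Each elementary tensor $f\otimes\xi$ satisfies $\|f\otimes\xi\|_{\mathrm{free}}^2\le C\|f\|_{H^1}^2\|\xi\|_{\D(\dG^{1/2})}^2$, so the two approximations combine.

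The main obstacle is step (a): the rigorous form-core statement for tensor products on the exterior domain together with the vector-valued $H^1_0$ identification. I would handle it via the spectral representation of the commuting pair, i.e.\ by working in a joint spectral decomposition.
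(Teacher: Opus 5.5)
Your proof is correct and follows the paper's architecture. Both identify $\Q_R$ with the form domain of $-\Delta^D_{\mathcal O_R}\otimes I+I\otimes\dG(\omega(K))$, show that $\D_R$ is dense in it, and pass to $\q_\Ha$ through the KLMN equivalence of form norms. The two sub-steps are handled differently.

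For the $H^1$ part of $\Q_R$, the paper uses the vector-valued trace theorem and the matching of interior and exterior traces. You instead expand in an orthonormal basis of $\Fs$ and reduce to the scalar fact that an $H^1(\R^2)$ function vanishing a.e.\ on the ball lies in $H^1_0$ of the exterior domain. Since $\nabla\langle e_n,f\rangle_{\Fs}=\langle e_n,\nabla f\rangle_{\Fs}$ and Parseval gives $\|f\|_{H^1}^2=\sum_n\|f_n\|_{H^1}^2$, the truncations converge in $H^1(\R^2,\Fs)$. Your route is therefore more elementary and avoids vector-valued trace theory altogether.

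For the density, your form-core argument with the cross estimate $\|f\otimes\xi\|_{\mathrm{free}}\le\|f\|_{H^1}\|\xi\|_{\D(\dG(\omega(K))^{1/2})}$ is the more careful of the two. The paper's one-line appeal to ``any operator core is a form core'' only works factor by factor. Indeed, $\ciz(\mathcal O_R)$ is a form core but not an operator core of the Dirichlet Laplacian on the exterior domain, since the closure of $-\Delta$ on it has domain $H^2_0(\mathcal O_R)$. Hence $\ciz(\mathcal O_R)\alten\D$ is not an operator core for the tensor sum.

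The step you flag as the main obstacle does not need a joint spectral decomposition. The space $\D(-\Delta^D_{\mathcal O_R})\alten\D(\dG(\omega(K)))$ is an operator core, hence a form core, for the closed tensor sum (the standard tensor-sum core result, Reed--Simon, Section VIII.10). Your cross estimate then lets you replace the first factor by $\ciz(\mathcal O_R)$, which is $H^1$-dense in $H^1_0(\mathcal O_R)$, and the second factor by $\D$, exactly as in your step (b).
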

\begin{proof}
    Let $\mathcal O_R := \R^2 \setminus \overline{B(0,R)}$. Let $\psi\in\Q_R \subset \Q(\Ha) = H^1(\R^2,\Fs)\cap L^2\hspace{-2pt}\left(\R^2,\D\big(\dG(\omega(K))^{1/2}\, \big)\right)$ (Proposition~\ref{prelim:prop:general_quadratic_form}) with $\psi = 0$ a.e. on $B(0,R)$. By the trace theorem for vector-valued functions \cite[Theorem 7.11]{trace_vectorvalued_sobolev} and the equality of interior and exterior traces for globally $H^1$ functions, the restriction $\psi|_{\mathcal O_R}$ has zero trace on $\partial\mathcal O_R$; therefore $\psi|_{\mathcal O_R}\in H^1_0(\mathcal O_R,\Fs)$. \\ Conversely, any element of $H^1_0(\mathcal O_R,\Fs)\cap L^2\hspace{-2pt}\left(\mathcal O_R,\D\big(\dG(\omega(K))^{1/2} \,\big)\right)$ extends by zero on $\overline{B(0,R)}$ to an element of $\Q(\Ha)$ vanishing on $B(0,R)$. 

    \smallbreak

    For the second claim, it suffices to prove that $\D_R$ is dense in $\Q_R$ for the norm associated to $\q_\Ha$ (since $\q_\Ha = \ipldfs{\cdot}{\mspace{1mu} \Ha \mspace{2mu \cdot}}$ on $\D_R$). One sees that $\Q_R$ is the form domain of the operator $ -\Delta^D_{\mathcal O_R}\otimes I +  I\otimes \dG(\omega(K))$, where $-\Delta^D_{\mathcal O_R}$ is the Dirichlet Laplacian on $\mathcal O_R$. Any operator core is a form core, so $\ciz(\mathcal{O}_R) \alten \D$ is dense in $\Q_R$ for the form norm of $ -\Delta^D_{\mathcal O_R}\otimes I +  I\otimes \dG(\omega(K))$, i.e. the  $H^1_0(\mathcal O_R,\Fs)\cap L^2\hspace{-2pt}\left(\mathcal O_R,\D\big(\dG(\omega(K))^{1/2} \,\big)\right)$ norm. But the latter is the $H^1(\R^2,\Fs) \cap L^2\hspace{-2pt}\left(\R^2,\D\big(\dG(\omega(K))^{1/2}\big) \right) = \Q(\Ha)$ norm restricted to the $H_0^1(\mathcal{O}_R,\Fs)$ functions, which is equivalent to the norm associated to $\q_\Ha$ by the KLMN argument used in the proof of Proposition~\ref{prelim:prop:general_quadratic_form}.
\end{proof}

%==========================================================
%       A first result using the literature
%==========================================================

\subsection{A first result using the literature}

As explained above, the existing results for decaying potentials cannot be applied directly here: we therefore use \cite[Theorem 2]{AC_rayleigh_scatt_1}. In the notation of that theorem, hypothesis (H1) is satisfied for $\omega(k)=\sqrt{k^2+m^2}$, while hypothesis (H2) requires exponential decay below the ionization threshold $\Sigma$. We first prove this decay estimate and to do so, we need the IMS localization formula, which we state in the next lemma.

\begin{lemma}[IMS localization formula]\label{ess_spec:lemma:IMS}
    Let $f$ be a smooth real-valued function with $f$, $\nabla f \in L^\infty(\R^2)$.
    One has, for any $\psi \in \Q(\Ha)$, the following double commutation property:
    \begin{align*}
         \q_\Ha[f(X)^2\psi, \psi] + \q_\Ha[\psi, f(X)^2\psi] - 2 \q_\Ha[f(X)\psi, f(X)\psi] = - 2 \ipldfs{\psi}{\abs{(\nabla f)(X)}^2 \psi }.
    \end{align*}
    Let $(j_i)_{i \in \llbracket 1, n \rrbracket}$ be a partition of unity (see \cite[\textit{Definition 3.1}]{SchroOPGlobalGeom}). Applying the previous equality to each $j_i$ and summing over $i$ yields:
    \begin{align*}
         \q_\Ha[\psi, \psi]  =   \sum_{i=1}^n \q_\Ha[j_i(X)\psi, j_i(X)\psi] - \sum_{i=1}^n \ipldfs{\psi}{\abs{(\nabla j_i)(X)}^2 \psi }.
    \end{align*}
    This is the so-called IMS localization formula. 
\end{lemma}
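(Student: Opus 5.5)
The plan is to reduce the double commutation identity to a computation with the single term of $\q_\Ha$ that does not commute with multiplication by $f(X)$, and then to sum over the partition.

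\textbf{Step 1: decompose the form.} By Proposition~\ref{prelim:prop:general_quadratic_form}, $\q_\Ha = \nldfs{\nabla\,\cdot}^2 + \q_V + \q_{\Ha_f} + g\,\q_{\Ha_I}$ on $\Q(\Ha)$. Since $f$ and $\nabla f$ are bounded and $f$ is smooth, $f(X)$ maps $H^1(\R^2,\Fs)$ into itself, because $\nabla(f\psi) = (\nabla f)\psi + f\nabla\psi$. It also commutes with $I\otimes \dG(\omega(K))^{1/2}$, since it acts only in the particle variable. Hence $f(X)$ and $f(X)^2$ preserve $\Q(\Ha)$, and every term in the identity is well defined.

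\textbf{Step 2: the commuting terms.} The potential $V_\bullet(X)\otimes I$, the field energy $I\otimes\dG(\omega(K))$ and the interaction $\Ha_I=\int^\oplus\phi(v_x)\,dx$ are all decomposable over $x$, so each commutes with the bounded multiplication operator $f(X)\otimes I$. For any such operator $B$ with form $\q_B$, one has
\begin{align*}
\q_B[f^2\psi,\psi]=\q_B[f\psi,f\psi]=\q_B[\psi,f^2\psi].
\end{align*}
Their contributions to $\q_B[f^2\psi,\psi]+\q_B[\psi,f^2\psi]-2\q_B[f\psi,f\psi]$ therefore cancel. For the interaction, which is only form-bounded, I would justify this fiberwise. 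For a.e.\ $x$, $\psi(x)\in\D(\dG(\omega)^{1/2})\subset \Q(\phi(v_x))$, and $\q_{\Ha_I}[\psi,\chi]=\int \langle\psi(x)|\phi(v_x)\chi(x)\rangle\,dx$ in the form sense. Multiplying by the scalar $f(x)$ then commutes through trivially.

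\textbf{Step 3: the kinetic term.} It remains to compute
\begin{align*}
\ipldfs{\nabla(f^2\psi)}{\nabla\psi}+\ipldfs{\nabla\psi}{\nabla(f^2\psi)}-2\ipldfs{\nabla(f\psi)}{\nabla(f\psi)}.
\end{align*}
Expand using $\nabla(f^2\psi)=2f(\nabla f)\psi+f^2\nabla\psi$ and $\nabla(f\psi)=(\nabla f)\psi+f\nabla\psi$. The $f^2|\nabla\psi|^2$ terms cancel, and so do the cross terms $2f\nabla f\cdot\mathrm{Re}\langle\psi,\nabla\psi\rangle$. What remains is $-2\int|\nabla f|^2\|\psi(x)\|^2_{\Fs}\,dx$, which is the claimed right-hand side. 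Real-valuedness of $f$ is used here. All manipulations are pointwise in $x$ with Fock-space-valued $H^1$ functions, following Appendix~\ref{app:vector-valued}. Alternatively, one can prove the identity first on the core $\ciz(\R^2)\alten\D$ and extend it by density in the form norm, since both sides are continuous in that norm.

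\textbf{Step 4: sum over the partition.} Apply the identity with $f=j_i$ and sum over $i$, using $\sum_i j_i^2=1$. Then $\sum_i \q_\Ha[j_i^2\psi,\psi]=\q_\Ha[\psi,\psi]$, and likewise for the other ordering. This gives
\begin{align*}
2\q_\Ha[\psi]-2\sum_i\q_\Ha[j_i\psi]=-2\sum_i\ipldfs{\psi}{|\nabla j_i|^2\psi},
\end{align*}
which rearranges to the IMS formula.

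The main obstacle is the rigorous treatment of the interaction term at the form level, that is, the fiberwise representation of $\q_{\Ha_I}$. Working on the core and extending by continuity sidesteps most of it, so that is the route I would take first.
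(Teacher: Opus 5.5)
Your proposal is correct and follows the same route as the paper. First, $f(X)$ preserves $\Q(\Ha)$. Second, the potential, free-field and interaction terms commute with $f(X)$ at the form level, the interaction because it is a direct integral over the particle position, so they cancel in the double commutator. Only the kinetic form then contributes, and summing over the partition with $\sum_i j_i^2=1$ yields the IMS formula. Your explicit expansion of the kinetic term and your fiberwise treatment of $\Ha_I$ simply fill in what the paper calls ``the usual IMS identity for the kinetic energy.''
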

\begin{proof}
    Since $f(X)$ and $(\nabla f)(X)$ are bounded, multiplication by $f$ preserves the form domain $\Q(\Ha)$. The terms $V(X)$, $\Ha_f$ and $\Ha_I$ commute with $f(X)$ at the quadratic form level: $V(X)$ and $f(X)$ are multiplication operators, $\Ha_f$ acts only on the Fock variable, and $\Ha_I$ is a direct integral over the particle position. Hence only the kinetic form contributes to the commutator. The usual IMS identity for the kinetic energy gives the first formula, and summing it over a partition of unity gives the second one.
\end{proof}

\begin{proposition}[Exponential decay]\label{sec:ess_spec:prop_exp_decay}
    Assume $\Ha$ is as in the previous lemma. Let $\mathcal{J}$ be a closed interval in $]-\infty,\Sigma[$. Then for all $0 < \alpha \displaystyle < \sqrt{\Sigma-\sup\mathcal{J}}$: 
    \begin{align*}
        \nop{e^{\alpha \abs{X}} E^\Ha(\mathcal{J})} < + \infty,
    \end{align*}
    where $E^\Ha$ is the projection-valued measure of $\Ha$.
\end{proposition}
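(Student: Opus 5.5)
The plan is the standard Agmon/Combes–Thomas argument in the form used for Pauli–Fierz and Nelson models (as in \cite{analisa, editnrqed}). The only inputs are the IMS double-commutator identity of Lemma~\ref{ess_spec:lemma:IMS} and the definition of $\Sigma$ in Definition~\ref{sec:ess:def_ionization_threshold}, through the computation of $\Q_R$ in Proposition~\ref{sec:ess:prop_computation_sigma_D_R}.

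\textbf{Step 1 (bounded weights).} Since $e^{\alpha\abs{x}}$ is unbounded, I would work with regularized weights $F_\varepsilon(x) = \alpha\abs{x}/(1+\varepsilon\abs{x})$, smoothed near the origin. These satisfy $\abs{\nabla F_\varepsilon}\leq \alpha$ uniformly in $\varepsilon$, and $e^{F_\varepsilon}$ is bounded with bounded gradient. It then suffices to prove that $\nop{e^{F_\varepsilon(X)}E^\Ha(\mathcal J)}\leq C$ with $C$ independent of $\varepsilon$, and to conclude by monotone convergence as $\varepsilon\to 0$.

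\textbf{Step 2 (conjugated form and lower bound at infinity).} For $\psi\in\Q(\Ha)$, set $f=e^{F}$ with $F=F_\varepsilon$. The double-commutator identity of Lemma~\ref{ess_spec:lemma:IMS}, applied in its polarized form (only the kinetic term fails to commute with $f(X)$), gives for $\phi\in \Ran E^\Ha(\mathcal J)$ and $\chi$ smooth and bounded
\begin{align*}
\mathrm{Re}\,\q_\Ha\big[e^{F}\chi\phi,\, e^{-F}\chi' \cdot\big]\ \text{-type identity:}\quad
\mathrm{Re}\,\q_\Ha[e^{2F}\eta\phi,\phi] = \q_\Ha[e^{F}\sqrt{\eta}\phi] - \norm{\abs{\nabla (e^{F}\sqrt{\eta})}e^{-F}\cdots}^2.
\end{align*}
Concretely, I would use the standard consequence
\begin{align*}
\mathrm{Re}\,\q_\Ha[g^2\phi,\phi] = \q_\Ha[g\phi] - \norm{\abs{\nabla g}\phi}^2, \qquad g = e^{F}\eta,
\end{align*}
where $\eta$ is smooth, vanishes on $B(0,R)$ and equals $1$ outside $B(0,R+1)$. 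Since $g\phi\in\Q_R$, the definition of $\Sigma_R$ gives $\q_\Ha[g\phi]\geq \Sigma_R\norm{g\phi}^2$. We also have $\abs{\nabla g}^2\leq (\alpha+\delta)^2 g^2 + C_R\, \ind_{B(0,R+1)}$ by Cauchy–Schwarz, with $\delta$ small. On the other side, $\mathrm{Re}\,\q_\Ha[g^2\phi,\phi]=\mathrm{Re}\ip{g^2\phi}{\Ha\phi}\leq \sup\mathcal J\,\norm{g\phi}^2 + \text{(terms from the spectral spread)}$. To obtain this cleanly, it is better to take $\phi = E^\Ha(\mathcal J)\psi$, so that $\norm{(\Ha-\lambda)\phi}$ is controlled, and to write $\ip{g^2\phi}{(\Ha-\sup\mathcal J)\phi}\leq 0$ up to harmless errors. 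The usual trick is to notice that $\Ha-\sup\mathcal J\le 0$ on $\Ran E^\Ha(\mathcal J)$ is not enough, because $g^2\phi\notin\Ran E^\Ha(\mathcal J)$. I would therefore instead use $\phi=E^\Ha(\mathcal J)\psi$ and the bound $\abs{\ip{g^2\phi}{(\Ha-\lambda_0)\phi}}\leq \norm{g\phi}\,\norm{g(\Ha-\lambda_0)\phi}$ with $\lambda_0=\sup\mathcal J$, which recasts the problem as a bound for $g$ applied to $(\Ha-\lambda_0)\phi$, itself in $\Ran E^\Ha(\mathcal J)$. This circularity is handled as in \cite{analisa} by proving the estimate for $e^{F}(\Ha-z)^{-1}$ type quantities. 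The cleaner route is therefore to rewrite everything in terms of the conjugated operator $\Ha_F = e^{F}\Ha e^{-F} = \Ha - \abs{\nabla F}^2 + (\nabla F\cdot\nabla + \nabla\cdot\nabla F)$, as a form.

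\textbf{Step 3 (main obstacle, handled via conjugation).} The main difficulty is the step just described: converting "$\phi$ has energy in $\mathcal J$" into a localized inequality. My concrete route is to show that for $\lambda\in\mathcal J$ and $\chi$ as above, the conjugated form satisfies
\begin{align*}
\mathrm{Re}\,\q_{\Ha_F}[\eta\phi]\geq (\Sigma_R-\alpha^2)\norm{\eta\phi}^2 .
\end{align*}
The first-order terms are anti-symmetric, so they drop from the real part, which leaves $\q_\Ha-\abs{\nabla F}^2\geq\Sigma_R-\alpha^2$ on $\Q_R$. Since $\alpha^2<\Sigma-\sup\mathcal J$, for $R$ large the constant $\Sigma_R-\alpha^2-\sup\mathcal J$ is positive. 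By Lax–Milgram, $\eta(\Ha_F-\lambda)$ is therefore invertible on the exterior region, uniformly in $\varepsilon$ and in $\lambda\in\mathcal J$. Writing $e^{F}\eta E^\Ha(\mathcal J) = (\text{inverse})\circ[\Ha_F,\eta]e^{F}E^\Ha(\mathcal J)+\ldots$, the commutator $[\Ha,\eta]$ is a first-order operator supported in the annulus $R\le\abs{x}\le R+1$, where $e^{F}\le e^{\alpha(R+1)}$. Its bound relative to $E^\Ha(\mathcal J)$ is finite because $\nabla E^\Ha(\mathcal J)$ is bounded, which follows from $\Ran E^\Ha(\mathcal J)\subset\D(\Ha)\subset H^1$ by Proposition~\ref{sec:prelim:domain_splitting_and_cores}. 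The inner region is trivially bounded. To keep the argument form-theoretic, I would use the functional-calculus smoothing $E^\Ha(\mathcal J)=h(\Ha)E^\Ha(\mathcal J)$ with $h$ compactly supported just above $\mathcal J$. Combining these yields a bound independent of $\varepsilon$, and letting $\varepsilon\to0$ finishes the proof.
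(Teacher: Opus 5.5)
Your route differs from the paper's. The paper does not rerun an Agmon-type argument: it applies \cite[Theorem~1]{editnrqed} directly, and the only hypothesis it checks is the double-commutator identity, which is exactly Lemma~\ref{ess_spec:lemma:IMS}.

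Reproving that theorem is legitimate, and most of your ingredients are correct:
\begin{itemize}
\item the bounded weights $F_\varepsilon$ with $\abs{\nabla F_\varepsilon}\leq\alpha$;
\item the identity $\mathrm{Re}\,\q_\Ha[g^2\phi,\phi]=\q_\Ha[g\phi]-\norm{\abs{\nabla g}\phi}^2$;
\item the lower bound $\Sigma_R-\alpha^2$ for the real part of the conjugated form on $\Q_R$;
\item the localization of the cutoff errors in the annulus, and the final monotone convergence.
\end{itemize}

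However, the decisive step is not carried out. Let $\Ha^{\mathrm{ext}}_{R}\geq\Sigma_R$ be the self-adjoint operator of $\q_\Ha$ restricted to $\Q_R$, and $\Ha^{\mathrm{ext}}_{R,F}$ its $e^{F}$-conjugate, which is m-sectorial. The term hidden in your ``$+\ldots$'' in Step~3 is $(\Ha^{\mathrm{ext}}_{R,F}-\lambda)^{-1}e^{F}\eta\,(\Ha-\lambda)E^\Ha(\mathcal J)$. It vanishes only when $E^\Ha(\mathcal J)$ is the eigenprojection for $\lambda$. Otherwise it is again of the form $e^{F}\eta E^\Ha(\mathcal J)$ applied to something, so the circularity you identified in Step~2 is still there. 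Inverting at real $\lambda$ ``uniformly in $\lambda\in\mathcal J$'' and writing $E^\Ha(\mathcal J)=h(\Ha)E^\Ha(\mathcal J)$ do not remove it. As written, the argument only proves decay of eigenvectors.

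\textbf{Fix by absorption.} The simplest repair is an absorption argument that your Step~2 already nearly contains.
\begin{enumerate}
\item Fix $R$ large enough that $c_0:=\Sigma_R-\alpha^2-\sup\mathcal J>0$.
\item Since $\Ha$ is bounded below, replace $\mathcal J$ by the compact interval $\mathcal J\cap[\inf\sigma(\Ha),\sup\mathcal J]$. Split it into finitely many disjoint subintervals $\mathcal J'$ of length $\ell<c_0$.
\item With $g=e^{F}\eta$ and $F$ bounded, $M_F:=\nop{g\,E^\Ha(\mathcal J')}$ is finite a priori.
\item For $\phi\in\Ran E^\Ha(\mathcal J')$ and $\lambda'=\sup\mathcal J'$, your identity together with $g\phi\in\Q_R$ gives $c_0\norm{g\phi}^2\leq\norm{g\phi}\,\norm{g(\Ha-\lambda')\phi}+C_R\norm{\phi}^2$.
\item Here $C_R$ is independent of $F$, because $\abs{\nabla g}^2\leq\alpha^2g^2+e^{2F}\bigl(2\alpha\abs{\nabla\eta}+\abs{\nabla\eta}^2\bigr)$ and $e^{F_\varepsilon}$ is bounded on the annulus uniformly in $\varepsilon$.
\item Since $(\Ha-\lambda')\phi\in\Ran E^\Ha(\mathcal J')$ has norm at most $\ell\norm{\phi}$, taking the supremum over normalized $\phi$ yields $M_F^2\leq C_R/(c_0-\ell)$, uniformly in $F$.
\end{enumerate}

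\textbf{Fix by Helffer--Sj\"ostrand.} Alternatively, take $h\in\ciz(\R)$ with $h=1$ on $\mathcal J$ and $\supp h$ below $\Sigma_R-\alpha^2$. Then $h(\Ha^{\mathrm{ext}}_R)=0$, and
$e^{F}\eta h(\Ha)=\frac1\pi\int\partial_{\bar z}\tilde h(z)\,(\Ha^{\mathrm{ext}}_{R,F}-z)^{-1}e^{F}[-\Delta,\eta](\Ha-z)^{-1}\,dx\,dy$,
with $\nop{(\Ha^{\mathrm{ext}}_{R,F}-z)^{-1}}\leq(\Sigma_R-\alpha^2-\mathrm{Re}\,z)^{-1}$. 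In this version the inversion is done at complex $z$, not at real $\lambda\in\mathcal J$.

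Either completion recovers the proposition. The paper needs only one line because Griesemer's theorem, which it cites, already contains this step.
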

\begin{proof}
    We apply \cite[Theorem~1]{editnrqed}. In the notation of that theorem, it suffices to verify hypothesis (iii) on $\Q(\Ha)$. Indeed, hypotheses (i) and (ii) are only used to establish hypothesis (iii) after closure, whereas $\q_\Ha$ is already closed. The radiality assumption on $f$ is not needed in our setting: in \cite{editnrqed}, it is used to preserve the symmetry or anti-symmetry of the $N$-electron subspace, while our small system carries no such constraint. Hence the previous lemma gives hypothesis (iii), and the conclusion follows with every $\alpha < \sqrt{\Sigma-\sup\mathcal{J}}$.
\end{proof}

\begin{proposition}\label{sec:ess:first_inclusion_ess_spec}
    One has: $\inf \sess(\Ha) \geq \min(\Sigma,\inf\sigma(\Ha)+m)$.
\end{proposition}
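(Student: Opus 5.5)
The plan is to apply \cite[Theorem 2]{AC_rayleigh_scatt_1}, as announced at the beginning of this subsection. In its notation, hypothesis (H1) concerns only the dispersion relation, and it holds for $\omega(k)=\sqrt{k^2+m^2}$ with $m>0$ by \ref{h1_field_massive}. Hypothesis (H2) asks for exponential localization in the particle variable of spectral projections below the ionization threshold $\Sigma$ of Definition~\ref{sec:ess:def_ionization_threshold}. This is exactly what Proposition~\ref{sec:ess_spec:prop_exp_decay} provides: for every closed interval $\mathcal{J}\subset\,]-\infty,\Sigma[$ and every $0<\alpha<\sqrt{\Sigma-\sup\mathcal{J}}$, the operator $e^{\alpha\abs{X}}E^\Ha(\mathcal{J})$ is bounded.

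First, I will check that the remaining structural requirements of \cite{AC_rayleigh_scatt_1} hold. The Hamiltonian must be self-adjoint with domain $\D(\Ha_\free)$ and bounded below (Theorem~\ref{prelim:th:self-adjointness_Ha}). The interaction must be linear in the fields with kernel $v_x(k)=v(k)e^{-ik\cdot x}$ and $v\in L^2(\Rk^2)$ (Assumption~\ref{h2_field_interaction}). The cores must be given by Proposition~\ref{sec:prelim:domain_splitting_and_cores}(iii). The potential in that paper is only required to be relatively bounded; the decay at infinity that they otherwise use enters their argument solely through (H2). Second, I will verify that their definition of $\Sigma$ agrees with ours, using Proposition~\ref{sec:ess:prop_computation_sigma_D_R} to pass between the form-domain and the $\ciz\alten\D$ formulations.

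Once these checks are done, the theorem gives $\inf\sess(\Ha)\geq\min(\Sigma,\inf\sigma(\Ha)+m)$ directly. If $\Sigma\le \inf\sigma(\Ha)$, the set $]-\infty,\Sigma[\,\cap\,\sigma(\Ha)$ is empty, and the statement is trivial because $\inf\sess(\Ha)\geq\inf\sigma(\Ha)\ge\Sigma$.

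For orientation, and in case a self-contained argument is preferred, I would reproduce the mechanism of their proof. Suppose $\lambda<\min(\Sigma,\inf\sigma(\Ha)+m)$ lies in $\sess(\Ha)$, with a Weyl sequence $\psi_n\rightharpoonup 0$. Take a spatial partition of unity $j_0(X/R),j_\infty(X/R)$. By the IMS formula (Lemma~\ref{ess_spec:lemma:IMS}) and the definition of $\Sigma_R$, the part $j_\infty\psi_n$ costs energy at least $\Sigma_R-o(1)>\lambda$, so $j_0\psi_n$ carries almost all of the mass. Then localize the bosons as well, with a second partition in the boson position $y$ implemented by $\Gc$. Bosons far from the particle cost at least $m$ more than $\inf\sigma(\Ha)$. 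The remaining part is localized in both the particle and the boson variables, and with the massive field this part is compact relative to $\Ha$; hence $\psi_n\to0$, a contradiction.

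I expect the main obstacle to be the compactness step. The waveguide potential does not decay, so $\ha_{\mspace{1mu}\el}$ does not have compact resolvent on bounded regions in an automatic way, and the exponential decay estimate is exactly what replaces this. The point requiring the most care is that (H2) in \cite{AC_rayleigh_scatt_1} is stated only for energies strictly below $\Sigma$. I will handle this by working on $\mathcal{J}=[\inf\sigma(\Ha),\lambda']$ for $\lambda<\lambda'<\Sigma$.
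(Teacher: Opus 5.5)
Your proposal is correct and follows essentially the same route as the paper. The paper applies \cite[Theorem~2]{AC_rayleigh_scatt_1} with (H1) supplied by the massive dispersion relation and (H2) by Proposition~\ref{sec:ess_spec:prop_exp_decay}, and disposes of the degenerate case $\Sigma=\inf\sigma(\Ha)$ (the only alternative, since always $\Sigma\geq\inf\sigma(\Ha)$) via the trivial bound $\inf\sess(\Ha)\geq\inf\sigma(\Ha)$; your extra structural checks and the sketched localization mechanism go beyond what the paper writes but are consistent with it.
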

\begin{proof}
    Assume first that $\Sigma>\inf\sigma(\Ha)$. By the preceding results, hypotheses (H1) and (H2) of \cite[Theorem~2]{AC_rayleigh_scatt_1} are satisfied, so the theorem yields $\inf\sess(\Ha)\geq\min(\Sigma,\inf\sigma(\Ha)+m)$. If $\Sigma=\inf\sigma(\Ha)$, the claimed inequality reduces to $\inf \sess(\Ha) \geq \inf\sigma(\Ha)$, which is true by definition.
\end{proof}

%==========================================================
%       Spectrum of the straight waveguide
%==========================================================

\subsection{Spectrum of the straight waveguide}

As explained in Subsection~\ref{model:subsec_general_ideas}, we would like to show that the spectrum of $\Ha_S$ has no gaps. A priori, there is no simple way to determine this spectrum. However, the straight waveguide is translation invariant along the longitudinal direction $\xu$, so that the corresponding component of the total momentum is conserved. This allows us to decompose $\Ha_S$ into fiber Hamiltonians, each corresponding to a fixed value $\eta$ of this momentum. Since $\eta$ ranges continuously over $\R$, we expect the spectra of these fibers to fill the possible gaps and to yield a half-line, so that the spectrum of $\Ha_S$ is entirely essential. We thus begin by proving the translation invariance of $\Ha_S$.

\begin{notation}
    Let $-i \mspace{1mu}\partial_{x_1}$ and $ \dG(K_1)$ be the infinitesimal generators of translations along $x_1$ on $L^2(\Rx^2)$ and $\Fs\big(L^2(\Rk^2)\big)$ respectively. We denote $p_1 = -i \mspace{1mu}\partial_{x_1} \otimes I$ and $P_1 = I \otimes \dG(K_1)$ the operators corresponding to the momentum along $\xu$ for the small system and the field respectively. 
\end{notation}

\begin{lemma}[Translation invariance along $x_1$ of $\Ha_S$]\label{sec:ess_spec:lemma_inv_translation}  The operator $P^{\,\mathrm{tot}}_1 := \overline{p_1 + P_1}$ is the infinitesimal generator of translations on  $L^2(\Rx^2)\otimes\Fs\big(L^2(\Rk^2)\big)$ and one has:
\begin{align*}
    \forall \xi \in \R, \; e^{ - \mspace{1mu }i \mspace{2mu}P^{\,\mathrm{tot}}_1 \mspace{2mu} \xi } \, \Ha_S \,  e^{\mspace{2mu} i \mspace{2mu}P^{\,\mathrm{tot}}_1 \mspace{2mu} \xi }  = \Ha_S,
\end{align*}
where for all $\xi \in \R$, denoting $\tau_1^\xi = e^{\mspace{2mu} - i \mspace{2mu} \xi \mspace{2mu} p_1}$, one has $e^{ - \mspace{1mu }i \mspace{2mu}P^{\,\mathrm{tot}}_1 \mspace{2mu} \xi } = \tau_1^\xi \otimes \Gamma\big(e^{-iK_1\xi}\big)$. \\ 
In particular, $P^{\,\mathrm{tot}}_1$ and $\Ha_S$ strongly commute.
\end{lemma}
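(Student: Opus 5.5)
We begin with the generator. The operators $p_1$ and $P_1$ strongly commute, since they act on different tensor factors. Hence $U(\xi):=\tau_1^\xi\otimes\Gamma\big(e^{-iK_1\xi}\big)$ is a strongly continuous one-parameter unitary group on $L^2(\Rx^2)\otimes\Fs\big(L^2(\Rk^2)\big)$. Indeed, $\Gamma(e^{-iK_1\xi})=e^{-i\xi\dG(K_1)}$ by the functorial properties of second quantization (\cite[Theorem 4.15]{Arai}), and the tensor product of two strongly continuous unitary groups is again one. By the standard result on tensor products of commuting generators (\cite[Theorem 3.8]{Arai}, or Proposition~\ref{app:op:standard_properties}), the Stone generator of $U(\xi)$ is the closure of $p_1\otimes I+I\otimes \dG(K_1)$ on the algebraic tensor product of the domains. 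That closure is exactly $P^{\,\mathrm{tot}}_1$. We therefore get $e^{-iP^{\,\mathrm{tot}}_1\xi}=U(\xi)$, and this is the joint translation of the particle and of the field along $x_1$.

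The invariance will be checked term by term on $\Ha_S=\Ha_\el+\Ha_f+g\Ha_I$. First we show that $U(\xi)$ preserves $\D(\Ha_S)=\D(-\Delta\otimes I)\cap\D(\Ha_f)$ from Proposition~\ref{sec:prelim:domain_splitting_and_cores}(ii). Then we show that each term is conjugated to itself.
\begin{itemize}
\item For $\Ha_\el$: $\tau_1^\xi$ commutes with $-\Delta$, and it commutes with $V_S(X)=V(X_2)$ because $V_S$ does not depend on $x_1$. The Fock factor acts trivially on $\Ha_\el$, so $U(\xi)\Ha_\el U(\xi)^\ast=\Ha_\el$.
\item For $\Ha_f$: $\Gamma(e^{-iK_1\xi})$ commutes with $\dG(\omega(K))$, because $e^{-iK_1\xi}$ commutes with $\omega(K)$ (\cite[Theorem 4.15]{Arai}).
\item For $\Ha_I$: here we use the direct-integral representation $\Ha_I=\int^\oplus\phi(v_x)\,dx$. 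The identity $\Gamma(e^{-iK_1\xi})\,\phi(v_x)\,\Gamma(e^{-iK_1\xi})^\ast=\phi(e^{-iK_1\xi}v_x)$ (\cite[Theorem 5.32]{Arai}) gives $\phi(e^{-iK_1\xi}v_x)=\phi(v_{x+\xi e_1})$, since $e^{-ik_1\xi}v(k)e^{-ik\cdot x}=v(k)e^{-ik\cdot(x+\xi e_1)}$. The translation $\tau_1^\xi$ shifts the fiber label by $-\xi e_1$, so conjugating the direct integral by $U(\xi)$ returns the fiber at $x$ to $\phi(v_x)$. The signs have to be checked here so that the two shifts cancel; if they do not, one uses the convention $\tau_1^\xi\psi(x)=\psi(x-\xi e_1)$ that matches $e^{-i\xi p_1}$.
\end{itemize}

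We expect this last step to be the only delicate one. The Segal field is unbounded, so the identity should first be verified on a convenient core such as $\ciz(\R^2)\alten\Fsf$, or with finite-particle vectors. On such a core one can compute explicitly with the direct integral, writing $(\Ha_I\psi)(x)=\phi(v_x)\psi(x)$. The identity then extends by closedness, using the self-adjointness of $\Ha_I$.

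Summing the three identities gives $U(\xi)\Ha_S U(\xi)^\ast=\Ha_S$ on $\D(\Ha_S)$, which is the claimed equality. Strong commutation then follows from the standard criterion: a self-adjoint operator invariant under a unitary group strongly commutes with that group's generator. Concretely, invariance gives $U(\xi)(\Ha_S-z)^{-1}=(\Ha_S-z)^{-1}U(\xi)$ for all $\xi$. Hence every spectral projection of $\Ha_S$ commutes with $U(\xi)$, and therefore with all spectral projections of $P^{\,\mathrm{tot}}_1$.
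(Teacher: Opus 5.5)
Your proposal is correct and follows essentially the same route as the paper. You identify $e^{-iP^{\,\mathrm{tot}}_1\xi}$ as the tensor product of the two unitary groups and check invariance term by term. For the interaction, the Segal-field covariance $\Gamma(e^{-iK_1\xi})\phi(v_x)\Gamma(e^{iK_1\xi})=\phi(v_{x+(\xi,0)})$ cancels the shift $x\mapsto x-(\xi,0)$ of the direct-integral label, so your sign check does work out. You then deduce strong commutation from the invariance criterion, as the paper does via \cite[Proposition 1.48]{Arai}.

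Your core/closedness step for $\Ha_I$ is harmless but unnecessary. Conjugating a direct integral by a decomposable unitary, or by a translation in the base variable, is an exact operator identity, domains included.
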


\begin{proof}
    If the claimed equality is proven, the strong commutativity comes from \cite[Proposition 1.48]{Arai}. \cite[Theorem 3.8(v), Theorem 4.15(i)]{Arai} give that for all $\xi \in \R$, $\tau_1^\xi \otimes \Gamma\big(e^{-iK_1\xi}\big) = e^{-\mspace{2mu} i \mspace{2mu}P^{\,\mathrm{tot}}_1 \mspace{2mu} \xi }$. The left-hand side corresponds to translations on $L^2(\Rx^2)\otimes\Fs\big(L^2(\Rk^2)\big)$, so $P^{\,\mathrm{tot}}_1$ is indeed their infinitesimal generator. Let us then prove the equality. \\
    The particle Hamiltonian $-\Delta+V_S(X)$ is invariant under longitudinal translations because $V_S$ depends only on $\xd$, while the invariance of $\dG\bigl(\omega(K)\bigr)$ follows from \cite[Theorem 4.16]{Arai}. It therefore remains to consider $\Ha_I$. One gets, using \cite[Theorem 5.32]{Arai}, that $\Gamma\big(e^{-iK_1\xi}\big) \phi(v_x) \Gamma\big(e^{iK_1\xi}\big) = \phi(e^{-i K_1 \xi} \, v_x) = \phi\big(v_{x+(\xi,0)}\big)$:
    \begin{align*}
            e^{ - \mspace{1mu }i \mspace{2mu}P^{\,\mathrm{tot}}_1 \mspace{2mu} \xi } \, \Ha_I \,  e^{\mspace{2mu} i \mspace{2mu}P^{\,\mathrm{tot}}_1 \mspace{2mu} \xi } = \left(I \otimes \Gamma\big(e^{-iK_1\xi}\big) \right) \left(\intd{2}{ \phi\left(v_{x-(\xi,0)}\right)} \right) \left(I \otimes \Gamma\big(e^{iK_1\xi}\big) \right) = \Ha_I.
    \end{align*}
\end{proof}

By Proposition~\ref{app_op:simultaneous_diagonalize_direct_integral}, the strong commutativity established above reduces the direct integral decomposition of $\Ha_S$ to the diagonalization of $P^{\mathrm{tot}}_1$. Hence we now construct an explicit unitary operator diagonalizing $P^{\mathrm{tot}}_1$, using the transformation introduced in \cite{LeePines1953}, sometimes called the Lee-Low-Pines transformation.

\begin{notation}
    We denote $L^2(\R_{\xd}, \Fs) := L^2(\R,d\xd,\Fs)$.
\end{notation}

\begin{lemma}[Diagonalization of $P^{\,\mathrm{tot}}_1$]\label{sec:ess_spec:lemma_diagonalization_ptot}
    The unitary operator $\F_{\xu} \int_{\R^2}^\oplus \Gamma\big(e^{i K_1 \xu}\big) d\xu d\xd$, where $\F_{\xu}$ is the partial Fourier transform along $\xu$ for vector-valued functions, diagonalizes $P^{\,\mathrm{tot}}_1$ on the Hilbert space $\int_{\R}^{\oplus} L^2(\R_{\xd}, \Fs) \, d\eta$, where $\eta \in \R$ is the value of the total momentum of the system along $\xu$. One has:
    \begin{align*}
        \left(\F_{\xu} \int_{\R^2}^\oplus \Gamma\big(e^{i K_1 \xu}\big) d\xu d\xd \right)  P^{\,\mathrm{tot}}_1   \left(\F_{\xu} \int_{\R^2}^\oplus \Gamma\big(e^{i K_1 \xu}\big) d\xu d\xd \right)^\ast     = \int_{\R}^{\oplus} \eta \, d\eta \hspace{0.5cm} \mathrm{on } \hspace{0.2cm} \int_{\R}^{\oplus} L^2(\R_{\xd}, \Fs) \, d\eta.
    \end{align*}
\end{lemma}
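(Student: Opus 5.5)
The proof is a direct computation in two factors. Write $U_{\mathrm{LLP}} := \int_{\R^2}^\oplus \Gamma\big(e^{iK_1\xu}\big)\,d\xu\,d\xd$. Each $\Gamma(e^{iK_1\xu})$ is unitary on $\Fs$. The map $\xu\mapsto\Gamma(e^{iK_1\xu}) = e^{i\xu\dG(K_1)}$ is a strongly continuous unitary group (by \cite[Theorem 4.15]{Arai}), hence measurable. So $U_{\mathrm{LLP}}$ is a well-defined decomposable unitary on $L^2(\R^2,\Fs)$, with inverse $\int^\oplus \Gamma(e^{-iK_1\xu})$. The partial Fourier transform $\F_{\xu}$ is unitary from $L^2(\R^2,\Fs)\cong L^2(\R_{\xu}, L^2(\R_{\xd},\Fs))$ onto $L^2(\R_\eta, L^2(\R_{\xd},\Fs)) = \int_\R^\oplus L^2(\R_{\xd},\Fs)\,d\eta$, so the composition is unitary. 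It then suffices to show that it maps $P_1^{\mathrm{tot}}$ to multiplication by $\eta$.

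\textbf{Step 1.} I will work at the level of the unitary groups, which avoids domain issues with the closure $\overline{p_1+P_1}$. By Lemma~\ref{sec:ess_spec:lemma_inv_translation}, $e^{-iP_1^{\mathrm{tot}}\xi} = \tau_1^\xi\otimes\Gamma(e^{-iK_1\xi})$. Acting on $\psi\in L^2(\R^2,\Fs)$, this gives
\begin{align*}
\big(e^{-iP_1^{\mathrm{tot}}\xi}\psi\big)(\xu,\xd) = \Gamma\big(e^{-iK_1\xi}\big)\,\psi(\xu-\xi,\xd).
\end{align*}
Conjugating by $U_{\mathrm{LLP}}$, I compute $\big(U_{\mathrm{LLP}}\,e^{-iP_1^{\mathrm{tot}}\xi}\,U_{\mathrm{LLP}}^\ast\phi\big)(\xu,\xd)$. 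It equals
\begin{align*}
\Gamma\big(e^{iK_1\xu}\big)\Gamma\big(e^{-iK_1\xi}\big)\Gamma\big(e^{-iK_1(\xu-\xi)}\big)\phi(\xu-\xi,\xd).
\end{align*}
Since $\Gamma$ is multiplicative on commuting unitaries, the three Fock factors collapse to $\Gamma(1)=I$. The result is $\phi(\xu-\xi,\xd)$, so $U_{\mathrm{LLP}}$ turns the total translation into the pure particle translation $\tau_1^\xi\otimes I$.

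\textbf{Step 2.} Applying the vector-valued Fourier transform, translation by $\xi$ in $\xu$ becomes multiplication by $e^{-i\eta\xi}$ in $\eta$. This is the standard Fourier intertwining, valid for Hilbert-space-valued $L^2$ functions as in Appendix~\ref{app:vector-valued}: check it on $\mathscr{C}^\infty_0\alten$ (dense) and extend by continuity. So the full unitary $W:=\F_{\xu}U_{\mathrm{LLP}}$ satisfies
\begin{align*}
W e^{-iP_1^{\mathrm{tot}}\xi}W^\ast = \int_\R^\oplus e^{-i\eta\xi}\,d\eta
\end{align*}
for all $\xi$. Both sides are strongly continuous one-parameter unitary groups, so uniqueness in Stone's theorem identifies the generators:
\begin{align*}
WP_1^{\mathrm{tot}}W^\ast = \int_\R^\oplus \eta\,d\eta .
\end{align*}

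\textbf{Main obstacle.} The only delicate point is rigor in Step 1: the pointwise composition formula for the decomposable operator, and the identity $\Gamma(e^{iK_1a})\Gamma(e^{iK_1b}) = \Gamma(e^{iK_1(a+b)})$. Both follow from \cite[Theorem 4.15]{Arai} and the fact that decomposable operators compose fiberwise. Working with groups rather than generators sidesteps any need to identify the domain of $P_1^{\mathrm{tot}}$ explicitly.
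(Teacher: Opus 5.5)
Your proof is correct, but it takes a different route from the paper's at the key step.

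\textbf{The paper's route.} It works with generators. Setting $W:=\int_{\R^2}^\oplus \Gamma(e^{iK_1\xu})\,d\xu\,d\xd$, it first proves $W(\D(p_1)\cap\D(P_1))=\D(p_1)\cap\D(P_1)$ using the vector-valued differentiation result, Proposition~\ref{app:func:derivation_unitary_direct_int}. It then computes $Wp_1W^\ast=p_1-P_1$ and $WP_1W^\ast=P_1$ on that subspace. It concludes $WP_1^{\,\mathrm{tot}}W^\ast=p_1$ because two closed operators that agree on a common core coincide. Only after this does it apply $\F_{\xu}$.

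\textbf{Your route.} You instead conjugate the unitary group $e^{-iP_1^{\,\mathrm{tot}}\xi}=\tau_1^\xi\otimes\Gamma(e^{-iK_1\xi})$ supplied by Lemma~\ref{sec:ess_spec:lemma_inv_translation}. The pointwise cancellation of the three Fock factors is then immediate, and you identify the generators via uniqueness in Stone's theorem. This gives a shorter argument: no domain invariance, no differentiation of operator-valued functions, and no core bookkeeping.

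\textbf{What each route buys.} The paper's route produces, as a by-product, the separate identities $Wp_1W^\ast=p_1-P_1$ and $WP_1W^\ast=P_1$. These are reused in the proof of Theorem~\ref{sec:ess:prop_fiber_straight_hamiltonian} to compute $U_\fib\,p_1^2\,U_\fib^\ast$. Your method yields them just as easily. Conjugating $\tau_1^\xi\otimes I$ alone gives $\tau_1^\xi\otimes\Gamma(e^{iK_1\xi})$, hence $Wp_1W^\ast=\overline{p_1-P_1}$. This even holds as an equality of closed operators, not merely on a core.
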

\begin{proof}
    For the definition of the Fourier transform for vector-valued functions, see, e.g., \cite[Section \cRM{3}.4.2]{QuasiLinearParabolicPb}: we denote it similarly in the scalar-valued and the vector-valued case. \\
    Let $W := \int_{\R^2}^\oplus \Gamma\big(e^{i K_1 \xu}\big) d\xu d\xd$ and let us begin by computing $W P^{\,\mathrm{tot}}_1 W^\ast $. We observe that: 
    \begin{align*}
        W \left( \D(p_1) \cap \D(P_1)\right) = \D(p_1) \cap \D(P_1). 
    \end{align*}
    Indeed, $W \D(p_1) \subset \D(p_1)$ follows from $\Gamma\big(e^{i K_1 \xu}\big) = e^{ i \mspace{2mu} \dG(K_1)\mspace{1mu} \xu} $ and Proposition~\ref{app:func:derivation_unitary_direct_int}, while $W \D(P_1) \subset \D(P_1)$ follows by property of infinitesimal generators; this gives the $\subset$ inclusion. The $\supset $ inclusion follows similarly, by proving $W^\ast \hspace{-2pt}\left( \D(p_1) \cap \D(P_1)\right) \subset \D(p_1) \cap \D(P_1) $. So let us work on $\D(p_1) \cap \D(P_1)$, which is a core for  $P^{\,\mathrm{tot}}_1$.   Combining Proposition \ref{app:func:derivation_unitary_direct_int} and \cite[Theorem 4.16]{Arai}, we get $W p_1 W^\ast = (p_1-P_1)$ and $W P_1 W^\ast = P_1$ on $\D(p_1) \cap \D(P_1)$, so that:
    \begin{align*}
        W P^{\,\mathrm{tot}}_1 W^\ast = p_1 \hspace{0.5cm} \text{on } \D(p_1) \cap \D(P_1). 
    \end{align*}
    Since $\D(p_1)\cap\D(P_1)$ is a core for $P^{\,\mathrm{tot}}_1$, the domain invariance above implies that it is a core for $W P^{\,\mathrm{tot}}_1 W^\ast$. It is also a core for $p_1$, since it contains the standard core $\D(p_1)\alten\D(P_1)$ of $p_1$. Thus the two closed operators $WP^{\,\mathrm{tot}}_1W^\ast$ and $p_1$ agree on a common core, hence they are equal. \\
    As in the scalar case, the Fourier transform diagonalizes the momentum operator, which becomes multiplication by $\eta$ on $L^2(\R^2, d\eta \, d\xd; \Fs) = \int_{\R}^{\oplus} L^2(\R_{\xd}, \Fs) \, d\eta$, where $\eta$ is the Fourier variable.
\end{proof}

\noindent Consequently, we call the unitary operator constructed above the fibering operator:

\begin{definition}[Fibering operator]
    The \textit{fibering operator} is $U_\fib := \F_{\xu} \int_{\R^2}^\oplus \Gamma\big(e^{i K_1 \xu}\big) d\xu d\xd$, which is unitary.
\end{definition}

\noindent Since the direct integral representation will be used repeatedly below, we state it as a theorem.

\begin{notation}
    We denote $\ha_{\el,0} := -\partial_{x_2}^2 + V(X_2)$, $\Ha_{\el,0} := \ha_{\el,0}\otimes I $, $\Ha_{I,0} :=  \int_{\R}^\oplus \phi\left(v_{(0,x_2)}\right) d\xd$ and $\dG_\eta :=  \left( \dG(\omega(K)) + \left(\eta - \dG(K_1)\right)^2 \right)$.
\end{notation} 

\begin{theorem}\label{sec:ess:prop_fiber_straight_hamiltonian}
    One has:
    \begin{align*}
        U_\fib \, \Ha_S \, U_\fib^\ast = \int_{\R}^\oplus  \Ha_S(\eta) \, d\eta \hspace{0.5cm} \mathrm{on} \hspace{5pt} \int_{\R}^{\oplus} L^2(\R_{\xd}, \Fs) \, d\eta.
    \end{align*}
    For all $\eta \in \R$, $\Ha_S(\eta)$ is a self-adjoint operator on $L^2(\R_{\xd}, \Fs) = \int_{\R}^\oplus \Fs \, d\xd $ given by: 
    \begin{align*}
        \Ha_S(\eta) =& \hspace{5pt} (\eta - P_1)^2 + (-\partial_{x_2}^2 + V(X_2))\otimes I + \Ha_f  +  g\int_{\R}^\oplus \phi\left(v_{(0,x_2)}\right) d\xd \\
        =& \hspace{5pt} \Ha_{\el,0} + I \otimes \dG_\eta + g\mspace{1mu}\Ha_{I,0}.
    \end{align*}
    Physically, $\eta$ represents the total momentum of the system along $\xu$. Its domain is:
    \begin{align*}
        \D(\Ha_S(\eta)) = \D(\Ha_{\el,0}) \cap \D(\Ha_f) \cap \D \hspace{-2pt} \left(  \left( \eta - P_1\right)^2 \right) = \D(\Ha_{\el,0}) \cap \D \hspace{-1pt}\left(I \otimes \dG_\eta\right). 
    \end{align*}
\end{theorem}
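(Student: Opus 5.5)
We conjugate $\Ha_S$ term by term with $W:=\int_{\R^2}^\oplus \Gamma\big(e^{iK_1\xu}\big)\,d\xu\,d\xd$ and then with $\F_{\xu}$, working on a convenient core. The natural choice is $\ciz(\R^2)\alten\D$ with $\D$ the finite-particle vectors whose components lie in $\ciz$; by Proposition~\ref{sec:prelim:domain_splitting_and_cores}(iii) this is a core for $\Ha_S$.

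\emph{Step 1.} Conjugating the kinetic part by $W$ gives, exactly as in the proof of Lemma~\ref{sec:ess_spec:lemma_diagonalization_ptot},
\begin{align*}
W p_1 W^\ast = p_1 - P_1, \qquad W(-\partial_{x_1}^2)W^\ast = (p_1-P_1)^2 .
\end{align*}
The transverse operator $-\partial_{x_2}^2+V(X_2)$ commutes with $W$, since $W$ acts fiberwise in $x_2$ trivially. The operator $\Ha_f$ commutes with $\Gamma\big(e^{iK_1\xu}\big)$ by \cite[Theorem 4.16]{Arai}.

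\emph{Step 2.} For the interaction, \cite[Theorem 5.32]{Arai} gives fiberwise
\begin{align*}
\Gamma\big(e^{iK_1\xu}\big)\,\phi(v_x)\,\Gamma\big(e^{-iK_1\xu}\big)=\phi\big(e^{iK_1\xu}v_x\big)=\phi\big(v_{(0,x_2)}\big).
\end{align*}
So $W\Ha_IW^\ast=\int_{\R^2}^\oplus \phi\big(v_{(0,x_2)}\big)\,dx = I_{L^2(\R_{x_1})}\otimes \Ha_{I,0}$, which no longer depends on $x_1$.

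\emph{Step 3.} Apply $\F_{x_1}$. It turns $p_1$ into multiplication by $\eta$ and leaves every $x_1$-independent operator decomposable with a constant fiber. On the core this yields the fiber expression $\Ha_S(\eta)$, and the decomposability itself follows from the strong commutation of Lemma~\ref{sec:ess_spec:lemma_inv_translation} together with Proposition~\ref{app_op:simultaneous_diagonalize_direct_integral}.

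\emph{Step 4 (the main obstacle): self-adjointness and domain of each fiber, and closure issues.} For each $\eta$, define the free fiber
\begin{align*}
\Ha_{\el,0}+I\otimes\dG_\eta .
\end{align*}
Here $\dG_\eta$ is self-adjoint and nonnegative as a function of the strongly commuting family $\dG(\omega(K))$, $\dG(K_1)$ on each $n$-particle sector, where it acts as multiplication by $\sum_j\omega(k_j)+(\eta-\sum_j k_{j,1})^2$. It strongly commutes with $\Ha_{\el,0}$, so the sum is self-adjoint on $\D(\Ha_{\el,0})\cap\D(I\otimes\dG_\eta)$, as in Lemma~\ref{sec:prelim:hfree_self_adjoint}.

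Since $(\eta-P_1)^2\ge0$ and $\dG(\omega)\ge0$, we have $\dG_\eta\ge\dG(\omega)$, so $\D(\dG_\eta)\subset\D(\dG(\omega))$. Hence $\Ha_{I,0}$ is infinitesimally bounded with respect to the free fiber by the same estimate as in Theorem~\ref{prelim:th:self-adjointness_Ha}, and Kato--Rellich gives self-adjointness of $\Ha_S(\eta)$ on the stated domain. The equality of the two domain descriptions follows from the pointwise identity of the multipliers on each $n$-particle sector: $\D(\dG_\eta)=\D(\dG(\omega))\cap\D((\eta-P_1)^2)$, because both summands are nonnegative functions of commuting operators.

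Finally, the two self-adjoint operators $U_\fib\Ha_SU_\fib^\ast$ and $\int^\oplus\Ha_S(\eta)\,d\eta$ must be identified, not just shown to agree on the transformed core. We check that the direct integral is self-adjoint, using measurability of $\eta\mapsto(\Ha_S(\eta)+i)^{-1}$ via continuity of the resolvent in $\eta$. Both operators then agree on $U_\fib(\ciz\alten\D)$, which is a core for the left-hand side, so the left-hand side is contained in the self-adjoint right-hand side and they coincide, since self-adjoint operators admit no proper self-adjoint extensions.
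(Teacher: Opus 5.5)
Your proposal is correct and follows essentially the same route as the paper. You conjugate $\Ha_S$ term by term by $W$ and then $\F_{\xu}$, using $Wp_1W^\ast=p_1-P_1$, the invariance of $\Ha_{\el,0}$ and $\Ha_f$, and $\Gamma(e^{iK_1\xu})\phi(v_x)\Gamma(e^{-iK_1\xu})=\phi(v_{(0,\xd)})$; you then obtain the fiber domains and self-adjointness from $\Ha_f+(\eta-P_1)^2=I\otimes\dG_\eta$ and Kato--Rellich, as in Theorem~\ref{prelim:th:self-adjointness_Ha}. Your closing identification through the core $U_\fib(\ciz(\R^2)\alten\D)$ is a slightly more explicit version of the paper's step, which combines Proposition~\ref{app_op:simultaneous_diagonalize_direct_integral} with summing the conjugated contributions, and it makes that abstract proposition unnecessary.
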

\begin{proof}
    Thanks to the two previous lemmas, we can apply Proposition~\ref{app_op:simultaneous_diagonalize_direct_integral} with $U = U_\fib$ and $\Hi = L^2(\R_{\xd}, \Fs)$ to get the existence of such $\Ha_S(\eta)$. It remains to compute it. \\
    Thanks to Proposition~\ref{sec:prelim:domain_splitting_and_cores}(i), for the straight waveguide one has $\Ha_\el=-\partial_{\xu}^2\otimes I+\Ha_{\el,0}$. By Proposition~\ref{app:op:standard_properties}(iii), $-\partial_{\xu}^2\otimes I=p_1^2$. Using the previous lemma, we get:
    \begin{align*}
        U_\fib \, p_1^2 \, U_\fib^\ast = (U_\fib \, p_1 \, U_\fib^\ast)^2 = \F_{\xu}(p_1-P_1)^2 \F_{\xu}^\ast =  \int_{\R}^{\oplus}(\eta-P_1)^2 d\eta. 
    \end{align*}
    Moreover, $U_\fib$ leaves $\Ha_{\el,0}$ unchanged, since this operator only acts on the transverse variable $\xd$, and it leaves $\Ha_f$ unchanged by \cite[Theorem 4.16]{Arai}. Thus it remains to compute the interaction. As in the proof of Lemma~\ref{sec:ess_spec:lemma_inv_translation}, for every $\xi\in\R$, $\Gamma(e^{iK_1\xi})\phi(v_x)\Gamma(e^{-iK_1\xi}) =
    \phi(v_{x-(\xi,0)})$ so taking $\xi=\xu$ gives $\phi\big(v_{(0,\xd)}\big)$. Hence, putting the four contributions together, we obtain the announced expression for $\Ha_S(\eta)$:
    \begin{align*}
        U_\fib \mspace{2mu} \Ha_S \mspace{2mu} U_\fib^\ast = \int_{\R}^{\oplus} \left( (\eta-P_1)^2 + \Ha_{\el,0} + \Ha_f + g\Ha_{I,0} \right) d\eta.
    \end{align*}
    It remains to justify the domain statement. By Proposition~\ref{app:op:standard_properties}(iii), $(\eta-P_1)^2=I\otimes(\eta-\dG(K_1))^2$. Moreover, by \cite[Proposition 5.5, Theorem 1.36(ii)]{Arai} and Propositions~\ref{app:op:standard_properties}(v),(iii), the positive operators $\Ha_f=I\otimes\dG(\omega(K))$ and $(\eta-P_1)^2$ strongly commute. Hence:
    \begin{align*}
        \Ha_f+(\eta-P_1)^2 = I\otimes \left(\dG(\omega(K))+(\eta-\dG(K_1))^2\right) := I\otimes\dG_\eta.
    \end{align*}
    Thus the proofs of Lemma~\ref{sec:prelim:hfree_self_adjoint} and Theorem~\ref{prelim:th:self-adjointness_Ha} apply verbatim, with $\Ha_{\el,0}$ instead of $\Ha_\el$, $I\otimes\dG_\eta$ instead of $\Ha_f$ and $\Ha_{I,0}$ instead of $\Ha_I$. This gives the announced self-adjointness and domain of $\Ha_S(\eta)$.
\end{proof}

\begin{remark}
The expression of $\Ha_S(\eta)$ reflects the fixing of the total momentum along $\xu$: formally, $\eta=p_1+P_1$, hence $p_1^2=(\eta-P_1)^2$. Since the straight waveguide is translation invariant along $\xu$, the interaction only depends on the transverse variable (hence index $x=(0,\xd)$ in the field operator), although the field still lives in the two-dimensional momentum space. Finally, by abuse of notation, we still write $\Ha_f$ both in the full Hamiltonian, where the identity acts on $L^2(\Rx^2)$, and in the fiber Hamiltonian, where it acts on $L^2(\R_{\xd})$.
\end{remark}

Having written $\Ha_S$ as a direct integral, we can get some information about its spectrum thanks to Proposition~\ref{app:op:prop_spectrum_direct_integrals}:

\begin{proposition}
    One has:
    \begin{enumerate}[label=(\roman*)]
        \item The domain of $\Ha_S(\eta)$ is independent of $\eta$: $\D(\Ha_S(\eta)) = \D(\Ha_S(0))$. Moreover, the map $\R \owns \eta \mapsto \Ha_S(\eta)$ is strongly continuous. 
        \item The spectrum of $\Ha_S$ is given by: 
        \begin{align*}
        \sigma(\Ha_S) = \overline{\bigcup_{\eta \in \R} \sigma\hspace{-1pt}\left(\Ha_S(\eta)\right)}.
    \end{align*}
    \end{enumerate}
\end{proposition}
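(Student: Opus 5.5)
For (i), we start from the domain formula of Theorem~\ref{sec:ess:prop_fiber_straight_hamiltonian}:
\begin{align*}
\D(\Ha_S(\eta))=\D(\Ha_{\el,0})\cap\D(\Ha_f)\cap\D\big((\eta-P_1)^2\big).
\end{align*}
It therefore suffices to show that $\D(\Ha_f)\cap\D((\eta-P_1)^2)$ does not depend on $\eta$. We expect this to follow from $\D(\Ha_f)\subset\D(P_1^2)$. Indeed, $\Ha_f$ and $P_1$ are both second quantizations of commuting multiplication operators in $k$, and $|k_1|\le\omega(k)$. Hence, on each $n$-particle sector,
\begin{align*}
\Big(\sum_j k_{1,j}\Big)^2\le\Big(\sum_j|k_{1,j}|\Big)^2\le\Big(\sum_j\omega(k_j)\Big)^2 ,
\end{align*}
which gives $\|P_1^2\psi\|\le\|\Ha_f^2\psi\|$. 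This bounds $P_1^2$ by $\Ha_f^2$ and not by $\Ha_f$, so it does \emph{not} yield $\D(\Ha_f)\subset\D(P_1^2)$. The correct statement is the following: $\dG_\eta=\dG(\omega)+(\eta-\dG(K_1))^2$ is a sum of strongly commuting positive operators, so by functional calculus
\begin{align*}
\D(\dG_\eta)=\D(\dG(\omega))\cap\D\big(\dG(K_1)^2\big).
\end{align*}
Indeed, on the joint spectral representation, $|\eta-t|^2\le 2\eta^2+2t^2$ and conversely $t^2\le 2(\eta-t)^2+2\eta^2$. Hence $\D((\eta-P_1)^2)=\D(P_1^2)$ for every $\eta$, which proves $\D(\Ha_S(\eta))=\D(\Ha_S(0))$.

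For the strong continuity, we expand on the common domain
\begin{align*}
\Ha_S(\eta)-\Ha_S(\eta_0)=(\eta^2-\eta_0^2)-2(\eta-\eta_0)P_1 .
\end{align*}
The operator $P_1$ is bounded by $\Ha_S(0)$ on this domain: $|P_1|\le P_1^2+1$, and $P_1^2\le(\eta-P_1)^2\cdot 2+2\eta^2$, which is controlled by the graph norm of $\Ha_S(0)$ via the closed graph theorem, since $\D(\Ha_S(0))\subset\D(P_1^2)$. Therefore $\|(\Ha_S(\eta)-\Ha_S(\eta_0))\psi\|\to0$ for each $\psi\in\D(\Ha_S(0))$. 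This gives continuity in the sense of strong convergence on the common domain. If strong resolvent continuity is what the appendix requires, it follows from the second resolvent identity once we know $\|P_1(\Ha_S(\eta_0)+i)^{-1}\|<\infty$.

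For (ii), we apply Proposition~\ref{app:op:prop_spectrum_direct_integrals}. That result typically states that, for a direct integral of self-adjoint operators depending continuously (in the strong resolvent sense) on the fiber parameter, the spectrum is the closure of the union of the fiber spectra. The continuity from (i) is exactly the hypothesis needed. The main obstacle is checking that the form of continuity obtained above matches the hypothesis of that appendix proposition. If it demands norm-resolvent continuity, we would upgrade using the relative boundedness of $P_1$ with respect to $\Ha_S(\eta_0)$: the resolvent difference is bounded in norm by $C|\eta-\eta_0|$ for $\eta$ near $\eta_0$, which gives norm-resolvent continuity.
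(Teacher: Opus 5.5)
Your proof is correct and follows essentially the same argument as the paper: you establish $\D((\eta-P_1)^2)=\D(P_1^2)$, expand $\Ha_S(\eta)-\Ha_S(\eta_0)=(\eta^2-\eta_0^2)-2(\eta-\eta_0)P_1$ on the common domain using $\D(\Ha_S(0))\subset\D(P_1)$, and then apply Proposition~\ref{app:op:prop_spectrum_direct_integrals}, which asks for exactly this strong continuity on a common domain, to $U_\fib\Ha_S U_\fib^\ast$. In a final write-up, delete the abandoned attempt via $\D(\Ha_f)\subset\D(P_1^2)$ and the hedging about norm-resolvent continuity (your resolvent estimate is correct, and the paper uses it later for the continuity of $E$), and state explicitly that $\sigma(\Ha_S)=\sigma(U_\fib\Ha_S U_\fib^\ast)$ because $U_\fib$ is unitary.
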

\begin{proof}
    We begin by proving $(i)$. One sees that $\D \hspace{-2pt} \left(  \left( \eta - P_1 \right)^2 \right) = \D\hspace{-2pt}\left(P_1^2\right)$ so the previous theorem yields $\D\big(\Ha_S(\eta)\big) = \D\big(\Ha_S(0)\big)$ and the strong continuity can be checked on this common domain. Let $\psi \in \D(\Ha_S(0))$, $\eta$, $\eta_0 \in \R$. We compute: 
    \begin{align*}
        \nldfs{\left(\Ha_S(\eta)-\Ha_S(\eta_0)\right)\psi} \leq \abs{\eta^2-\eta_0^2} \nldfs{\psi}+ 2 \abs{\eta_0 - \eta} \nldfs{P_1 \mspace{2mu}\psi}. 
    \end{align*}
    And one sees that $\D(\Ha_S(0)) \subset   \D \hspace{-2pt} \left(P_1^2 \right) \subset \D(P_1)$, so the norm is finite and taking the limit $\eta \to \eta_0$ gives the result.  \\
    Now we prove $(ii)$. Since $U_\fib$ is unitary, $\sigma\left(U_\fib \Ha_S U_\fib^\ast\right) = \sigma(\Ha_S)$. Combining Theorem \ref{sec:ess:prop_fiber_straight_hamiltonian}, what we proved above in $(i)$ and Proposition \ref{app:op:prop_spectrum_direct_integrals} yields the result.
\end{proof}

With this property, we expect that there is no gap in the spectrum of $\Ha_S$. Indeed, a large value of the total momentum along $\xu$ should entail a large energy, while its continuous variation should fill the possible gaps. We now make this precise.

\begin{proposition}\label{sec:ess:prop_no_gap_straight_spec}
    There is no gap in $\sigma(\Ha_S)$, i.e. $\sigma(\Ha_S) = \left[\underset{\eta\in\R}{\inf} \inf \sigma\hspace{-1pt}\left(\Ha_S(\eta)\right),+\infty \right[$.
\end{proposition}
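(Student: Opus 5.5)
Set $E(\eta) := \inf\sigma(\Ha_S(\eta))$ and $E_0 := \inf_{\eta\in\R} E(\eta)$. By the previous proposition, $\sigma(\Ha_S)=\overline{\bigcup_\eta \sigma(\Ha_S(\eta))}$. This set is closed and contained in $[E_0,+\infty[$, and it contains every $E(\eta)$. It therefore suffices to prove the two properties below.

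\begin{enumerate}[label=(\alph*)]
\item The map $\eta\mapsto E(\eta)$ is continuous on $\R$.
\item $E(\eta)\to+\infty$ as $\abs{\eta}\to+\infty$.
\end{enumerate}

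Granting these, the intermediate value theorem shows that $\{E(\eta)\}_\eta$ contains the whole interval $]E_0,+\infty[$. Indeed, given $\lambda>E_0$, choose $\eta_0$ with $E(\eta_0)<\lambda$ (possible by the definition of the infimum) and $\eta_1$ with $E(\eta_1)>\lambda$ (possible by (b)). Taking the closure then yields $\sigma(\Ha_S)=[E_0,+\infty[$.

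For (a), I would use the identity
\begin{align*}
\Ha_S(\eta)-\Ha_S(\eta_0)=(\eta^2-\eta_0^2)-2(\eta-\eta_0)P_1
\end{align*}
on the common domain $\D(\Ha_S(0))$. The key estimate is a relative form bound of $P_1=I\otimes\dG(K_1)$ with respect to $\Ha_S(\eta_0)$. Since $\abs{k_1}\le\omega(k)$, we have $\abs{\dG(K_1)}\le\dG(\omega(K))$ in the form sense. Moreover, the proof of Theorem~\ref{prelim:th:self-adjointness_Ha}, applied with $\Ha_{\el,0}$, $I\otimes \dG_\eta$ and $\Ha_{I,0}$, shows that $g\Ha_{I,0}$ is infinitesimally form bounded. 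Together these give
\begin{align*}
\abs{\q_{P_1}[\psi]}\le \q_{\Ha_f}[\psi]\le C\,\big(\q_{\Ha_S(\eta_0)}[\psi]+c\norm{\psi}^2\big)
\end{align*}
with constants $C,c$ that are locally uniform in $\eta_0$. Plugging this into the variational characterisation of $E(\eta)$, and noting that $\psi$ can be restricted to states of energy close to $E(\eta_0)$, gives
\begin{align*}
\abs{E(\eta)-E(\eta_0)}\le C'\abs{\eta-\eta_0}
\end{align*}
for $\eta$ near $\eta_0$, in both directions by symmetry. This proves (a).

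For (b), which I expect to be the main obstacle, the fiber operator can fail to grow with $\eta$ because bosons can carry the longitudinal momentum. I would use the elementary bound $(\eta-q)^2+\abs{q}\ge \abs{\eta}-1/4$ for all real $q$. Using $\dG(\omega(K))\ge\dG(\abs{K_1})\ge\abs{\dG(K_1)}$, functional calculus for the strongly commuting pair $\dG(K_1)$ and $\dG(\abs{K_1})$ should give
\begin{align*}
\tfrac12\dG(\omega(K))+(\eta-\dG(K_1))^2\ge \tfrac12\abs{\dG(K_1)}+(\eta-\dG(K_1))^2\ge \tfrac12\abs{\eta}-C.
\end{align*}
The remaining half $\tfrac12\dG(\omega(K))$ is kept to absorb $g\Ha_{I,0}$ through the standard bound
\begin{align*}
g\Ha_{I,0}\ge -\tfrac12\dG(\omega(K))-C_g,
\end{align*}
which follows from completing the square, or from the $\varepsilon$-bound of Theorem~\ref{prelim:th:self-adjointness_Ha}. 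Adding $\Ha_{\el,0}\ge\varepsilon_0$, we obtain $E(\eta)\ge\tfrac12\abs{\eta}-C''\to+\infty$, which proves (b).

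One point needs care in (b). The splitting of $\dG(\omega(K))$ into halves relies on the second-quantization monotonicity $\dG(A)\le\dG(B)$ for $A\le B$, and this monotonicity must be justified at the level of forms.
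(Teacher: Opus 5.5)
Your proof is correct and has the same skeleton as the paper's: continuity of $E$, coercivity of $E$, then the intermediate value theorem and closure. Part (b) is essentially the paper's argument. It uses the same inequality $\dG(\omega(K))\geq\dG(\abs{K_1})\geq\abs{\dG(K_1)}$ and the same scalar minimisation of $(\eta-\lambda)^2+\abs{\lambda}$. The only change is that you absorb $g\Ha_{I,0}$ by reserving half of $\dG(\omega(K))$ and using its form bound, while the paper uses the Kato--Rellich lower bound $E(\eta)\geq(1-\varepsilon)f(\eta)-b(\varepsilon)$. Your version makes the $\eta$-independence of the constants explicit.

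The genuine difference is in (a). The paper proves norm-resolvent continuity of $\eta\mapsto\Ha_S(\eta)$ from the second resolvent identity and the boundedness of $P_1(\Ha_S(\eta_0)-z)^{-1}$, and then invokes abstract results on continuity of spectra. You instead get a local Lipschitz bound $\abs{E(\eta)-E(\eta_0)}\leq C'\abs{\eta-\eta_0}$ directly. It comes from the variational principle on the common domain and the form bound $\abs{\q_{P_1}[\psi]}\leq\q_{\Ha_f}[\psi]\leq C\bigl(\q_{\Ha_S(\eta_0)}[\psi]+c\norm{\psi}^2\bigr)$. That bound is in fact uniform in $\eta_0$, since $g\Ha_{I,0}$ is form bounded relative to $\dG(\omega(K))$ alone.

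Your route is more elementary and quantitative, and it handles lower semicontinuity of $E$ directly. Norm-resolvent convergence gives lower semicontinuity of the infimum only together with a uniform lower bound on $E$, which here follows from the semiboundedness of $\Ha_S$. The paper's route gives the stronger norm-resolvent continuity of the fibers, which controls more than the bottom of the spectrum.

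Your closing caveat is not a real issue. On each $n$-boson sector, $\dG(\abs{K_1})$ and $\dG(\omega(K))$ are multiplication operators in momentum space, so the monotonicity is immediate, and splitting $\dG(\omega(K))$ into halves needs nothing.
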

\begin{proof}
    Thanks to the previous proposition, we see that since $\sigma(\Ha_S)$ is lower semi-bounded, so is each $\Ha_S(\eta)$. Consequently, the function $E : \R \owns \eta\mapsto \inf\sigma\left(\Ha_S(\eta)\right) \in \R$ is well defined. One sees that if $E$ varies continuously with $\eta \in \R$ and goes to $+\infty$ when $\eta \to \pm \infty$, it will yield:
    \begin{equation}\label{sec:ess:eq_prop_no_gap_straight_spec}
    \left[\underset{\eta \in \R}{\inf}E(\eta), + \infty \right[ =  \overline{\bigcup_{\eta \in \R} \sigma\hspace{-1pt}\left(\Ha_S(\eta)\right)}.
    \end{equation}
    So let us prove it. First we prove that $\underset{\eta \to \pm \infty}{\lim}E(\eta) = + \infty$. Since $\omega(K) \geq \abs{K_1} \geq 0$, one gets $\dG(\omega(K)) \geq \dG\left(\abs{K_1}\right) \geq 0$. Combining \cite[Proposition 3.4(ii)]{lemma_power_dG_gerar_moller} (with $b = \abs{K_1}$, $a=K_1$ in the notation of that proposition) with \cite[Proposition 10.14]{Sch} (with $\alpha =1/2$ in the notation of that proposition), we get $\dG\big(\abs{K_1}\big) \geq \abs{\dG(K_1)}$ and we compute: 
     \begin{align*}
         \left(\eta-\dG(K_1)\right)^2 + \dG(\omega(K)) \geq \left(\eta-\dG(K_1)\right)^2 + \dG \big(\abs{K_1}\big) \geq \left(\eta-\dG(K_1)\right)^2 + \abs{\dG\left(K_1\right)}. 
     \end{align*}
     By the functional calculus and since $\lambda \mapsto (\eta-\lambda)^2+\abs{\lambda}$ is continuous and lower bounded, we have, using $\sigma(\dG(K_1)) = \R$:
     \begin{align*}
         \inf \sigma\hspace{-2pt}\left(\left(\eta-\dG(K_1)\right)^2 + \abs{\dG(K_1)}\right) = \underset{\lambda\in\R}{\inf}(\eta-\lambda)^2+\abs{\lambda} = 
         \left\{ \begin{aligned}
             \eta-1/4 \hspace{0.5cm}  &\text{ if } \eta \geq 1/2\\
             -\eta-1/4 \hspace{0.5cm} &\text{ if } \eta \leq -1/2 \\
             \eta^2 \hspace{0.5cm} &\text{ otherwise}
            \end{aligned} \right. .
     \end{align*}
     Combining this with \cite[Theorem 3.8(iv)]{Arai}, we find:
     \begin{align*}
        f(\eta) := \inf \sigma(\Ha_{\el,0} + I \otimes \dG_\eta) \geq \inf \sigma(\Ha_{\el,0})  + \underset{\lambda\in\R}{\inf}(\eta-\lambda)^2+\abs{\lambda} \underset{\eta \to \pm \infty}{\longrightarrow} + \infty.
     \end{align*}
     It remains to add $\Ha_{I,0}$. We know it is infinitesimally $\Ha_{\el,0} + I \otimes \dG_\eta$-bounded (see the proof of Theorem \ref{sec:ess:prop_fiber_straight_hamiltonian}), so let $\varepsilon \in \mspace{3mu}]0,1[$, $b(\varepsilon) > 0$ be relative bounds. The Kato-Rellich theorem tells us (see, e.g., \cite[Theorem \cRM{10}.12]{RS}) that: 
     \begin{align*}
         E(\eta)  \geq f(\eta) - \max\left( \frac{b(\varepsilon)}{1-\varepsilon}, \varepsilon \mspace{1mu}f(\eta) + b(\varepsilon)\right). 
     \end{align*}
     For $\abs{\eta}$ large enough, so is $f(\eta)$, and one gets $E(\eta) \geq f(\eta)(1-\varepsilon)-b(\varepsilon) \underset{\eta \to \pm \infty}{\longrightarrow} + \infty$ since $\varepsilon \in \mspace{3mu}]0,1[$. \\
     Now we prove that $E$ is continuous. Combining \cite[Lemma 1.6 and Corollary 1.9]{Arai}, it suffices to show that $\Ha_S(\eta)$ converges in norm resolvent sense to $\Ha_S(\eta_0)$ for all $\eta_0 \in \R$ as $\eta \to \eta_0$. Since, for all $\eta\in\R$, $\D(\Ha_S(\eta))  \subset \D(P_1)$ (see the previous proposition), \cite[Lemma 1.9]{Arai} gives us that for all $z \in \C\setminus\R$ and all $\eta \in \R$, $P_1 \left(\Ha_S(\eta)-z\right)^{-1}$ is bounded. Then the second resolvent identity yields:
    \begin{align*}
    \nop{\left(\Ha_S(\eta_0)-z\right)^{-1} -  \left(\Ha_S(\eta)-z\right)^{-1}} 
    &= \nop{\left(\Ha_S(\eta)-z\right)^{-1}\left(\Ha_S(\eta_0)-\Ha_S(\eta)\right) \left(\Ha_S(\eta_0)-z\right)^{-1}} \\
    &\leq \abs{\eta_0^2 - \eta^2}\nop{\left(\Ha_S(\eta)-z\right)^{-1}\left(\Ha_S(\eta_0)-z\right)^{-1}}  \\
    &\phantom{\leq} + 2\abs{\eta_0-\eta}\nop{\left(\Ha_S(\eta)-z\right)^{-1}P_1 \left(\Ha_S(\eta_0)-z\right)^{-1}}.
\end{align*}
All operator norms are bounded so taking $\eta \to \eta_0$ yields the result. 
\end{proof}

This result is important: the spectrum of the Hamiltonian associated with the straight waveguide is a closed half-line and is therefore purely essential. Thus, despite the coupling with the quantum field, we recover the spectral picture known for straight waveguides without a field; see Subsection~\ref{model:subsec_general_ideas} and Figure~\ref{fig:straight_waveguide_and_spectrum}. To completely determine the spectrum, it remains to identify its infimum. We expect it to coincide with the ionization threshold, as in the uncoupled case where Persson's formula guarantees that it is true for the straight waveguide (see Subsection \ref{prelim:subsec_persson}).

\begin{notation}
    We denote $\Sigma_S$ (resp. $\Sigma_C$) the ionization threshold (see Definition~\ref{sec:ess:def_ionization_threshold}) for $\Ha_S$ (resp. $\Ha_C$).
\end{notation}

\begin{proposition}\label{sec:ess_spec:inf_straight_equal_sigma}
    One has $\inf \sigma(\Ha_S) = \Sigma_S$.
\end{proposition}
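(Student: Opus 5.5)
We prove the two inequalities $\Sigma_S \geq \inf\sigma(\Ha_S)$ and $\Sigma_S \leq \inf\sigma(\Ha_S)$ separately.

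\emph{The inequality $\Sigma_S \geq \inf\sigma(\Ha_S)$ is immediate.} For every $R>0$ one has $\Q_R \subset \Q(\Ha_S)$. By the variational characterization of the bottom of the spectrum, $\q_{\Ha_S}[\psi] \geq \inf\sigma(\Ha_S)\norm{\psi}^2$ for all $\psi\in\Q(\Ha_S)$. Hence $\Sigma_R \geq \inf\sigma(\Ha_S)$ for all $R$, and letting $R\to+\infty$ gives the claim.

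\emph{For the inequality $\Sigma_S \leq \inf\sigma(\Ha_S)$} we exploit translation invariance along $\xu$ (Lemma~\ref{sec:ess_spec:lemma_inv_translation}). Fix $\varepsilon>0$ and $R>0$.
\begin{enumerate}[label=(\alph*)]
\item Choose a normalized $\psi\in\D(\Ha_S)$ with $\ipldfs{\psi}{\Ha_S\psi} \leq \inf\sigma(\Ha_S)+\varepsilon$. By Proposition~\ref{sec:prelim:domain_splitting_and_cores}(iii), we may take $\psi \in \ciz(\R^2)\alten\D$. In particular $\psi$ is supported in the particle variable in some ball $B(0,\rho)$.
\item Translate it: $\psi_\xi := e^{-iP^{\mathrm{tot}}_1\xi}\psi$. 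This is still normalized. Since $e^{-iP^{\mathrm{tot}}_1\xi}=\tau_1^\xi\otimes\Gamma(e^{-iK_1\xi})$ and $\Gamma(\cdot)$ acts only on the Fock factor, $\psi_\xi$ is supported in the particle variable in $B((\xi,0),\rho)$.
\item Its energy is unchanged: by Lemma~\ref{sec:ess_spec:lemma_inv_translation}, $\ipldfs{\psi_\xi}{\Ha_S\psi_\xi}=\ipldfs{\psi}{\Ha_S\psi}$.
\item For $\abs{\xi}>R+\rho$ the support of $\psi_\xi$ avoids $\overline{B(0,R)}$, so $\psi_\xi\in\Q_R$ (indeed $\psi_\xi\in\D_R$ as in Proposition~\ref{sec:ess:prop_computation_sigma_D_R}, since $\Gamma(e^{-iK_1\xi})$ preserves a suitable core; alternatively one uses only $\psi_\xi\in\Q(\Ha_S)$ vanishing a.e.\ on $B(0,R)$).
\item Therefore $\Sigma_R \leq \inf\sigma(\Ha_S)+\varepsilon$ for every $R$. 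Letting $R\to+\infty$ and then $\varepsilon\to0$ gives the inequality.
\end{enumerate}

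\emph{The only delicate point} is checking that $\psi_\xi$ genuinely lies in $\Q_R$, i.e.\ that the support in the particle variable is merely shifted by $\xi$. This holds because $\Gamma(e^{-iK_1\xi})$ is a fiberwise unitary on $\Fs$ that does not depend on $x$, so it does not move particle-variable supports, while $\tau_1^\xi$ shifts them by $\xi$. Form-domain membership follows because $e^{-iP^{\mathrm{tot}}_1\xi}$ commutes with $\Ha_S$ and hence preserves $\Q(\Ha_S)$. No IMS localization is needed in the straight case; this contrasts with the curved case treated later.
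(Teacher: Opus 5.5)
Your proof is correct and relies on the same mechanism as the paper's: longitudinal translation invariance of $\Ha_S$ (Lemma~\ref{sec:ess_spec:lemma_inv_translation}) lets you separate the excluded ball from the trial state at no energy cost. The implementation is different, however.

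\emph{The paper's route.} It moves the ball rather than the state. It takes an arbitrary minimizing sequence $\psi_n\in\D(\Ha_S)$ and multiplies it by a cutoff $\chi_R^\xi$ vanishing on $B((\xi,0),R)$. It must then prove $\chi_R^\xi\psi_n\to\psi_n$ in the graph norm of $\Ha_S$ as $\xi\to+\infty$, using dominated convergence for the kinetic part and relative boundedness for the interaction. It concludes with a double limit, first in $\xi$ and then in $n$.

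\emph{Your route.} You use the core result of Proposition~\ref{sec:prelim:domain_splitting_and_cores}(iii) up front to obtain an approximate minimizer with compact support in the particle variable. After that, everything is exact. A single finite translation with $|\xi|>R+\rho$ places $\psi_\xi$ in $\Q_R$ with exactly the same energy, so no cutoff and no limit in $\xi$ are needed.

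\emph{A step to state explicitly.} In step (a), "we may take $\psi$ in the core" means: approximate $\psi$ in graph norm, use continuity of $\phi\mapsto\langle\phi,\Ha_S\phi\rangle$ in that norm, and renormalize. This costs an extra $\varepsilon$ and should be written out.

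\emph{Comparison.} Your version is shorter and bypasses the most delicate step of the paper's argument, the graph-norm convergence of the cut-off sequence. The paper's version works directly on $\D(\Ha_S)$ through the domain description of Proposition~\ref{sec:prelim:domain_splitting_and_cores}(ii), without passing to a core of compactly supported vectors. Both arguments yield the slightly stronger statement $\Sigma_R^S=\inf\sigma(\Ha_S)$ for every $R>0$.
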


\begin{proof}
    Using the translation invariance of $\Ha_S$ (see Lemma~\ref{sec:ess_spec:lemma_inv_translation}, from which we take the notations) and denoting $U_\xi := e^{-\mspace{2mu} i \mspace{2mu}P^{\,\mathrm{tot}}_1 \mspace{2mu} \xi }$, we can write, for all $\xi \in \R$:
    \begin{equation}\label{sigma_droit_inv_translation}
        \Sigma_R^S := \underset{\substack{\rule{0pt}{7pt} \psi \in \D(\Ha_S) \\[1pt] \psi = 0 \text{ on } B(0,R) \\[1pt] \|\psi\| = 1}}{\inf} \ipldfs{U_\xi \mspace{2mu} \psi}{\Ha_S  \mspace{1mu} U_\xi \mspace{2mu} \psi} 
        = \underset{\substack{\rule{0pt}{7pt} \psi \in \D(\Ha_S) \\[1pt] \psi = 0 \text{ on } B((\xi,0),R) \\[1pt] \|\psi\| = 1}}{\inf}\ipldfs{\psi}{\Ha_S \mspace{2mu} \psi}.
    \end{equation}
    Consequently, translation invariance implies that the computation of $\Sigma_R^S$ is invariant under longitudinal translations of the excluded ball. Taking the limit as this translation goes to infinity essentially removes the spatial constraint, yielding exactly $\inf \sigma(\Ha_S)$. So let $\chi_R \in \ci(\R^2,[0,1])$ such that $\chi_R = 0$ on $B(0,R)$ and $\chi_R = 1$ on $\R^2 \setminus B(0,2R)$, and let $\chi_R^\xi = \tau^\xi_1 \chi_R$, which is $0$ on $B((\xi,0),R)$. \\
    In order to compute the infimum, we build a minimizing sequence $(\psi_n)_n$ for $\inf\sigma(\Ha_S)$, i.e. $\psi_n\in\D(\Ha_S)$, $\nldfs{\psi_n}=1$, and
    $\ipldfs{\psi_n}{\Ha_S \mspace{2mu}\psi_n}\to\inf\sigma(\Ha_S)$. As in the proof of the IMS localization formula, multiplication by $\chi_R^\xi$ preserves $\D(\Ha_S)$ so that $\chi_R^\xi \psi_n \in \D(\Ha_S)$ for all $n\in\N$. We find that, for every fixed $n\in\N$:
    \begin{align*}
        \chi_R^\xi\psi_n \underset{\xi\to+\infty}{\longrightarrow} \psi_n \quad \text{in the graph norm of } \Ha_S .
    \end{align*}
    Indeed, $\chi_R^\xi\to 1$ pointwise as $\xi \to +\infty$, and its derivatives are uniformly bounded and supported in the annulus $B((\xi,0),2R)\setminus B((\xi,0),R)$ (which escapes to infinity), hence they converge to $0$ pointwise as $\xi \to +\infty$. Thus, the dominated convergence theorem yields the strong convergence of the kinetic part. The free-field term converges because the bounded multiplier $\chi_R^\xi$ acts solely on the particle variable, while the interaction term converges by the infinitesimal relative boundedness of $\Ha_I$ with respect to $\Ha_\free$. Together, this yields the claimed convergence. \\
    Now, consider the normalized sequence  $\Psi^\xi_n = \displaystyle \frac{\chi_R^\xi \psi_n}{\norm{\chi_R^\xi \psi_n}}$ for all $n\in \N$ and all $\xi$ large enough so that $\norm{\chi_R^\xi \psi_n} \neq 0$. One computes:
    \begin{align*}
        \ipldfs{\Psi^\xi_n}{\Ha_S \mspace{2mu}\Psi^\xi_n} = \frac{1}{\norm{\chi_R^\xi \mspace{2mu}\psi_n}^2}\ipldfs{\chi_R^\xi \mspace{2mu}\psi_n}{\Ha_S \mspace{2mu} \chi_R^\xi \mspace{2mu} \psi_n}  \underset{\xi \to +\infty}{\longrightarrow} \ipldfs{\psi_n}{\Ha_S\mspace{2mu}\psi_n} \underset{n \to +\infty}{\longrightarrow} \inf \sigma(\Ha_S).
    \end{align*}
    But $\Psi^\xi_n \in \D(\Ha_S)$, $\nldfs{\Psi^\xi_n} = 1$ and $\Psi^\xi_n = 0$ on $B((\xi,0),R)$ for all $n \in \N$ and $\xi$ large enough; hence, using \eqref{sigma_droit_inv_translation} and by definition of $\Sigma_R^S$:
    \begin{align*}
         \forall \xi \in \R,\, n\in \N, \: \inf \sigma(\Ha_S) \leq \Sigma_R^S = \underset{\substack{\rule{0pt}{8pt} \psi \in \D(\Ha_S) \\[1pt] \psi = 0 \text{ on } B((\xi,0),R) \\[1pt] \|\psi\| = 1}}{\inf}\ipldfs{\psi}{\Ha_S \mspace{2mu} \psi} \leq \ipldfs{\Psi^\xi_n \mspace{2mu} }{\Ha_S \mspace{2mu} \Psi^\xi_n}. 
    \end{align*}
    Since it is true for all $\xi \in \R$, $n \in \N$, we can take the limit in $\xi$ then in $n$ in the right-hand side to get $\inf \sigma(\Ha_S) \leq \Sigma_R^S \leq \inf \sigma(\Ha_S)$ and then: 
    \begin{align*}
        \forall R > 0, \: \Sigma_R^S = \inf \sigma(\Ha_S) \Longrightarrow \Sigma_S = \underset{R \to +\infty}{\lim} \Sigma_R^S = \inf\sigma(\Ha_S).
    \end{align*} 
\end{proof}

We hence have the following first important result of this work: 
\begin{theorem}\label{sec:ess_spec:theorem_spectrum_straight_waveguide}
    Let $\Sigma_S$ be the ionization threshold of $\Ha_S$ and let $(\Ha_S(\eta))_{\eta \in \R}$ be its fibers (Theorem~\ref{sec:ess:prop_fiber_straight_hamiltonian}). The spectrum of $\Ha_S$ is the half-line given by:
    \begin{align*}
        \sigma(\Ha_S) = [\Sigma_S,+\infty[ = \left[\underset{\eta\in\R}{\inf} \inf \sigma\hspace{-1pt}\left(\Ha_S(\eta)\right),+\infty\right[.
    \end{align*}
    In particular, $\Ha_S$ only has essential spectrum: $\sess(\Ha_S) = \sigma(\Ha_S)$.
\end{theorem}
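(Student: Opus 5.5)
The theorem is essentially a concatenation of results established just above, so the plan is to assemble them in order.

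First, Proposition~\ref{sec:ess:prop_no_gap_straight_spec} gives
\begin{align*}
\sigma(\Ha_S) = \left[\inf_{\eta\in\R}\inf\sigma\bigl(\Ha_S(\eta)\bigr),+\infty\right[ .
\end{align*}
That proposition rests on three ingredients: the direct-integral representation of Theorem~\ref{sec:ess:prop_fiber_straight_hamiltonian}, the continuity of $\eta\mapsto E(\eta)=\inf\sigma(\Ha_S(\eta))$, and the divergence $E(\eta)\to+\infty$ as $\abs{\eta}\to\infty$. In particular, the left endpoint of this half-line is $\inf\sigma(\Ha_S)$.

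Second, Proposition~\ref{sec:ess_spec:inf_straight_equal_sigma} identifies this infimum with the ionization threshold, $\inf\sigma(\Ha_S)=\Sigma_S$. Combining the two statements yields both expressions for $\sigma(\Ha_S)$ claimed in the theorem.

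It remains to show that $\sess(\Ha_S)=\sigma(\Ha_S)$, that is, $\sdisc(\Ha_S)=\emptyset$. A discrete eigenvalue is by definition an isolated point of the spectrum. A closed half-line $[\Sigma_S,+\infty[$ has no isolated points, so $\sdisc(\Ha_S)=\emptyset$ and $\sigma(\Ha_S)=\sess(\Ha_S)$. No finite-multiplicity argument is needed.

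No step is a genuine obstacle, since every ingredient has already been proved. The only point requiring care is to state explicitly that the infimum in Proposition~\ref{sec:ess_spec:inf_straight_equal_sigma} is the same quantity as the left endpoint in Proposition~\ref{sec:ess:prop_no_gap_straight_spec}. This holds because both equal $\inf\sigma(\Ha_S)$, by the closure formula for the spectrum of a direct integral in the preceding proposition.
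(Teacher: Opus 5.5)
Your proposal is correct and follows the paper's proof exactly: combine the no-gap proposition with the identification $\inf\sigma(\Ha_S)=\Sigma_S$, and exclude discrete spectrum because a closed half-line has no isolated points. Your closing appeal to the direct-integral closure formula is unnecessary, since the set equality $\sigma(\Ha_S)=[\inf_{\eta}\inf\sigma(\Ha_S(\eta)),+\infty[$ already shows that the left endpoint equals $\inf\sigma(\Ha_S)$.
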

\begin{proof}
    One combines the two previous propositions. There is no discrete spectrum because there are no isolated points in the spectrum.
\end{proof}

%==========================================================
%       Essential spectrum of the curved waveguide
%==========================================================

\subsection{Essential spectrum of the curved waveguide}

As discussed in Subsection~\ref{prelim:subsec_persson}, we expect the ionization thresholds $\Sigma_S$ and $\Sigma_C$ to coincide because of Assumption~\ref{h2_geom_compact_curvature}. This equality will provide a first link between the spectral properties of $\Ha_S$ and $\Ha_C$. \\ 
Unlike in the uncoupled case, however, it does not by itself determine the bottom of the essential spectrum of $\Ha_C$. Indeed, Proposition~\ref{sec:ess:first_inclusion_ess_spec} only yields $\inf\sess(\Ha_C)\geq\min(\Sigma_C,\inf\sigma(\Ha_C)+m)$, where the second threshold reflects the minimal energy required to create one boson (the mass $m>0$).

\smallbreak
 
To prove the equality of the ionization thresholds, we adapt the localization argument of Subsection~\ref{prelim:subsec_persson}. The new point is that the field must also be aligned with each straight branch of the curved waveguide, which requires Hypothesis~\ref{h3_field_rotation_invariance}.

\begin{proposition}\label{sec:ess_spec:prop_equality_ionization_thresholds}
    Assume \textup{\ref{h3_field_rotation_invariance}}. Then the ionization threshold for $\Ha_S$ and $\Ha_C$ is the same, namely $\Sigma_C = \Sigma_S$.
\end{proposition}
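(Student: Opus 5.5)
We mirror the proof of Proposition~\ref{sec:prelim:prop_persson_small_system}, now working with the quadratic form $\q_\Ha$. By Proposition~\ref{sec:ess:prop_computation_sigma_D_R} we may compute $\Sigma_R^C$ and $\Sigma_R^S$ on vectors $\psi\in\D_R=\ciz(\R^2\setminus\overline{B(0,R)})\alten\D$. We prove the two inequalities $\lim\Sigma_R^C\geq\lim\Sigma_R^S$ and $\lim\Sigma_R^C\leq\lim\Sigma_R^S$ separately. For each, we use the same normalized partition of unity $\tilde j_{R,\pm},\tilde j_R$ built in that proof; it acts only on the particle variable. Lemma~\ref{ess_spec:lemma:IMS} then gives
\begin{align*}
\q_{\Ha_C}[\psi]=\sum_{i=\pm}\q_{\Ha_C}[\tilde j_{R,i}\psi]+\q_{\Ha_C}[\tilde j_R\psi]+\underset{R\to+\infty}{o(1)}.
\end{align*}
The error term is controlled exactly as before, since $\abs{\nabla\tilde j}=O(R^{-1/2}+u_\pm(R)^{-1})$. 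For the region term $\tilde j_R\psi$, the particle lies outside $\Omega^a$, where $V_C=0\geq V_S$. The field and interaction terms are expressed in Cartesian coordinates $x$ and are identical for both Hamiltonians. Hence $\q_{\Ha_C}[\tilde j_R\psi]\geq\q_{\Ha_S}[\tilde j_R\psi]\geq\Sigma^S_R\norm{\tilde j_R\psi}^2$.

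The new point is the branch terms. On $\Omega_{R,+}$ the curved guide coincides with a straight guide obtained from the reference one by a Euclidean map $x\mapsto\mathcal R_+x+c_+$. We transport $\tilde j_{R,+}\psi$ by the unitary
\begin{align*}
T_+:=T^{\el}_+\otimes\Gamma(\mathcal U_+),\qquad (T^{\el}_+f)(x)=f(\mathcal R_+x+c_+),\qquad (\mathcal U_+h)(k)=e^{ik\cdot c_+}h(\mathcal R_+^{-1}k),
\end{align*}
possibly with adjusted sign conventions. We then check term by term that $T_+^\ast\Ha_CT_+=\Ha_S$ on vectors supported in the straight part. The Laplacian is Euclidean invariant, and $V_C$ becomes $V_S$ on $\Omega_{R,+}$ because $V_C$ vanishes off $\Omega^a$ there. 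The free field $\dG(\omega(K))$ commutes with $\Gamma(\mathcal U_+)$ since $\omega$ is radial. For the interaction, $\Gamma(\mathcal U_+)\phi(v_x)\Gamma(\mathcal U_+)^\ast=\phi(\mathcal U_+v_x)$, and
\begin{align*}
(\mathcal U_+v_x)(k)=v(\mathcal R_+^{-1}k)\,e^{-ik\cdot(\mathcal R_+x-c_+)}.
\end{align*}
This equals $v_{x'}(k)$ at the transported point $x'$ exactly when $v(\mathcal R_+^{-1}\cdot)=v$, which is where \ref{h3_field_rotation_invariance} enters. The translation part only needs the phase structure in \ref{h2_field_interaction}. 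The transported vector lies in $\Q(\Ha_S)$, vanishes on $B(0,R-b)$ for some $b>0$ independent of $R$, and the form identity extends to it by density. Thus
\begin{align*}
\q_{\Ha_C}[\tilde j_{R,+}\psi]=\q_{\Ha_S}[T_+^\ast\tilde j_{R,+}\psi]\geq\Sigma^S_{R-b}\norm{\tilde j_{R,+}\psi}^2,
\end{align*}
and similarly for the $-$ branch.

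Summing the three contributions, using $\sum\norm{\tilde j\psi}^2=1$ and the monotonicity $\Sigma_R^S\geq\Sigma^S_{R-b}$, we get $\Sigma^C_R\geq\Sigma^S_{R-b}+o(1)$, and letting $R\to+\infty$ gives $\Sigma_C\geq\Sigma_S$. The converse inequality is obtained by exchanging the roles of $S$ and $C$ with the straight-guide partition, exactly as in the uncoupled proof, using $T_\pm$ in the other direction and $V_S=0\geq V_C$ off the straight strip.

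The main obstacle is the branch step: verifying that the transport unitary genuinely intertwines the interaction direct integral $\int^\oplus\phi(v_x)\,dx$. This requires checking measurability and domain preservation, and getting the sign and phase conventions right so that the translation and rotation combine correctly. We handle it by computing on $\ciz\alten\D$, identifying the fibers pointwise in $x$, and extending by closedness of the forms.
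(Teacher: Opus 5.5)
Your proposal is correct and is essentially the paper's proof. Both use the same IMS localization with the partition of Proposition~\ref{sec:prelim:prop_persson_small_system}, the comparison $V_C=0\geq V_S$ off the guide, and a Euclidean transport on each straight branch relying on rotation invariance of $v$, giving $\Sigma^C_R\geq\Sigma^S_{R-b}+o(1)$ and the converse by symmetry; the only difference is that the paper passes to $(s,u)$ coordinates in the particle variable and applies $W_{\mathcal R,(s_0,u_0)}$ on the Fock space only, whereas you bundle both into one unitary $T_\pm$.

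The one fix needed is the sign bookkeeping you anticipated: as written, the particle and field parts of $T_+$ implement different motions. Taking $(T^{\el}_+f)(x)=f(\mathcal R_+^{-1}(x-c_+))$ and $(\mathcal U_+h)(k)=e^{-ik\cdot c_+}h(\mathcal R_+^{-1}k)$ gives $\mathcal U_+v_y=v_{\mathcal R_+y+c_+}$, and hence $T_+^\ast\Ha_C T_+=\Ha_S$ on vectors supported in the preimage of the branch.
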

\begin{proof}
    We prove $\Sigma_C \geq \Sigma_S$ and $\Sigma_S \geq \Sigma_C$. To retain the structure of the proof of Proposition~\ref{sec:prelim:prop_persson_small_system} and carry out the computations at the operator level on functions that are smooth and compactly supported in the particle variable, we compute $\Sigma$ on $\D_R = \ciz\big(\R^2 \setminus B(0,R)\big) \alten \D\bigl(\dG(\omega(K))\bigr)$ (see Proposition~\ref{sec:ess:prop_computation_sigma_D_R}).
    \begin{itemize}
        \vspace{0.3cm}
        \item $\Sigma_C \geq \Sigma_S$: the proof is the same as the one of Proposition \ref{sec:prelim:prop_persson_small_system}, up to a few changes. \\
        We take the same partition of unity and apply again the IMS localization formula (which holds for $\Ha_C$ as proved in Lemma \ref{ess_spec:lemma:IMS}). In the region where $V_C = 0$, we find again $\Ha_C \geq \Ha_S$ since $V_S \leq 0$, so this argument is not changed. The negligible term is also the same, so this is not changed either. It remains to check the claim on the two straight branches, and it suffices to look at one of them since both cases are similar, so let us consider the strip $\Omega_{R,+}$. The main difference with the proof of Proposition \ref{sec:prelim:prop_persson_small_system} is that even in $(s,u)$ coordinates, we do not have $\Ha_S = \Ha_C$ outside the support of the curvature, because the interaction term depends on $\varphi(s,u)$ (see Subsection~\ref{sec:mode:subsec_the_model}), which is not equal to $(s,u)$ if the straight parts of the curved waveguide are not aligned with the straight waveguide.
        \\
        However, since we are outside of the support of the curvature, the waveguide is straight and $\varphi$ is a Euclidean transformation (rotation and translation), namely:
        \begin{align*}
            \varphi(s,u) = \mathcal{R}\mspace{1mu}(s,u) + (s_0,u_0) \iff (s,u) = \mathcal{R}^{-1}\left(\varphi(s,u) -(s_0,u_0)\right),
        \end{align*}
        where $\mathcal{R}$ is a rotation matrix in $\R^2$. Applying then, only on the Fock space, a rotation $\mathcal{R}^{-1}$ and a translation by $-(s_0,u_0)$ through the adjoint of the operator 
        \begin{align*}
           W_{\mspace{2mu} \mathcal{R}, \mspace{2mu} (s_0,u_0)} = I \otimes \left( \Gamma\mspace{-3mu}\left(e^{- i \mspace{1mu} s_0 \mspace{2mu} K_1}\right) \Gamma\mspace{-3mu}\left(e^{- i \mspace{1mu} u_0 \mspace{2mu} K_2}\right) \Gamma\left(\mathcal{R}\right) \right) 
        \end{align*}
        (where, by abuse of notation, $\mathcal{R}$ still denotes the rotation operator in $\ldd$, i.e. $\mathcal{R} \mspace{2mu} \psi = \psi(\mathcal{R}^{-1}\cdot)$), we get back $(s,u)$ rather than $\varphi(s,u)$ in the interaction Hamiltonian, which, upon passing to the $(s,u)$ coordinates, gives the following identity:
        \begin{align*}
            \ip{\tilde{j}_{R,+}\mspace{1mu}\psi}{\Ha_C\mspace{1mu}\tilde{j}_{R,+}\mspace{1mu}\psi} &=  \ip{W_{\mspace{2mu} \mathcal{R}, \mspace{2mu} (s_0,u_0)}^\ast \mspace{1mu} \tilde{j}_{R,+}\mspace{1mu}\psi \mspace{2mu}}{\mspace{2mu}\Ha_S \mspace{1mu} W_{\mspace{2mu} \mathcal{R}, \mspace{2mu} (s_0,u_0)}^\ast \mspace{1mu}\tilde{j}_{R,+}\mspace{1mu}\psi}.
        \end{align*}
        Indeed, since $\psi\in\D_R$ and $\tilde{j}_{R,+}$ is smooth, one has $\tilde{j}_{R,+}\mspace{1mu}\psi\in\D_R\subset\D(\Ha_C)$ in the original Cartesian coordinates. To see that $W_{\mathcal R,(s_0,u_0)}^\ast \tilde{j}_{R,+}\mspace{1mu}\psi \in \D(\Ha_S)$, we pass to the $(s,u)$ coordinates, keeping the same notation for the transformed vector. As in the proof of Proposition~\ref{sec:prelim:prop_persson_small_system}, in $(s,u)$ coordinates, for some fixed $b>0$ independent of $R$, one has $\tilde{j}_{R,+} \mspace{1mu} \psi \in \D_{R-b}$. Then, since $W_{\mathcal R,(s_0,u_0)}^\ast$ preserves $\D\bigl(\dG(\omega(K))\bigr)$ (\cite[Theorem 4.16]{Arai} combined with the fact that $\omega(K)$ commutes with translations and rotations) and since it only acts on the Fock variable, it does not change the regularity or the support in the particle variable, and therefore $W_{\mathcal R,(s_0,u_0)}^\ast \tilde{j}_{R,+}\mspace{1mu}\psi \in \D_{R-b}\subset\D(\Ha_S)$. Afterwards, it suffices to compute $W_{\mspace{2mu} \mathcal{R}, \mspace{2mu} (s_0,u_0)} \mspace{1mu} \Ha_S \mspace{1mu} W_{\mspace{2mu} \mathcal{R}, \mspace{2mu} (s_0,u_0)}^\ast$ to get the claimed equality on $\supp(\tilde{j}_{R,+}\mspace{1mu}\psi)$. For the translation part, the computation closely follows that in Lemma~\ref{sec:ess_spec:lemma_inv_translation}; the rotation case follows analogously, since rotations in configuration space correspond to rotations in momentum space and since the form factor $v$ is rotation-invariant by Assumption~\ref{h3_field_rotation_invariance}.
        Consequently, after aligning the field with the straight waveguide through $W_{\mspace{2mu} \mathcal{R}, \mspace{2mu} (s_0,u_0)}$, we obtain the needed inequality:
        \begin{align*}
            \ip{\tilde{j}_{R,+}\mspace{1mu}\psi}{\Ha_C\mspace{1mu}\tilde{j}_{R,+}\mspace{1mu}\psi} &=  \ip{W_{\mspace{2mu} \mathcal{R}, \mspace{2mu} (s_0,u_0)}^\ast \mspace{1mu} \tilde{j}_{R,+}\mspace{1mu}\psi\mspace{2mu}}{\mspace{2mu}\Ha_S \mspace{1mu} W_{\mspace{2mu} \mathcal{R}, \mspace{2mu} (s_0,u_0)}^\ast \mspace{1mu}\tilde{j}_{R,+}\mspace{1mu}\psi}   \\
            &\geq \Sigma^S_{R-b} \, \norm{W_{\mspace{2mu} \mathcal{R}, \mspace{2mu} (s_0,u_0)}^\ast \mspace{1mu}\tilde{j}_{R,+}\mspace{1mu}\psi}^2 = \Sigma^S_{R-b} \, \norm{\tilde{j}_{R,+}\mspace{1mu}\psi}^2         
        \end{align*}
        Everything works similarly for $\Omega_{R,-}$, but we had to separate the cases because the matrix $\mathcal{R}$ and the translation vector $(s_0,u_0)$ might not be the same. Once these estimates are established on both branches, the remainder of the argument is exactly the same as in Proposition~\ref{sec:prelim:prop_persson_small_system} and yields $\Sigma_C \geq \Sigma_S$.
        \vspace{0.5cm}
        
        \item $\Sigma_S \geq \Sigma_C$: it is explained in Proposition \ref{sec:prelim:prop_persson_small_system} how to adapt the previous case to this one. Again, it remains to apply the corresponding unitary transformation to the interaction term on $\Omega_{R,+}$ and $\Omega_{R,-}$, and the required argument has already been given above.
    \end{itemize}
\end{proof}

\begin{notation}
    From now on, we simply denote $\Sigma$ for $\Sigma_S$ or $\Sigma_C$, as they are equal.
\end{notation}

We have thus obtained a first link between the bottoms of the essential spectra of $\Ha_S$ and $\Ha_C$, through $\Sigma_S=\Sigma_C$, $\Sigma_S=\inf\sess(\Ha_S)$ (Theorem \ref{sec:ess_spec:theorem_spectrum_straight_waveguide}) and Proposition \ref{sec:ess:first_inclusion_ess_spec}. \\
To complete the spectral picture of $\Ha_C$, we now seek to show, as anticipated in Subsection~\ref{model:subsec_general_ideas}, that its essential spectrum has no gaps above its bottom. We first prove that it contains the half-line $[\Sigma,+\infty[$. The argument is similar to that of Proposition \ref{sec:ess_spec:inf_straight_equal_sigma}, where the ball defining the ionization threshold was moved away at infinity. Here, after aligning the branch corresponding to $s \to -\infty$ with the straight waveguide, we translate the curved part to infinity along the longitudinal direction. Since this branch is infinitely long and straight, the translated curved waveguide should converge, in an appropriate sense, to the straight waveguide up to the corresponding Euclidean transformation. This is made rigorous in the next lemma:

\begin{lemma}
    Assume \textup{\ref{h3_field_rotation_invariance}}. Consider the straight branch of the waveguide corresponding to $s \to -\infty$ and let $\mathcal{R}$, $x_0$ be the rotation matrix and the translation giving this straight branch when applied to the straight waveguide $\R \times ]-a,a[$. By abuse of notation, we still denote by $\mathcal{R}$ the rotation operator on $\ldd$. Let 
    \begin{align*}
        U_{\mspace{2mu} \mathcal{R}, \mspace{2mu} x_0} =  e^{- i \mspace{3mu} x_0 \mspace{2mu}\cdot \mspace{2mu} p} \: \mathcal{R}  \otimes \left(  \Gamma\mspace{-3mu}\left(e^{- i \mspace{3mu} x_0 \mspace{2mu} \cdot \mspace{2mu} K}\right) \Gamma\left(\mathcal{R}\right) \right) 
    \end{align*}
    be the unitary operator realizing the Euclidean transformation aligning the straight and the curved waveguide.
    For $\xi \in \R$, let again $U_\xi := e^{-\mspace{2mu} i \mspace{1mu}P^{\,\mathrm{tot}}_1 \mspace{2mu} \xi }$ the translation along $\xu$ operator. Then: 
    \begin{align*}
         U_\xi \mspace{2mu} U_{\mspace{2mu} \mathcal{R}, \mspace{2mu} x_0}^\ast   \mspace{2mu} \Ha_C \mspace{2mu} U_{\mspace{2mu} \mathcal{R}, \mspace{2mu} x_0} \mspace{2mu} U_\xi^\ast \mspace{3mu} \underset{\xi \to +\infty}{\longrightarrow} \mspace{3mu}  \Ha_S   \hspace{0.4cm} \text{in strong resolvent sense}.
    \end{align*}
\end{lemma}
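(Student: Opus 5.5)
Set $\Ha_C^\xi := U_\xi U_{\mathcal R,x_0}^\ast \Ha_C U_{\mathcal R,x_0} U_\xi^\ast$. I will prove strong resolvent convergence through the standard criterion (e.g. \cite[Theorem VIII.25(a)]{RS}): strong resolvent convergence follows if the $\Ha_C^\xi$ and $\Ha_S$ are self-adjoint, there is a common core $\mathcal{C}$ for $\Ha_S$, and for every $\psi\in\mathcal C$ one has $\psi\in\D(\Ha_C^\xi)$ for $\xi$ large and $\Ha_C^\xi\psi\to\Ha_S\psi$. I take $\mathcal C=\ciz(\R^2)\alten\D$, with $\D$ a finite-particle core of smooth vectors for $\dG(\omega(K))$. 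This is a core for $\Ha_S$ by Proposition~\ref{sec:prelim:domain_splitting_and_cores}(iii). It is also contained in $\D(\Ha_C^\xi)$, because the unitaries involved preserve $\ciz(\R^2)\alten\D(\dG(\omega(K)))$: this uses \cite[Theorem 4.16]{Arai} and the invariance of $\omega$ under Euclidean maps.

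The next step is to compute $\Ha_C^\xi$ concretely. The unitary $U_{\mathcal R,x_0}^\ast$ applies the inverse Euclidean motion $x\mapsto \mathcal R^{-1}(x-x_0)$ jointly to the particle and to the field. It therefore leaves $-\Delta$ and $\dG(\omega(K))$ unchanged. It maps $V_C$ to the potential $V_C\circ(\mathcal R\cdot+x_0)$ of a curved guide whose $s\to-\infty$ branch coincides with $\R_-\times]-a,a[$ up to a finite shift in $s$. It maps the interaction $\int^\oplus\phi(v_x)\,dx$ to itself. For the last point I use \cite[Theorem 5.32]{Arai}, the identity $\Gamma(\mathcal R)\phi(v_x)\Gamma(\mathcal R)^\ast=\phi(v_{\mathcal R x})$ (which relies on \ref{h3_field_rotation_invariance}), and the same translation computation as in Lemma~\ref{sec:ess_spec:lemma_inv_translation}. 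The conjugation by $U_\xi$ then leaves the free part and the interaction invariant, again by Lemma~\ref{sec:ess_spec:lemma_inv_translation}, and replaces the potential by $W_\xi(x):=V_C(\mathcal R(x-(\xi,0))+x_0)$. Consequently
\begin{align*}
\Ha_C^\xi = \Ha_S + \bigl(W_\xi - V_S\bigr)(X)\otimes I \quad\text{on } \mathcal C .
\end{align*}
The whole claim thus reduces to showing that $(W_\xi-V_S)(X)\otimes I\,\psi\to 0$ for $\psi\in\mathcal C$.

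Fix $\psi\in\mathcal C$ whose particle support lies in $B(0,\rho)$. As $\xi\to+\infty$, the ball $B(0,\rho)$ is mapped by $x\mapsto\mathcal R(x-(\xi,0))+x_0$ far out along the $s\to-\infty$ branch. In that region the curvilinear coordinates are Euclidean, the curvature vanishes, and by \ref{h4_geom_not_u_shaped} the cut radii exceed $\rho$. Hence $W_\xi=V_S$ on $B(0,\rho)$ for all $\xi$ large enough, and in fact $(W_\xi-V_S)(X)\psi=0$ for large $\xi$, so the convergence is immediate.

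The main obstacle is this geometric step. I have to make sure that no other portion of the curve, such as the $s\to+\infty$ branch or the curved part, returns into the translated ball, and that the cut locus stays away from it. I would use \ref{h4_geom_not_u_shaped} together with the cut-radius asymptotics of Proposition~\ref{prop:appendixA_cut_radius_infty}, arguing as in the construction of $\Omega_{R,-}$ in the proof of Proposition~\ref{sec:prelim:prop_persson_small_system}. In particular I need to check that the branch direction matches the orientation in which $U_\xi$ translates; if it does not, I would replace $\mathcal R$ by its composition with a rotation by $\pi$. A second, minor point to verify is the sign convention that makes $U_{\mathcal R,x_0}^\ast$ transform $\phi(v_x)$ consistently with the particle coordinate. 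This is where rotation invariance of $v$ is used and where \ref{h4_field_real_valued_configuration_space} is not needed.
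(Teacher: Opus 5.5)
Your proposal is correct and follows the paper's proof: the same Reed--Simon core criterion on the Euclidean-invariant core $\ciz(\R^2)\alten\D$, and the same reduction of the conjugated Hamiltonian minus $\Ha_S$ to $(W_\xi-V_S)(X)\otimes I$ using rotation invariance of $v$. The only difference is in the last step. You obtain exact vanishing of $(W_\xi-V_S)(X)\psi$ for large $\xi$ from the compact particle support and the cut-radius asymptotics of Proposition~\ref{prop:appendixA_cut_radius_infty}, whereas the paper uses pointwise a.e.\ convergence and dominated convergence. Your local version is in fact more careful than the paper's assertion that $V_C(\mathcal R x+x_0)=V_S(x)$ on a whole half-plane $\{x_1<d\}$, which fails when the other branch enters every such half-plane. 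Your orientation worry is moot: matching the sign of $u$ forces $\mathcal R e_1=\dot\gamma$ on the branch, so $s\to-\infty$ automatically corresponds to $x_1\to-\infty$.
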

\begin{proof}
    By \cite[Theorem \cRM{8}.25]{RS}, it is enough to show strong convergence on a common core. By Proposition \ref{sec:prelim:domain_splitting_and_cores}(iii), $\ciz(\R^2) \alten \D\big(\dG(\omega(K))\big)$ is a core for $\Ha_S$ and $\Ha_C$. One sees that this core is invariant under rotations and translations, so that for all $\xi \in \R$, we have a common core for $ U_\xi \mspace{2mu} U_{\mspace{2mu} \mathcal{R}, \mspace{2mu} x_0}^\ast  \mspace{2mu}\Ha_C \mspace{2mu} U_{\mspace{2mu} \mathcal{R}, \mspace{2mu} x_0} \mspace{2mu} \mspace{2mu} U_\xi^\ast$ and $\Ha_S$. By linearity, it is enough to look at a product state $f \otimes g \in \ciz(\R^2) \alten \D\big(\dG(\omega(K))\big)$. \\
    As everything except the potential is invariant under translation and rotation (see the proof of Lemma \ref{sec:ess_spec:lemma_inv_translation}, which extends easily to rotations thanks to Hypothesis~\ref{h3_field_rotation_invariance}), one gets: 
    \begin{align*}
         U_\xi \mspace{2mu} U_{\mspace{2mu} \mathcal{R}, \mspace{2mu} x_0}^\ast   \mspace{2mu} \Ha_C \mspace{2mu} U_{\mspace{2mu} \mathcal{R}, \mspace{2mu} x_0} \mspace{2mu} U_\xi^\ast \mspace{3mu} - \Ha_S = 
         V_C\big(\mathcal{R}\left(X - (\xi,0)\right) + x_0\big) \otimes I  - V_S(X)   \otimes I. 
    \end{align*}
    Applying this to $f \otimes g$ yields that it suffices to compute: 
    \begin{align*}
        \nld{V_C\big(\mathcal{R}\left(X - (\xi,0)\right) + x_0 \big) f  - V_S(X)  f }^2.
    \end{align*}
    The rotation $\mathcal{R}$ and the translation $x_0$ define the Euclidean transformation mapping the straight waveguide onto the asymptotic branch corresponding to $s \to -\infty$. Using the compact curvature Assumption~\ref{h2_geom_compact_curvature}, there exists some $d<0$ such that for almost all $x$ with $x_1 < d$, only the straight branch is left and we have $V_C(x ) = V_S\big(\mathcal{R}^{-1}(x -  x_0)\big)$ or, equivalently, $V_C\big(\mathcal{R}\mspace{1mu}x+ x_0 \big) = V_S(x)$. Then, for almost every $x \in \R^2$, one sees that it suffices to take $\xi$ large enough to have $x_1 - \xi < d$ and then $V_C\big(\mathcal{R}\mspace{1mu}\left(x - (\xi,0)\right)+ x_0 \big) = V_S(x)$, giving us pointwise convergence as $\xi \to +\infty$. Since $V_C$, $V_S$ are bounded, the dominated convergence theorem yields: 
    \begin{align*}
        \nld{V_C\big(\mathcal{R}\left(X - (\xi,0)\right) + x_0 \big) f  - V_S(X)  f }^2 \underset{\xi \to +\infty}{\longrightarrow}  0.
    \end{align*}
    Hence the result.
\end{proof}

\noindent The strong resolvent convergence of the unitarily equivalent Hamiltonians yields the following spectral inclusion:

\begin{proposition}\label{sec:ess_spec:inclusion_sigma_curved_spec}
    Assume \textup{\ref{h3_field_rotation_invariance}}. One has $[\Sigma,+\infty[ \subset \sess(\Ha_C)$.
\end{proposition}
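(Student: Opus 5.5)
The plan is to combine the strong resolvent convergence of the previous lemma with the fact that $\sigma(\Ha_S)=[\Sigma,+\infty[$ (Theorem~\ref{sec:ess_spec:theorem_spectrum_straight_waveguide} together with Proposition~\ref{sec:ess_spec:prop_equality_ionization_thresholds}). Set $\Ha_C^\xi := U_\xi U_{\mathcal R,x_0}^\ast \Ha_C U_{\mathcal R,x_0} U_\xi^\ast$. It is unitarily equivalent to $\Ha_C$, so $\sigma(\Ha_C^\xi)=\sigma(\Ha_C)$ and $\sess(\Ha_C^\xi)=\sess(\Ha_C)$ for every $\xi$.

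By \cite[Theorem \cRM{8}.24(a)]{RS}, strong resolvent convergence gives the following: for every $\lambda\in\sigma(\Ha_S)$ and every $\xi_n\to+\infty$, there exist $\lambda_n\in\sigma(\Ha_C^{\xi_n})=\sigma(\Ha_C)$ with $\lambda_n\to\lambda$. Since $\sigma(\Ha_C)$ is closed, this yields $[\Sigma,+\infty[\,\subset\sigma(\Ha_C)$.

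This is not yet the essential spectrum. A half-line contains no isolated points, however, and the discrete spectrum consists of isolated eigenvalues of finite multiplicity. So every $\lambda\in\,]\Sigma,+\infty[$ is a non-isolated point of $\sigma(\Ha_C)$, hence lies in $\sess(\Ha_C)$. Since $\sess(\Ha_C)$ is closed, $\Sigma$ belongs to it as well. This already completes the argument.

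As a more robust alternative, or to double-check, I would build explicit Weyl sequences. Fix $\lambda\ge\Sigma$ and a Weyl sequence $(\psi_k)$ for $\Ha_S$ at $\lambda$; one exists because $\lambda\in\sess(\Ha_S)$. Using the core of Proposition~\ref{sec:prelim:domain_splitting_and_cores}(iii), I would approximate in graph norm so that each $\psi_k\in\ciz(\R^2)\alten\D(\dG(\omega(K)))$ with $\norm{(\Ha_S-\lambda)\psi_k}\le 1/k$. For each $k$, the proof of the previous lemma shows that $(\Ha_C^\xi-\Ha_S)\psi_k\to0$ as $\xi\to+\infty$; in fact it vanishes once $\xi$ is large, since $\psi_k$ has compact support in $x$. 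Choose $\xi_k$ so large that this holds and, in addition, so that the supports of $U_{\mathcal R,x_0}U_{\xi_k}^\ast\psi_k$ in the particle variable are pairwise disjoint and escape to infinity along the $s\to-\infty$ branch. Then $\phi_k:=U_{\mathcal R,x_0}U_{\xi_k}^\ast\psi_k$ is an orthonormal, hence weakly null, sequence with $\norm{(\Ha_C-\lambda)\phi_k}\to0$, and Weyl's criterion gives $\lambda\in\sess(\Ha_C)$.

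The only delicate point is in this alternative route: making the supports disjoint requires the $\psi_k$ to be compactly supported in $x$, which is exactly why I would take them from the core. The first route avoids this issue entirely.
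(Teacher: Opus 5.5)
Your first route is correct and is exactly the paper's argument: apply the spectral-inclusion part of strong resolvent convergence (Reed--Simon, Theorem VIII.24(a)) to the family unitarily equivalent to $\Ha_C$, use closedness of $\sigma(\Ha_C)$ to get $[\Sigma,+\infty[\,\subset\sigma(\Ha_C)$, and conclude because a half-line contains no isolated points. Your Weyl-sequence alternative is also sound: it uses core vectors with compact particle support, pushed far out along the $s\to-\infty$ branch so that the potential difference vanishes exactly and the supports become disjoint. It gives a direct singular sequence, but it is not needed.
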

\begin{proof}
     By \cite[Theorem \cRM{8}.24(a)]{RS} and the previous proposition, for all $\lambda \in \sigma(\Ha_S) = [\Sigma,+\infty[$, since $U_{\mspace{2mu} \mathcal{R}, \mspace{2mu} x_0}$, $U_\xi$ are unitary:
     \begin{align*}
         \exists (\lambda_\xi)_{\xi \in \R} \in \sigma\big(U_\xi \mspace{2mu} U_{\mspace{2mu} \mathcal{R}, \mspace{2mu} x_0}^\ast   \mspace{2mu} \Ha_C \mspace{2mu} U_{\mspace{2mu} \mathcal{R}, \mspace{2mu} x_0} \mspace{2mu} U_\xi^\ast \mspace{3mu}\big) = \sigma(\Ha_C), \: \lambda_\xi \underset{\xi \to +\infty}{\longrightarrow} \lambda.
     \end{align*}
     Since the spectrum is a closed set, $\lambda \in \sigma(\Ha_C)$. As a consequence, $[\Sigma,+\infty[ \subset \sigma(\Ha_C)$ and in particular, since being in the discrete spectrum means being isolated, no point of a half-line can be in the discrete spectrum, so $[\Sigma,+\infty[ \subset  \sess(\Ha_C)$.
\end{proof}

Denote $E_C = \inf\sigma(\Ha_C)$. Under \ref{h3_field_rotation_invariance}, we get $[\Sigma,+\infty[\subset\sess(\Ha_C)\subset[\min(\Sigma,E_C+m),+\infty[$ from Proposition \ref{sec:ess:first_inclusion_ess_spec} and the preceding one. If $\min(\Sigma,E_C+m)=\Sigma$, these inclusions coincide and $\sess(\Ha_C)=[\min(\Sigma,E_C+m),+\infty[$. It remains to consider the case $E_C+m<\Sigma$. For this, we use the $\supset$ part of Theorem~4.1 of \cite{acqftmpfh}, which states:

\begin{proposition}[{\cite[Theorem 4.1]{acqftmpfh}}]\label{sec:ess_spec:theorem_gerard}
    Assume $\Ha$ has a ground state. Then:
    \begin{align*}
        \sess(\Ha) \supset [\inf\sigma(\Ha)+m, +\infty[.
    \end{align*} 
\end{proposition}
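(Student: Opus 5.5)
The plan is to build, for every $\lambda\geq E:=\inf\sigma(\Ha)+m$, a singular Weyl sequence for $\Ha$ at $\lambda$ of the form ``ground state $\otimes$ one asymptotically free boson''. Let $\Phi\in\D(\Ha)$ be a normalized ground state, $\Ha\Phi=E_0\Phi$ with $E_0=\inf\sigma(\Ha)$. Write $\lambda=E_0+\omega(k_0)$ with $k_0\in\R^2$; this is possible because $\omega(\R^2)=[m,+\infty[$. Choose $h_n\in\ciz(\R^2_k)$, $\nld{h_n}=1$, concentrating in momentum near $k_0$ (for instance $h_n(k)=n\,h(n(k-k_0))$), and translate it in configuration space, $f_n:=e^{-ik\cdot y_n}h_n$ with $\abs{y_n}\to+\infty$ fast enough that $f_n\rightharpoonup0$. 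The candidate sequence is $\psi_n:=\ac(f_n)\Phi$, where $\ac(f_n)$ acts on the Fock factor of $L^2\otimes\Fs$.

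The steps are as follows.

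(1) Check that $\psi_n\in\D(\Ha)$. This follows because $\ac(f)$ with $f\in\D(\omega)$ maps $\D(\Ha_f)\cap\D(\Nb^{1/2})$ appropriately. I also need $\Phi\in\D(\Nb^{1/2})$, which holds for massive Nelson models via a pull-through/number bound, using that $\Nb\leq m^{-1}\Ha_f$ and $\Phi\in\Q(\Ha_f)$.

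(2) Compute the commutator $[\Ha,\ac(f)]=\ac(\omega f)+\frac{g}{\sqrt2}\int^\oplus\ip{v_x}{f}\,dx$. This uses that $\Ha_\el$ commutes with $\ac(f)$, the standard identity $[\dG(\omega),\ac(f)]=\ac(\omega f)$, and the CCR $[\phi(v_x),\ac(f)]=\frac1{\sqrt2}\ip{v_x}{f}$. It gives
\[
(\Ha-\lambda)\psi_n=\ac\bigl((\omega-\omega(k_0))f_n\bigr)\Phi+\tfrac{g}{\sqrt2}\,\ip{v_X}{f_n}\Phi .
\]
The first term is small because $\norm{\ac(u)\Phi}\leq\nld{u}\,\norm{(\Nb+1)^{1/2}\Phi}$ and $\nld{(\omega-\omega(k_0))h_n}\to0$. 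The second term is bounded by $\sup_x\abs{\ip{v_x}{f_n}}=\sup_x\abs{\int\cc{v(k)}e^{ik\cdot(x-y_n)}h_n(k)\,dk}$. This is where I expect the main obstacle: the supremum over $x$ does not decay merely from translating $f_n$. I would instead use the $L^2$ bound $\abs{\ip{v_x}{f_n}}\leq\nld{v\,\ind_{\supp h_n}}\to0$, since the support of $h_n$ shrinks to a point and $v\in L^2$. So I will choose $h_n$ with shrinking support rather than relying on spatial escape. This kills the interaction term uniformly in $x$.

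(3) Show that $\norm{\psi_n}\to1$ and $\psi_n\rightharpoonup0$. By the CCR, $\norm{\ac(f_n)\Phi}^2=\nld{f_n}^2+\norm{a(f_n)\Phi}^2$. Moreover $\norm{a(f_n)\Phi}\to0$ because $f_n\rightharpoonup0$ and $a(f)\Phi$ depends continuously on $f$ in $L^2$, with $\Phi$ having a finite number expectation; a compactness argument on the finite-particle components combined with dominated convergence should handle this. Weak convergence follows from $\ip{\chi}{\ac(f_n)\Phi}=\ip{a(f_n)\chi}{\Phi}\to0$ on a dense set of finite-particle vectors $\chi$.

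The normalized sequence $\psi_n/\norm{\psi_n}$ is then a singular Weyl sequence, so $\lambda\in\sess(\Ha)$ for every $\lambda\geq E_0+m$. The technical points to verify carefully are the number bound on $\Phi$ in step (1) and the uniform-in-$x$ control of the interaction commutator in step (2).
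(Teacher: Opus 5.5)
Your construction (the ground state $\Phi$ plus one boson $\ac(f_n)$ whose momentum concentrates at $k_0$) is the right one, and the estimates in steps (2)--(3) are fine. The gap is in step (1), and step (2) depends on it. It is false that $\ac(f)$, $f\in\D(\omega)$, maps $\D(\Ha_f)\cap\D(\Nb^{1/2})$ into $\D(\Ha_f)$. Take $f\in\D(\omega)$ with $\nld{f}=1$ and $\phi:=\sum_{n\geq1}n^{-2}f^{\otimes n}$. Then $\phi\in\D(\dG(\omega(K)))\cap\D(\Nb)$. However, $(\ac(f)\phi)^{(n+1)}=\sqrt{n+1}\,n^{-2}f^{\otimes(n+1)}$, and $\norm{\dG(\omega(K))f^{\otimes N}}\geq Nm$, so $\norm{\dG(\omega(K))(\ac(f)\phi)^{(n+1)}}^2\geq m^2(n+1)^3n^{-4}$, which is not summable. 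The kinetic part has the same problem: $\ac(f_n)\Phi\in\D(-\Delta\otimes I)$ requires $(-\Delta\otimes I)\Phi\in\D(I\otimes\Nb^{1/2})$. So no number bound on $\Phi$ alone, not even $\Phi\in\D(\Nb)$, gives $\psi_n\in\D(\Ha)$. Without that, the commutator identity of step (2) is only formal.

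The missing ingredient is the eigenvalue equation. Since $m>0$, one has $\D(\Ha)\subset\D(\Ha_f)\subset\D(I\otimes\Nb)$, as in the proof of Lemma~\ref{sec:binding:lemma_N_properties_and_bounds}(ii). Each $\phi(v_x)$ maps $\D(\Nb)$ into $\D(\Nb^{1/2})$ with a bound depending only on $\nld{v}$. Set $\Ha_0:=-\Delta\otimes I+I\otimes\dG(\omega(K))$; then the identity $\Ha_0\Phi=(E_0-V_\bullet(X))\Phi-g\,\Ha_I\Phi$ gives $\Ha_0\Phi\in\D(I\otimes\Nb^{1/2})$. The operators $-\Delta\otimes I$, $I\otimes\dG(\omega(K))$ and $I\otimes\Nb$ are positive and strongly commute, so $(-\Delta\otimes I)\Phi$ and $(I\otimes\dG(\omega(K)))\Phi$ both lie in $\D(I\otimes\Nb^{1/2})$. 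This is exactly what yields $\ac(f_n)\Phi\in\D(\Ha)$ and $\Ha\,\ac(f_n)\Phi=\ac(f_n)\Ha\Phi+\ac(\omega f_n)\Phi+\frac{g}{\sqrt2}\ip{v_X}{f_n}\Phi$. A quadratic-form version of Weyl's criterion would be another way around this. With this repair your proof is complete.

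Two simplifications are available. In step (3), the identity $\norm{\ac(f_n)\Phi}^2=1+\norm{a(f_n)\Phi}^2\geq1$ already suffices, so you do not need $a(f_n)\Phi\to0$. In step (2), the interaction term also vanishes by dominated convergence using only $f_n\rightharpoonup0$, so the uniform-in-$x$ bound is not essential.

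The paper itself gives no argument here. It imports the $\supset$ half of \cite[Theorem 4.1]{acqftmpfh} and only checks that this half uses that reference's hypotheses plus the existence of a ground state. Your construction therefore supplies a self-contained proof of what the paper takes from the literature.
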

\begin{proof}
The $\supset$ part of \cite[Theorem~4.1]{acqftmpfh} only uses, in the notation of that theorem, assumptions (H1), (I1), which are satisfied in our setting, and the existence of a ground state. However, their argument for the existence of a ground state relies on the compact resolvent of the small-system, which is not available here. We therefore retain the existence of a ground state as an assumption.
\end{proof}

Combining the two previous propositions and Proposition \ref{sec:ess:first_inclusion_ess_spec}, we establish the following HVZ-type theorem for the curved waveguide:

\begin{theorem}[HVZ theorem]\label{sec:ess_spec:HVZ_theorem_curved_waveguide}
    Assume \textup{\ref{h3_field_rotation_invariance}} and let $\Sigma$ be the ionization threshold of $\Ha_C$. One has: 
    \begin{align*}
        \sess(\Ha_C) = \left[\min\big(\Sigma, \inf\sigma(\Ha_C)+m\big), +\infty\right[, 
    \end{align*}
    where $\Sigma$ is also the ionization threshold of $\Ha_S$, with $\Sigma = \inf \sigma(\Ha_S) = \inf \sess (\Ha_S)$.
\end{theorem}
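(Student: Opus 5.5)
The plan is to combine the three inclusions already established, splitting into two cases according to which term realizes $\min(\Sigma, E_C+m)$, where $E_C=\inf\sigma(\Ha_C)$. The identification $\Sigma=\Sigma_C=\Sigma_S=\inf\sigma(\Ha_S)=\inf\sess(\Ha_S)$ in the last sentence of the statement follows directly from Proposition~\ref{sec:ess_spec:prop_equality_ionization_thresholds} (which uses \ref{h3_field_rotation_invariance}) and Theorem~\ref{sec:ess_spec:theorem_spectrum_straight_waveguide}. So the real content is the formula for $\sess(\Ha_C)$.

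The common upper bound comes from Proposition~\ref{sec:ess:first_inclusion_ess_spec}, which gives $\sess(\Ha_C)\subset[\min(\Sigma,E_C+m),+\infty[$. In the first case, $\Sigma\le E_C+m$. Proposition~\ref{sec:ess_spec:inclusion_sigma_curved_spec} gives $[\Sigma,+\infty[\subset\sess(\Ha_C)$, and since the lower bound is also $\Sigma$ in this case, the two inclusions coincide and we are done.

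In the second case, $E_C+m<\Sigma$, and the only extra ingredient is Proposition~\ref{sec:ess_spec:theorem_gerard}, which requires a ground state of $\Ha_C$. This is the one point that needs an argument, and I would obtain it from the spectral picture itself. Since $m>0$, we have
\begin{align*}
\inf\sess(\Ha_C)\ \ge\ \min(\Sigma,E_C+m)\ =\ E_C+m\ >\ E_C .
\end{align*}
Hence $E_C=\inf\sigma(\Ha_C)$ lies in $\sigma(\Ha_C)\setminus\sess(\Ha_C)=\sdisc(\Ha_C)$, i.e.\ it is an isolated eigenvalue of finite multiplicity. Any corresponding eigenvector is a ground state of $\Ha_C$. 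Proposition~\ref{sec:ess_spec:theorem_gerard} then yields $[E_C+m,+\infty[\subset\sess(\Ha_C)$, which matches the upper bound $[E_C+m,+\infty[$ in this case, so again equality holds.

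No step is a genuine obstacle, since all the heavy lifting (the exponential decay estimate, the localization argument, the strong resolvent convergence) has already been done. The only subtle point is the ground-state hypothesis of Proposition~\ref{sec:ess_spec:theorem_gerard}, which cannot be taken from \cite{acqftmpfh} because of the non-compact resolvent of the small system. As shown above, it is supplied by the lower bound of Proposition~\ref{sec:ess:first_inclusion_ess_spec} combined with the strict positivity of the boson mass $m$.
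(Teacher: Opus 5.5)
Your proposal is correct and is essentially the paper's proof. It uses the same case split on which term realizes $\min(\Sigma, E_C+m)$ and the same three propositions. Your derivation of the ground state from $\inf\sess(\Ha_C)\geq E_C+m>E_C$ is exactly what the paper compresses into ``$\Ha_C$ has a ground state by Proposition~\ref{sec:ess:first_inclusion_ess_spec}'', and you additionally make explicit the identification $\Sigma_C=\Sigma_S=\inf\sigma(\Ha_S)$ that the paper leaves implicit.
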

\begin{proof}
    Denote $E_C := \inf\sigma(\Ha_C)$. If $\min(\Sigma, E_C+m) = E_C + m > E_C$, then $\Ha_C$ has a ground state by Proposition \ref{sec:ess:first_inclusion_ess_spec}, and combining Propositions \ref{sec:ess_spec:theorem_gerard} and \ref{sec:ess:first_inclusion_ess_spec} yields the result. If $\min(\Sigma, E_C+m) = \Sigma$, Propositions \ref{sec:ess_spec:inclusion_sigma_curved_spec} and \ref{sec:ess:first_inclusion_ess_spec} yield the result.
\end{proof}

The objectives of this section were to establish the gapless structure of the essential spectra associated with the straight and curved waveguides, and to relate their bottoms. Theorems \ref{sec:ess_spec:theorem_spectrum_straight_waveguide} and \ref{sec:ess_spec:HVZ_theorem_curved_waveguide} achieve these objectives through the common ionization threshold $\Sigma$. Unlike for waveguides without a quantum field, however, we do not obtain an equality between the essential spectra of the straight and curved systems. Instead, we have an interdependence, since the bottom of $\sess(\Ha_C)$ is $\min(\Sigma,E_C+m)$, where the additional threshold $E_C+m$ reflects the creation of one boson.

\smallbreak

This difference does not alter the criterion relevant to the existence of discrete spectrum. Since $E_C<E_C+m$, one has $E_C<\inf\sess(\Ha_C)$ if and only if $E_C<\Sigma=\inf\sigma(\Ha_S)$. Thus, as in the uncoupled case (see Figures \ref{fig:straight_waveguide_and_spectrum} and \ref{fig:curved_waveguide_and_spectrum}), \textit{the question is whether the geometry of the curved waveguide produces an energy below the bottom of the spectrum of the Hamiltonian associated with the straight waveguide}. This is consistent with Persson's formula: the essential spectrum is governed by what happens at infinity, whereas discrete spectrum may arise from a local geometric effect, such as curvature.

% À FAIRE DANS CETTE SECTION

%%% 

%%%%%% 

% 5) Binding condition
\section{Binding condition and discrete spectrum}
\label{sec:binding}

We now derive a sufficient condition for $\Ha_C$ to have non-empty discrete spectrum, and hence a ground state. In the framework of NRQED, and particularly for atomic models, the inequality $\Sigma>\inf\sigma(\mathcal{H})$ is often imposed as an additional hypothesis. This requirement is known as the \textbf{binding condition} (see, e.g., \cite[(3)]{gsnrqed}, \cite[(b)]{analisa} or \cite[Assumption 3.1]{hiroshima2019ground}). In the massive setting considered here, it ensures the existence of a ground state via Proposition \ref{sec:ess:first_inclusion_ess_spec}; in massless models, it instead controls the escape of the matter subsystem, while the infrared behavior of the field requires separate estimates. Making such an assumption is not appropriate here, as we aim to discuss the existence of a ground state in relation to geometric conditions. Thus, rather than postulating $\Sigma>\inf\sigma(\mathcal{H})$ directly, we derive below a sufficient condition for this inequality to hold. Since
this condition depends explicitly on both $V_S$ and $V_C$, we expect it
to be satisfied under suitable geometric assumptions; establishing such
assumptions is left for future work.

\smallbreak

A case in which the binding condition can be proved is the single-particle case; see, e.g.,  \cite[Proposition 3.11]{hiroshima2019ground}, \cite{takaesu_binding} or \cite[Theorem 3.1]{gsnrqed} combined with \cite[Theorem 3]{editnrqed}. In these works, the Hamiltonian $\Ha$ is split into its translation-invariant part $\Ha^0$ and an external potential which vanishes at infinity. It is this decay of the potential which allows one to prove $\inf\sigma(\Ha^0)\leq\Sigma$, and then $\inf\sigma(\Ha)<\Sigma$. In our setting, the waveguide potential does not vanish at infinity, since the waveguide is infinite. Consequently, this argument does not apply directly.
 
\smallbreak

However, under Assumption~\ref{h3_field_rotation_invariance}, we have shown that $\Sigma_C = \Sigma_S = \Sigma$ (Proposition \ref{sec:ess_spec:prop_equality_ionization_thresholds}) and that $\Sigma = \inf \sigma(\Ha_S) = \underset{\eta \in \R}{\inf} \inf \sigma(\Ha_S(\eta))$ (Theorem \ref{sec:ess_spec:theorem_spectrum_straight_waveguide}). Here, $\Ha_S$ can be seen as the $\xu$-translation-invariant part of the Hamiltonian, which can then play the role of $\Ha^0$ if we write:
\begin{align*}
    \Ha_C = \underbrace{\Ha_S}_{\text{$\xu$-translation-invariant part}} + \hspace{0.2cm} V_C(X)-V_S(X).
\end{align*}
Since $\Ha_S$ is translation invariant only in the $\xu$-direction, we adapt the trial state used in \cite{takaesu_binding} by separating the longitudinal and transverse variables: instead of considering $u(x)\Psi(x)$ with $x\in\R^2$, we take $u\otimes\Psi$, with $u\in L^2(\R_{\xu})$ and $\Psi\in L^2(\R_{\xd},\Fs)$. In order to extract $\Sigma$ from the computations below, we first select an appropriate fiber. Indeed, as the function $\eta\mapsto\inf\sigma(\Ha_S(\eta))$ is continuous and goes to $+\infty$ when $\eta\to\pm\infty$ (see the proof of Proposition \ref{sec:ess:prop_no_gap_straight_spec}), it has a global minimum, i.e. there exists $\eta_0\in\R$ such that $\Sigma=\inf\sigma(\Ha_S(\eta_0))$. Assuming for now that $\Ha_S(\eta_0)$ has a ground state, and letting $\Psi$ be such a ground state, we can consider the following suitably transformed version of $u\otimes\Psi$:
\begin{align*}
    \Upsilon := \left( \int_{\R^2}^\oplus e^{\mspace{1mu} i \mspace{2mu} \eta_0 \mspace{2mu} \xu }  \, e^{-\mspace{2mu} i \mspace{3mu} \dG(K_1) \mspace{3mu} \xu } \: d\xu \mspace{2mu} d\xd  \right) u \otimes \Psi.
\end{align*}
Formally, if $u$, $\Psi$ are normalized and $u$ is real-valued, one gets (see proof of Theorem~\ref{sec:binding:theorem_binding_condition} below)
\begin{align*}
    \inf\sigma(\Ha_C) \leq \ip{\Upsilon}{\Ha_C \Upsilon} &= \ip{\Upsilon}{\bigl(\Ha_S + V_C(X) - V_S(X)\bigr) \Upsilon}  \\
    &= \ipld{u}{\left(-\partial_{\xu}^2+V_{\eff,\Psi}(X_1)\right)u} + \Sigma, 
\end{align*}
where $V_{\eff,\Psi}$ is an \textit{effective potential} given by: 
\begin{align*}
    V_{\eff,\Psi} := \int_{\R}(V_C-V_S)(\, \cdot \, ,\xd)\snfs{\Psi(\xd)} d\xd.
\end{align*}
In this situation, it suffices to find $u \in \D(-\partial_{\xu}^2+V_{\eff,\Psi}(X_1))$ so that $\ipld{u}{\hspace{-2pt}\left(-\partial_{\xu}^2+V_{\eff,\Psi}(X_1)\right)\hspace{-2pt}u} < 0$ to obtain $\inf\sigma(\Ha_C) < \Sigma$. In the remainder of this work, \textbf{we shall refer to the existence of such a negative-energy state as the binding condition}.

\medbreak

The above formal computations will be justified in Subsection \ref{sec:binding:rigorous_computations_binding}. We have thus reduced the existence of discrete spectrum for $\Ha_C$ to the following two statements:
\vspace{7pt}
\begin{mdframed}
\begin{enumerate}[label=(\alph*)]
\item\label{sec:binding:statement_a} The Hamiltonian $\Ha_S(\eta_0)$ has a ground state $\Psi$, where $\eta_0 \in \underset{\eta \, \in \, \R}{\argmin}\, \inf\sigma(\Ha_S(\eta))$.
\item\label{sec:binding:statement_b} $\sigma \big(-\partial_{\xu}^2+V_{\eff,\Psi}(X_1)\big) \: \cap \hspace{3pt}  ]-\infty, 0[ \hspace{4pt}  \neq \, \emptyset$.
\end{enumerate} 
\end{mdframed}

\vspace{7pt}
\textit{In what follows, we  first prove statement \ref{sec:binding:statement_a}, and then use it to show rigorously that condition \ref{sec:binding:statement_b} implies $\inf\sigma(\Ha_C)<\Sigma$, i.e. implies binding.} The verification of condition \ref{sec:binding:statement_b} under suitable geometric assumptions is left for future work.

%===================================================================
%       Existence of a ground state for the fibered Hamiltonian
%===================================================================

\subsection{Existence of a ground state for the fibered Hamiltonian}\label{sec:binding:ground_state_fibered}

In this subsection, we prove statement \ref{sec:binding:statement_a} by studying each fiber of $\Ha_S$. To this end, we introduce:
\begin{notation}
    Let $\Sigma(\eta)$ denote the ionization threshold associated with $\Ha_S(\eta)$ (see Definition \ref{sec:ess:def_ionization_threshold}), and let $E(\eta) := \inf\sigma(\Ha_S(\eta))$ be the notation from the proof of Proposition \ref{sec:ess:prop_no_gap_straight_spec}.
\end{notation}

\noindent To control the estimates used below and clarify the overall proof strategy, we first establish exponential decay below the ionization threshold:

\begin{proposition}[Exponential decay for $\Ha_S(\eta)$]\label{sec:binding:exp_decay_hamiltonian}
     Let $\eta \in \R$ and $\mathcal{J}$ be a closed interval in $]-\infty,\Sigma(\eta)[$. Then for all $0 < \alpha \displaystyle < \sqrt{\Sigma(\eta)-\sup\mathcal{J}}$: 
    \begin{align*}
        \nop{e^{\, \alpha \mspace{2mu} \abs{X_2}} \, E^{\Ha_S(\eta)}(\mathcal{J})} < + \infty,
    \end{align*}
    where $E^{\Ha_S(\eta)}$ is the projection-valued measure of $\Ha_S(\eta)$.
\end{proposition}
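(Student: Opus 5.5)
We will follow the proof of Proposition~\ref{sec:ess_spec:prop_exp_decay} and apply \cite[Theorem~1]{editnrqed}, now to the fiber operator $\Ha_S(\eta)$ on $L^2(\R_{\xd},\Fs)=\int_{\R}^\oplus \Fs\,d\xd$. The particle configuration space is one-dimensional, and the ionization threshold $\Sigma(\eta)$ is defined as in Definition~\ref{sec:ess:def_ionization_threshold}, with balls $B(0,R)$ replaced by intervals $]-R,R[$ in $\xd$.

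\textbf{Step 1: quadratic form of $\Ha_S(\eta)$.} We first identify the closed, lower semi-bounded form $\q_{\Ha_S(\eta)}$. We copy the argument of Proposition~\ref{prelim:prop:general_quadratic_form}, replacing $\Ha_\el$ by $\Ha_{\el,0}$, $\Ha_f$ by $I\otimes\dG_\eta$ and $\Ha_I$ by $\Ha_{I,0}$; the replacement is justified by the relative bounds recorded in the proof of Theorem~\ref{sec:ess:prop_fiber_straight_hamiltonian}. This gives
\begin{align*}
\Q(\Ha_S(\eta)) = H^1(\R_{\xd},\Fs)\cap L^2\bigl(\R_{\xd},\D(\dG_\eta^{1/2})\bigr),
\end{align*}
and
\begin{align*}
\q_{\Ha_S(\eta)}[\psi]=\norm{\partial_{\xd}\psi}^2+\q_{V}[\psi]+\q_{I\otimes\dG_\eta}[\psi]+g\,\q_{\Ha_{I,0}}[\psi].
\end{align*}

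\textbf{Step 2: IMS formula in $\xd$.} Next we prove the analogue of Lemma~\ref{ess_spec:lemma:IMS} for functions $f(X_2)$ with $f,f'\in L^\infty(\R)$. Multiplication by $f(X_2)$ preserves the form domain. It commutes with $V(X_2)$, with $I\otimes\dG_\eta$ (which acts only on the Fock variable) and with $\Ha_{I,0}$ (a direct integral over $\xd$). This point matters: the term $(\eta-P_1)^2$ involves only the field momentum $\dG(K_1)$, because the longitudinal particle momentum has been absorbed by the fibering. Hence only $-\partial_{\xd}^2$ contributes to the double commutator, which yields
\begin{align*}
\q[f^2\psi,\psi]+\q[\psi,f^2\psi]-2\q[f\psi,f\psi]=-2\norm{f'(X_2)\psi}^2.
\end{align*}
Summing over a partition of unity then gives the IMS formula.

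\textbf{Step 3: applying the decay theorem.} Hypothesis (iii) of \cite[Theorem~1]{editnrqed} now holds on $\Q(\Ha_S(\eta))$. Hypotheses (i) and (ii) are only used there to reach (iii) after closure, and $\q_{\Ha_S(\eta)}$ is already closed. As before, the radiality of $f$ is irrelevant, since there is no particle symmetry constraint. The theorem then gives the bound for every $\alpha<\sqrt{\Sigma(\eta)-\sup\mathcal J}$, with $\abs{X_2}$ playing the role of $\abs{X}$.

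\textbf{Main obstacle.} The hardest point is checking that the proof in \cite{editnrqed} is dimension-agnostic and uses only the abstract structure: an IMS identity together with a threshold defined through localization outside balls. Their setting is $\R^3$ with the standard Laplacian, and we need it for a one-dimensional particle variable carrying an extra field-only term $(\eta-\dG(K_1))^2$. I would verify this by running their Agmon-type argument with the bounded weights $f_\epsilon(\xd)=\alpha\abs{\xd}/(1+\epsilon\abs{\xd})$. Its only inputs are $\abs{f_\epsilon'}\le\alpha$ and a lower bound $\q[\chi\psi]\ge(\Sigma_R(\eta)-o(1))\norm{\chi\psi}^2$ for $\chi$ supported outside $[-R,R]$, and both are available here.
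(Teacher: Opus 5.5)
Your proposal is correct and is essentially the paper's proof, which simply says to rerun Proposition~\ref{prelim:prop:general_quadratic_form}, Lemma~\ref{ess_spec:lemma:IMS} and Proposition~\ref{sec:ess_spec:prop_exp_decay} (i.e.\ \cite[Theorem~1]{editnrqed}) for the fiber with the extra term $(\eta-P_1)^2$, exactly as in your Steps 1--3. Your key observation is that $(\eta-P_1)^2=I\otimes(\eta-\dG(K_1))^2$ acts only on the Fock variable, so it commutes with $f(X_2)$ and drops out of the double commutator; this is precisely what makes the transfer work.
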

\begin{proof}
    As in the proofs of Lemma \ref{ess_spec:lemma:IMS}, Proposition \ref{sec:ess_spec:prop_exp_decay} and Proposition \ref{prelim:prop:general_quadratic_form}, but with the additional $(\eta-P_1)^2$ term.
\end{proof}

We aim to prove an analog of Proposition \ref{sec:ess:first_inclusion_ess_spec} for $\Ha_S(\eta)$. Unlike in the previous case, this cannot be achieved simply by invoking the existing literature, due to the presence of the $(\eta - P_1)^2$ term in $\Ha_S(\eta)$ that falls outside the standard non-fibered Nelson framework. The fibers of translation-invariant Nelson Hamiltonians have been previously investigated, especially in \cite{moller_translation_invariant_nelson_model} and subsequent work, but always in the fully translation-invariant setting. In contrast, our Hamiltonian is only translation invariant along the longitudinal $\xu$ axis. A consequence of this partial invariance is that the transverse degree of freedom of the small system remains present alongside the $(\eta-I\otimes\dG(K_1))^2$ term; thus, unlike in the standard one-particle fully translation-invariant Nelson model, the fibered Hamiltonian cannot be reduced to an operator defined exclusively on the Fock space\footnote{\, This comment is specific to translation-invariant Nelson models. 
In Pauli--Fierz models, fiber Hamiltonians retaining matter degrees of freedom have been studied, for instance in mobile atom or ion models (see, e.g., \cite{amour_grebert_guillot_dressed_electron_magnetic_field}).}.
\\

Consequently, we need to adapt the proofs of the existence of a ground state for Nelson Hamiltonians to our case, with the additional $(\eta - I \otimes \dG(K_1))^2$ term. The key observation to achieve this goal is that, at the level of the fibered Hamiltonian, the potential vanishes at infinity, since we only consider the transverse variable on the fibers. As explained at the beginning of Section \ref{sec:binding}, this is a crucial hypothesis to get $\Sigma(\eta) > \inf \sigma(\Ha_S(\eta))$. If we had this, it would be enough to prove an analog of Proposition \ref{sec:ess:first_inclusion_ess_spec} to determine whether $\Ha_S(\eta)$ has a ground state. These considerations lead to the following theorem:

\begin{theorem}\label{sec:binding:theorem_grond_state_fibered_hamiltonian}
    Let $\eta \in \R$, $\eta_0 \in \underset{\eta \, \in \, \R}{\argmin}\, E(\eta)$ and $E_0^{(1)}(\eta) = \underset{k_1 \in \R}{\inf} \, E(\eta - k_1) + \omega(k_1,0)$. Then: 
    \begin{enumerate}[label=(\roman*), itemsep=10pt]
        \item\label{theorem_ground_state_fibered_assertion_i} $\sess(\Ha_S(\eta)) \subset \left[\min\hspace{-1pt}\left(\Sigma(\eta), E_0^{(1)}(\eta) \right),+\infty  \right[$.
        
        \item\label{theorem_ground_state_fibered_assertion_ii} $\Sigma(\eta) > E(\eta)$.
        
        \item\label{theorem_ground_state_fibered_assertion_iii} $E_0^{(1)}(\eta_0) = E(\eta_0) + m > E(\eta_0)$ and $E(\eta_0) = \Sigma_S$. 
        
        \vspace{0.2cm}
    \end{enumerate}
    Hence $\Ha_S(\eta_0)$ has a ground state $\Psi$ such that $\Ha_S(\eta_0) \Psi = \Sigma_S \mspace{1mu}\Psi$. 
\end{theorem}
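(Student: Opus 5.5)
Three separate tasks are involved: an HVZ-type localization argument for each fiber $\Ha_S(\eta)$, a binding inequality in the fiber, and an identification of the one-boson threshold at the minimizing momentum $\eta_0$. Once these are in hand, the conclusion follows from (i)--(iii) as in the proof of Theorem~\ref{sec:ess_spec:HVZ_theorem_curved_waveguide}. By (ii) and (iii), $E(\eta_0)<\min(\Sigma(\eta_0),E_0^{(1)}(\eta_0))$, so (i) places $E(\eta_0)$ strictly below $\inf\sess(\Ha_S(\eta_0))$. Hence $E(\eta_0)$ is an isolated eigenvalue of finite multiplicity, and (iii) identifies it with $\Sigma_S$.

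\textbf{Step (i).} I would adapt the proof of \cite[Theorem~2]{AC_rayleigh_scatt_1} (equivalently the Dereziński--Gérard geometric argument) to the fiber, since the $(\eta-P_1)^2$ term prevents a direct citation.
\begin{itemize}
\item Take a Weyl sequence for $\lambda\in\sess(\Ha_S(\eta))$ with $\lambda<\min(\Sigma(\eta),E_0^{(1)}(\eta))$.
\item Localize the particle: since $\lambda<\Sigma(\eta)$, Proposition~\ref{sec:binding:exp_decay_hamiltonian} lets me restrict to $|x_2|\le R$.
\item Localize the bosons: use $\Gc(j)$ with $j_0^2+j_\infty^2=1$ in the boson position $y=i\nabla_k$ and the identification $\Fs\cong\Fs\otimes\Fs$. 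The field piece (free energy and interaction) is handled by the standard commutator estimates, which are $o(1)$ because $v_x\in L^2$. The new term $(\eta-\dG(K_1))^2$ commutes with $\Gc(j)$ up to gradient errors of order $R^{-1}$ times $\dG(K_1)$-bounded quantities, and these are controlled by the relative boundedness established in Theorem~\ref{sec:ess:prop_fiber_straight_hamiltonian}.
\item Handle escaping bosons: the total fiber momentum $\eta$ then splits into $\eta-\kappa_1$ carried by the near part plus the momentum $\kappa$ of the escaping bosons. The corresponding energy is bounded below by $E(\eta-\kappa_1)+\omega(\kappa)\ge E(\eta-\kappa_1)+\omega(\kappa_1,0)\ge E_0^{(1)}(\eta)$, using that $\omega$ is increasing in $|k_2|$ and superadditive for several bosons.
\item Conclude that, in the part with no escaping bosons and the particle localized, a compactness argument applies: the localized operator has compact resolvent modulo $\dG$ of a bounded-support kinetic piece, because $\omega\ge m$ and $y$ is localized.
\end{itemize}
I expect this localization step to be the main technical obstacle, specifically commuting $(\eta-P_1)^2$ through $\Gc(j)$ and bounding the momentum-transfer term.

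\textbf{Step (ii).} I would follow the single-particle binding argument \cite{takaesu_binding,hiroshima2019ground}. The potential $V(X_2)$ has compact support, so at infinity in $x_2$ the fiber operator reduces to the free translation-invariant one. Write $\Ha_S(\eta)=\Ha^0(\eta)+V(X_2)$, where $\Ha^0(\eta)$ is invariant under $x_2$-translations. The Persson argument of Proposition~\ref{sec:ess_spec:inf_straight_equal_sigma}, done in $x_2$, gives $\Sigma(\eta)=\inf\sigma(\Ha^0(\eta))=:E^0(\eta)$. Next, take an approximate ground state $\Phi$ of $\Ha^0(\eta)$, Lee--Low--Pines fibered in $x_2$ at the optimal transverse momentum, and multiply it by a slowly varying real profile $w(x_2)$, as in \cite{takaesu_binding}. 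Because $V\le0$ and $V\not\equiv0$, this produces
\[
\langle w\Phi,\Ha_S(\eta)w\Phi\rangle\le E^0(\eta)\|w\|^2+\|w'\|^2+\int V|w|^2\,\langle\text{density}\rangle<E^0(\eta)\|w\|^2 .
\]
The last inequality holds by the one-dimensional weak-coupling fact: an attractive nonzero potential always binds. If no exact ground state of $\Ha^0(\eta)$ exists, I would use an approximate minimizer together with $L^2$-averaging in translations.

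\textbf{Step (iii).} Taking $k_1=0$ gives $E_0^{(1)}(\eta_0)\le E(\eta_0)+m$. For the reverse inequality, $E(\eta_0-k_1)\ge E(\eta_0)$ because $\eta_0$ is a global minimizer (it exists by the continuity and coercivity shown in the proof of Proposition~\ref{sec:ess:prop_no_gap_straight_spec}), and $\omega(k_1,0)\ge m$ with equality at $k_1=0$. Therefore $E_0^{(1)}(\eta_0)=E(\eta_0)+m>E(\eta_0)$. Finally, $E(\eta_0)=\inf_\eta E(\eta)=\Sigma_S$ by Theorem~\ref{sec:ess_spec:theorem_spectrum_straight_waveguide}.
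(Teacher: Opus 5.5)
Your proposal follows the paper's proof essentially step for step. In (i) you use the same $\Gc(j_R)$ boson localization adapted from \cite[Theorem~2]{AC_rayleigh_scatt_1}, with an $O(R^{-1})$ commutator bound for $(\eta-P_1)^2$ and the threshold $E_0^{(1)}(\eta)$ read off sector by sector in the number of free bosons. In (ii) you use the trial-state argument of \cite{takaesu_binding} for $\Ha_S(\eta)=\Ha_S^0(\eta)+V(X_2)$, with two cosmetic changes: you get $\Sigma(\eta)=\inf\sigma(\Ha_S^0(\eta))$ from the translation argument of Proposition~\ref{sec:ess_spec:inf_straight_equal_sigma} rather than from \cite[Lemma~2.7]{takaesu_binding}, and you use a slowly varying profile rather than the ground state of $-\partial_{\xd}^2+V(X_2)$. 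Step (iii) is identical.

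Two small fixes are needed. First, reducing several free bosons to one uses subadditivity, $\omega(k+k')\le\omega(k)+\omega(k')$; superadditivity is false and would give the inequality in the wrong direction. Second, the interaction part of the localization error is not $o(1)$ from $v\in L^2$ alone, since $\sup_{\xd}\|\ind_{\{|y|\ge R\}}v_{(0,\xd)}\|=\|v\|$. It has to be paired with the weight $e^{-\alpha|X_2|}$ from the exponential decay, as the paper does by splitting at $|\xd|=R/2$; in other words, the particle must be localized on a scale strictly smaller than the boson cutoff.
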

\vspace{0.5cm}

For the proof of assertion (i), fix $\eta\in\R$. To establish it, we adapt the techniques of the proofs of \cite[Theorem 2]{AC_rayleigh_scatt_1} and \cite[Theorem 4.1]{acqftmpfh} (see also \cite[Theorem 6.1]{analisa}). We also take inspiration from \cite{moller_translation_invariant_nelson_model}, which treats translation-invariant Nelson Hamiltonians. These works use the same standard notations for the partition of unity, the extended Hilbert space and the $\check{\Gamma}$ operators, which we adopt here. These constructions and notations are recalled in Appendix~\ref{app:fock},  Subsection~\ref{app:fock:extended}. The arguments of these works must nevertheless be adapted to account for the additional $(\eta-I\otimes\dG(K_1))^2$ term in $\Ha_S(\eta)$. The proof of assertion (i) will be completed after establishing a sequence of auxiliary lemmas.

\begin{notation}
For the sake of readability, we set: $\Gc_R:=I\otimes\Gc(j_R)$.
\end{notation}

The idea of the proof is that the operators $\Gc_R$ separate the bosons localized in a bounded region from those which are localized far away. Below the ionization threshold, the particle is exponentially localized in the transverse direction, so that, for large $R$, the bounded region introduced above contains the area where the particle is concentrated. Consequently, the bosons localized far away interact only negligibly with the particle and are therefore represented as asymptotically free. This leads us to consider the extended Hilbert space
\begin{align*}
    \Hi^{\, \ext} := L^2(\R_{\xd}) \otimes \Fs \otimes \Fs = L^2(\R_{\xd},\Fs)\otimes\Fs,    
\end{align*}
where the first subsystem $L^2(\R_{\xd},\Fs)$ describes the matter degrees of freedom together with the bosons localized in the bounded region, while the second Fock factor describes the free bosons at infinity. On this space, we introduce the following extended Hamiltonian:
\begin{align*}
\Ha_S^\ext(\eta)
:={}& \bigl(-\partial^2_{\xd}+V(X_2)\bigr)\otimes I \otimes I
+ I\otimes\dG^\ext(\omega(K))
+ \bigl(\eta-I\otimes\dG^\ext(K_1)\bigr)^2
+ g\mspace{1mu}\Ha_{I,0}\otimes I \\
:={}& \bigl(-\partial^2_{\xd}+V(X_2)\bigr)\otimes I \otimes I
+ I\otimes\dG^\ext(\omega(K))
+ \bigl(\eta-P_1^\ext\bigr)^2
+ g\mspace{1mu}\Ha_{I,0}\otimes I.
\end{align*}
It describes the system after this separation: the interaction is retained in the first subsystem, whereas the second Fock factor carries only the free field terms associated with the bosons localized far away. In view of the preceding discussion, the action of $\Ha_S(\eta)$ after the separation induced by $\Gc_R$ becomes asymptotically that of the extended Hamiltonian as $R \to +\infty$, provided we are below the ionization threshold. Spectral information on $\Ha_S^\ext(\eta)$ can therefore be transferred back to $\Ha_S(\eta)$, as made precise in Proposition~\ref{sec:binding:prop_comparison_with_extended_hamiltonian}. In particular, the extended Hamiltonian allows us to identify the threshold $E_0^{(1)}(\eta)$ as the lowest energy of configurations with at least one asymptotically free boson. Since the matter subsystem cannot escape in the transverse direction below $\Sigma(\eta)$, neither the matter subsystem nor the field can escape to infinity below $\min\bigl(\Sigma(\eta),E_0^{(1)}(\eta)\bigr)$, so there should not be any essential spectrum below this threshold.

\medbreak

We now turn to the estimates needed to implement this comparison. We will use the following dense subspace of $\Fs$, on which all compositions of field operators used below are well defined:
\begin{notation}
We set $\Fs^\infty:=\Fs^\fin\big(\ciz(\Rk^2)\big)$, where $\Fsf(\cdot)$ is the notation from Subsection~\ref{app:fock:field}.
\end{notation}

\begin{lemma}\label{sec:binding:lemma_error_field}
Let $\alpha > 0$ and $\Nb = \dG(I)$ the number operator. Then the operator
    \begin{align*}
        \left( \Gc_R\big( I \otimes \dG(\omega(K)) + g\mspace{2mu}\Ha_{I,0}\big) - ( I \otimes \dG^\ext(\omega(K)) + g\mspace{2mu}\Ha_{I,0} \otimes I\big)\Gc_R \right)(I \otimes \Nb + 1)^{-1} 
    \end{align*}
    is bounded. Moreover, the following estimate holds in operator norm:
    \begin{align*}
        e^{-\alpha\abs{X_2}} \hspace{-2pt} \left( \Gc_R\big( I \otimes \dG(\omega(K)) + g \Ha_{I,0}\big) - ( I \otimes \dG^\ext(\omega(K)) + g \Ha_{I,0} \otimes I\big)\Gc_R \right)(I \otimes \Nb + 1)^{-1} \underset{R\to+\infty}{=} \hspace{-2pt} o(1).
    \end{align*}
\end{lemma}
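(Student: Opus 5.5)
We will treat the free-field part and the interaction part separately, using the standard properties of $\Gc(j_R)$ recalled in Appendix~\ref{app:fock} (Subsection~\ref{app:fock:extended}). Here $j_R=(j_0(Y/R),j_\infty(Y/R))$ with $j_0^2+j_\infty^2=1$, where $j_0$ is localized near the origin in the boson configuration variable $y=i\nabla_k$. The operator in the statement splits as
\begin{align*}
\Bigl(\Gc_R\,(I\otimes\dG(\omega(K)))-(I\otimes\dG^\ext(\omega(K)))\,\Gc_R\Bigr)+g\Bigl(\Gc_R\,\Ha_{I,0}-(\Ha_{I,0}\otimes I)\,\Gc_R\Bigr),
\end{align*}
applied to $(I\otimes\Nb+1)^{-1}$. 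Note that $(\eta-P_1)^2$ plays no role here, since it is not part of the operator under consideration.

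For the free-field term, we use the commutator identity
\begin{align*}
\Gc(j_R)\,\dG(\omega)-\dG^\ext(\omega)\,\Gc(j_R)=\dG\Gc\bigl(j_R,\,[\omega(K),j_R]\bigr).
\end{align*}
This comes with the bound $\|\dG\Gc(j,k)(\Nb+1)^{-1}\|\le\|k\|$, and the pseudo-differential estimate $\|[\omega(K),j_{0/\infty}(Y/R)]\|=O(R^{-1})$. The latter holds because $\nabla\omega$ is bounded (indeed $|\nabla\omega|\le1$), via a Helffer--Sjöstrand or Fourier-integral representation of the commutator. Tensoring with $I$ on $L^2(\R_{\xd})$ gives an $O(R^{-1})$ bound uniformly in $x_2$. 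This bound holds even without the weight $e^{-\alpha|X_2|}$, since $\|e^{-\alpha|X_2|}\|\le1$.

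For the interaction term, we work fiberwise in $x_2$ through the direct integral $\Ha_{I,0}=\int^\oplus\phi(v_{(0,x_2)})\,dx_2$. The standard identity reads
\begin{align*}
\Gc(j_R)\,\phi(h)-\bigl(\phi(j_0 h)\otimes I+I\otimes\phi(j_\infty h)\bigr)\Gc(j_R)=0,
\end{align*}
so that
\begin{align*}
\Gc(j_R)\,\phi(h)-(\phi(h)\otimes I)\,\Gc(j_R)=\Bigl(\phi\bigl((j_0-1)h\bigr)\otimes I+I\otimes\phi(j_\infty h)\Bigr)\Gc(j_R).
\end{align*}
Since $\|\phi(f)(\Nb+1)^{-1/2}\|\le\sqrt2\,\|f\|$ and $\Gc(j_R)$ intertwines the number operators, $\Nb^\ext\,\Gc(j_R)=\Gc(j_R)\,\Nb$, the fiber at $x_2$ is bounded relative to $\Nb+1$ by a constant times
\begin{align*}
\|(1-j_0(Y/R))\,v_{(0,x_2)}\|+\|j_\infty(Y/R)\,v_{(0,x_2)}\|.
\end{align*}
This gives the boundedness claim immediately, uniformly in $x_2$, with the bound $C\|v\|$.

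The main obstacle is the $o(1)$ claim: the term above is not small uniformly in $x_2$. In configuration space, $v_{(0,x_2)}=v(\cdot-(0,x_2))$ is centred at $(0,x_2)$, so when $|x_2|\sim R$ this vector is not localized where $j_0=1$. The weight $e^{-\alpha|X_2|}$ handles this. Multiplying fiberwise, the norm is bounded by a constant times
\begin{align*}
\sup_{x_2\in\R}\,e^{-\alpha|x_2|}\Bigl(\|(1-j_0(\cdot/R))\,v(\cdot-(0,x_2))\|+\|j_\infty(\cdot/R)\,v(\cdot-(0,x_2))\|\Bigr).
\end{align*}
We split into two cases. For $|x_2|\ge\sqrt R$, the weight is at most $e^{-\alpha\sqrt R}$ and the bracket is at most $2\|v\|$. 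For $|x_2|\le\sqrt R$, the supports of $1-j_0(\cdot/R)$ and $j_\infty(\cdot/R)$ lie in $\{|y|\ge R/2\}$, so the bracket is at most
\begin{align*}
2\,\|\ind_{\{|y|\ge R/2-\sqrt R\}}\,v\|_{L^2(\Ry^2)},
\end{align*}
which tends to $0$ as $R\to+\infty$ because $v\in L^2$. Both cases therefore give $o(1)$. Combined with the $O(R^{-1})$ free-field estimate, this proves the lemma.

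Finally, the domain issues are handled by first computing on $L^2(\R_{\xd})\alten\Fs^\infty$ and then extending by boundedness.
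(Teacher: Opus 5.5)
Your proposal is correct and follows essentially the same route as the paper. The paper handles the free-field commutator by citing \cite[Lemma 32(i)]{AC_rayleigh_scatt_1}, which is exactly the $\dG\Gc$ and $[\omega(K),j_R]=O(R^{-1})$ argument you spell out. It treats the interaction term by the same fiberwise reduction to $\sup_{x_2} e^{-\alpha|x_2|}\bigl(\|(j_{0,R}-1)v_{(0,x_2)}\|+\|j_{\infty,R}v_{(0,x_2)}\|\bigr)$, splitting at $|x_2|=R/2$ and using that these cutoffs live on $\{|y|\geq R\}$, where you split at $|x_2|=\sqrt{R}$; both splits work equally well.
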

\begin{proof}
    Throughout this proof, we use the notation of \cite{AC_rayleigh_scatt_1}. We adapt the argument from \cite[Lemma 32(i)]{AC_rayleigh_scatt_1} evaluated at $m=0$ (where $m$ denotes the exponent parameter of that reference), with the only difference lying in our treatment of the $\Ha_{I,0}$ term. Since we do not assume hypothesis (H5) of that reference, the convergence of this term is not covered there. Its boundedness nevertheless follows from the relative boundedness of the Segal operator with respect to the number operator (see, e.g., \cite[Corollary 5.9]{Arai}). It remains to establish the required convergence:
    \begin{align*}
    \underset{\xd \,\in\, \R}{\sup} \; \, e^{-\mspace{1mu}\alpha \mspace{2mu}\abs{\xd}}\left(\norm{(j_{0,R}-1)\, v_{(0,\xd)}}
    + \norm{j_{\infty,R}\,v_{(0,\xd)}} \right) \underset{R \, \to \, + \mspace{1mu} \infty}{\longrightarrow}0.
    \end{align*}
    Let $y = i \nabla_k$ be the configuration variable for the field. Recall that in configuration space, $v_x(y) = v(y-x)$ (see Hypothesis \ref{h2_field_interaction}). One sees that the previous convergence is true if:
    \begin{align*}
    \underset{\xd \,\in\, \R}{\sup} \; \, e^{-\mspace{1mu}\alpha \mspace{2mu}\abs{\xd}}\norm{\ind_{\left\{\mspace{1mu} \abs{y} \, \geq \, R\right\}}\, v_{(0,\xd)}}
     \underset{R \, \to \, + \mspace{1mu} \infty}{\longrightarrow} 0.      
    \end{align*}
    If $\abs{\xd} \geq R/2$, then we just write $e^{-\mspace{1mu}\alpha \mspace{2mu}\abs{\xd}} \leq e^{-\mspace{1mu}\alpha \mspace{2mu}R/2}$ and $\norm{\ind_{\left\{\mspace{1mu} \abs{y} \, \geq \, R\right\}}\, v_{(0,\xd)}} \leq \norm{v}$. If $\abs{\xd} \leq R/2$, then $\abs{y} \geq R$ means $\abs{y-(0,\xd)} \geq R/2$ and a change of variables yields $\norm{\ind_{\left\{\mspace{1mu} \abs{y} \, \geq \, R\right\}}\, v(\cdot - (0,\xd))} \leq \norm{\ind_{\left\{\mspace{1mu} \abs{z} \, \geq \, R/2 \right\}}\, v} $. Consequently: 
    \begin{align*}
    \underset{\xd \,\in\, \R}{\sup} \; \, e^{-\mspace{1mu}\alpha \mspace{2mu}\abs{\mspace{1mu}\xd}}\norm{\ind_{\left\{\mspace{1mu} \abs{y} \, \geq \, R\right\}}\, v_{(0,\xd)}} \leq e^{-\mspace{1mu}\alpha \mspace{2mu}R/2} \norm{v} + \norm{\ind_{\left\{\mspace{1mu} \abs{z} \, \geq \, R/2 \right\}}\, v}
     \underset{R \, \to \, + \mspace{1mu} \infty}{\longrightarrow} 0.      
    \end{align*}
\end{proof}

We need to prove that this holds for $(\eta - I \otimes \dG(K_1))^2$ as well. First, we observe that:

\begin{lemma}\label{sec:binding:lemma_N_properties_and_bounds}
The following properties hold:
\begin{enumerate}[label=(\roman*)]
\item The number operators $\Nb$ and $\Nb^\ext$ strongly commute with all the relevant second quantization operators. Moreover, for all $p\in\R$, $(\Nb+1)^p\Fs^\infty=\Fs^\infty$, and $\Gc(j_R) f(\Nb)=f(\Nb^\ext)\Gc(j_R)$ for all measurable functions $f:\R \to \C$.

\item The operators $(\eta-P_1)(\Ha_S(\eta)-z)^{-1}$ and $(I\otimes\Nb)(\Ha_S(\eta)-z)^{-1}$ are bounded for each $z\in\rho(\Ha_S(\eta))$. Similarly, the operators $(\eta-P_1^\ext)(\Ha_S^\ext(\eta)-z)^{-1}$ and $(I\otimes\Nb^\ext)(\Ha_S^\ext(\eta)-z)^{-1}$ are bounded for each $z\in\rho(\Ha_S^\ext(\eta))$.
\end{enumerate}
\end{lemma}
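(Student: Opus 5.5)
The plan is to prove (i) entirely at the level of second quantization and then deduce (ii) from the operator inequalities coming from the quadratic part of $\Ha_S(\eta)$.

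\textbf{Part (i).} The number operators $\Nb=\dG(I)$ and $\Nb^\ext=\dG^\ext(I)$ are second quantizations of the identity. The operators $\dG(\omega(K))$, $\dG(K_1)$ and $\Gamma$ of multiplication operators in $k$ are second quantizations of operators commuting with $I$, and so, as in \cite[Theorem 4.16, Proposition 5.5]{Arai}, they strongly commute with $\Nb$. On the extended space, $\dG^\ext(\cdot)=\dG(\cdot)\otimes I+I\otimes\dG(\cdot)$ commutes with $\Nb^\ext=\Nb\otimes I+I\otimes\Nb$ for the same reason.

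The invariance $(\Nb+1)^p\Fs^\infty=\Fs^\infty$ is immediate. On the $n$-particle sector, $(\Nb+1)^p$ acts as multiplication by $(n+1)^p$. It therefore preserves finite-particle vectors with $\ciz$ components, and its inverse $(\Nb+1)^{-p}$ does too.

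For the intertwining relation, recall from Appendix~\ref{app:fock} that $\Gc(j_R)$ maps the $n$-particle sector of $\Fs$ onto $\bigoplus_{k+l=n}$ of the tensor products of the $k$- and $l$-particle sectors. This gives $\Gc(j_R)\Nb=\Nb^\ext\Gc(j_R)$ on $\D(\Nb)$, and more precisely $\Gc(j_R)P_n=P_n^\ext\Gc(j_R)$ for the spectral projections onto the $n$-boson subspaces. The general identity $\Gc(j_R)f(\Nb)=f(\Nb^\ext)\Gc(j_R)$ then follows by writing $f(\Nb)=\sum_n f(n)P_n$ and using the same expansion on the extended side. The sum converges strongly on the appropriate domain; for unbounded $f$ one reads the identity on $\D(f(\Nb))$.

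\textbf{Part (ii).} I would establish domain inclusions and then invoke the closed graph theorem (\cite[Lemma 1.9]{Arai}).

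For $\eta-P_1$, Theorem~\ref{sec:ess:prop_fiber_straight_hamiltonian} gives $\D(\Ha_S(\eta))\subset\D((\eta-P_1)^2)\subset\D(\eta-P_1)$. Since $\eta-P_1$ is closed and $(\Ha_S(\eta)-z)^{-1}$ maps into $\D(\Ha_S(\eta))$, the product is closed and everywhere defined, hence bounded.

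For $\Nb$, note that $\omega\geq m>0$ gives $\dG(\omega(K))\geq m\,\Nb$ as strongly commuting positive operators. Hence
\begin{align*}
\norm{(I\otimes\Nb)\psi}\leq m^{-1}\norm{(I\otimes\dG(\omega(K)))\psi}.
\end{align*}
The operator $I\otimes\dG(\omega(K))$ is relatively bounded with respect to $I\otimes\dG_\eta$, because $\dG_\eta=\dG(\omega(K))+(\eta-\dG(K_1))^2$ is a sum of strongly commuting positive operators. In turn, $I\otimes\dG_\eta$ is relatively bounded with respect to $\Ha_S(\eta)$, by the domain identity $\D(\Ha_S(\eta))=\D(\Ha_{\el,0})\cap\D(I\otimes\dG_\eta)$ and the closed graph theorem. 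Composing these gives boundedness of $(I\otimes\Nb)(\Ha_S(\eta)-z)^{-1}$.

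The extended case is verbatim. We have $\dG^\ext(\omega(K))\geq m\,\Nb^\ext$, and $\Ha_S^\ext(\eta)$ has domain $\D(\Ha_{\el,0}\otimes I)\cap\D(I\otimes\dG^\ext_\eta)$. This holds because $\Ha_{I,0}\otimes I$ is infinitesimally bounded with respect to the free extended part, by the same Kato--Rellich computation as in Theorem~\ref{prelim:th:self-adjointness_Ha}, using $\norm{\phi(v)\varphi}\lesssim\norm{(\Nb+1)^{1/2}\varphi}$.

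\textbf{Main obstacle.} The only genuinely delicate point is the self-adjointness and domain description of $\Ha_S^\ext(\eta)$. I would handle it exactly like Theorem~\ref{sec:ess:prop_fiber_straight_hamiltonian}: first use strong commutativity of $\dG^\ext(\omega(K))$ and $(\eta-P_1^\ext)^2$, then apply Kato--Rellich for the interaction.
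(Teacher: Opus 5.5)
Your proposal is correct and follows essentially the same route as the paper: the sector-wise description of $\Nb$ for part (i), and for part (ii) the domain inclusion $\D(\Ha_S(\eta))\subset\D(P_1)\cap\D(\Ha_f)$ combined with $m\,\Nb\leq\dG(\omega(K))$ and the closed graph theorem. The differences are minor. First, you derive $\Gc(j_R)f(\Nb)=f(\Nb^\ext)\Gc(j_R)$ directly from $\Gc(j_R)P_n=P_n^\ext\Gc(j_R)$, which is cleaner than the paper's route via a commutation criterion for operators acting between different spaces; note only that $\Gc(j_R)$ maps the $n$-particle sector \emph{into}, not onto, the extended $n$-particle subspace. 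Second, you actually justify the domain of $\Ha_S^\ext(\eta)$, which the paper simply declares identical to the non-extended case.
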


\begin{proof}
\textit{(i)} The strong commutativity follows from \cite[Section 4.13]{Arai} and the fact that $\Nb$ is the second quantization of the identity. The claim $(\Nb+1)^p\Fs^\infty=\Fs^\infty$ is immediate (it suffices to decompose on $n$-particles sectors). The equality $\Gc(j_R) \Nb = \Nb^\ext \Gc(j_R)$ follows from \cite[(21)]{AC_rayleigh_scatt_1} and combining it with \cite[Proposition 5.15]{Sch} and functional calculus yields the result. \\
\textit{(ii)} We only prove it for $(\eta-P_1)$ and $I \otimes \Nb$, the extended case being identical. Using \cite[Lemma 1.9]{Arai}, it suffices to have $\D(\Ha_S(\eta)) \subset \D(P_1) \cap \D(I\otimes \Nb)$. Theorem \ref{sec:ess:prop_fiber_straight_hamiltonian} gives $\D(\Ha_S(\eta)) \subset \D(P_1) \cap \D(\Ha_f)$. \\
Now we have $0 < m \leq \omega(K)$ hence $\Nb^2 \leq m^{-2} \, \dG(\omega(K))^2$ by \cite[Proposition 3.4(ii)]{lemma_power_dG_gerar_moller}, which means that $\Nb$ is $\dG(\omega(K))$-bounded in the operator sense. Consequently, $\D(\Ha_f) \subset \D(I\otimes\Nb)$.
\end{proof}

\noindent Then we compute: 
\begin{lemma}
    Let $H$ denote the restriction of $\Ha_S(\eta)+i$ to $\D(\ha_{\el,0}) \alten \Fs^\infty$. On $\Ran(H)$, one gets, in operator norm: 
    \begin{align*}
       (\Ha_S^\ext(\eta)+i)^{-1} \left( \Gc_R\big( \eta - P_1\big)^2 - ( \eta - P^\ext_1\big)^2 \, \Gc_R \right)(\Ha_S(\eta)+i)^{-1} \underset{R\to+\infty}{=} O\hspace{-2pt}\left(R^{-1}\right).
    \end{align*}
\end{lemma}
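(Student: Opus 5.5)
The plan is to reduce the quadratic commutator to a linear one via the algebraic identity
\begin{align*}
\Gc_R A^2 - B^2 \Gc_R = D\,A + B\,D, \qquad A := \eta - P_1,\quad B := \eta - P_1^\ext,\quad D := \Gc_R A - B\,\Gc_R ,
\end{align*}
which holds on $\D(\ha_{\el,0})\alten\Fs^\infty$. This domain is where all compositions make sense, by Lemma~\ref{sec:binding:lemma_N_properties_and_bounds}(i). Since $\eta$ is a scalar and $\Gc(j_R)$ maps $\Fs^\infty$ into $\Fs^\infty\otimes\Fs^\infty$, the constant parts cancel, so $D = -\bigl(\Gc_R P_1 - P_1^\ext \Gc_R\bigr)$. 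I would then use the standard intertwining formula for $\Gc$ (as in \cite{AC_rayleigh_scatt_1}, Appendix~\ref{app:fock}): for a one-particle operator $a$,
\begin{align*}
\Gc(j)\,\dG(a) - \dG^\ext(a)\,\Gc(j) = \mathrm{d}\Gc\bigl(j,[j,a]\bigr),
\end{align*}
applied with $a = K_1$ and $j = (j_{0,R}, j_{\infty,R})$, where $j_{\sharp,R} = j_\sharp(y/R)$ are functions of the configuration variable $y = i\nabla_k$. Since $K_1$ acts as a derivative in $y$, one gets $[j_{\sharp,R},K_1] = \pm i R^{-1}(\partial_{1} j_\sharp)(y/R)$, so $\norm{[j,K_1]} = O(R^{-1})$. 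The usual number-operator estimate $\norm{\mathrm{d}\Gc(j,c)(\Nb+1)^{-1/2}}\le \norm{c}$ then yields
\begin{align*}
\nop{D\,(I\otimes\Nb+1)^{-1/2}} = O(R^{-1}).
\end{align*}

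Next I would estimate the two terms separately.

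\emph{Term $B\,D$.} Write
\begin{align*}
(\Ha_S^\ext(\eta)+i)^{-1}B\,D\,(\Ha_S(\eta)+i)^{-1}
= \bigl[(\Ha_S^\ext(\eta)+i)^{-1}B\bigr]\,\bigl[D(\Nb+1)^{-1/2}\bigr]\,\bigl[(\Nb+1)^{1/2}(\Ha_S(\eta)+i)^{-1}\bigr].
\end{align*}
The first bracket is bounded by Lemma~\ref{sec:binding:lemma_N_properties_and_bounds}(ii), after taking adjoints; this uses the self-adjointness of $B$ and the fact that $\D(\Ha_S^\ext(\eta))\subset\D(B)$. The second bracket is $O(R^{-1})$. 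The third is bounded by Lemma~\ref{sec:binding:lemma_N_properties_and_bounds}(ii).

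\emph{Term $D\,A$.} This is the delicate one and the main obstacle, since $D$ only tolerates $(\Nb+1)^{1/2}$ on its right while $A$ already sits there. I would write it as
\begin{align*}
D\,A\,(\Ha_S(\eta)+i)^{-1} = \bigl[D(\Nb+1)^{-1/2}\bigr]\,\bigl[(\Nb+1)^{1/2}A\,(\Ha_S(\eta)+i)^{-1}\bigr],
\end{align*}
so it suffices to bound the second bracket. The operators $\Nb$, $P_1$ and $\dG(\omega(K))$ strongly commute, and $\Nb\le m^{-1}\dG(\omega(K))$. Functional calculus for these commuting operators, combined with $\sqrt{xy}\le (x+y)/2$, should then give
\begin{align*}
\norm{(\Nb+1)^{1/2}A\,\psi} \le C\,\norm{(I\otimes\dG_\eta + 1)\psi}.
\end{align*}
Finally, Theorem~\ref{sec:ess:prop_fiber_straight_hamiltonian} together with the Kato--Rellich argument gives $\D(\Ha_S(\eta))\subset\D(I\otimes\dG_\eta)$, so $(I\otimes\dG_\eta+1)(\Ha_S(\eta)+i)^{-1}$ is bounded by the closed graph theorem. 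The outer factor $(\Ha_S^\ext(\eta)+i)^{-1}$ has norm at most $1$.

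Both contributions are therefore $O(R^{-1})$. The bounds are established on $\Ran(H)$, i.e.\ on vectors $(\Ha_S(\eta)+i)^{-1}\phi$ with $(\Ha_S(\eta)+i)^{-1}\phi\in\D(\ha_{\el,0})\alten\Fs^\infty$, which is exactly where the identity for $\Gc_R A^2 - B^2\Gc_R$ is valid. This gives the statement. The points I expect to require care are the commutator formula for $\Gc$ with the unbounded $K_1$ on $\Fs^\infty$, which needs $j_\sharp$ smooth with bounded derivatives, and the commuting-operator inequality in the $D\,A$ term.
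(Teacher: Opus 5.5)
Your estimate $\nop{D\,(I\otimes\Nb+1)^{-1/2}}=O(R^{-1})$ is not justified as stated. You attribute it to a ``usual'' bound $\nop{\mathrm{d}\Gc(j,c)(\Nb+1)^{-1/2}}\le\norm{c}$, and that bound is false in general. On the $n$-boson sector, $\mathrm{d}\Gc(j,c)$ is a sum of $n$ terms and can have norm $n\norm{c}$. For instance, take $j=(I,0)$ and $c=(\Pi,0)$, with $\Pi$ the projection onto a unit vector $e$. Then $\mathrm{d}\Gc(j,c)\,e^{\otimes n}=n\,e^{\otimes n}\otimes\Omega$, so $\mathrm{d}\Gc(j,c)(\Nb+1)^{-1/2}$ is unbounded. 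The general estimate (for $\norm{j}\le 1$) is $\nop{\mathrm{d}\Gc(j,c)(\Nb+1)^{-1}}\le\norm{c}$, and this is the one the paper uses for $A_R$.

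With the full power, your $B\,D$ term still works: write it as $[(\Ha_S^\ext(\eta)+i)^{-1}B]\,[D(I\otimes\Nb+1)^{-1}]\,[(I\otimes\Nb+1)(\Ha_S(\eta)+i)^{-1}]$. Your $D\,A$ term does not. Since you keep $D$ and $A$ both to the right of $(\Ha_S^\ext(\eta)+i)^{-1}$, you would need $(I\otimes\Nb+1)A(\Ha_S(\eta)+i)^{-1}$ to be bounded. The domain $\D(\Ha_S(\eta))=\D(\Ha_{\el,0})\cap\D(I\otimes\dG_\eta)$ does not give this. In the joint spectral variables $n,p,w$ of $\Nb,\dG(K_1),\dG(\omega(K))$, take $n-1$ bosons at rest and one with $\abs{\eta-p}\sim\sqrt n$. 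Then $(n+1)\abs{\eta-p}\sim n^{3/2}$, while $w+(\eta-p)^2\sim n$.

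There are two ways to close the gap.

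\emph{First fix.} Your half-power bound is actually true for this particular $j_R$, but it needs proof. Set $c_R=([j_{0,R},K_1],[j_{\infty,R},K_1])=\tfrac{i}{R}\bigl((\partial_1 j_0)(Y/R),(\partial_1 j_\infty)(Y/R)\bigr)$. Since $j_0,j_\infty$ are real and $j_0^2+j_\infty^2=1$, one has $j_R^\ast c_R=\tfrac{i}{2R}\,\partial_1(j_0^2+j_\infty^2)(Y/R)=0$. Hence all cross terms in $\norm{\mathrm{d}\Gc(j_R,c_R)\psi}^2$ vanish, and $\norm{\mathrm{d}\Gc(j_R,c_R)\psi}^2=\langle\psi,\dG(c_R^\ast c_R)\psi\rangle\le\norm{c_R}^2\langle\psi,\Nb\psi\rangle$.

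\emph{Second fix.} Use the intertwining $\Gc_R\Nb=\Nb^\ext\Gc_R$ from Lemma~\ref{sec:binding:lemma_N_properties_and_bounds}(i). It gives $D(\Nb+1)^{-1/2}=(\Nb^\ext+1)^{-1/2}D$. Then split the $D\,A$ term as $[(\Ha_S^\ext(\eta)+i)^{-1}(I\otimes\Nb^\ext+1)^{1/2}]\,[D(I\otimes\Nb+1)^{-1}]\,[(I\otimes\Nb+1)^{1/2}A(\Ha_S(\eta)+i)^{-1}]$. The first factor is bounded by Lemma~\ref{sec:binding:lemma_N_properties_and_bounds}(ii).

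\emph{Comparison with the paper.} The paper's proof is a variant of the second fix. In the $A_R(\eta-P_1)$ term it moves all of $A_R(\Nb+1)^{-1}=(\Nb^\ext+1)^{-1}A_R$ onto the left resolvent. It then only needs $(\eta-P_1)(\Ha_S(\eta)+i)^{-1}$ bounded, and never your inequality $\norm{(\Nb+1)^{1/2}A\psi}\le C\norm{(I\otimes\dG_\eta+1)\psi}$. Your route with the first fix avoids both the intertwining relation and any $\Nb^\ext$-control of the extended resolvent.

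The rest of your argument is sound: the algebraic identity, the $B\,D$ term, the commuting-operator bound via $\Nb\le m^{-1}\dG(\omega(K))$, and the closed-graph step.

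One minor inaccuracy: $\Gc(j_R)$ does not map $\Fs^\infty$ into $\Fs^\infty\otimes\Fs^\infty$, because the $j_{\sharp,R}$ act as convolutions in $k$. You only need that the image consists of finite-particle vectors with Schwartz components, which lie in $\D\bigl((P_1^\ext)^2\bigr)$.
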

\begin{proof}
    By the computation at the beginning of the proof of \cite[Lemma 3.1]{moller_translation_invariant_nelson_model}, combined with the estimate of \cite[Section 2.4]{AC_rayleigh_scatt_1}, we obtain, in operator norm:
    \begin{equation}\label{sec:binding:estimate_P_1}
         \left( \mspace{1mu} \Gc(j_R) \big( \eta - \dG(K_1)\big) - ( \eta -  \dG^\ext(K_1)\big) \, \Gc(j_R) \right)(\Nb + 1)^{-1} \underset{R\to+\infty}{=} O\hspace{-2pt}\left(R^{-1}\right).
    \end{equation}
     Let $A_R := \Gc(j_R) \big( \eta - \dG(K_1)\big) - ( \eta -  \dG^\ext(K_1)\big) \, \Gc(j_R)$. Using Lemma \ref{sec:binding:lemma_N_properties_and_bounds}(i), we write that, on $\Fs^\infty$: 
    \begin{align*}
        A_R \, (\Nb + 1)^{-1} = (\Nb^\ext + 1)^{-1}A_R. 
    \end{align*}
    Still on $\Fs^\infty$, we compute: 
    \begin{align*}
        \Gc(j_R) \big( \eta - \dG(K_1)\big)^2 - ( \eta -  \dG^\ext(K_1)\big)^2 \, \Gc(j_R) = \big(\eta - \dG^\ext(K_1)\big)A_R + A_R \big(\eta - \dG(K_1)\big). 
    \end{align*}
    On $\D(\ha_{\el,0}) \alten \Fs^\infty$, it becomes: 
    \begin{equation}\label{sec:binding:estimate_A_R}
         \Gc_R\big( \eta - P_1\big)^2 - \big( \eta -  P_1^\ext\big)^2 \, \Gc_R = \big(\eta - P_1^\ext\big)(I \otimes A_R) + (I \otimes A_R) \big(\eta - P_1\big). 
    \end{equation}
    The operator $(I \otimes A_R)(\Ha_S(\eta)+i)^{-1}$ is bounded because $(I \otimes A_R)(I \otimes \Nb +1)^{-1}$ and $(I \otimes \Nb+1)(\Ha_S(\eta)+i)^{-1}$ are (the former because of \eqref{sec:binding:estimate_P_1}, the latter by the previous lemma). The boundedness of $(\Ha_S^\ext(\eta)+i)^{-1}(I \otimes A_R)$ follows analogously, with the additional use of Proposition~\ref{r_B A bounded if A B bounded}. Combining these observations with \eqref{sec:binding:estimate_P_1}, we get: 
    \begin{align*}
        (I \otimes A_R)(\Ha_S(\eta)+i)^{-1}, \hspace{0.2cm}  (\Ha_S^\ext(\eta)+i)^{-1}(I \otimes A_R) \underset{R\to+\infty}{=} O\hspace{-2pt}\left(R^{-1}\right).
    \end{align*}
    In order to right-multiply \eqref{sec:binding:estimate_A_R} by $(\Ha_S(\eta)+i)^{-1}$, we must ensure that the range of this resolvent is contained in $\D(\ha_{\el,0}) \alten \Fs^\infty$: this is why we consider $\Ran(H)$. Using Lemma \ref{sec:binding:lemma_N_properties_and_bounds}(ii) and Proposition~\ref{r_B A bounded if A B bounded} to get the boundedness of $(\eta-P_1)(\Ha_S(\eta)+i)^{-1}$, $(\Ha_S^\ext(\eta)+i)^{-1}(\eta-P^\ext_1)$, we obtain, on $\Ran(H)$:
    \begin{align*}
        (\Ha_S^\ext(\eta)+i)^{-1} \left(\big(\eta - P_1^\ext\big)(I \otimes A_R) + (I \otimes A_R) \big(\eta - P_1\big)\right)(\Ha_S(\eta)+i)^{-1}  \underset{R\to+\infty}{=} O\hspace{-2pt}\left(R^{-1}\right).
    \end{align*}
    Equality \eqref{sec:binding:estimate_A_R} then yields the claim.
\end{proof}

Gathering all this, we get:

\begin{lemma}
    Let $z \in \C\setminus\R$ and $\chi_1 \in \ciz\big(\mspace{2mu}]-\infty,\Sigma(\eta)[\mspace{2mu}\big)$. One has, on $L^2(\R_{\xd},\Fs)$ and in operator norm:
    \begin{align*}
        \left(\Gc_R\,(\Ha_S(\eta)-z)^{-1} - (\Ha_S^\ext(\eta)-z)^{-1}\Gc_R\right) \chi_1(\Ha_S(\eta)) \underset{R \, \to \, +\infty}{=}  o(1) \hspace{4pt} \frac{(1+\abs{z})^2}{\abs{\Im z}^{2}}.
    \end{align*}
\end{lemma}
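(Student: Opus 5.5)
We will follow the standard resolvent-intertwining argument of \cite[Lemma 33]{AC_rayleigh_scatt_1} and \cite[Theorem 4.1]{acqftmpfh}, adapted to the fiber Hamiltonian. The starting point is the algebraic identity
\begin{align*}
\Gc_R(\Ha_S(\eta)-z)^{-1} - (\Ha_S^\ext(\eta)-z)^{-1}\Gc_R
= (\Ha_S^\ext(\eta)-z)^{-1}\bigl(\Ha_S^\ext(\eta)\Gc_R - \Gc_R\Ha_S(\eta)\bigr)(\Ha_S(\eta)-z)^{-1},
\end{align*}
which we first verify on $\Ran(H)$, i.e. on vectors whose resolvent image lies in $\D(\ha_{\el,0})\alten\Fs^\infty$, and then extend by density once the right-hand side is shown to be bounded. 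The commutator $\Ha_S^\ext(\eta)\Gc_R-\Gc_R\Ha_S(\eta)$ splits into three pieces. The transverse Schrödinger part $\ha_{\el,0}\otimes I$ commutes exactly with $\Gc_R=I\otimes\Gc(j_R)$, so it contributes nothing. Next comes the field-plus-interaction piece, which is controlled by Lemma~\ref{sec:binding:lemma_error_field}. The remaining piece is the $(\eta-P_1)^2$ term, controlled by the previous lemma.

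For the $(\eta-P_1)^2$ piece, the previous lemma gives directly $(\Ha_S^\ext(\eta)+i)^{-1}[\cdots](\Ha_S(\eta)+i)^{-1}=O(R^{-1})$. We convert $i$ into $z$ via the first resolvent identity, writing $(\Ha-z)^{-1}=(\Ha+i)^{-1}\bigl(1+(z+i)(\Ha-z)^{-1}\bigr)$ on both sides. Each factor $\|1+(z+i)(\Ha-z)^{-1}\|$ is at most $1+(1+|z|)/|\Im z|\lesssim (1+|z|)/|\Im z|$. This produces the prefactor $(1+|z|)^2/|\Im z|^2$ with an $O(R^{-1})$ constant, uniformly, and we multiply by the bounded $\chi_1(\Ha_S(\eta))$.

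For the field and interaction piece we must exploit $\chi_1$. We write $\chi_1(\Ha_S(\eta))=e^{-\alpha|X_2|}\,e^{\alpha|X_2|}\chi_1(\Ha_S(\eta))$, where $\alpha<\sqrt{\Sigma(\eta)-\sup\supp\chi_1}$. Then $e^{\alpha|X_2|}\chi_1(\Ha_S(\eta))$ is bounded by Proposition~\ref{sec:binding:exp_decay_hamiltonian} applied with $\mathcal J\supset\supp\chi_1$ and $\chi_1=\chi_1\ind_{\mathcal J}$. The multiplication operator $e^{-\alpha|X_2|}$ commutes with $\Gc_R$ and with the Fock-space operators, but not with $(\Ha_S(\eta)-z)^{-1}$. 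We handle this as follows:
\begin{align*}
(\Ha_S(\eta)-z)^{-1}e^{-\alpha|X_2|}=e^{-\alpha|X_2|}(\Ha_S(\eta)-z)^{-1}+(\Ha_S(\eta)-z)^{-1}\bigl[\ha_{\el,0},e^{-\alpha|X_2|}\bigr](\Ha_S(\eta)-z)^{-1}.
\end{align*}
Rather than commute through, the cleaner route is to use $\chi_1(\Ha_S(\eta))$ commuting with the resolvent, moving $\chi_1$ to the left of $(\Ha_S(\eta)-z)^{-1}$ so that it sits directly next to the commutator: $[\cdots](\Ha_S(\eta)-z)^{-1}\chi_1=[\cdots]\chi_1(\Ha_S(\eta))(\Ha_S(\eta)-z)^{-1}$. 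Even so we still need $(I\otimes\Nb+1)$ to be absorbed. We write $\chi_1(\Ha_S(\eta))=\tilde\chi(\Ha_S(\eta))\chi_1(\Ha_S(\eta))$ with $\tilde\chi\in\ciz$ equal to $1$ on $\supp\chi_1$ and with support still below $\Sigma(\eta)$. We then use that $(I\otimes\Nb+1)\tilde\chi(\Ha_S(\eta))e^{\alpha|X_2|}$ is bounded. This follows because $\Nb$ commutes with $X_2$, $\Nb\lesssim\Ha_f$, and exponential decay in the form sense propagates to $\Nb$ via the Helffer–Sjöstrand formula together with Lemma~\ref{sec:binding:lemma_N_properties_and_bounds}(ii). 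Lemma~\ref{sec:binding:lemma_error_field} then yields $o(1)$. The left factor $\|(\Ha_S^\ext(\eta)-z)^{-1}\|\le|\Im z|^{-1}$ and the right resolvent $|\Im z|^{-1}$ give a bound within $(1+|z|)^2/|\Im z|^2$.

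The main obstacle is this last step: justifying that $(I\otimes\Nb+1)e^{\alpha|X_2|}\chi_1(\Ha_S(\eta))$ is bounded. This means combining the weighted exponential estimate with the number operator, which do not commute with $\Ha_S(\eta)$. I would first try to prove $e^{\alpha|X_2|}$-weighted bounds for $(\Ha_S(\eta)-z)^{-1}$ from the IMS identity of Lemma~\ref{ess_spec:lemma:IMS}. I would then insert them into the Helffer–Sjöstrand representation of $\chi_1$, using $\D(\Ha_S(\eta))\subset\D(I\otimes\Nb)$. This yields the required boundedness with polynomial dependence on $|\Im z|^{-1}$, which is integrable against the almost-analytic extension.
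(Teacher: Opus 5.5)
Your architecture is the same as the paper's. You split $\Gc_R\Ha_S(\eta)-\Ha_S^\ext(\eta)\Gc_R$ into a vanishing Schrödinger part, a kinetic part handled by the preceding lemma plus $\norm{(A+i)(A-z)^{-1}}\le 2(1+\abs{z})/\abs{\Im z}$, and a field part handled through
\begin{align*}
\mathcal{E}^{\mathrm{field}}_R\,\chi_1(\Ha_S(\eta)) = e^{-\alpha\abs{X_2}}\,\mathcal{E}^{\mathrm{field}}_R\,(I\otimes\Nb+1)^{-1}B, \qquad B:=(I\otimes\Nb+1)\,e^{\alpha\abs{X_2}}\chi_1(\Ha_S(\eta)),
\end{align*}
together with Lemma~\ref{sec:binding:lemma_error_field}. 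Working at general $z$ rather than at $z=-i$ makes no difference. Note that the weight must stand to the left of $\tilde\chi(\Ha_S(\eta))$, not to its right as in your middle paragraph.

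The gap is exactly the step you flag, the boundedness of $B$, and the route you sketch for it fails for three reasons.
\begin{itemize}
\item A one-sided weighted resolvent $e^{\alpha\abs{X_2}}(\Ha_S(\eta)-z)^{-1}$ is unbounded for every $z$. Test it on $(\Ha_S(\eta)-z)\Phi_L$, where $\Phi_L$ is a fixed profile times $\Omega$ translated to $\xd\approx L\to\infty$. So it cannot be inserted into a Helffer--Sjöstrand integral to produce $B$.
\item Conjugated resolvents $e^{\alpha g}(\Ha_S(\eta)-z)^{-1}e^{-\alpha g}$ are controlled by a Neumann series only when $\abs{\Im z}$ is bounded below by a multiple of $\alpha$. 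The almost-analytic extension needs them uniformly down to the real axis. The IMS identity alone gives positivity only away from a large ball, so getting such bounds would amount to re-proving Proposition~\ref{sec:binding:exp_decay_hamiltonian}.
\item With the non-smooth weight $\abs{X_2}$, the commutator $[-\partial_{\xd}^2,e^{\alpha\abs{X_2}}]$ contains a Dirac term at $\xd=0$. Hence $e^{\alpha\abs{X_2}}\chi_1(\Ha_S(\eta))\psi$ is in general not in $\D(\Ha_S(\eta))$, and the inclusion $\D(\Ha_S(\eta))\subset\D(I\otimes\Nb)$ cannot be applied to it. This is precisely the obstruction the paper points out.
\end{itemize}

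The paper instead adapts \cite[Lemma 31(iii)]{AC_rayleigh_scatt_1} with the smooth weight $g(X_2)$, $g(t)=\sqrt{1+t^2}$, which differs from $\abs{t}$ by at most $1$ and has bounded $g'$, $g''$. No Helffer--Sjöstrand formula is needed: a direct commutator argument shows that $e^{\alpha g(X_2)}\chi_1(\Ha_S(\eta))$ maps boundedly into $\D(\Ha_S(\eta))$. Take $F$ a bounded regularization of $e^{\alpha g}$ with $\abs{F'}+\abs{F''}\le CF$ uniformly, and set $\phi=\chi_1(\Ha_S(\eta))\psi$. Since only $-\partial_{\xd}^2$ fails to commute with $F(X_2)$,
\begin{align*}
(\Ha_S(\eta)+i)F\phi = F\,\tilde\chi(\Ha_S(\eta))\psi - F''\phi - 2F'\partial_{\xd}\phi, \qquad \tilde\chi(\lambda):=(\lambda+i)\chi_1(\lambda).
\end{align*}
The first two terms are bounded by Proposition~\ref{sec:binding:exp_decay_hamiltonian}. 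For the third, write $F\partial_{\xd}\phi=\partial_{\xd}(F\phi)-F'\phi$ and use $\norm{\partial_{\xd}(F\phi)}^2\le \q_{\Ha_S(\eta)}[F\phi]+C\norm{F\phi}^2$. By the double-commutator identity, $\q_{\Ha_S(\eta)}[F\phi]=\Re\langle F\phi,F\,\Ha_S(\eta)\phi\rangle+\norm{F'\phi}^2$, which is again controlled by exponential decay applied to $\chi_1$ and $\lambda\chi_1(\lambda)$. Since $(I\otimes\Nb+1)(\Ha_S(\eta)+i)^{-1}$ is bounded by Lemma~\ref{sec:binding:lemma_N_properties_and_bounds}(ii), $B$ is bounded uniformly in the regularization. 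With that in place, the rest of your argument goes through.
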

\begin{proof}
    On $\D(\ha_{\el,0})\alten \Fs^\infty$, we can write: 
    \begin{align*}
         \Gc_R \, \Ha_S(\eta) - \Ha_S^\ext(\eta) \mspace{1mu} \Gc_R  = \mathcal{E}^{\mathrm{field}}_R + \mathcal{E}^{\mathrm{kin}}_R, 
    \end{align*}
    where: 
    \begin{align*}
        \mathcal{E}^{\mathrm{field}}_R &= \Gc_R\big( I \otimes \dG(\omega(K)) + g\mspace{1mu}\Ha_{I,0}\big) - ( I \otimes \dG^\ext(\omega(K)) + g\mspace{1mu}\Ha_{I,0} \otimes I\big)\Gc_R \\
        \mathcal{E}^{\mathrm{kin}}_R &= \Gc_R\big( \eta - P_1\big)^2 - ( \eta - P^\ext_1\big)^2 \, \Gc_R.
    \end{align*}
    On $\Ran(H)$ (defined as in the previous lemma), this yields: 
    \begin{equation}\label{sec:binding:eq_lemma_estimate}
        \Gc_R\,(\Ha_S(\eta)+i)^{-1} - (\Ha_S^\ext(\eta)+i)^{-1}\Gc_R = - \mspace{2mu}(\Ha_S^\ext(\eta)+i)^{-1}\left(\mathcal{E}^{\mathrm{field}}_R + \mathcal{E}^{\mathrm{kin}}_R \right)(\Ha_S(\eta)+i)^{-1}.
    \end{equation}
    The $(-\partial_{\xd}^2  + V(X_2)) \otimes I$ term does not appear in $\mathcal{E}^{\mathrm{field}}_R$ because $\Gc_R$ does not act on the $\xd$ variable, so the Schrödinger terms cancel out. Using Lemmas \ref{sec:binding:lemma_error_field} and \ref{sec:binding:lemma_N_properties_and_bounds}(ii), we get that $(\Ha_S^\ext(\eta)+i)^{-1} \mathcal{E}^{\mathrm{field}}_R (\Ha_S(\eta)+i)^{-1}$ is a bounded operator. For the other term in \eqref{sec:binding:eq_lemma_estimate}, the previous lemma shows that the operator, defined on $\Ran(H)$, admits a bounded extension satisfying the following estimate: 
    \begin{align*}
        (\Ha_S^\ext(\eta)+i)^{-1} \mathcal{E}^{\mathrm{kin}}_R (\Ha_S(\eta)+i)^{-1}  \underset{R\to+\infty}{=} O\hspace{-2pt}\left(R^{-1}\right).
    \end{align*}
    Hence both sides in \eqref{sec:binding:eq_lemma_estimate} are bounded operators agreeing on $\Ran(H)$. In addition, as in Proposition \ref{sec:prelim:domain_splitting_and_cores}(iii), we get that $\D(\ha_{\el,0}) \alten \Fs^\infty$ is a core for $\Ha_S(\eta)$, so by the characterization of essential self-adjointness (see, e.g., \cite[Proposition 1.30]{Arai}), $\Ran(H)$ is a dense set. Two bounded operators agreeing on a dense set are equal, so \eqref{sec:binding:eq_lemma_estimate} holds on the whole Hilbert space. \\
    It remains to derive an estimate for the field term. Everything in $\mathcal E_R^{\mathrm{field}}$ is decomposable in the $\xd$ variable, hence $\mathcal E_R^{\mathrm{field}}$ commutes with $e^{-\alpha\abs{X_2}}$ (upon identifying multiplication operators on $L^2(\R_{\xd},\Fs)$ and $L^2(\R_{\xd},\Fs\otimes\Fs)$). Then, we observe that the operator $B := (I \otimes \Nb+1) \, e^{\alpha \abs{X_2}} \chi_1(\Ha_S(\eta))$ is bounded for $\alpha > 0$ small enough. Indeed, the proof of \cite[Lemma 31(iii)]{AC_rayleigh_scatt_1} applies almost verbatim: the only issue is that, when conjugating $-\partial_{\xd}^2$ by the weight $e^{\alpha\abs{X_2}}$, the second derivative of $\abs{X_2}$ produces a Dirac term at $x_2=0$. We circumvent this difficulty by replacing $\abs{X_2}$ in that argument by the smooth weight $g(X_2)$, where $g(t):=\sqrt{1+t^2}$. Since $0\leq g(t)-\abs{t}\leq1$, the boundedness of $e^{\, \alpha \mspace{2mu}\abs{X_2}}\chi_1(\Ha_S(\eta))$ is equivalent to that of $e^{\, \alpha \mspace{2mu} g(X_2)}\chi_1(\Ha_S(\eta))$. Thus, Proposition~\ref{sec:binding:exp_decay_hamiltonian} provides the required exponential localization for the weight $g(X_2)$. Moreover, $g'$ and $g''$ are bounded, the bosons are massive, and the additional term $(\eta-I\otimes\dG(K_1))^2$ strongly commutes with both $g(X_2)$ and $I\otimes\Nb$. Hence the argument of \cite[Lemma 31(iii)]{AC_rayleigh_scatt_1} yields the boundedness of $(I\otimes\Nb+1)e^{\alpha g(X_2)}\chi_1(\Ha_S(\eta))$, and therefore that of $B$. Gathering all this, we compute: 
    \begin{align*}
        e^{-\alpha \abs{X_2}} \mathcal{E}^{\mathrm{field}}_R (I \otimes \Nb +1)^{-1} B = \mathcal{E}^{\mathrm{field}}_R \chi_1(\Ha_S(\eta)), 
    \end{align*}
    and get that the norm of this operator is $o(1)$ as $R\to+\infty$ (Lemma \ref{sec:binding:lemma_error_field}). Hence one has, in operator norm: 
    \begin{align*}
       (\Ha_S^\ext(\eta)+i)^{-1} \mathcal{E}^{\mathrm{field}}_R (\Ha_S(\eta)+i)^{-1} \chi_1(\Ha_S(\eta)) \underset{R\to+\infty}{=} o(1).
    \end{align*}   
    We multiply \eqref{sec:binding:eq_lemma_estimate} by $\chi_1(\Ha_S(\eta))$ and combine the estimates to get, in operator norm: 
    \begin{align*}
        \left(\Gc_R\,(\Ha_S(\eta)+i)^{-1} - (\Ha_S^\ext(\eta)+i)^{-1}\Gc_R\right) \chi_1(\Ha_S(\eta)) \underset{R\to+\infty}{=} o(1).
    \end{align*}
    Since $\nop{(A+i)(A-z)^{-1}} \leq 2 \displaystyle \:  \frac{1+\abs{z}}{\abs{\Im z}}$ for any self-adjoint operator $A$, the previous estimate becomes: 
    \begin{align*}
        \left(\Gc_R\,(\Ha_S(\eta)-z)^{-1} - (\Ha_S^\ext(\eta)-z)^{-1}\Gc_R\right) \chi_1(\Ha_S(\eta)) \underset{R \, \to \, +\infty} {=}  o(1) \hspace{4pt}\frac{(1+\abs{z})^2}{\abs{\Im z}^{2}}.
    \end{align*}
\end{proof}

To prove assertion (i) of Theorem~\ref{sec:binding:theorem_grond_state_fibered_hamiltonian}, we shall ultimately show that smooth spectral cutoffs of $\Ha_S(\eta)$ below a suitable energy threshold are compact; the Helffer-Sjöstrand formula (see, e.g., \cite[Appendix A.2]{AC_rayleigh_scatt_1}) allows us to deduce from the preceding resolvent comparison the corresponding comparison for smooth spectral cutoffs:

\begin{proposition}\label{sec:binding:prop_comparison_with_extended_hamiltonian}
    Let $\chi\in\ciz(\R)$, $\chi_1 \in \ciz\big(\mspace{2mu}]-\infty,\Sigma(\eta)[\mspace{2mu}\big)$. Then one has, in operator norm:
    \begin{align*}
        \left(\Gc_R\,\chi(\Ha_S(\eta)) - \chi(\Ha_S^\ext(\eta))\Gc_R  \right)\chi_1(\Ha_S(\eta)) \underset{R\to+\infty}{=}  o(1).
    \end{align*}
\end{proposition}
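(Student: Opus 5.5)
The approach is the Helffer--Sj\"ostrand formula, which turns the resolvent estimate of the previous lemma into the statement for smooth cutoffs. Let $\tilde\chi\in\ciz(\C)$ be an almost analytic extension of $\chi$, with $\tilde\chi|_{\R}=\chi$ and $\abs{\partial_{\bar z}\tilde\chi(z)}\leq C_N\abs{\Im z}^N$ for every $N\in\N$. For any self-adjoint operator $A$,
\begin{align*}
    \chi(A)=\frac{1}{\pi}\int_{\C}\partial_{\bar z}\tilde\chi(z)\,(A-z)^{-1}\,d\Re z\,d\Im z .
\end{align*}
Apply this to both $A=\Ha_S(\eta)$ and $A=\Ha_S^\ext(\eta)$. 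Since $\Gc_R$ and $\chi_1(\Ha_S(\eta))$ are bounded and independent of $z$, they can be moved inside the integral, which gives
\begin{align*}
    \left(\Gc_R\,\chi(\Ha_S(\eta))-\chi(\Ha_S^\ext(\eta))\Gc_R\right)\chi_1(\Ha_S(\eta))
    =\frac{1}{\pi}\int_{\C}\partial_{\bar z}\tilde\chi(z)\,\Big(\Gc_R(\Ha_S(\eta)-z)^{-1}-(\Ha_S^\ext(\eta)-z)^{-1}\Gc_R\Big)\chi_1(\Ha_S(\eta))\,d\Re z\,d\Im z .
\end{align*}
The integrand is taken in the strong (or Bochner) sense on $\C\setminus\R$. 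The real axis is a null set, and the integrand is integrable there, as the next step shows.

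The main task is uniformity in $z$. For $z\in\C\setminus\R$, write $\delta_R(z)$ for the operator norm of the bracketed resolvent difference multiplied by $\chi_1(\Ha_S(\eta))$. The previous lemma states that $\delta_R(z)\leq\varepsilon(R)\,(1+\abs{z})^2\abs{\Im z}^{-2}$, with $\varepsilon(R)\to0$ independent of $z$. This uniformity is exactly what the proof of that lemma provides: the $o(1)$ was obtained at $z=-i$ and transferred to general $z$ through the $z$-independent factor $\nop{(A+i)(A-z)^{-1}}\leq 2(1+\abs{z})/\abs{\Im z}$, used once for each of the two resolvents. If this dependence were not clear, I would redo that transfer explicitly using the first resolvent identity.

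With the bound in hand, take $N=2$ and use $\supp\tilde\chi$ compact:
\begin{align*}
    \nop{\left(\Gc_R\,\chi(\Ha_S(\eta))-\chi(\Ha_S^\ext(\eta))\Gc_R\right)\chi_1(\Ha_S(\eta))}
    \leq \frac{\varepsilon(R)}{\pi}\int_{\supp\tilde\chi} C_2\abs{\Im z}^2\,\frac{(1+\abs{z})^2}{\abs{\Im z}^2}\,d\Re z\,d\Im z
    = C\,\varepsilon(R)\underset{R\to+\infty}{\longrightarrow}0 .
\end{align*}
A crude bound also shows the integrand is integrable near the real axis. Each resolvent has norm at most $\abs{\Im z}^{-1}$, and $\nop{\Gc_R}\leq1$ because $j_{0,R}^2+j_{\infty,R}^2=1$, so the integrand is $O(\abs{\Im z}^{N-1})$ there. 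This justifies both the Helffer--Sj\"ostrand representation and the interchange of the bounded operators with the integral, which completes the argument.

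I expect the only delicate point to be this $z$-uniformity of the $o(1)$ from the previous lemma. Everything else is the standard argument of \cite[Appendix A.2]{AC_rayleigh_scatt_1}.
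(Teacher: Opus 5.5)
Your proposal is correct and follows the paper's own proof. Both write the Helffer--Sj\"ostrand representation for the two operators and insert the $z$-uniform bound $o(1)\,(1+\abs{z})^2\abs{\Im z}^{-2}$ from the preceding lemma. Both then integrate over the compact support of an almost analytic extension whose $\partial_{\bar z}$-derivative vanishes fast enough at the real axis; the paper uses order $\abs{\Im z}^3$ and you use order $\abs{\Im z}^2$, and either suffices.
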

\begin{proof}
    Let $\tilde{\chi}\in\ciz(\C)$ be an almost analytic extension of $\chi$ such that $\pder{\tilde{\chi}}{\cc{z}} = 0$ on $\R$ and $\abs{\partial_{\cc{z}} \tilde{\chi}} \leq c \abs{\Im(z)}^3$. Let $\mathcal{E}_R(z) := \Gc_R\,(\Ha_S(\eta)-z)^{-1} - (\Ha_S^\ext(\eta)-z)^{-1}\Gc_R$. One has:
    \begin{align*}
        \left(\Gc_R\,\chi(\Ha_S(\eta)) - \chi(\Ha_S^\ext(\eta))\Gc_R\right)\chi_1(\Ha_S(\eta)) = \frac{1}{2 \mspace{2mu} i \mspace{1mu}\pi}\int_\C \pder{\tilde{\chi}}{\cc{z}}(z) \: \mathcal{E}_R(z) \, \chi_1(\Ha_S(\eta)) \, d\cc{z} \wedge dz. 
    \end{align*} 
    Combining the previous lemma with the bound on $\partial_z\widetilde\chi$ and the compactness of $\supp\widetilde\chi$ in $\C$ (so that $|z|$ is bounded), we get: 
    \begin{align*}
        \nop{\pder{\tilde{\chi}}{\cc{z}}(z)\:\mathcal{E}_R(z) \: \chi_1\big(\Ha_S(\eta)\big)} \hspace{5pt} \underset{\substack{R\to+\infty \\ \Im z \to 0}}{=} o(1) \, O\big(\abs{\Im z}\big). 
    \end{align*}
    Using once more the compactness of $\supp\widetilde\chi$, and noting that the $o(1)$ term does not depend on $z$, we get the desired result:
    \begin{align*}
        \left(\Gc_R\,\chi(\Ha_S(\eta)) - \chi(\Ha_S^\ext(\eta))\Gc_R\right)\chi_1(\Ha_S(\eta)) \underset{R\to+\infty}{=}  o(1).
    \end{align*}
\end{proof}

The next step in the computations is to determine the threshold $E_0^{(1)}(\eta)$, which is such that for $\chi$ supported in $]-\infty,E_0^{(1)}(\eta)[$, one has $\chi(\Ha_S^\ext(\eta)) \ind_{[1,+\infty[}(I \otimes \Nb_\infty) = 0$, i.e. such that no asymptotically free boson can occur below it. This threshold is simply equal to $\inf \sigma(\Ha)+m$ in the non-fibered case (see \cite[Theorem 4.1]{acqftmpfh}) but this might not be true in the fibered case, so one has to decompose along each $I \otimes  \Nb_\infty = \ell$ sector to compute it. Indeed, the additional difficulty in the fibered case comes from the fact that the $(\eta - I \otimes \dG^\ext(K_1))^2$ term does not split linearly, whereas the $I \otimes \dG^\ext(\omega(K))$ term does. However, by the same method as in \cite[Section 2.5 and proof of Theorem 1.2]{moller_translation_invariant_nelson_model}, we find, denoting $P_0 := \ind_{\{0\}}(I \otimes \Nb_\infty)$ and $P_0^\perp := \ind_{[1,+\infty[}(I \otimes \Nb_\infty)$: 
\begin{equation}\label{sec:binding:lower_bound_ext_hamiltonian}
    \begin{aligned}
        \inf \sigma \hspace{-2pt}\left( \Ha_S^\ext(\eta)\big|_{\Ran\left(P_0^\perp\right)}\right) &= \underset{n \in \N^\ast}{\inf} \hspace{5pt} \underset{k \in \R^{2n}}{\inf} \hspace{5pt} E \hspace{-2pt}\left( \eta  - \sum_{j=1}^{n} k_1^j\right) + \sum_{j=1}^n \omega \hspace{-2pt}\left(k^j\right) \\
        &= \underset{k \in \R^2}{\inf} \, E(\eta - k_1) + \omega(k) \\
        &= \underset{k_1 \in \R}{\inf} \, E(\eta - k_1) + \omega(k_1,0) := E_0^{(1)}(\eta),
    \end{aligned}
\end{equation}
where $k \in \R^{2n}$ is written as $(k^1, \cdots, k^n)$. The first equality follows from the direct-integral decomposition on each $I\otimes\Nb_\infty=n$ sector \cite[Section 2.5]{moller_translation_invariant_nelson_model}; the second equality comes from the strict subadditivity of $\omega$, which implies that the infimum over $n$ is attained at $n=1$ (see \cite[(1.23)]{moller_translation_invariant_nelson_model}); the third one comes from the fact that the infimum of $\omega(k_1,k_2)$ over $k_2 \in \R$ is $\omega(k_1,0)$. We deduce from this that $E_0^{(1)}(\eta)$ is the desired threshold, and this yields the first claim of Theorem \ref{sec:binding:theorem_grond_state_fibered_hamiltonian}:

\begin{proof}[\fbox{Proof of Theorem \ref{sec:binding:theorem_grond_state_fibered_hamiltonian}(i)}]
    Let $\eta \in \R$ and let us denote $\tau(\eta) = \min\hspace{-2pt}\left(\Sigma(\eta), E_0^{(1)}(\eta)\right)$. 
    Let $\chi \in \ciz\big(\mspace{2mu}]-\infty,\tau(\eta)[\mspace{2mu}\big)$ and $\chi_1 \in \ciz\big(\mspace{2mu}]-\infty,\tau(\eta)[\mspace{2mu}\big)$ be real-valued, with $\chi_1 = 1$ on $\supp \mspace{1mu}\chi$. The previous proposition enables us to write, since $\tau(\eta) \leq \Sigma(\eta)$: 
    \begin{align*}
        \left(\Gc_R\,\chi(\Ha_S(\eta)) - \chi(\Ha_S^\ext(\eta))\Gc_R  \right)\chi_1(\Ha_S(\eta)) \underset{R\to+\infty}{=}  o(1).
    \end{align*}
    Since $\chi(\Ha_S(\eta)) \chi_1(\Ha_S(\eta)) = \chi(\Ha_S(\eta))$ and $\Gc_R^\ast \Gc_R = I$, one gets:
    \begin{equation}\label{sec:binding:first_eq_limit_R_localization}
        \chi(\Ha_S(\eta)) \underset{R\to+\infty}{=} \Gc_R^\ast \, \chi(\Ha_S^\ext(\eta)) \: \Gc_R  \, \chi_1(\Ha_S(\eta))  +  o(1).
    \end{equation}
    In order to write $\chi(\Ha_S^\ext(\eta))$ in a more convenient form, we
    observe that $\Ha_S^\ext(\eta)$ and $I\otimes\Nb_\infty$ strongly commute (by \cite[Theorem 3.5(iv) and Proposition 5.5(ii)]{Arai}), meaning that every spectral subspace of $I\otimes\Nb_\infty$ reduces
    $\Ha_S^\ext(\eta)$ by \cite[Proposition 1.47 and Section 1.9]{Arai} and
    therefore also reduces $\chi(\Ha_S^\ext(\eta))$ by
    \cite[Theorem 1.49]{Arai}. Since $\Ha_S^\ext(\eta)\big|_{\Ran\left(P_0^\perp\right)} \geq E_0^{(1)}(\eta)$ (see \eqref{sec:binding:lower_bound_ext_hamiltonian}) and $\supp \, \chi \subset ]-\infty,E_0^{(1)}(\eta)[$, this yields: 
    \begin{equation}\label{sec:binding:decomposition_chi_ext_hamiltonian}
        \chi(\Ha_S^\ext(\eta)) =  \chi(\Ha_S^\ext(\eta))\big|_{\Ran\left(P_0\right)} \oplus \: \chi\hspace{-2pt}\left(\Ha_S^\ext(\eta)\big|_{\Ran\left(P_0^\perp\right)}\right) 
        = \chi\hspace{-2pt}\left(\Ha_S^\ext(\eta)\big|_{\Ran\left(P_0\right)}\right) \oplus \: 0.
    \end{equation}
    Using $P_0 = I \otimes I \otimes P_\Omega$, where $P_\Omega$ is the orthogonal projection onto the vacuum, we compute $\Ha_S^\ext(\eta) \big|_{\Ran\left(P_0\right)} = \Ha_S(\eta) \otimes I_{\,\C \mspace{2mu}\Omega}$ and hence $\chi\hspace{-2pt}\left(\Ha_S^\ext(\eta)\big|_{\Ran\left(P_0\right)}\right) = \chi(\Ha_S(\eta)) \otimes I_{\,\C \mspace{2mu}\Omega}$ by Proposition \ref{app:op:standard_properties}(iii). Thanks to the canonical identification of Hilbert spaces $\mathscr{K} \otimes( \Hi_1 \oplus \Hi_2) \cong (\mathscr{K} \otimes \Hi_1) \oplus (\mathscr{K} \otimes \Hi_2) $ (\cite[Theorem 2.12]{Arai}), one computes: \begin{equation}\label{sec:binding:split_localization_extended_hamiltonian}
        \chi(\Ha_S^\ext(\eta)) = \left( \chi(\Ha_S(\eta)) \otimes I_{\,\C \mspace{2mu}\Omega} \right) \oplus 0 = \chi(\Ha_S(\eta)) \otimes \left( I_{\,\C \mspace{2mu}\Omega} \oplus 0_{(\C \mspace{2mu}\Omega)^\perp} \right) = \chi(\Ha_S(\eta)) \otimes P_\Omega.
    \end{equation}
    Let us define $\iota : L^2(\R_{\xd},\Fs) \to L^2(\R_{\xd}, \Fs \otimes \Fs)$ by $\iota \psi = \psi \otimes \Omega$. By the formula preceding \cite[Lemma 2.14]{acqftmpfh}, we obtain $P_0 \Gc_R = \iota \hspace{-2pt}\left(I \otimes \Gamma\big(j_{0,R}\big)\right)$, and taking the adjoint yields $\Gc_R^\ast P_0 = \left(I \otimes \Gamma\big(j_{0,R}\big)\right)^\ast \iota^\ast$. Furthermore, since $(I \otimes I \otimes P_\Omega)\iota = \iota$, we have $\iota^\ast (I \otimes I \otimes P_\Omega) \iota = \iota^\ast \iota = I$; combining this with the identity $(\chi(\Ha_S(\eta)) \otimes I) \mspace{2mu} \iota = \iota\, \chi(\Ha_S(\eta))$ derived from the definition of $\iota$, we deduce that $\iota^\ast \big(\chi(\Ha_S(\eta)) \otimes P_\Omega \big) \iota = \chi(\Ha_S(\eta))$. Together with \eqref{sec:binding:split_localization_extended_hamiltonian} and the identity $P_0 \, \chi(\Ha_S^\ext(\eta)) P_0 = \chi(\Ha_S^\ext(\eta))$, derived from \eqref{sec:binding:decomposition_chi_ext_hamiltonian}, the equality \eqref{sec:binding:first_eq_limit_R_localization} becomes:
    \begin{equation}\label{sec:binding:eq_localization_limit_compact_op}
    \begin{aligned}
        \chi(\Ha_S(\eta)) &\underset{R\to+\infty}{=}  \Gc_R^\ast \mspace{1mu} P_0\, \chi(\Ha_S^\ext(\eta)) P_0 \mspace{2mu} \Gc_R  \, \chi_1(\Ha_S(\eta))  +  o(1) \\
        &\underset{R\to+\infty}{=}\left(I \otimes \Gamma\big(j_{0,R}\big)\right)^\ast \chi\big(\Ha_S(\eta)\big) \left(I \otimes \Gamma\big(j_{0,R}\big)\right) \chi_1(\Ha_S(\eta))  +  o(1).      
    \end{aligned}
    \end{equation}
    Let $\Gamma_{0,R} := I \otimes \Gamma\big(j_{0,R}\big)$ and let us show that $\chi\big(\Ha_S(\eta)\big) \Gamma_{0,R} \, \chi_1\big(\Ha_S(\eta)\big)$ is compact. To see this, we factorize the operator as follows: 
    \begin{align*}
        \chi\big(\Ha_S(\eta)\big) \Gamma_{0,R} \chi_1\big(\Ha_S(\eta)\big) = \left(e^{\alpha \abs{X_2}} \chi\big(\Ha_S(\eta)\big)\right)^\ast e^{-\alpha \abs{X_2}}\, \Gamma_{0,R} \mspace{2mu} (\Ha_S(\eta)+i)^{-1/2} (\Ha_S(\eta)+i)^{1/2} \chi_1\big(\Ha_S(\eta)\big).
    \end{align*}
     We can write this factorization because $(\Ha_S(\eta)+i)^{1/2} \chi_1\big(\Ha_S(\eta)\big)$ is bounded by functional calculus, $\chi\big(\Ha_S(\eta)\big)e^{\alpha \abs{X_2}} \subset \left(e^{\alpha \abs{X_2}} \chi\big(\Ha_S(\eta)\big)\right)^\ast$ (\cite[Proposition 1.11]{Arai}), and $\Ran\hspace{-2pt}\left(e^{-\alpha \abs{X_2}}\right) = \D\hspace{-2pt}\left(e^{\alpha \abs{X_2}}\right)$. By Proposition \ref{sec:binding:exp_decay_hamiltonian}, we can choose $\alpha > 0$ such that $e^{\alpha \abs{X_2}} \chi\big(\Ha_S(\eta)\big)$, and thus its adjoint, are bounded. For this choice of $\alpha$, \cite[Lemma 34]{AC_rayleigh_scatt_1} ensures that $e^{-\alpha \abs{X_2}}\, \Gamma_{0,R} \mspace{2mu} (\Ha_S(\eta)+i)^{-1/2} = e^{-\alpha \abs{X_2}} \otimes \Gamma\bigl(j_{0,R}\bigr) \mspace{2mu} (\Ha_S(\eta)+i)^{-1/2} $ is compact, directly yielding the compactness of the full operator $\chi\big(\Ha_S(\eta)\big) \Gamma_{0,R} \, \chi_1\big(\Ha_S(\eta)\big)$. Since $\Gamma_{0,R}$ is bounded, \eqref{sec:binding:eq_localization_limit_compact_op} gives the compactness of $\chi\big(\Ha_S(\eta)\big)$ as an operator norm limit of compact operators. \cite[Theorem \cRM{13}.77]{RS} then gives the claim about the essential spectrum.
   \end{proof}

\bigbreak 

The second claim of Theorem \ref{sec:binding:theorem_grond_state_fibered_hamiltonian} follows by adapting
the proof of \cite{takaesu_binding} to the additional translation-invariant term $(\eta-P_1)^2$. As explained above, on each fiber only the transverse variable remains, and the potential $V(X_2)$ vanishes at infinity. This allows us to reproduce the usual single-particle binding argument.

\begin{proof}[\fbox{Proof of Theorem \ref{sec:binding:theorem_grond_state_fibered_hamiltonian}(ii)}]
We indicate how \cite{takaesu_binding} applies, using its notation. Let $\eta\in\R$ and define the translation-invariant part of
$\Ha_S(\eta)$ by:
\begin{align*}
\Ha_S^0(\eta) := -\partial_{\xd}^2\otimes I + I \otimes \left(\dG(\omega(K))+\big(\eta-\dG(K_1)\big)^2\right) + g \mspace{2mu}\Ha_{I,0} = -\partial_{\xd}^2\otimes I + I \otimes \dG_\eta + g \mspace{2mu}\Ha_{I,0}.
\end{align*} 
Both $\Ha_S^0(\eta)$ and the decomposition $\Ha_S(\eta) = \Ha_S^0(\eta) + V(X_2) \otimes I$ are well defined thanks to Theorem \ref{sec:ess:prop_fiber_straight_hamiltonian} and Proposition \ref{sec:prelim:domain_splitting_and_cores}(i). The operator $\Ha_S^0(\eta)$ is indeed translation invariant: the proof is the same as the one of Lemma \ref{sec:ess_spec:lemma_inv_translation}, except for the additional $\left(\eta-P_1\right)^2$ term which is translation invariant because $P_1$ is (thanks to the covariance of the functional calculus, \cite[Theorem 1.32]{Arai}). Having established this, we may apply the argument of
\cite{takaesu_binding}. In his notation, we set $\Ha_b=\dG_\eta$; with this choice, the proofs of \cite[Lemmas 2.1 to 2.4]{takaesu_binding} apply to
$\Ha_S(\eta)$ and yield:
\begin{align*}
    \inf\sigma(\Ha_S(\eta)) \leq \ipld{u}{\bigl(-\partial_{\xd}^2 + V(X_2)\bigr)u} + \inf\sigma\hspace{-2pt}\left(\Ha_S^0(\eta)\right). 
\end{align*}
We now take $u \in H^2(\R)$ to be a normalized real-valued ground state of
$-\partial_{\xd}^2+V(X_2)$ with eigenvalue $\varepsilon_0<0$, which exists because $V$ is compactly supported, non-positive and not identically zero (Assumption~\ref{hyp:potential}). As an $H^2(\R)$ function, it is bounded and so is its derivative because in one dimension, one has $H^2(\R) \hookrightarrow   H^1(\R) \hookrightarrow L^{\infty}(\R)$. As a consequence, all the hypotheses of
\cite[Remark 2.1]{takaesu_binding} are satisfied: $u$, $\partial_{\xd}u \in H^1(\R)$ are bounded, $\partial_{\xd}^2u=(V-\varepsilon_0)u$ is bounded since $V$ is, and so is $Vu$. Hence the conclusion of \cite[Corollary 2.5 and Remark 2.1]{takaesu_binding} applies and yields:
\begin{align*}
\inf\sigma(\Ha_S(\eta)) \leq \varepsilon_0 + \inf\sigma\hspace{-2pt}\left(\Ha_S^0(\eta)\right). 
\end{align*}
And \cite[Lemma 2.7]{takaesu_binding} holds because $V$ has compact support and yields $\inf\sigma\big(\Ha_S^0(\eta)\big)\leq\Sigma(\eta)$, so $\inf\sigma(\Ha_S(\eta)) < \Sigma(\eta)$ follows from the fact that $\varepsilon_0<0$. 
\end{proof} 

\bigbreak

Combining statements (i) and (ii) of Theorem \ref{sec:binding:theorem_grond_state_fibered_hamiltonian}, we deduce that a sufficient condition for the existence of a ground state for $\Ha_S(\eta)$ is $E_0^{(1)}(\eta) > E(\eta)$. Since $E_0^{(1)}(\eta)$ represents the lowest energy among all configurations with at least one asymptotically free boson, one expects this inequality to hold near the minimum of $E(\eta)$. Indeed, near the momentum $\eta_0$ at which the energy is minimal, emitting a boson costs at least its mass, whereas reducing the momentum of the bound subsystem yields little energy gain. At large total longitudinal momentum $\eta$, however, the energy decrease from $E(\eta)$ to $E(\eta-k_1)$ obtained by letting a freely propagating boson carry a momentum $k_1$ may compensate for the energy cost $\omega(k_1,0)$ of that boson. It is therefore physically conceivable that no completely bound configuration lies below $E_0^{(1)}(\eta)$. Since all configurations with one or more asymptotically free bosons have energy at least $E_0^{(1)}(\eta)$, this would imply $E(\eta)\geq E_0^{(1)}(\eta)$ for large longitudinal momentum $\eta$. \\
Mathematically, this can be seen as follows: in the proof of Proposition \ref{sec:ess:prop_no_gap_straight_spec}, we showed that $E(\eta)$ grows at least linearly when $\eta \to \pm \infty$, and at the beginning of Section \ref{sec:binding}, we explained that it has a global minimum at $\eta_0 \in \R$. Letting $k_1 = \eta-\eta_0$, we find $E_0^{(1)}(\eta) \leq E(\eta_0) + \sqrt{(\eta-\eta_0)^2+m^2}$, so that $E_0^{(1)}(\eta)$ has at most linear growth. Hence $E(\eta)$ and $E_0^{(1)}(\eta)$ seem to be of the same order of magnitude as $\eta \to \pm \infty$, so we cannot say a priori that $E(\eta) < E_0^{(1)}(\eta)$ for all $\eta \in \R$.  \\

However, we can prove that this inequality holds at $\eta_0$:
\begin{proof}[\fbox{Proof of Theorem \ref{sec:binding:theorem_grond_state_fibered_hamiltonian}(iii)}]
    Recall that $\eta_0$ is well defined because we proved that $\eta \mapsto E(\eta)$ is continuous and goes to $+\infty$ as $\abs{\eta} \to +\infty$ (see the proof of Proposition \ref{sec:ess:prop_no_gap_straight_spec}). By definition of $\eta_0$, for any $k_1 \in \R$, $E(\eta_0 - k_1) \geq E(\eta_0)$. Consequently, the infimum of $E(\eta_0 - k_1)$ over $k_1 \in \R$ is attained at $k_1 = 0$, as is the infimum of $\omega(k_1,0)$ over $k_1 \in \R$. Thus, we find $E_0^{(1)}(\eta_0) = E(\eta_0) + m$, as claimed. Finally, the equality $E(\eta_0) = \Sigma_S$ follows from Theorem \ref{sec:ess_spec:theorem_spectrum_straight_waveguide}.
\end{proof}

Together, assertions (i), (ii), and (iii) of Theorem \ref{sec:binding:theorem_grond_state_fibered_hamiltonian} yield the desired conclusion: $\Ha_S(\eta_0)$ admits a ground state. This is statement \ref{sec:binding:statement_a}, which is needed to prove that condition \ref{sec:binding:statement_b} is sufficient for binding.  \\
Before doing so in the next subsection, we establish the additional property $E(0) \leq E(\eta)$ for all $\eta \in \R$. This has been proved for the translation-invariant Nelson model and seems logical: the energy is minimized when the total momentum is. In our case, $\eta$ is only the total momentum along the $\xu$ axis, but, as will be proved below, the result still holds: as the direction of the waveguide is the natural direction of propagation, having no momentum along it should correspond to a low-energy configuration. \\
The proof essentially follows that of \cite[Theorem 3.2(5)]{thomas_thesis} (itself inspired by that of Leonard Gross \cite[Theorem 8]{eupgs}), with the difference that our Hamiltonian is not fibered along all directions but only along the longitudinal one, leaving an additional $-\partial^2_{\xd}+V(X_2)$ term that must be handled. The argument relies on positivity-preserving and positivity-improving semigroups (see Subsection~\ref{app:operator:positivity} for the relevant
background and auxiliary results) together with the $Q$-space
isomorphism (Proposition~\ref{app:fock:Q-space_isomorphism}) but, unlike alternative proofs (see, e.g., \cite[Corollary 2.148] {hiroshima2020feynman2}), does not use a functional-integral representation of the semigroup.\\

We begin by noticing that, using functional calculus (\cite[Lemma 1.6(i)]{Arai}):
\begin{equation}\label{sec:binding:inequality_op_norm_semigroup}
    \forall \eta \in \R, \: E(0) \leq E(\eta) \iff \forall \eta \in \R, \: \forall t > 0, \: \nop{e^{-t \mspace{2mu}\Ha_S(\eta)}} \leq \nop{e^{-t\mspace{2mu}\Ha_S(0)}}.
\end{equation}
Let $\eta \in \R$, $t>0$ be fixed in what follows. Using Theorem \ref{sec:ess:prop_fiber_straight_hamiltonian}, we split the Hamiltonian into the part that depends on $\eta$ and the one that does not: 
\begin{align*}
    \Ha_S(\eta) = \underbrace{(\eta-P_1)^2}_{:=M(\eta)} + \underbrace{\Ha_{\el,0} + \Ha_f   + g \mspace{2mu}\Ha_{I,0}}_{:= L} := M(\eta)+L.
\end{align*}
To exploit this splitting of the Hamiltonian, we can rely on the Trotter-Kato formula (\cite[Theorem \cRM{8}.31]{RS}). Indeed, $M(\eta) \geq 0$ is self-adjoint, $L$ is self-adjoint and lower semi-bounded (same argument as in the proof of Theorem \ref{prelim:th:self-adjointness_Ha}), and $M(\eta) + L = \Ha_S(\eta)$ has domain $\D(M(\eta)) \cap \D(L)$ (see again Theorem \ref{sec:ess:prop_fiber_straight_hamiltonian}); hence:
\begin{align*}
 e^{-t\mspace{2mu}\Ha_S(\eta)} = \slim{n\to+\infty} \left(e^{-t \mspace{1mu}M(\eta)/n} \: e^{-t \mspace{1mu} L/n}\right)^n,
\end{align*}
where $\mathrm{s-lim}$ denotes the limit in the strong operator topology. Consequently, in order to prove \eqref{sec:binding:inequality_op_norm_semigroup}, it suffices to prove that:
\begin{align*}
\forall n \in \N^\ast, \: \abs{\left(e^{-t \mspace{1mu}M(\eta)/n} \: e^{-t \mspace{1mu} L/n}\right)^n\psi} \leq \left(e^{-t \mspace{1mu}M(0)/n} \: e^{-t \mspace{1mu} L/n}\right)^n \abs{\psi}.
\end{align*}
Since $t>0$ is arbitrary, the previous inequality follows by induction from the one below. Iterating the bound is justified because the inequality itself implies that $e^{-t \mspace{1mu}M(0)} \: e^{-t \mspace{1mu} L}$ is positivity preserving: 
\begin{equation}\label{sec:binding:inequality_abs_value_semigroup}
\abs{\mspace{2mu} e^{-t \mspace{1mu}M(\eta)} \: e^{-t \mspace{1mu} L} \, \psi} \leq e^{-t \mspace{1mu}M(0)} \: e^{-t \mspace{1mu} L} \, \abs{\psi}.
\end{equation}
Such estimates are naturally related to positivity preserving semigroups, see, e.g., \cite[Section 5, Lemma 5.2(6)]{thomas_thesis}. To make sense of $\abs{\psi}$, we need the $Q$-space isomorphism (Proposition \ref{app:fock:Q-space_isomorphism} with $L^2\hspace{-2pt}\left(\Ry^2, \C\right) = L^2\hspace{-2pt}\left(\Ry^2,\R\right) + i \hspace{2pt} L^2\hspace{-2pt}\left(\Ry^2,\R\right)$) which enables us to view elements of the Fock space as functions on a suitable $L^2$ space. Recall that $y=i\mspace{1mu}\nabla_k$ denotes the one-boson configuration variable: in this representation, $K=-i\mspace{1mu}\nabla_y$ on $L^2(\R_y^2)$. \\
By covariance of the functional calculus (\cite[Theorem 1.32]{Arai}), the above computations may be carried out on the Hilbert space
\begin{align*}
   \Hi' := L^2\hspace{-2pt}\left(\R_{\xd} \times Q, d\xd \otimes \dP \right) = \int_{\R}^\oplus L^2(Q, \dP) \hspace{3pt} d\xd = \int_{\R}^\oplus U \Fs \hspace{3pt} d\xd ,
\end{align*}
and we omit the unitary operator $U$ in the following. As $\Hi'$ is well suited to discuss positivity improving operators, we can proceed to computations. As in \cite[Lemma A.7]{thomas_thesis}, one gets:
\begin{equation}\label{sec:binding:eq_expression_semigroup_m(eta)}
    e^{-t \mspace{1mu}M(\eta)} = \int_{\R} N_t(\xi) \hspace{2pt}  e^{\mspace{1mu}i \mspace{2mu}\eta \mspace{3mu}  \xi} \hspace{2pt} e^{ - \mspace{1mu }i \mspace{2mu}P_1 \mspace{2mu} \xi } \hspace{2pt} d\xi = \int_{\R} N_t(\xi) \hspace{2pt}  e^{\mspace{1mu}i \mspace{2mu}\eta \mspace{3mu}  \xi}  \left(I \otimes \Gamma\big(e^{-iK_1\xi}\big)\right) d\xi,
\end{equation}
where $N_t(\xi) = \frac{1}{\sqrt{4\pi t}}e^{\, -\, \xi^2/4t}$ and the integral is to be understood in the sense of a Bochner integral. For all $\psi \in \Hi'$, \cite[Lemma 5.2(7)]{thomas_thesis}, Proposition \ref{app:fock:Q-space_isomorphism}(i) (translations are unitary and preserve realness), \cite[Lemma 5.8 and 5.2(6)]{thomas_thesis} yield: 
\begin{equation}\label{sec:binding:eq_m(eta)_leq_M(0)}
    \begin{aligned}
        \abs{e^{-t \mspace{1mu}M(\eta)} \psi } \leq  \int_{\R} \abs{N_t(\xi) \hspace{2pt}  e^{\mspace{1mu}i \mspace{2mu}\eta \mspace{3mu}  \xi} \hspace{2pt} e^{ - \mspace{1mu }i \mspace{2mu}P_1 \mspace{2mu} \xi }\hspace{2pt} \psi} \hspace{2pt}  d\xi = \int_{\R} N_t(\xi) \hspace{1pt} \abs{\hspace{1pt} e^{ - \mspace{1mu }i \mspace{2mu}P_1 \mspace{2mu} \xi }\hspace{2pt} \psi} \hspace{2pt}  d\xi &\leq \int_{\R} N_t(\xi) \hspace{1pt} e^{ - \mspace{1mu }i \mspace{2mu}P_1 \mspace{2mu} \xi } \hspace{2pt} \abs{\psi} \hspace{2pt}  d\xi \\
        &= e^{-t \mspace{1mu}M(0)} \, \abs{\psi }.
    \end{aligned}
\end{equation}
Consequently, to prove \eqref{sec:binding:inequality_abs_value_semigroup}, it remains to prove that $e^{-t \mspace{1mu} L}$ is positivity preserving and to apply again \cite[Lemma 5.2(6)]{thomas_thesis}.

\begin{lemma}\label{sec:binding:L-pos-improv-semigroup}
Assume \textup{\ref{h4_field_real_valued_configuration_space}}. Then, on the configuration space, $L$ generates a positivity improving semigroup.
\end{lemma}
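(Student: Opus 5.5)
The plan is to follow the classical Gross/Faris strategy for Nelson-type Hamiltonians, now in the $Q$-space picture on $\Hi' = \int_\R^\oplus L^2(Q,\dP)\,d\xd$. We split $L = L_0 + g\,\Ha_{I,0}$ with $L_0 := \Ha_{\el,0} + \Ha_f$, and use the Trotter–Kato formula (\cite[Theorem \cRM{8}.31]{RS}), applicable since $L_0$ is self-adjoint and lower semi-bounded and $\Ha_{I,0}$ is infinitesimally $L_0$-bounded:
\begin{align*}
    e^{-tL} = \slim{n\to+\infty}\left(e^{-tL_0/n}\,e^{-tg\Ha_{I,0}/n}\right)^n .
\end{align*}
A strong limit of positivity preserving operators is positivity preserving. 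Hence $e^{-tL}$ preserves positivity as soon as each factor does.

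We then check the three pieces separately.

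\textbf{Interaction.} Under \ref{h4_field_real_valued_configuration_space}, for every $\xd$ the kernel $v_{(0,\xd)}$ is real-valued in configuration space. Proposition~\ref{app:fock:Q-space_isomorphism} then maps $\phi(v_{(0,\xd)})$ to multiplication by a real Gaussian random variable on $Q$. So $\Ha_{I,0}$ is a real multiplication operator on $\R_{\xd}\times Q$, and $e^{-tg\Ha_{I,0}/n}$ is multiplication by a strictly positive function, hence positivity preserving.

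\textbf{Field part.} Here $\omega(-i\nabla_y)$ commutes with complex conjugation and $e^{-t\omega}$ is positivity preserving on $L^2(\Ry^2)$, since $\sqrt{-\Delta+m^2}$ has a positive heat kernel. Therefore $\Gamma(e^{-t\omega})$ is positivity preserving, and in fact ergodic/improving: it is a contraction $\Gamma$ of a positivity preserving contraction with $\nop{e^{-t\omega}}<1$ (Nelson/Gross, \cite{eupgs}).

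\textbf{Schrödinger part.} The semigroup $e^{-t\ha_{\el,0}}$ is positivity improving on $L^2(\R_{\xd})$, by Feynman–Kac or Trotter with a bounded $V$.

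Tensor products of positivity preserving semigroups acting on separate variables are positivity preserving, which gives the positivity preserving property of $e^{-tL}$.

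For \emph{improving}, I would use the standard criterion (Faris / \cite[Theorem \cRM{13}.45]{RS}-type, cf.\ Subsection~\ref{app:operator:positivity}). Suppose $e^{-tL_0}$ is positivity preserving, $\Ha_{I,0}$ is a real multiplication operator with $L$ bounded below, and $\{e^{-tL_0}\}\cup L^\infty$-multiplications act irreducibly. Then $e^{-tL}$ is positivity improving. I expect this step to be the main obstacle. $\Gamma(e^{-t\omega})$ alone is not positivity improving on $L^2(Q)$: it is only ergodic relative to multiplication operators. Irreducibility must therefore be obtained jointly with the multiplications by the field variables $\phi(f)$, following Gross.

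The concrete route would be the following. First show that $e^{-t\ha_{\el,0}}\otimes\Gamma(e^{-t\omega})$ is positivity improving on $L^2(\R_{\xd})\otimes L^2(Q)$. For this, use that $\Gamma(e^{-t\omega})$ is positivity improving on $L^2(Q,\dP)$, which holds because $e^{-t\omega}$ is a strict contraction that is positivity preserving on the real one-particle space (Gross's theorem, \cite[Theorem 8]{eupgs}, as in \cite{thomas_thesis}). A tensor product of positivity improving operators on $\sigma$-finite spaces is then positivity improving. Next, to pass the improvement through the perturbation, expand $e^{-tL}$ via the Dyson/Duhamel series in the form
\begin{align*}
    e^{-tL} \geq e^{-tL_0}\ \text{up to a positive multiplicative factor},
\end{align*}
in the pointwise sense. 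Since $\Ha_{I,0}$ is unbounded, I would first cut it off: replace it by $\Ha_{I,0}\ind_{\{|\Ha_{I,0}|\le N\}}$, for which $e^{-tL_N}\psi \ge e^{-tN}e^{-tL_0}\psi$ for $\psi\ge 0$. I would then use Faris's theorem for the limit $N\to\infty$, with strong resolvent convergence ensured by the infinitesimal relative bound. If Gross's improvement for $\Gamma(e^{-t\omega})$ is unavailable, the fallback is the ergodicity criterion applied directly to the pair $(L_0,\Ha_{I,0})$.
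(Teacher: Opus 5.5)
Your concrete route is essentially the paper's. The paper first shows that $\Ha_{\el,0}+\Ha_f$ generates a positivity improving semigroup, as a tensor sum (Proposition~\ref{app:vector_valued:tensor_sum_pos_improving}) of $-\partial_{\xd}^2+V(X_2)$ and $\dG(\omega(-i\nabla_y))$; the latter is improving by Proposition~\ref{app:fock:Q-space_isomorphism}(ii). It then notes that, under \textup{\ref{h4_field_real_valued_configuration_space}}, $g\Ha_{I,0}$ is a real multiplication operator on $\R_{\xd}\times Q$ with infinitesimal form bound. It concludes with Corollary~\ref{app:vector_valued:sum_pos_improving}, which rests on the Faris-type perturbation theorem \cite[Theorem 3.4]{thomas_thesis}. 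For the Schrödinger part, your Trotter argument with bounded $V$ is a fine elementary substitute for the paper's use of that corollary.

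Two points need fixing.

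\textbf{Field part.} Your remark that $\Gamma(e^{-t\omega})$ is only ergodic and not positivity improving is false, and it contradicts the step you later rely on. Since $\omega\geq m>0$ is injective (indeed $\nop{e^{-t\omega}}=e^{-tm}<1$) and $e^{-t\omega}$ maps real-valued functions of $y$ to real-valued ones, $e^{-t\dG(\omega)}$ is positivity improving on $L^2(Q,\dP)$. The relevant hypothesis is preservation of the real subspace, not positivity of the kernel of $e^{-t\omega}$.

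\textbf{The limit $N\to+\infty$.} This step cannot rest on strong resolvent convergence, because a strong limit of positivity improving operators is in general only positivity preserving. Your bound $e^{-tL_N}\psi\geq e^{-tgN}e^{-tL_0}\psi$ degenerates as $N\to+\infty$. Moreover, your two-sided truncation $\Ha_{I,0}\ind_{\{\abs{\Ha_{I,0}}\leq N\}}$ lies neither above nor below $\Ha_{I,0}$, so you also have no pointwise comparison between $e^{-tL}\psi$ and $e^{-tL_N}\psi$. The proof behind the paper's corollary uses monotone one-sided truncations $\ind_{[-n,0]}(B)B$ for exactly this reason. What closes the argument is the perturbation theorem applied directly to the pair $(L_0,g\Ha_{I,0})$, whose only hypothesis to check is a relative form bound $<1$. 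Once you use it, your opening Trotter step for positivity preservation and the cutoff both become superfluous.
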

\begin{proof}
    We divide the proof into two steps: the analysis of the free Hamiltonian and the addition of the perturbation. The result follows from a combination of Proposition \ref{app:vector_valued:tensor_sum_pos_improving} and Corollary \ref{app:vector_valued:sum_pos_improving}.

\begin{enumerate}[label=(\roman*)]

\item It is well known that $-\partial_{\xd}^2$ generates the heat semigroup, which acts as a convolution by a strictly positive function (see, e.g., \cite[Section \cRM{10}.8, Example 5]{RS}), and is therefore positivity improving. Since $V(X_2)$ is a bounded multiplication operator, we have $\Q(-\partial_{\xd}^2) \subset \Q(V(X_2))$. Moreover, $V(X_2)$ is infinitesimally $-\partial_{\xd}^2$-bounded in the operator sense, and hence in the quadratic form sense (see \cite[Theorem \cRM{10}.18]{RS}). According to Corollary \ref{app:vector_valued:sum_pos_improving}, $-\partial_{\xd}^2+V(X_2)$ generates a positivity improving semigroup. \\
The operator $\omega(K)$ is self-adjoint, and since $\omega \geq m > 0$, it is injective. Moreover, on the configuration space, $\omega(K)=\omega(-i \mspace{1mu}\nabla_y)=\sqrt{-\Delta_y+m^2}$: since $-\Delta_y+m^2$ commutes with complex conjugation, so do its square root and the semigroup $e^{-t\mspace{1mu}\omega(-i \mspace{1mu}\nabla_y)}$. Thus, $e^{-t\mspace{2mu}\omega(-i \mspace{1mu}\nabla_y)} L^2(\Ry^2,\R)\subset L^2(\Ry^2,\R)$ and $\dG(\omega(-i \mspace{1mu}\nabla_y))$ generates a positivity improving semigroup by Proposition \ref{app:fock:Q-space_isomorphism}(ii). \\
By Proposition \ref{app:vector_valued:tensor_sum_pos_improving}, the operator 
\begin{align*}
\Ha_{\el,0}+\Ha_f := \big(-\partial_{\xd}^2 + V(X_2)\big) \otimes I + I \otimes \dG(\omega(-i \mspace{1mu}\nabla_y))
\end{align*}
generates a positivity improving semigroup. 
 
\item  It remains to add the perturbation $g \mspace{2mu} \Ha_{I,0}$. Let us denote $\phi_{\xd} := \phi\left(v_{(0,x_2)}\right)$. On the configuration space, $v_{(0,\xd)} = v\left(\cdot-(0,\xd)\right)$, which is real-valued by assumption. By Proposition~\ref{app:fock:Q-space_isomorphism}(iii), $(\phi_{\xd})_{\xd \in \R}$ defines a Gaussian process with mean $0$ and $\mathrm{Cov}\, (\phi_{\xd^1},\phi_{\xd^2}) = \frac{1}{2}\ipld{v_{\xd^1}}{v_{\xd^2}}$ for $\xd^1, \xd^2 \in \R$; hence it acts as a family of multiplication operators. Since a direct integral of multiplication operators is again a multiplication operator (see \cite[Lemma 6.2]{thomas_thesis}), $g \mspace{2mu}\Ha_{I,0}$ is a multiplication operator, which is known to be infinitesimally $(\Ha_{\el,0}+\Ha_f)$-bounded. Consequently, \cite[Theorem \cRM{10}.18]{RS} yields an infinitesimal form-bound, allowing us to apply Corollary \ref{app:vector_valued:sum_pos_improving} to conclude that $L$ generates a positivity improving semigroup.
\end{enumerate}
\end{proof}

Combining \eqref{sec:binding:eq_m(eta)_leq_M(0)} and the above lemma, one gets \eqref{sec:binding:inequality_abs_value_semigroup} and even the following result:

\begin{theorem}\label{sec:binding:eta_0_equal_0}
    Assume \textup{\ref{h4_field_real_valued_configuration_space}}. Then one has: 
    \begin{itemize}
        \item For all $\eta \in \R$, $E(0) \leq E(\eta)$, i.e. $0 \in \underset{\eta \, \in \, \R}{\argmin} \, E(\eta)$.
        \vspace{5pt}
        \item The Hamiltonian $\Ha_S(0)$ has a unique and almost everywhere strictly positive ground state on $\, \R_{\xd} \times Q$.
    \end{itemize}
\end{theorem}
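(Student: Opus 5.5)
The plan has two parts, following the setup just before the statement. For the first bullet, everything reduces to \eqref{sec:binding:inequality_op_norm_semigroup}, i.e. to bounding $\nop{e^{-t\Ha_S(\eta)}}$ by $\nop{e^{-t\Ha_S(0)}}$ for every $t>0$. I work on $\Hi'=L^2(\R_{\xd}\times Q)$.

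First, by Lemma~\ref{sec:binding:L-pos-improv-semigroup}, $e^{-tL}$ is positivity preserving. Combining this with \eqref{sec:binding:eq_m(eta)_leq_M(0)} and \cite[Lemma 5.2(6)]{thomas_thesis} gives \eqref{sec:binding:inequality_abs_value_semigroup}, namely
\begin{align*}
\abs{e^{-tM(\eta)}e^{-tL}\psi}\leq e^{-tM(0)}e^{-tL}\abs{e^{-tL}\psi}\cdots
\end{align*}
More precisely, it gives $\abs{e^{-tM(\eta)}e^{-tL}\psi}\leq e^{-tM(0)}\abs{e^{-tL}\psi}\leq e^{-tM(0)}e^{-tL}\abs{\psi}$. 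Because $e^{-tM(0)}e^{-tL}$ is positivity preserving, I can iterate $n$ times and obtain
\begin{align*}
\abs{\left(e^{-tM(\eta)/n}e^{-tL/n}\right)^n\psi}\leq \left(e^{-tM(0)/n}e^{-tL/n}\right)^n\abs{\psi}.
\end{align*}
Passing to the strong limit with the Trotter--Kato formula, and using that $\abs{\cdot}$ is continuous and the positive cone is closed, gives $\abs{e^{-t\Ha_S(\eta)}\psi}\leq e^{-t\Ha_S(0)}\abs{\psi}$. Taking norms yields $\nop{e^{-t\Ha_S(\eta)}}\leq\nop{e^{-t\Ha_S(0)}}$. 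Then \eqref{sec:binding:inequality_op_norm_semigroup}, via $\nop{e^{-tA}}=e^{-t\inf\sigma(A)}$, gives $E(0)\leq E(\eta)$.

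For the second bullet, the first bullet shows $0\in\argmin E$. Theorem~\ref{sec:binding:theorem_grond_state_fibered_hamiltonian} applied with $\eta_0=0$ then shows that $\Ha_S(0)$ has a ground state with eigenvalue $E(0)=\Sigma_S$. For uniqueness and strict positivity I use the Perron--Frobenius/Faris theorem (\cite[Theorem XIII.44]{RS}, or its analogue in Subsection~\ref{app:operator:positivity}). It requires $e^{-t\Ha_S(0)}$ to be positivity improving, so the real work is to upgrade from positivity preserving. At $\eta=0$ we have $M(0)=P_1^2$, and by \eqref{sec:binding:eq_expression_semigroup_m(eta)} $e^{-tM(0)}$ is an average, with the strictly positive Gaussian weight $N_t$, of the positivity-preserving translations $I\otimes\Gamma(e^{-iK_1\xi})$. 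Hence $e^{-tM(0)}$ is positivity preserving, and for $\psi\geq0$, $\psi\neq0$, the vector $e^{-tM(0)}\psi$ is nonnegative and nonzero (since $e^{-tM(0)}$ is injective). Composing with the positivity improving $e^{-tL}$ shows that each Trotter product $\left(e^{-tM(0)/n}e^{-tL/n}\right)^n$ is positivity improving.

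The main obstacle is that strict positivity may be lost in the strong limit. To avoid this I will not pass strict positivity through the limit. Instead I use that $M(0)\geq0$ and that $L$, $M(0)$ are both generators of positivity-preserving semigroups, and I invoke the perturbation criterion behind Corollary~\ref{app:vector_valued:sum_pos_improving}. Its form-bound argument (that $M(0)$ is form-bounded by $L+M(0)$ with form domain $\D(\Ha_S(0))$) should show that $e^{-t(L+M(0))}$ is positivity improving whenever $e^{-tL}$ is and $e^{-tM(0)}$ is positivity preserving. Failing this, I fall back on the ergodicity characterization: if $\phi,\psi\geq0$ satisfy $\ip{\phi}{e^{-t\Ha_S(0)}\psi}=0$ for all $t$, then a Duhamel expansion in $M(0)$ around $e^{-tL}$ together with positivity improvement of $e^{-tL}$ forces $\phi=0$ or $\psi=0$. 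Once positivity improvement holds, Faris' theorem gives a simple ground-state eigenvalue with an a.e.\ strictly positive eigenvector on $\R_{\xd}\times Q$.
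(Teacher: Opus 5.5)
Your first bullet follows the paper's argument exactly: the domination $\abs{e^{-tM(\eta)}e^{-tL}\psi}\leq e^{-tM(0)}e^{-tL}\abs{\psi}$, iteration, the Trotter--Kato limit, then taking norms. Getting existence of the ground state of $\Ha_S(0)$ from $0\in\argmin E$ and Theorem~\ref{sec:binding:theorem_grond_state_fibered_hamiltonian} is also fine.

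The gap is in the step you flag yourself: showing that $e^{-t\Ha_S(0)}$ is positivity improving. Neither of your two routes works.

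Corollary~\ref{app:vector_valued:sum_pos_improving} (Dam's theorem) only lets you add a \emph{multiplication operator} $B$ whose form is $\q_A$-bounded with bound $<1$. $M(0)=I\otimes\dG(K_1)^2$ fails both conditions. It is a second-quantized operator, not a multiplication operator in the $Q$-space picture; its semigroup is a Gaussian average of the transformations $I\otimes\Gamma(e^{-iK_1\xi})$. It is also not form-bounded by $L$, since on the one-boson sector $k_1^2$ is not controlled by $\omega(k)$. The bound you mention, that $M(0)$ is form-bounded by $L+M(0)$, is trivial and is not the corollary's hypothesis. The general principle you need, that a positivity improving generator perturbed by one generating a positivity preserving semigroup stays positivity improving, is not proved anywhere.

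The Duhamel fallback fails for a sign reason. In $e^{-t\Ha_S(0)}=e^{-tL}-\int_0^t e^{-(t-s)\Ha_S(0)}M(0)e^{-sL}\,ds$, the inequality $M(0)\geq0$ holds only as an operator inequality, not pointwise on $\R_{\xd}\times Q$. So the integral term has no sign in the lattice order, and $\langle\phi,e^{-t\Ha_S(0)}\psi\rangle=0$ tells you nothing about $\langle\phi,e^{-tL}\psi\rangle$.

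The missing idea is to attach $M(0)$ to the free field rather than to $L$, which already contains the interaction. The paper sets $\dG_0:=\dG(\omega(-i\nabla_y))+\dG(-i\partial_{y_1})^2$, so that $\Ha_S(0)=\Ha_{\el,0}+I\otimes\dG_0+g\Ha_{I,0}$. The two summands of $\dG_0$ strongly commute, so $e^{-t\dG_0}=e^{-t\dG(\omega(-i\nabla_y))}e^{-t\dG(-i\partial_{y_1})^2}$ holds exactly, with no Trotter limit, and this semigroup is positivity preserving.

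The paper upgrades this to positivity improving via \cite[Theorem \cRM{13}.44]{RS}, using that the vacuum is a simple, strictly positive ground state of $\dG_0$. Your own observation (injective positivity-preserving factor followed by a positivity-improving factor) would also give this directly here, precisely because the factors commute.

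Proposition~\ref{app:vector_valued:tensor_sum_pos_improving} then handles $\Ha_{\el,0}+I\otimes\dG_0$. The only remaining perturbation, $g\Ha_{I,0}$, \emph{is} a multiplication operator on $\R_{\xd}\times Q$ and is infinitesimally bounded relative to this free part, since $\dG_0\geq\dG(\omega(-i\nabla_y))$. So Corollary~\ref{app:vector_valued:sum_pos_improving} applies. Uniqueness and a.e.\ strict positivity then follow from \cite[Theorem \cRM{13}.44]{RS}, as you intended.
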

\begin{proof}
    From the preceding discussion, we obtain \eqref{sec:binding:inequality_abs_value_semigroup} and, consequently, \eqref{sec:binding:inequality_op_norm_semigroup} on the $Q$-space associated with the configuration Fock space. By unitary equivalence and the covariance of the functional calculus \cite[Theorem 1.32]{Arai}, the bound \eqref{sec:binding:inequality_op_norm_semigroup} carries over to our Hamiltonians on the original Fock space, which establishes the first claim. \\
    To establish the second claim, it suffices to show that $\Ha_S(0)$ generates a positivity improving semigroup; the result then follows from the existence of a ground state (Theorem \ref{sec:binding:theorem_grond_state_fibered_hamiltonian}) combined with \cite[Theorem \cRM{13}.44]{RS}. The proof that $e^{-t\Ha_S(0)}$ is positivity improving closely follows that of Lemma \ref{sec:binding:L-pos-improv-semigroup}. By setting $\dG_0 := \dG(\omega(-i \mspace{1mu}\nabla_y)) + \dG(-i \mspace{2mu} \partial_{y_1})^2$, and substituting it for $\dG(\omega(-i \mspace{1mu}\nabla_y))$ in the argument, one verifies, just as in Theorem \ref{sec:ess:prop_fiber_straight_hamiltonian}, that $\Ha_{I,0}$ is infinitesimally $\dG_0$-bounded. Consequently, the problem is reduced to proving that $\dG_0$ itself generates a positivity improving semigroup. \\
    Thanks to \eqref{sec:binding:eq_m(eta)_leq_M(0)}, we find that $\dG(-i \mspace{2mu} \partial_{y_1})^2$ generates a positivity preserving semigroup: 
    \begin{align*}
    \forall t > 0,\, \Hi' \owns \psi \geq 0, \: e^{-t \mspace{1mu} M(0)} \,\psi \geq \abs{\mspace{1mu} e^{-t \mspace{1mu} M(\eta)} \, \psi}  \geq 0. 
    \end{align*}
    Again, as in Theorem \ref{sec:ess:prop_fiber_straight_hamiltonian}, $\dG(\omega(-i \mspace{1mu}\nabla_y))$ and $\dG(-i \mspace{2mu} \partial_{y_1})^2$ strongly commute so using \cite[Corollary 1.7(ii)]{Arai} we find that $\dG_0$ is self-adjoint and lower semi-bounded (in particular positive), so that the associated semigroup makes sense. The joint functional calculus for strongly commuting operators (\cite[Corollary 1.7]{Arai}) yields 
    \begin{align*}
        e^{\mspace{1mu} - \mspace{1mu} t \mspace{3mu}  \dG_0} = e^{- \mspace{1mu} t \mspace{1mu}  \left( \dG(\omega(-i \mspace{1mu}\nabla_y)) \mspace{3mu}  + \mspace{3mu}   \dG(-i \mspace{2mu} \partial_{y_1})^2 \right) } = e^{\mspace{1mu} -\mspace{1mu} t \mspace{3mu}   \dG(\omega(-i \mspace{1mu}\nabla_y))} e^{\mspace{1mu} -\mspace{1mu} t\mspace{3mu}  \dG(-i \mspace{2mu} \partial_{y_1})^2  },
    \end{align*}
    and the composition of two positivity preserving semigroups is positivity preserving. To show that the semigroup is positivity improving, we use again \cite[Theorem \cRM{13}.44]{RS} and show that the vacuum $\Omega$ is a strictly positive and non-degenerate ground state. Since $\dG_0 \geq 0$ and $\dG_0 \, \Omega = 0$, we find that $\Omega$ is a ground state. Let $\psi$ be any eigenvector of $\dG_0$ associated to $0$. One has: 
    \begin{align*}
        0 = \ip{\psi}{\dG(\omega(-i \mspace{1mu}\nabla_y))\psi} + \norm{\dG(-i \mspace{2mu} \partial_{y_1})\psi}^2 \geq \ip{\psi}{\dG(\omega(-i \mspace{1mu}\nabla_y))\psi} = \norm{\dG(\omega(-i \mspace{1mu}\nabla_y))^{1/2}\psi}^2. 
    \end{align*}
    Thus, $\dG(\omega(-i\mspace{1mu}\nabla_y))^{1/2}\psi=0$, and hence $\dG(\omega(-i\mspace{1mu}\nabla_y))\psi=0$. By \cite[Theorem 5.4] {Arai}, this implies that $\psi=\mu\mspace{2mu}\Omega$ for some $\mu\in\C$, so $\Omega$ is non-degenerate. The vacuum is strictly positive in the $Q$-space representation (Proposition~\ref{app:fock:Q-space_isomorphism}(iv)), so $\dG_0$ and hence $\Ha_S(0)$ generate positivity improving semigroups (in the $Q$-space representation). The conclusion follows again from \cite[Theorem \cRM{13}.44]{RS}.
\end{proof}

\begin{remark}
    Observe that the assumption that the form factor $v$ is real-valued on the configuration space is only needed to show that $\eta_0$ may be chosen equal to $0$, but is not required to prove the existence of a minimizer, which is the only ingredient needed in the next subsection to prove that condition~\ref{sec:binding:statement_b} implies binding. However, if one wants to gain information on the ground state of $\Ha_S(\eta_0)$ (such as uniqueness), one may choose $\eta_0 = 0$ so that $\dG(\omega(-i \mspace{1mu}\nabla_y)) + (\eta_0-\dG(-i \mspace{2mu} \partial_{y_1}))^2$ and hence $\Ha_S(\eta_0)$ generate positivity improving semigroups. This is not possible if $\eta_0 \neq 0$, because of the complex term $e^{\mspace{2mu} i \mspace{2mu} \eta_0 \mspace{2mu} \xi}$ in \eqref{sec:binding:eq_expression_semigroup_m(eta)} which prevents the positivity preserving argument. 
\end{remark}

%===================================================================
%        Rigorous justification of the binding condition
%===================================================================

\subsection{Rigorous justification of the binding condition}\label{sec:binding:rigorous_computations_binding}

As explained at the beginning of the section, having established assertion \ref{sec:binding:statement_a}—which corresponds to Theorem~\ref{sec:binding:theorem_grond_state_fibered_hamiltonian}—we are now in a position to prove that condition \ref{sec:binding:statement_b} implies binding. This implication is the main result of this section.

\begin{theorem}[Binding condition]\label{sec:binding:theorem_binding_condition} 
    Assume \textup{\ref{h3_field_rotation_invariance}} and let $\Sigma := \Sigma_C = \Sigma_S$. Let $\eta_0$ be a minimizer of $\eta \mapsto E(\eta)$ and $\Psi \in L^2(\R_{\xd},\Fs)$ be a normalized ground state of $\Ha_S(\eta_0)$. Let $V_{\eff,\Psi}$ be the effective potential defined by:
    \begin{align*}
        V_{\eff,\Psi} := \int_{\R}(V_C-V_S)(\, \cdot \,  ,\xd)\snfs{\Psi(\xd)} d\xd.
    \end{align*}
    For all real-valued and normalized $u \in \D\big(-\partial_{\xu}^2+V_{\eff,\Psi}(X_1)\big)$, one has the inequality: 
    \begin{align*}
          \inf\sigma(\Ha_C) \leq \ipld{u}{\left(-\partial_{\xu}^2+V_{\eff,\Psi}(X_1)\right)u} + \Sigma. 
    \end{align*}
    Consequently, if $-\partial_{\xu}^2+V_{\eff,\Psi}(X_1)$ has a negative-energy state, then $\Ha_C$ has discrete spectrum below the essential one, and hence has a ground state. Equivalently: 
    \begin{align*}
        \sigma \big(-\partial_{\xu}^2+V_{\eff,\Psi}(X_1)\big) \: \cap \hspace{3pt}  ]-\infty, 0[ \hspace{4pt}  \neq \, \emptyset \hspace{3pt} \Longrightarrow \hspace{3pt} \sdisc(\Ha_C) \neq \emptyset. 
    \end{align*}
    Hence the condition $\sigma \big(-\partial_{\xu}^2+V_{\eff,\Psi}(X_1)\big) \: \cap \hspace{3pt}  ]-\infty, 0[ \hspace{4pt}  \neq \, \emptyset$ ensures binding, i.e. ensures $\Sigma > \inf \sigma(\Ha_C)$.
\end{theorem}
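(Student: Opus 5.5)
The plan is to rigorously carry out the formal computation sketched at the beginning of Section~\ref{sec:binding}, using the trial state
\begin{align*}
\Upsilon := \left( \int_{\R^2}^\oplus e^{\mspace{1mu} i \mspace{2mu} \eta_0 \mspace{2mu} \xu }  \, e^{-\mspace{2mu} i \mspace{3mu} \dG(K_1) \mspace{3mu} \xu } \: d\xu \mspace{2mu} d\xd  \right) u \otimes \Psi
\end{align*}
and the variational principle $\inf\sigma(\Ha_C)\leq \q_{\Ha_C}[\Upsilon]$. Here $\Psi$ is provided by Theorem~\ref{sec:binding:theorem_grond_state_fibered_hamiltonian}, with $\Ha_S(\eta_0)\Psi=\Sigma\Psi$ by (iii) and Theorem~\ref{sec:ess_spec:theorem_spectrum_straight_waveguide}.

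\textbf{Step 1: $\Upsilon$ lies in the form domain and is normalized.} The operator in front of $u\otimes\Psi$ is a direct integral of unitaries, so it is unitary; hence $\norm{\Upsilon}=\norm{u}\norm{\Psi}=1$. It is essentially $W^\ast$ composed with multiplication by $e^{i\eta_0\xu}$, where $W$ is the operator from Lemma~\ref{sec:ess_spec:lemma_diagonalization_ptot}. I would first work with $u\in\ciz(\R)$ real, and extend to general $u\in\D(-\partial_{\xu}^2+V_{\eff,\Psi}(X_1))$ at the end by density in $H^1$. Since $V_{\eff,\Psi}$ is bounded, this operator domain is $H^2(\R)$ and its form domain is $H^1(\R)$. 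With $\Psi\in\D(\Ha_S(\eta_0))\subset \D(P_1)\cap\D(\Ha_f)$, Proposition~\ref{app:func:derivation_unitary_direct_int} gives $\partial_{\xu}\Upsilon = W^\ast e^{i\eta_0\xu}\big((\partial_{\xu}u + i u(\eta_0 - P_1))\otimes\Psi\big)$. The remaining form-domain components ($\partial_{\xd}$, $\dG(\omega(K))^{1/2}$) commute with the unitary, so $\Upsilon\in\Q(\Ha)$ by Proposition~\ref{prelim:prop:general_quadratic_form}.

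\textbf{Step 2: compute $\q_{\Ha_S}[\Upsilon]$.} I would use the relation $\Ha_C=\Ha_S+(V_C-V_S)(X)\otimes I$. Both operators share the domain $\D(-\Delta\otimes I)\cap\D(\Ha_f)$, since $V_C,V_S$ are bounded (Proposition~\ref{sec:prelim:domain_splitting_and_cores}). This identity is in Cartesian coordinates and needs no rotation argument.

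For the kinetic term, expand
\begin{align*}
\norm{(\partial_{\xu}u+iu(\eta_0-P_1))\otimes\Psi}^2 = \norm{u'}^2 + \norm{(\eta_0-P_1)\Psi}^2 + 2\,\mathrm{Re}\big(i\,\langle u' ,u\rangle\,\langle\Psi,(\eta_0-P_1)\Psi\rangle\big).
\end{align*}
Because $u$ is real, $\langle u',u\rangle=\tfrac12\int (u^2)'=0$, so the cross term vanishes. This is precisely where real-valuedness of $u$ enters.

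The interaction is conjugated by $\Gamma(e^{-iK_1\xu})$ into $\phi(v_{(0,\xd)})$, as in the proof of Theorem~\ref{sec:ess:prop_fiber_straight_hamiltonian}. The transverse part and $\Ha_f$ are untouched. Collecting terms,
\begin{align*}
\q_{\Ha_S}[\Upsilon]=\norm{u'}^2+\langle\Psi,\Ha_S(\eta_0)\Psi\rangle=\norm{u'}^2+\Sigma.
\end{align*}

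\textbf{Step 3: the potential difference.} The operator $(V_C-V_S)(X)\otimes I$ commutes with the fibrewise Fock unitaries, so
\begin{align*}
\langle\Upsilon,(V_C-V_S)\Upsilon\rangle=\int\!\!\int (V_C-V_S)(\xu,\xd)\,u(\xu)^2\snfs{\Psi(\xd)}\,d\xd\,d\xu=\langle u,V_{\eff,\Psi}u\rangle,
\end{align*}
by Fubini, which applies since $V_C-V_S$ is bounded and $\Psi\in L^2$. Adding this to Step~2 gives the claimed inequality.

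\textbf{Step 4: consequences.} Suppose the spectrum of $-\partial_{\xu}^2+V_{\eff,\Psi}$ meets $]-\infty,0[$. Then the variational principle yields a normalized $u$ with negative energy, and this $u$ can be taken real because the operator commutes with complex conjugation (take the real or imaginary part of a spectral approximant). Hence $\inf\sigma(\Ha_C)<\Sigma$. Theorem~\ref{sec:ess_spec:HVZ_theorem_curved_waveguide} then gives $\inf\sess(\Ha_C)=\min(\Sigma,E_C+m)>E_C$, so $E_C$ is a discrete eigenvalue and $\Ha_C$ has a ground state.

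\textbf{Main obstacle.} I expect the main difficulty to be Step~1 together with the kinetic computation. One must rigorously justify how $\partial_{\xu}$ passes through the $\xu$-dependent Fock unitary on a state that is not in the algebraic core. The route I would try first is to approximate $\Psi$ in the graph norm of $\Ha_S(\eta_0)$ by elements of $\D(\ha_{\el,0})\alten\Fs^\infty$ and $u$ by $\ciz$ functions. I would then compute on these approximants and pass to the limit using the form-norm continuity of $\q_{\Ha_C}$.
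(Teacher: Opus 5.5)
Your proposal is correct and follows essentially the same route as the paper. It uses the same trial state $\Upsilon = W^\ast e^{i\eta_0 X_1}(u\otimes\Psi)$, the splitting $\Ha_C=\Ha_S+(V_C-V_S)(X)$, the cancellation of the cross term $2\,\mathrm{Re}\bigl(i\langle u',u\rangle\langle\Psi,(\eta_0-P_1)\Psi\rangle\bigr)$ for real $u$, Fubini for $V_{\eff,\Psi}$, and the HVZ theorem together with the complex-conjugation argument for the consequences. The only difference is technical: you verify $\Upsilon\in\Q(\Ha)$ and compute the quadratic form via first derivatives (Proposition~\ref{app:func:derivation_unitary_direct_int}), whereas the paper shows $\Upsilon\in\D(\Ha_S)$ by conjugating $(p_1-P_1)^2$ with $e^{i\eta_0X_1}$ and using Proposition~\ref{app:op:standard_properties}(ii). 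Since Proposition~\ref{app:func:derivation_unitary_direct_int} already applies directly to $e^{i\eta_0X_1}u\otimes\Psi$ with $u\in H^1(\R)$ and $\Psi\in\D(P_1)$, the approximation argument you flag as the main obstacle is unnecessary.
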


\begin{proof}
    The idea behind the proof was explained at the beginning of Section~\ref{sec:binding}, so we focus here on the computations. Thanks to Propositions \ref{sec:prelim:domain_splitting_and_cores}(i) and \ref{sec:prelim:domain_splitting_and_cores}(ii), we can write:
    \begin{align*}
        \Ha_C = \Ha_S +  (V_C-V_S)(X) \hspace{0.5cm} \text{and } \:\: \D(\Ha_S) = \D(\Ha_C).
    \end{align*}
    We know that $\eta_0$ and $\Psi$ exist as defined thanks to Theorem \ref{sec:binding:theorem_grond_state_fibered_hamiltonian} and we know that $\Sigma_C = \Sigma_S := \Sigma$ from Proposition \ref{sec:ess_spec:prop_equality_ionization_thresholds}. Observe that $V_{\eff,\Psi}$ is bounded by $2 \norm{V}_\infty \norm{\Psi}^2$, so that it is infinitesimally $-\partial_{\xu}^2$-bounded. Then $\D\big(-\partial_{\xu}^2+V_{\eff,\Psi}(X_1)\big) = \D(-\partial_{\xu}^2) =  H^2(\R_{\xu})$, so let $u \in H^2(\R_{\xu})$ be real-valued and normalized; we introduce the following normalized trial state:
    \begin{align*}
        \Upsilon := \left( \int_{\R^2}^\oplus   \, e^{ \mspace{2mu} i \mspace{2mu} (\eta_0 - \dG(K_1)) \mspace{3mu} \xu } \: d\xu \mspace{1mu} d\xd  \right) u \otimes \Psi =  W^\ast \mspace{2mu} e^{\mspace{2mu} i \mspace{2mu} \eta_0 \mspace{1mu} X_1}  \mspace{2mu} ( u \otimes \Psi),
    \end{align*}
    where $W = \int_{\R^2}^\oplus \Gamma\big(e^{i K_1 \xu}\big) d\xu d\xd$ is the operator from the proof of Lemma \ref{sec:ess_spec:lemma_diagonalization_ptot}. Let us justify that $\Upsilon \in \D(\Ha_S)$, so let $\Phi = e^{\mspace{2mu} i \mspace{2mu} \eta_0 \mspace{1mu} X_1}  \mspace{2mu} ( u \otimes \Psi)$ and let us show that $ \Phi \in \D(W \mspace{2mu} \Ha_S \mspace{3mu} W^\ast)$ so that $W^\ast \Phi \in \D(\Ha_S)$. As in the proof of Theorem \ref{sec:ess:prop_fiber_straight_hamiltonian}, we find:
    \begin{align*}
        W \mspace{2mu} \Ha_S \mspace{3mu} W^\ast = (p_1 - P_1)^2 + (-\partial_{x_2}^2 + V(X_2))\otimes I + \Ha_f  +  g\int_{\R^2}^\oplus \phi\left(v_{(0,x_2)}\right) d\xu \, d\xd.
    \end{align*}
    Because $e^{\mspace{2mu} i \mspace{2mu} \eta_0 \mspace{1mu} X_1} u $ does not live in the Fock space and only depends on $\xu$, $\Psi \in \D(\Ha_S(\eta_0)) \subset H^2(\R_{\xd},\Fs) \cap L^2\hspace{-2pt}\left(\R_{\xd},\D\big(\dG(\omega(K))\big) \right)$ (by Theorem \ref{sec:ess:prop_fiber_straight_hamiltonian}) and does not depend on $\xu$, we have
    \begin{align*}
        \Phi \in \D\hspace{-2pt}\left((-\partial_{\xd}^2 + V(X_2)) \otimes I\right) \cap \D(\Ha_f).
    \end{align*}
   As in Theorem \ref{sec:ess:prop_fiber_straight_hamiltonian}, the interaction does not alter the domain, so the non-trivial point is to show $\Phi \in \D\hspace{-2pt}\left((p_1 - P_1)^2\right)$. One sees that the unitary operator $e^{\mspace{2mu} i \mspace{2mu} \eta_0 \mspace{1mu} X_1}$ only acts on $L^2(\R_{\xu})$ so that: 
    \begin{align*}
         e^{-\mspace{2mu} i \mspace{2mu} \eta_0 \mspace{1mu} X_1} \mspace{2mu}(p_1-P_1)^2 \mspace{3mu} e^{\mspace{2mu} i \mspace{2mu} \eta_0 \mspace{1mu} X_1} = (p_1+\eta_0-P_1)^2.
    \end{align*}
     Observe that $u \mspace{2mu} \Psi \in \D\hspace{-2pt}\left(p_1^2\right) \cap \D\hspace{-2pt}\left((\eta_0-P_1)^2\right)$ since $u \in H^2(\R_{\xu})$, $\Psi$ does not depend on $\xu$ and $ \Psi \in \D(\Ha_S(\eta_0)) \subset \D\hspace{-2pt}\left( (\eta_0-P_1)^2 \right)$ (Theorem \ref{sec:ess:prop_fiber_straight_hamiltonian}). Combining these observations with Proposition \ref{app:op:standard_properties}(ii) ($p_1$ and $P_1$ strongly commute) yields $u\mspace{1mu}\Psi \in  \D\hspace{-2pt}\left( e^{-\mspace{2mu} i \mspace{2mu} \eta_0 \mspace{1mu} X_1} \mspace{2mu}(p_1-P_1)^2 \mspace{3mu} e^{\mspace{2mu} i \mspace{2mu} \eta_0 \mspace{1mu} X_1} \right)$ and hence:
    \begin{align*}
        u\mspace{1mu}\Psi \in e^{\mspace{1mu} - \mspace{2mu} i \mspace{2mu} \eta_0 \mspace{1mu} X_1} \D\hspace{-2pt}\left(  \left(p_1-P_1\right)^2    \right). 
    \end{align*}
    So we have $\Phi \in  \D\left( W \mspace{1mu} \Ha_S \mspace{3mu} W^\ast  \right)$ and hence $\Upsilon \in \D(\Ha_S)$. We can now compute, using again Proposition \ref{app:op:standard_properties}(ii):
    \begin{align*}
        \ipldfs{u\mspace{2mu}\Psi}{(p_1 + \eta_0 - P_1)^2 \mspace{1mu} u \mspace{2mu} \Psi} &= \ipldfs{u\mspace{2mu}\Psi}{\left(p_1^2 + (\eta_0 - P_1)^2 + 2 p_1 (\eta_0-P_1)\right) \mspace{1mu} u \mspace{2mu} \Psi} \\
        &= \ipld{u}{-\partial_{\xu}^2 u} \snldfs{\Psi} + \nld{u}^2 \ipldfs{\Psi}{(\eta_0-P_1)^2\mspace{1mu}\Psi} \\
        &\phantom{=} + 2\ipld{u}{-i \mspace{2mu} \partial_{\xu} u }\ipldfs{\Psi}{(\eta_0-P_1)\Psi}.
    \end{align*}
    For $u \in H^1(\R_{\xu})$, one has $\Re \left(\int_\R \cc{u} \, u'\right)  = 0$. If $u$ is additionally  real-valued, as assumed, it means that $\ipld{u}{-i \mspace{2mu} \partial_{\xu} u } = 0$. For the other terms, we use that the variables separate to finally get: 
    \begin{align*}
        \ipldfs{\Upsilon}{\Ha_S \Upsilon} &= \ipldfs{u \mspace{2mu} \Psi}{e^{-\mspace{2mu} i \mspace{2mu} \eta_0 \mspace{1mu} X_1} \mspace{2mu} W \mspace{2mu} \Ha_S \mspace{3mu} W^\ast \mspace{3mu} e^{\mspace{2mu} i \mspace{2mu} \eta_0 \mspace{1mu} X_1} \mspace{2mu} u \mspace{2mu}\Psi} \\
        &= \ipld{u}{-\partial_{\xu}^2 u} \snldfs{\Psi} + \nld{u}^2 \ipldfs{\Psi}{\Ha_S(\eta_0)\mspace{1mu}\Psi}.
    \end{align*}
    It is easily seen that $e^{-\mspace{2mu} i \mspace{2mu} \eta_0 \mspace{1mu} X_1} \mspace{2mu} W \mspace{1mu} ( V_C - V_S)(X) \mspace{3mu} W^\ast \mspace{3mu} e^{\mspace{2mu} i \mspace{2mu} \eta_0 \mspace{1mu} X_1} \mspace{2mu} = ( V_C - V_S)(X)$, so that:
    \begin{align*}
        \ipldfs{\Upsilon}{( V_C - V_S)(X) \Upsilon } &= \ipldfs{\hspace{1pt} u \mspace{1mu} \Psi\hspace{1pt}}{\hspace{1pt}( V_C - V_S)(X) \hspace{2pt} u \mspace{1mu} \Psi } \\
        &=  \int_{\R^2}  \abs{\mspace{2mu} u(\xu)}^2 \mspace{2mu}(V_C-V_S) ( \xu  ,\xd)\snfs{\Psi(\xd)} \, d\xu \, d\xd \\
        &=  \int_{\R}  \abs{\mspace{2mu} u(\xu)}^2 \mspace{4mu}V_{\eff,\Psi}(\xu) \: d\xu  = \ipld{u\mspace{1mu}}{\mspace{1mu} V_{\eff,\Psi}(X_1) \: u }.
    \end{align*}
    Finally, it means: 
    \begin{align*}
        \ipldfs{\Upsilon}{\Ha_C \Upsilon} &= \ipldfs{\Upsilon}{\Ha_S \Upsilon} + \ipldfs{\Upsilon}{( V_C - V_S)(X) \, \Upsilon} \\
        &= \ipld{u}{-\partial_{\xu}^2 u} \snldfs{\Psi} + \nld{u}^2 \ipldfs{\Psi}{\Ha_S(\eta_0)\mspace{1mu}\Psi} +  \ipld{u\mspace{1mu}}{\mspace{1mu} V_{\eff,\Psi}(X_1) \: u } \\
        &= \ipld{u}{\left(-\partial_{\xu}^2+V_{\eff,\Psi}(X_1)\right)u} + \ipldfs{\Psi}{\Ha_S(\eta_0)\mspace{1mu}\Psi},
    \end{align*}
    since $u$, $\Psi$ are assumed to be normalized. Moreover, by assumption on $\eta_0$ and using that $\underset{\eta \, \in \, \R}{\min}\: E(\eta) = \Sigma$ (see Theorem \ref{sec:ess_spec:theorem_spectrum_straight_waveguide}) and that $\Psi$ is a ground state: 
    \begin{align*}
        \inf \sigma(\Ha_C) \leq \ipldfs{\Upsilon}{\Ha_C \Upsilon}  = \ipld{u}{\left(-\partial_{\xu}^2+V_{\eff,\Psi}(X_1)\right)u} + \Sigma.
    \end{align*}
    This is the claimed inequality. It is seen that if $\ipld{u}{\left(-\partial_{\xu}^2+V_{\eff,\Psi}(X_1)\right)u} < 0$, then $\inf\sigma(\Ha_C) <  \Sigma$ and $\Ha_C$ has non-empty discrete spectrum (thanks to the HVZ Theorem \ref{sec:ess_spec:HVZ_theorem_curved_waveguide}) and hence a ground state.  However, notice that we assumed that $u$ is real-valued, which seems to restrain the possibilities for the trial states. This is not a problem because the operator $-\partial_{\xu}^2+V_{\eff,\Psi}(X_1)$ commutes with the complex conjugation since the potential is real-valued, yielding, for $f = u + i v$:
    \begin{align*}
        \ipld{f}{(-\partial_{\xu}^2+V_{\eff,\Psi}(X_1))f} =  \ipld{u}{(-\partial_{\xu}^2+V_{\eff,\Psi}(X_1))u} + \ipld{v}{(-\partial_{\xu}^2+V_{\eff,\Psi}(X_1))v}.     
    \end{align*}
    Consequently, for any $f$ complex-valued such that $\ipld{f}{(-\partial_{\xu}^2+V_{\eff,\Psi}(X_1))f}<0$, we have either $\ipld{u}{(-\partial_{\xu}^2+V_{\eff,\Psi}(X_1))u} < 0$ or $\ipld{v}{(-\partial_{\xu}^2+V_{\eff,\Psi}(X_1))v} < 0$, with $u$, $v$ real-valued. So the existence of a negative-energy state for $-\partial_{\xu}^2+V_{\eff,\Psi}(X_1)$ ensures the existence of a real-valued negative-energy state for $-\partial_{\xu}^2+V_{\eff,\Psi}(X_1)$. This concludes the proof.
\end{proof}

% À FAIRE DANS CETTE SECTION

%%%%%

%%%%%% 

%================================================================
% --- APPENDICES ---
%================================================================
\appendix
\addtocontents{toc}{\protect\setcounter{tocdepth}{1}} %pour ne pas afficher les sous-sections des annexes 

% Annexe A
\section{Curvilinear coordinates and cut locus}
\label{app:curve}

This appendix is devoted to detailing the geometry of the waveguide, with an emphasis on the switch to curvilinear coordinates and the conditions ensuring the bijectivity of the latter. In particular, this leads us to introduce the notion of cut locus. \textit{Throughout this appendix, $\gamma$ denotes the $\mathscr{C}^4$, unit-speed, non-self-intersecting curve introduced in Subsection~\ref{subsec:pres_waveguide}.}

%===============================================================
%======== Curvilinear coordinates in the waveguide
%===============================================================

\subsection{Curvilinear coordinates in the waveguide}\label{app:curve:curv_coord}

Let $\gamma : \R \to \R^2$ be the curve described in Subsection \ref{subsec:pres_waveguide} and parameterized by its arc length $s$. The signed curvature $\kappa(s)$ governs the angular variation of the unit tangent vector $\dot{\gamma}(s)$: denoting by $\beta$ the angle between $\dot{\gamma}$ and the $x_1$-axis (clockwise-oriented), we have $\kappa(s) = \dot{\beta}(s)$. 

\begin{figure}[H]
    \centering
    \resizebox{0.72\textwidth}{!}{% [inline block 3: 1 envs, 36467 chars -> data_tex | \begin{tikzpicture}[scale=1.3, transform shape]   \begin{axis}[...]

 }
    \caption{Definition of the signed curvature $\kappa$ of a curve $\gamma$.}
    \label{signed_curvature}
\end{figure}

\noindent Up to Euclidean transformations, $\gamma$ is uniquely determined by $\kappa$. Indeed, for all $s_0$, $s \in \R$, one sets $\beta(s,s_0) := \int_{s_0}^s \kappa(t) \, dt$ and $\gamma(s) = \gamma(s_0) + \left(\int_{s_0}^{s}\cos\beta(t,s_0) dt, - \int_{s_0}^{s}\sin\beta(t,s_0)dt\right) $.

\smallbreak

To parameterize the strip $\Omega^a$, we introduce the curvilinear coordinates $(s,u) \mapsto \gamma(s) + u N(s)$, where $s$ is the arc length, $u$ the normal distance, and $N(s) = (-\dot{\gamma}_2(s), \dot{\gamma}_1(s))$ the unit normal vector:

\begin{figure}[H]
    \centering
    \resizebox{0.72\textwidth}{!}{% [inline block 4: 1 envs, 35902 chars -> data_tex | \begin{tikzpicture}[scale=1.3, transform shape]   \begin{axis}[...]

 }
    \caption{Local curvilinear coordinates.}
    \label{curvilinear_coordinates}
\end{figure}

With these variables, the bent strip is defined as $\Omega^a = \ens{\varphi(s,u)}{s\in \R,\, u \in ]-a,a[\,}$. To prevent self-intersections, each point in $\Omega^a$ must correspond to a unique pair $(s,u)$, meaning $\varphi$ must be a global diffeomorphism from $\R \times ]-a,a[$ to $\Omega^a$. While the condition $a \norm{\kappa}_{\infty} < 1$ ensures $\varphi$ is a local diffeomorphism (which can be verified using its Jacobian), global injectivity is enforced by Hypothesis~\ref{h3_geom_no_self_intersection}. Under this assumption, we can formally identify the curved waveguide $\Omega^a$ with the straight strip $\Omega^a_0 := \R\times]-a,a[$.\\
As an illustration, we construct the strip associated with the curve of the previous figures:

\begin{figure}[H]
    \centering
    \resizebox{0.72\textwidth}{!}{% [inline block 5: 1 envs, 70170 chars -> data_tex | \begin{tikzpicture}[scale=1.25, transform shape]   \begin{axis}[...]

 }
    \caption{Infinite strip $\Omega^a$ of width $2a$ with a compactly supported curvature $\kappa$.}
    \label{curved_waveguide}
\end{figure}

%===============================================================
%======== Cut locus
%===============================================================

\subsection{Cut locus of a planar curve}\label{app:curve:cutlocus}

The finite-strip diffeomorphism $\varphi:\Omega_0^a\to\Omega^a$ is sufficient for hard waveguides, where the particle is strictly confined to $\Omega^a$. In our soft waveguide model, however, quantum tunneling allows the particle to explore the whole plane, requiring the free Laplacian $-\Delta$ to be defined on $L^2(\R^2)$. Consequently, we need curvilinear coordinates not only in the strip but on all of $\R^2$ (possibly up to a negligible set). The purpose of this subsection is precisely to construct the maximal coordinate domain $\mathcal{O}_\gamma$ introduced in Subsection~\ref{subsec:pres_waveguide}, and to justify the global straightening $U_{\mathrm{straight}} : L^2(\R^2)\longrightarrow L^2(\mathcal{O}_\gamma)$.

\smallbreak

The geometric obstruction to this global extension is that normal coordinates are not everywhere one-to-one: a point in the plane may have several shortest normal projections onto the curve. This leads naturally to the cut locus of $\gamma$. By removing this negligible set, one obtains a domain $\mathcal{O}_\gamma$ on which the normal-coordinate map $\varphi$ remains a diffeomorphism onto $\R^2\setminus\Cut(\gamma)$. We recall below the construction in the planar case, referring the reader to \cite{cutlocus} for a clear account, and to \cite{riemm_geo_diff, thesis_riemm_geom} for the general Riemannian framework.

\smallbreak 

The intuition behind this construction is simple. A point $x\in\R^2$ has
normal coordinates $(s,u)$ if it lies on the normal line to $\gamma$ at
$\gamma(s)$, namely if $x=\gamma(s)+uN(s)=\varphi(s,u)$.
The difficulty is that the same point may be reached from several normal lines, so that $\varphi$ may fail to be injective. We therefore follow each half-normal line only as long as the starting point $\gamma(s)$ remains a closest point of the curve to the point reached. Since $N(s)$ has norm $1$, this is equivalent to $\dist(\varphi(s,\pm u),\gamma)=u$. When this identity ceases to hold, the point reached from $\gamma(s)$ is closer to another part of the curve, and the normal coordinates have reached their obstruction. This leads to the definition of the cut radii:
\begin{equation}\label{cut_radius}
    c_\pm(s)
    := \sup\ens{u>0}{\dist\left(\varphi(s,\pm u),\gamma\right)=u}.
\end{equation}
For $0<u<c_\pm(s)$, the point $\varphi(s,\pm u)$ is still described by the
half-normal line issued from $\gamma(s)$. At $u=c_\pm(s)$, this description reaches its endpoint: keeping the half-normal line beyond this value would no longer give the desired one-to-one normal coordinates. If $c_\pm(s)=+\infty$, the corresponding half-normal line never meets such an obstruction.

\begin{figure}[H]
    \centering
    \resizebox{0.68\textwidth}{!}{\begin{tikzpicture}
\begin{axis}[
  axis equal image,
  axis lines=middle,
  xmin=-1.2,xmax=1.2, ymin=-0.1,ymax=1.5,
  xlabel={$x_1$}, ylabel={$x_2$},
  domain=-1.5:1.5, samples=400,
  width=12cm, height=8cm,
  ticks=none
]
% Courbe y = x^2
\addplot[thick]{x^2};

% Normales en plusieurs points
\foreach \a in {-0.75,0.5,0.75, 0.9}{
  \pgfmathsetmacro{\ya}{\a*\a}
  \pgfmathsetmacro{\mn}{-1/(2*\a)}
  \addplot[blue,thin,domain=-1:1]{\mn*(x-\a)+\ya};
}

% Points d'intersection
\node[below, yshift = -2pt] at (-0.1875,0.9375) {$x_a$} ;
\fill (-0.1875,0.9375) circle[radius=1pt];
\node[left, xshift = -2pt] at (0,1.0625) {$x_b$};
\fill (0,1.0625) circle[radius=1pt];
\node[above, yshift = 2pt] at (0.2025,1.1975) {$x_c$};
\fill (0.2025,1.1975) circle[radius=1pt];

% Points de la courbe
\node[left, xshift = -2pt] at (-0.75,0.5625) {$s_1$};
\fill (-0.75,0.5625) circle[radius=1pt];
\node[right, xshift = 2pt] at (0.5,0.25) {$s_2$};
\fill (0.5,0.25) circle[radius=1pt];
\node[right, xshift = 2pt] at (0.75,0.5625) {$s_3$};
\fill (0.75,0.5625) circle[radius=1pt];
\node[above, yshift = 2pt, xshift = -3pt] at (0.9,0.81) {$s_4$};
\fill (0.9,0.81) circle[radius=1pt];

% Tracé des abscisses u
\draw[<->] (-0.75,0.6025) -- (-0.2275,0.9505) node [below, xshift = -1cm, yshift = -2pt, scale = 1]{$u_1$};
\draw[<->] (0.2025,1.1675) -- (0.87,0.795) node [left, xshift = -1cm, yshift = 0.4cm, scale = 1]{$u_3$};
\draw[<->] (0.025,1.010) -- (0.72,0.5425) node [left, xshift = -1cm, yshift = 0.4cm, scale = 1]{$u_2$};

\end{axis}
\end{tikzpicture}}
    \caption{Illustration of the cut locus construction. The point $x_a$ is reached from the normal line issued from $\gamma(s_1)$, so $x_a=\varphi(s_1,u_1)$. The point $x_b$ is reached by two shortest normal lines, issued from $\gamma(s_1)$ and $\gamma(s_3)$; hence $x_b\in\Cut(\gamma)$ and $c_+(s_1)=c_+(s_3)=u_2$. For $u>c_+(s_1)$, the point $\varphi(s_1,u)$ is no longer described by the half-normal line issued from $\gamma(s_1)$ in the one-to-one coordinate system; for instance, $x_c$ is described as $x_c=\varphi(s_4,u_3)$. In this example, the negative cut radii satisfy $c_-(s_1)=c_-(s_2)=c_-(s_3)=c_-(s_4)=+\infty$.}
    %%
    %RAJOUTER LA COURBE DANS LA LEGENDE
    %%
\end{figure}

The \textit{cut locus} of the curve is precisely the set of points reached at the cut radii:
\begin{equation}\label{def_cut_locus}
    \Cut(\gamma)
    =
    \ens{\varphi(s,c_+(s))}{s \in \R,\ c_+(s)<+\infty}
    \cup
    \ens{\varphi(s,-c_-(s))}{s \in \R,\ c_-(s)<+\infty}.
\end{equation}
This leads to the maximal open domain of normal coordinates:
\begin{align*}
    \mathcal{O}_\gamma
    :=
    \ens{(s,u)}{s \in \R, \; -c_-(s)<u<c_+(s)}.
\end{align*}
Thus $\mathcal{O}_\gamma$ is the maximal domain on which the normal-coordinate map
$\varphi(s,u)=\gamma(s)+uN(s)$ defines a global diffeomorphism
$\varphi : \mathcal{O}_\gamma \longrightarrow \R^2 \setminus \Cut(\gamma)$.
Note that Hypothesis \ref{h3_geom_no_self_intersection} precisely ensures that
$\Omega^a \subset \mathcal{O}_\gamma$, so that the strip supporting the waveguide
potential avoids the cut locus. \\
Finally, because the cut locus has Lebesgue measure zero in $\R^2$ (see, e.g., \cite[Proposition 2.1.16]{thesis_riemm_geom}), we have the canonical identification $L^2(\R^2 \setminus \Cut(\gamma)) \cong L^2(\R^2)$. This rigorously justifies the straightening operator $U_\text{straight}:L^2(\R^2)\longrightarrow L^2(\mathcal{O}_\gamma)$.

\bigbreak

We hope to have clarified the concept of cut locus for the reader. To end this appendix, we provide a proof of the claim we make in Hypothesis \ref{h4_geom_not_u_shaped}: 
\begin{proposition}\label{prop:appendixA_cut_radius_infty}
Assume \textup{\ref{h2_geom_compact_curvature}} and that the waveguide is not U-shaped. Then:
\begin{align*}
    c_\pm(s) \underset{\abs{s} \to + \infty}{\longrightarrow} + \infty.
\end{align*}
\end{proposition}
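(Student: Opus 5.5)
Fix the sign $+$ and let $s\to+\infty$; the cases $c_-$ and $s\to-\infty$ are symmetric. The plan is to show directly that $\dist(\varphi(s,u),\gamma)=u$ for all $0<u\le M$ once $s$ is large, for any prescribed $M>0$. By the definition \eqref{cut_radius} this gives $c_+(s)\ge M$, hence the claim. By \ref{h2_geom_compact_curvature}, pick $s_1$ with $\supp\kappa\subset[-s_1,s_1]$. For $s\ge s_1$, the curve is the half-line $\gamma(s)=\gamma(s_1)+(s-s_1)T_+$ with a fixed unit vector $T_+$ and constant normal $N_+$. Fix $u\in\,]0,M]$ and set $x=\varphi(s,u)=\gamma(s)+uN_+$. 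We have $\dist(x,\gamma)\le\abs{x-\gamma(s)}=u$ trivially, so the work is to show $\abs{x-\gamma(t)}\ge u$ for every $t\in\R$.

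We split the parameters $t$ into three ranges. For $t\ge s_1$, the point $\gamma(t)$ lies on the straight branch, and Pythagoras gives $\abs{x-\gamma(t)}^2=(t-s)^2+u^2\ge u^2$. For $t\le s_1$ we use \ref{h4_geom_not_u_shaped}, reading it as a uniform statement: for every $L>0$ there exists $D(L)$ such that $\abs{s'-s''}\ge D(L)$ implies $\abs{\gamma(s')-\gamma(s'')}\ge L$. With this, $\abs{x-\gamma(t)}\ge\abs{\gamma(s)-\gamma(t)}-u\ge\abs{\gamma(s)-\gamma(t)}-M$. Whenever $s-t\ge D(2M)$, this is at least $M\ge u$. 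If instead $s\ge s_1+D(2M)$, then every $t\le s_1$ satisfies $s-t\ge D(2M)$, so this single lower bound covers all $t\le s_1$ at once. Hence, for $s\ge s_1+D(2M)$, $\dist(\varphi(s,u),\gamma)=u$ for all $u\le M$, which gives $c_+(s)\ge M$.

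The main obstacle is the uniformity used above. The hypothesis ``$\abs{\gamma(s)-\gamma(s')}\to+\infty$ as $\abs{s-s'}\to+\infty$'' must be understood as a statement uniform in the pair, not as a pointwise limit for one fixed $s'$. I will state this reading explicitly. If one insists on a weaker reading, the fallback is to split again. For $t\in[-s_1,s_1]$ the compact piece $\gamma([-s_1,s_1])$ is at distance at least $\abs{\gamma(s)}-C\to\infty$ from $\gamma(s)$. For $t\le-s_1$ the other half-line $\gamma(-s_1)+(t+s_1)T_-$ is used. Two non-parallel half-lines, or parallel ones that are not U-shaped (i.e.\ $T_-\neq -T_+$ with the same supporting direction), separate linearly, so $\abs{\gamma(s)-\gamma(t)}\ge c\,(s+\abs{t})-C$ for some $c>0$. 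The degenerate U-shaped configuration, $T_-=-T_+$ with bounded offset, is exactly the one excluded by \ref{h4_geom_not_u_shaped}. This elementary two-half-lines geometry is the step needing care.
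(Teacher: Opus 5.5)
Your proof is correct, and it takes a genuinely different route from the paper.

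\textbf{The paper's route.} It argues by contradiction. It assumes $c_+(s_n)\le M$ along a sequence $s_n\to+\infty$ and looks at the cut point $x_n=\varphi(s_n,c_+(s_n))$. It then asserts that $x_n$ is also reached, at distance $c_+(s_n)$, along the normal from a second foot point $\gamma(s_n')$. Uniqueness of the orthogonal projection onto the ray $\gamma([S,+\infty[)$ forces $s_n'\le S$. Finally, $\abs{\gamma(s_n)-\gamma(s_n')}\le 2M$ together with $\abs{s_n-s_n'}\to+\infty$ contradicts \ref{h4_geom_not_u_shaped}.

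\textbf{Your route.} You verify the defining identity $\dist(\varphi(s,u),\gamma)=u$ directly. You use the same two ingredients: Pythagoras on the straight ray (the paper's projection-uniqueness step), and the not-U-shaped bound for parameters off the ray.

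\textbf{What each buys.} Your argument is quantitative: it actually gives $c_+(s)\ge\frac12\inf_{t\le s_1}\abs{\gamma(s)-\gamma(t)}$ for $s\ge s_1$. It also bypasses the structural claim about cut points that the paper invokes ``by definition'', namely the existence of a second minimizing normal at $\varphi(s_n,c_+(s_n))$. Strictly speaking, that claim needs a compactness argument plus the absence of focal points on the straight part. The paper's route is shorter to write and stays within the cut-locus picture, but it depends on that claim.

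\textbf{On your reading of \ref{h4_geom_not_u_shaped}.} The uniform reading you adopt is exactly the one the paper's own proof uses, since it applies the hypothesis to a sequence of pairs $(s_n,s_n')$. You can state it without hedging.

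Your fallback paragraph is unnecessary and somewhat misleading. Under a merely pointwise reading (for each fixed $s'$, $\abs{\gamma(s)-\gamma(s')}\to+\infty$), a curve whose two straight branches are antiparallel at a fixed offset $d$ is not excluded. For that curve the inner cut radii stay bounded by $d/2$, so the proposition itself fails. The uniform reading is therefore necessary, not just convenient, and no ``split again'' argument can rescue a weaker reading. Your two-half-lines discussion really proves the result under the geometric assumption that the branches are not antiparallel, which is a different hypothesis.
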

\begin{proof}
    The idea is that, because of the assumption of compact localization of the curvature, for $s$ large enough, the geometry is straight and does not locally induce any new cut locus, and hence does not change $c_\pm$. Consequently, since the distance between the branches grows to infinity, we should find that so do the cut radii maps. \\
    Suppose, for contradiction, that it is not the case. Then $ \exists (s_n)_{n \in \N}$, $M>0$ such that $s_n \underset{n \to +\infty}{\longrightarrow} +\infty$ and $c_+(s_n) \leq M$ for all $n$. For all $n$, set $x_n = \varphi(s_n,c_+(s_n))$; by definition of the cut locus and the cut radii maps, there exists another arc length $s'_n \neq s_n$ such that the distance from $\gamma(s'_n)$ to $x_n$ along its normal line is also $c_+(s_n)$, i.e. $x_n =\varphi(s'_n,\epsilon_nc_+(s_n))$ where $\epsilon_n = \pm 1$. One finds, again by definition: 
    \begin{align*}
        \abs{\gamma(s_n)-\gamma(s'_n)} \leq \abs{\gamma(s_n)-x_n} +\abs{x_n - \gamma(s'_n)} \leq 2c_+(s_n) \leq 2M < +\infty.
    \end{align*}
    To say that this is a contradiction (in the limit $n\to +\infty$) with the not-U-shaped assumption, we need to ensure that $\abs{s_n-s'_n} \underset{n \to +\infty}{\longrightarrow} +\infty$. Using Assumption~\ref{h2_geom_compact_curvature}, let $S>0$ such that $\supp \, \kappa \subset [-S,S]$, so that $\gamma([S,+\infty[) := L_+$ is a ray with a constant normal vector $N_+$. For $n$ large enough, $s_n \geq S$ so $x_n = \gamma(s_n) + c_+(s_n)N_+$ and one sees that $\gamma(s_n)$ is the unique orthogonal projection of $x_n$ on the ray $L_+$. Similarly, $\varphi(s'_n,\epsilon_n c_+(s_n)) = x_n$ corresponds to a distance $c_+(s_n)$ along the normal line to $\gamma(s'_n)$:  if $\gamma(s'_n)$ happened to be on $L_+$, it would also correspond to the orthogonal projection of $x_n$ on $L_+$, which we said was given by $\gamma(s_n)$. By uniqueness of the arc length it would mean $s'_n = s_n$ which is a contradiction, so $\gamma(s'_n) \notin  L_+$ i.e. $s'_n \leq S$ for all $n$. This yields $\abs{s_n-s'_n} \underset{n \to +\infty}{\longrightarrow} +\infty$ and the contradiction with the not-U-shaped assumption. \\
    The case $x_n = \varphi(s_n,-c_-(s_n))$ is similar, and so is the case $s_n \underset{n \to +\infty}{\longrightarrow} -\infty$, so this yields the result.
\end{proof}
 
% Annexe B
\section{Mathematical and physical choices for the model}
\label{app:model}

In this appendix, we specify our modeling choices: as explained in the introduction, \textit{our model being new, we begin by analyzing a simple instance of it} in order to lay solid foundations for subsequent developments. 

%%% ------------------------------------------------------------
%===============================================================
%%% ------------------------------------------------------------

\subsection{The small system - the waveguide}\label{section_small_system_presentation_waveguide}

We briefly recall the difference between hard and soft waveguides, since this choice is central to our model.

\begin{itemize}
    \item \textit{Hard waveguide.} The particle is strictly confined to the strip $\Omega^a$: the wavefunction vanishes outside the strip, and the dynamics is described by the Dirichlet Laplacian $-\Delta_D^{\Omega^a}$ on $L^2(\Omega^a)$. Equivalently, one may view this as the limit of an infinite potential well, equal to $0$ inside $\Omega^a$ and to $+\infty$ outside. See, e.g., \cite{pv_book}.

    \item \textit{Soft waveguide.} The confinement is produced by a finite attractive potential well. The particle is therefore not strictly confined to the strip and may tunnel outside it. In our setting, this is modeled by a bounded non-positive potential $V_\bullet$ on $\R^2$, supported in $\Omega^a$, and the one-particle dynamics is described by the Schrödinger operator $-\Delta+V_\bullet$ on $L^2(\R^2)$. See, e.g., \cite{pv_spsqv}.
\end{itemize}
\noindent We choose a soft waveguide rather than a hard one because \textit{it allows us to quantize the field on $\mathbb{R}^2$ within the standard framework} (plane-wave decomposition, creation--annihilation operators, etc.). By contrast, quantizing the field on a domain with boundary would require boundary-dependent modes and may introduce geometry-dependent renormalization issues, as illustrated by the Casimir effect; see, e.g., \cite{BordagMohideenMostepanenko2001}.

\medbreak

The modeling choices are then quite straightforward: since there are still few soft waveguide models, we built upon the rather comprehensive spectral analysis from \cite{pv_spsqv} and adopted its assumptions in their simplest form, which entails a planar waveguide and a compactly supported curvature.

\medbreak

We now explain the choice made in Hypothesis~\ref{hyp:potential}. We assume that the physical mechanism generating the attractive channel is uniform along the waveguide: each transverse section produces the same potential profile as a function of the normal distance to the reference curve. The waveguide is therefore modeled by replicating a fixed transverse profile along its longitudinal direction. For the straight guide, this amounts to setting $V_S(x_1,x_2)=V(x_2)$, so that the variables separate:
\begin{align*}
-\Delta+V_S(X) = -\partial_{x_1}^2\otimes I + I\otimes\bigl(-\partial_{x_2}^2+V(X_2)\bigr).
\end{align*}
Consequently, $\inf\sigma\bigl(-\Delta+V_S(X)\bigr) = \inf\sigma\bigl(-\partial_{x_2}^2+V(X_2)\bigr)$, i.e. the bottom of the spectrum of the straight guide is determined by the one-dimensional transverse profile.\\ For the curved guide, the same transverse profile is instead replicated along the reference curve: in curvilinear coordinates, we set $V_C(\phi(s,u))=V(u)$. The curved guide is thus obtained by bending the straight guide without changing its transverse profile, and may therefore be viewed as a geometric perturbation of the straight one, as is standard in the spectral analysis of quantum waveguides (see, e.g.,~\cite[Sections 5 to 8]{pv_spsqv} or~\cite[Section 1.4]{pv_book}).

%%% ------------------------------------------------------------
%===============================================================
%%% ------------------------------------------------------------

\subsection{Coupling with the quantum field}

\subsubsection{The physical modeling}

To model a charged particle propagating in a waveguide and interacting with a quantum field, the natural physical starting point is its coupling to the quantized electromagnetic field. The corresponding non-relativistic model is the Pauli--Fierz Hamiltonian. It is obtained by minimally coupling the particle to the quantized Maxwell field; we refer to \cite{Spohn2004} for the physical construction. In the Coulomb gauge, the relevant quantized degrees of freedom are the transverse components of the vector potential. For a spinless particle, the corresponding Hamiltonian has the formal form
\begin{align*}
    \Ha_{\mathrm{PF}}
    =
    \frac{1}{2 \mspace{2mu} m_p}\left(p+qA_\perp(x)\right)^2
    + q V_{\ext}(x)
    + \Ha_f,
\end{align*}
where $V_{\ext}$ is the external electrostatic potential, $A_\perp$ is the quantized transverse vector potential, $q$ is the charge, $m_p$ is the mass of the particle and $\Ha_f$ is the free field Hamiltonian.

Expanding the minimally coupled kinetic term yields the free kinetic energy, a term linear in the quantized vector potential $A_\perp(x)$, and a term quadratic in $A_\perp(x)$. \textit{In the present work, we neglect this quadratic term}. This is not physically realistic for the full Pauli--Fierz model, but it gives a first simplified model in which one can study the coupling between a quantum waveguide and a quantized electromagnetic field. We are thus led to the following formal Hamiltonian:
\begin{align*}
    \Ha = \frac{p^2}{2 \mspace{2mu} m_p} + qV_{\ext}(x) + \Ha_f + \frac{q}{m_p} p \cdot A_\perp(x)
    := \Ha_\el + \Ha_f + g \Ha_I,
\end{align*}
where  $\Ha_\el:=\frac{p^2}{2 \mspace{2mu} m_p}+qV_{\ext}(x)$ is the electronic Hamiltonian, $\Ha_f$ is the free field Hamiltonian, $g = q/m_p$ is a coupling constant, and $\Ha_I$ describes the interaction between the particle and the quantized field. At the formal level, the interaction has the following structure:
\begin{equation}\label{eq:interaction_hamiltonian_physics_version}
    \Ha_I
    = p \cdot A_\perp(x) = 
    \int_{\R^2}
    h_\omega(k)\,
    \big(p\cdot \varepsilon(k)\big)
    \left(
        e^{-ik\cdot x}a(k)
        +
        e^{ik\cdot x}a^\ast(k)
    \right)
    dk.
\end{equation}
Here $\varepsilon(k)$ is a polarization vector, $a(k)$ and $a^\ast(k)$ are the annihilation and creation operators, and $h_\omega$ contains the dependence on the dispersion relation together with an ultraviolet cutoff. This cutoff suppresses large momenta, as expected in an effective non-relativistic model, and also ensures that the interaction is mathematically well-defined.

\subsubsection{The mathematical modeling}

We now translate the physical model described above into the mathematical framework used in Section~\ref{sec:model}. Two standard simplifications are introduced, leading from the Pauli--Fierz model to the massive Nelson-type model studied in this paper. These simplifications are made in order to isolate the geometric and operator-theoretic features of the soft waveguide, leaving the additional difficulties of the full Pauli--Fierz model to subsequent work.

\smallbreak

First, we consider massive bosons rather than massless photons. For photons, the dispersion relation is $\omega(k)=|k|$, meaning that arbitrarily low-energy excitations are possible, leading to the usual infrared difficulties and delicate threshold behavior. We replace the massless dispersion relation by the massive one $\omega(k)=\sqrt{|k|^2+m^2}$, with $m>0$. This can be viewed as an infrared regularization of the physical model. It is also a natural first step, since the massless case is often treated by studying massive approximations and then taking the limit $m\to 0$; see, e.g., \cite{gsnrqed,hiroshima2019ground,analisa}.

Second, we simplify the interaction term. The physical interaction
\eqref{eq:interaction_hamiltonian_physics_version} has the formal structure
\begin{align*}
    \Ha_I
    =
    \int_{\R^2}
    \left(
        W(k)^\ast \otimes a(k)
        +
        W(k) \otimes a^\ast(k)
    \right)
    dk,
    \qquad
    W(k)=h_\omega(k)\,(p\cdot\varepsilon(k))\,e^{-ik\cdot x}.
\end{align*}
Thus it is already very close to a Segal field operator, except that the form
factor $W(k)$ is not scalar-valued: it takes values in operators on the particle space $\Hi_{\el}$, because it contains the momentum operator $p=-i\nabla$. \\
In the present paper, we keep this field-operator structure but simplify the operator-valued form factor. Instead of the momentum-dependent operator $W(k)$, which contains $p=-i\nabla$, we consider, for each $k$, a simplified operator $v(k)$ on $\Hi_{\el}=L^2(\Rx^2)$, namely the multiplication operator by the function $x\mapsto v_x(k)$. With this assumption, the interaction becomes a direct integral of Segal field operators:
\begin{align*}
\Ha_I 
  &= \int_{\R^2}^{\oplus}\phi(v_x)\,dx =  \frac{1}{\sqrt{2}}
    \int_{\R^2}^{\oplus} \,
    \overline{a(v_x)+a^\ast(v_x)}\,dx  
  && \text{in the operator sense,} \\
&= \int_{\R^2}^{\oplus} \int_{\R^2} 
      \big( \cc{v_x(k)}\, a(k) + v_x(k) \, \ac(k) \big)\, dk \, dx 
  && \text{in the quadratic form sense.}
\end{align*}

This type of simplification is standard in the mathematical study of particle-field Hamiltonians. For a detailed discussion of possible hypotheses on the interaction term, and of how they relate to the Pauli--Fierz Hamiltonian, we refer to \cite[Sections~1.1, 1.3 and 1.6]{qednrp}. See also \cite[Section~10.3]{AC_rayleigh_scatt_1} for a discussion of the dipole approximation and its relation to linear field couplings, \cite[Section 1.1]{moller_translation_invariant_nelson_model} for a review of the NRQED models and \cite[Section~14.5]{Arai} for a general presentation of particle-field interaction models.

%%%%   À FAIRE
%%%%   À FAIRE

%%%%   À FAIRE
%%%%   À FAIRE
 
% Annexe C
\section{Fock spaces and second quantization}
\label{app:fock}

\noindent We recall only the elementary Fock-space notation and properties used
in the paper.

\subsection{Symmetric Fock space and field operators}\label{app:fock:field}

Let $\Hi$ be a separable complex Hilbert space and, for $n\geq1$, let $S_n$ denote the orthogonal symmetrizer on the $n$-fold tensor product of $\Hi$.
We use the notation:
\begin{align*}
    \Hi^{\mspace{2mu}\otimes \mspace{2mu} n}:=\bigotimes_{i=1}^{n}\Hi,
    \qquad
    \Hi^{\mspace{2mu} \otimes_s \mspace{2mu} n}:=\operatorname{Ran}S_n
    =\sideset{}{_s}\bigotimes_{i=1}^{n}\Hi.
\end{align*}
\noindent By convention, the empty symmetric tensor product (from $i=1$ to $0$) is
$\Hi^{\otimes_s0}:=\C$. The symmetric Fock space over $\Hi$ is:
\begin{align*}
    \Fs(\Hi):=\bigoplus_{n=0}^{\infty}
    \sideset{}{_s}\bigotimes_{i=1}^{n}\Hi
    =\bigoplus_{n=0}^{\infty}\Hi^{\mspace{2mu} \otimes_s \mspace{2mu} n}.
\end{align*}
\noindent Equivalently:
\begin{align*}
    \Fs(\Hi)
    =\left\{(\psi^{(n)})_{n\in\N}\ \middle|\
    \psi^{(n)}\in\Hi^{\mspace{2mu} \otimes_s \mspace{2mu} n},\
    \sum_{n=0}^{\infty}\norm{\psi^{(n)}}^2<\infty\right\}.
\end{align*}
\noindent The vector $\Omega:=(1,0,\ldots)\in\Fs(\Hi)$ is called the vacuum. For a dense subspace $\mathcal C\subset\Hi$, the finite-particle subspace is:
\begin{align*}
    \Fs^\fin(\mathcal C):=
    \operatorname{Span}\left\{\Omega,\,
    f_1\otimes_s\cdots\otimes_s f_n
    \,\middle|\, n\geq1,\ f_i\in\mathcal C\right\}.
\end{align*}
\noindent See \cite[Section 5.2]{Arai} for the bosonic Fock space and its particle sectors.

\smallbreak

For $f\in\Hi$, first define the creation operator
$a_{\mathrm{fin}}^\ast(f)$ on $\Fs^\fin(\Hi)$ by:
\begin{align*}
    \bigl(a_{\mathrm{fin}}^\ast(f)\psi\bigr)^{(0)}:=0,
    \qquad
    \bigl(a_{\mathrm{fin}}^\ast(f)\psi\bigr)^{(n)}
    :=\sqrt n\,S_n\bigl(f\otimes\psi^{(n-1)}\bigr)
    \quad(n\geq1).
\end{align*}
\noindent Let $a_{\mathrm{fin}}(f)$ be the restriction of
$\bigl(a_{\mathrm{fin}}^\ast(f)\bigr)^\ast$ to $\Fs^\fin(\Hi)$.
Both are closable; $\ac(f)$ and $a(f)$ denote their closures, which satisfy $\ac(f)=a(f)^\ast$.

\noindent On $\Fs^\fin(\Hi)$, they satisfy the canonical commutation relations (CCR):
\begin{align*}
    [a(f),\ac(g)]=\ip{f}{g}_{\Hi}I,
    \qquad
    [a(f),a(g)]=[\ac(f),\ac(g)]=0.
\end{align*}
\noindent See \cite[Section 5.7]{Arai} for the creation and annihilation operators and
the CCR.

\noindent The Segal field operator
\begin{align*}
    \phi(f):=\frac{1}{\sqrt{2}}\,\overline{a(f)+\ac(f)}
\end{align*}
is the normalized self-adjoint part of $a(f)$ and represents the quantized
bosonic field smeared with $f$; see \cite[Section 5.10]{Arai}.

\subsection{Second quantization}\label{app:fock:second_quantization}

 Let $\Hi_1$ and $\Hi_2$ be two separable Hilbert spaces. For a contraction $Q:\Hi_1\to\Hi_2$, its second quantization is:
\begin{align*}
    \Gamma(Q):=\bigoplus_{n=0}^{\infty} \,Q^{\mspace{2mu} \otimes_s \mspace{1mu} n}
    :\Fs(\Hi_1)\longrightarrow\Fs(\Hi_2),
\end{align*}
\noindent where $Q^{\mspace{2mu} \otimes_s \mspace{1mu} n}$ denotes the restriction of $Q^{\mspace{2mu} \otimes \mspace{2mu} n}$ to the
symmetric $n$-particle sector and $Q^{\mspace{2mu} \otimes_s \mspace{2mu}0}=I_{\C}$. Thus $\Gamma(Q)$ applies $Q$ to every boson and fixes the vacuum. It satisfies $\Gamma(Q_2 \, Q_1)=\Gamma(Q_2)\Gamma(Q_1)$; moreover, $\Gamma(Q)$ is an isometry or a unitary operator whenever $Q$ is so; see \cite[Section 5.4]{Arai}.

\smallbreak

If $B$ is self-adjoint on $\Hi$, its differential second quantization is the self-adjoint operator defined sectorwise by:
\begin{align*}
    \left.\dG(B)\right|_{\, \Hi^{\mspace{2mu} \otimes_s \mspace{2mu} n}}
    =\sum_{j=1}^{n}
      I^{\mspace{2mu}\otimes \mspace{2mu}(j-1)}\otimes B\otimes I^{\mspace{2mu} \otimes \mspace{2mu}(n-j)},
    \qquad
    \dG(B)\Omega=0.
\end{align*}
\noindent It is the field observable associated with the one-particle observable $B$:
on the $n$-particle sector, it sums the action of $B$ over the $n$ bosons.

\noindent The boson number operator is:
\begin{align*}
    \Nb:=\dG(I),
    \qquad
    (\Nb\psi)^{(n)}=n \mspace{1mu} \psi^{(n)}.
\end{align*}
\noindent See \cite[Section 5.3]{Arai}. We shall also use:
\begin{align*}
    \Gamma\hspace{-2pt}\left(e^{\mspace{2mu}i\mspace{2mu} t \mspace{1mu}B}\right)=e^{\mspace{2mu}i \mspace{2mu} t \mspace{2mu}\dG(B)},\qquad
    \Gamma(U)\mspace{2mu} \dG(B) \mspace{2mu}\Gamma(U)^\ast=\dG(UBU^\ast),\qquad
    \Gamma(U) \mspace{2mu}\phi(f) \mspace{2mu}\Gamma(U)^\ast=\phi(Uf),
\end{align*}
\noindent where $t\in\R$ and $U$ is unitary; see
\cite[Section 5.4, Theorems 5.8--5.9, and Section 5.16, Theorem 5.32]{Arai}.

\subsection{Extended Fock space}\label{app:fock:extended}

The tensor product $\Fs(\Hi)\otimes\Fs(\Hi)$ is the extended Fock space: its first factor describes the nearby bosons and its second factor the
asymptotically free bosons. The natural unitary operator
\begin{align*}
    U:\Fs(\Hi\oplus\Hi)\longrightarrow\Fs(\Hi)\otimes\Fs(\Hi)
\end{align*}
\noindent is characterized by:
\begin{align*}
    U\Omega=\Omega\otimes\Omega,
    \qquad
    U\ac(h_0,h_\infty)U^\ast
    =\ac(h_0)\otimes I+I\otimes\ac(h_\infty).
\end{align*}
\noindent Under this identification, the differential second quantization on the two
Fock factors is:
\begin{align*}
    \dG^\ext(B):=\dG(B)\otimes I+I\otimes\dG(B).
\end{align*}
\noindent In particular:
\begin{align*}
    \Nb_0:=\Nb\otimes I,
    \qquad
    \Nb_\infty:=I\otimes\Nb,
    \qquad
    \Nb^\ext=\Nb_0+\Nb_\infty.
\end{align*}
\noindent See \cite[Section 5.22, in particular Theorems 5.40 and 5.42]{Arai} for
the natural identification and these transformation properties.

\smallbreak

Let $j_0,j_\infty$ be bounded operators satisfying
$j_0^\ast j_0+j_\infty^\ast j_\infty=I$ and set:
\begin{align*}
\begin{gathered}
    j:\Hi\longrightarrow\Hi\oplus\Hi,
    \qquad jh=(j_0\mspace{2mu}h,j_\infty\mspace{2mu} h),\\
    \Gc(j):=U \mspace{2mu} \Gamma(j):
    \Fs(\Hi)\longrightarrow\Fs(\Hi)\otimes\Fs(\Hi).
\end{gathered}
\end{align*}
Then $j$ and $\Gc(j)$ are isometries; in particular,
$\Gc(j)^\ast \,\Gc(j)=I$. See, e.g., \cite[Section 2.6]{AC_rayleigh_scatt_1}.

\smallbreak

For the particular localization used in Section~\ref{sec:binding}, work
in the boson configuration representation $\Hi=L^2(\R_y^2)$. Choose
real-valued $j_0,j_\infty\in C^\infty(\R^2;[0,1])$ with
$j_0^2+j_\infty^2=1$, $j_0=1$ on $\{|y|\leq1\}$ and $j_0=0$ on
$\{|y|\geq2\}$. For $R>0$, define:
\begin{align*}
    j_R:= \bigl( j_{0,R}, \, j_{\infty,R} \bigr):= \bigl(j_0(Y/R), \, j_\infty(Y/R)\bigr),
    \qquad j_R:\Hi\longrightarrow\Hi\oplus\Hi.
\end{align*}
The corresponding localization operator is:
\begin{align*}
    \Gc(j_R):=U \mspace{2mu}\Gamma(j_R):
    \Fs(\Hi)\longrightarrow\Fs(\Hi)\otimes\Fs(\Hi).
\end{align*}
It is an isometry, hence $\Gc(j_R)^\ast\,\Gc(j_R)=I$. The further relations
involving $\Nb$, $\Nb^\ext$ and $\Gc(j_R)$ that are needed in
Section~\ref{sec:binding} are proved there in
Lemma~\ref{sec:binding:lemma_N_properties_and_bounds}.

\subsection{\texorpdfstring{The $Q$-space representation}{The Q-space representation}}

The $Q$-space isomorphism realizes bosonic Fock space as an $L^2$-space
over a Gaussian probability space. It turns real Segal fields into Gaussian
multiplication operators and equips Fock space with the order structure
needed for positivity arguments. We refer to
\cite[Theorem 6.1]{thomas_thesis} and to
\cite[Sections 5.30.8 and 5.31]{Arai}.

\begin{proposition}[$Q$-space isomorphism]
\label{app:fock:Q-space_isomorphism}
Let $\Hi_\R$ be a real Hilbert space and $\Hi_\C=\Hi_\R+i\Hi_\R$ its
complexification. Then there exist a probability space
$(Q,\mathscr Q,\mathbb P)$ and a unitary operator
$U:\Fs(\Hi_\C)\to L^2(Q,\mathscr Q,\mathbb P)$. In addition, the following
properties hold:
\begin{enumerate}[label=(\roman*)]
\item If $V$ is a contraction on $\Hi_\C$ with
$V\Hi_\R\subset\Hi_\R$, then $U\Gamma(V)U^\ast$ is positivity preserving.
\item Assume $\omega\geq0$ is self-adjoint and injective. If
$e^{-t\omega}\Hi_\R\subset\Hi_\R$ for all $t>0$, then
$Ue^{-t\dG(\omega)}U^\ast$ is positivity improving.
\item If $v\in\Hi_\R$, then $U\phi(v)U^\ast:=\Phi(v)$ acts as multiplication
by a normally distributed random variable with mean $0$ and variance
$\frac{1}{2}\norm{v}^2$. In addition, $(\Phi(v))_{v\in\Hi_\R}$ is a Gaussian process
with mean $0$ and covariance
$\operatorname{Cov}\bigl(\Phi(f),\Phi(g)\bigr)=\frac12\ip{f}{g}_{\Hi_\C}$
for $f,g\in\Hi_\R$.
\item The vacuum is mapped to the constant function equal to $1$, i.e.
$U\Omega=1$.
\end{enumerate}
\end{proposition}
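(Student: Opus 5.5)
The plan is to build the Gaussian (Wiener--Itô--Segal) realization of $\Fs(\Hi_\C)$ explicitly and then read off (i)--(iv) from standard properties of Gaussian measures. Since the statement is classical, I would follow \cite[Theorem 6.1]{thomas_thesis} and \cite[Sections 5.30.8 and 5.31]{Arai}, only recording how each item arises.

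\textbf{Construction, and items (iii) and (iv).} Kolmogorov's extension theorem gives a probability space $(Q,\mathscr Q,\mathbb P)$ carrying a real centred Gaussian process $(\Phi(f))_{f\in\Hi_\R}$ with covariance $\frac12\ip{f}{g}_{\Hi_\C}$. The map $f\mapsto\Phi(f)$ is real-linear almost surely, and we take $\mathscr Q$ generated by the $\Phi(f)$.
\begin{itemize}
\item Define $U$ on the total set of coherent vectors by sending the exponential vector $\varepsilon(f)=\sum_n (n!)^{-1/2} f^{\otimes n}$ (with $f\in\Hi_\R$) to the Wick exponential $:\!e^{\sqrt2\,\Phi(f)}\!:\,=e^{\sqrt2\,\Phi(f)-\frac12\norm{f}^2}$.
\item The Gaussian moment generating function gives equality of inner products on both sides. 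Hence $U$ extends to an isometry; it maps $\Omega=\varepsilon(0)$ to $1$, which is (iv).
\item Surjectivity follows because polynomials in the $\Phi(f)$ are dense in $L^2(Q,\mathscr Q,\mathbb P)$, by Gaussian integrability and the choice of $\mathscr Q$.
\item Differentiating the identity $U e^{i t\phi(f)}\varepsilon(g)=e^{it\Phi(f)}U\varepsilon(g)$ (a Weyl relation computation) shows $U\phi(f)U^\ast=\Phi(f)$. Its distributional properties are exactly those of the constructed process, which gives (iii).
\end{itemize}

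\textbf{Item (i).} First take $V$ a real orthogonal operator on $\Hi_\R$. Then $f\mapsto V f$ preserves the covariance, so there is a measure-preserving map $T_V$ on $Q$ such that $U\Gamma(V)U^\ast F=F\circ T_V$ (via Wick exponentials). This is obviously positivity preserving. For a general real contraction $V$, use the standard unitary dilation: there is a real orthogonal $W$ on $\Hi_\R\oplus\Hi_\R$ with $V=P W J$, where $J$ is the inclusion and $P$ the projection. On the $Q$-space side, $\Gamma(J)$ is the embedding of functions of the first coordinate and $\Gamma(P)$ is a conditional expectation. The factorization $\Gamma(V)=\Gamma(P)\Gamma(W)\Gamma(J)$ then exhibits $U\Gamma(V)U^\ast$ as a composition of positivity preserving maps.

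\textbf{Item (ii): the main obstacle.} Since $e^{-t\dG(\omega)}=\Gamma(e^{-t\omega})$ and $e^{-t\omega}$ is a real contraction, positivity preservation follows from (i). The substantive point is \emph{improvement}. I would argue through the ground state, using the characterization \cite[Theorem \cRM{13}.44]{RS}: a positivity preserving semigroup generated by a self-adjoint operator is positivity improving if and only if its ground state is non-degenerate, strictly positive, and the semigroup is ergodic.
\begin{itemize}
\item Injectivity of $\omega$ gives $\ker\dG(\omega)=\C\Omega$, since on the $n$-particle sector $\dG(\omega)$ has trivial kernel for $n\ge1$. Then (iv) gives $U\Omega=1>0$.
\item It remains to exclude invariant sets. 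If $\ind_A$ were invariant, i.e.\ $e^{-t\dG(\omega)}\ind_A=\ind_A$, it would lie in $\ker\dG(\omega)$, hence be constant.
\item The delicate step is making this ergodicity argument rigorous, i.e.\ proving that positivity preservation plus triviality of the kernel upgrades to $\langle F,e^{-t\dG(\omega)}G\rangle>0$ for all nonnegative nonzero $F,G$. I would first try Simon's route: show directly that $\Gamma(C)$ is positivity improving whenever the real contraction $C$ has no nonzero fixed vector. Via the Mehler-type formula from the dilation above, this reduces to positivity of a Gaussian transition kernel.
\end{itemize}
If that route gets technical, I would simply cite \cite[Theorem 6.1]{thomas_thesis}, where exactly this hypothesis appears.
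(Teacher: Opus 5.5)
The paper gives no argument for this proposition beyond pointing to \cite[Theorem 6.1]{thomas_thesis} and \cite[Sections 5.30.8 and 5.31]{Arai}, so your sketch is a reconstruction of what those references contain. Your construction and your treatment of items (iii), (iv) and (i) follow the standard Segal--Nelson argument and are fine: Wick exponentials, measure-preserving point maps $(T_Vq)(f)=q(Vf)$ in the Kolmogorov model for orthogonal $V$, and a real unitary dilation plus conditional expectation for a general real contraction.

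The weak point is (ii), where you treat as open a step that is not. You have misremembered \cite[Theorem \cRM{13}.44]{RS}. Let $H$ be self-adjoint and bounded below, with $e^{-tH}$ positivity preserving and $\inf\sigma(H)$ an eigenvalue. The theorem then says that this eigenvalue is simple with a strictly positive eigenvector if and only if $e^{-tH}$ is positivity improving for all $t>0$. Ergodicity is a further equivalent condition, not an extra hypothesis you must verify.

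Hence (ii) follows in one line from what you already have:
\begin{itemize}
\item $e^{-t\dG(\omega)}=\Gamma(e^{-t\omega})$ is positivity preserving by (i);
\item injectivity of $\omega$ gives $\ker\dG(\omega)=\C\Omega$ with $0=\inf\sigma(\dG(\omega))$;
\item (iv) gives $U\Omega=1>0$.
\end{itemize}
This is exactly how the paper itself argues for $\dG_0$ in the proof of Theorem~\ref{sec:binding:eta_0_equal_0}.

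You should drop the invariant-set bullet and the ``Simon route''. The lemma you propose for that route is false as stated. The operator $C=-I$ has no nonzero fixed vector, yet for $\Hi_\R=\R$ the operator $U\Gamma(-I)U^\ast$ is $F(x)\mapsto F(-x)$ on $L^2(\R,N(0,\tfrac12))$. This maps $\ind_{[0,+\infty[}$ to a function vanishing on $]0,+\infty[$, so it is not positivity improving. Even for $C\geq 0$, ``positivity of a Gaussian transition kernel'' is not a usable density argument in infinite dimensions. For example, with $C=cI$ and $0<c<1$, the transition measures $N(Cq,\tfrac12(1-C^2))$ are singular with respect to $\mathbb P$.
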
 
% Annexe D
\section{About vector-valued functions}
\label{app:vector-valued}

Throughout this appendix, let $\Hi$ be a separable Hilbert space. We tackle the subject of spaces of square-integrable functions taking values in $\Hi$.

%==========================================================================
%           Constant fiber direct integrals
%==========================================================================
\subsection{Constant fiber direct integrals}

Let $(Q,\mu)$ be a measure space. We essentially refer the reader to \cite[Sections \cRM{2.4}, \cRM{13}.16]{RS} and \cite[Sections 2.7, 2.8 and 3.11]{Arai} for the standard framework. 
In this paper, we systematically identify the tensor product with the constant fiber direct integral:
\begin{equation*}
    L^2(Q,d\mu) \otimes \Hi \cong L^2(Q,d\mu; \Hi) = \int_{Q}^\oplus \Hi \, d\mu,
\end{equation*}
realized through the unitary isomorphism $f \otimes g \mapsto f(\cdot) \mspace{1mu}g$. Consequently, for a self-adjoint operator $A$ on $\Hi$ and a measurable function $f$, we identify the corresponding operators (see \cite[Theorem 3.16]{Arai}):
\begin{equation*}
    I \otimes A \cong \int_{Q}^\oplus A \, d\mu \qquad \text{and} \qquad f(X) \otimes I \cong \int_Q^\oplus f(x) \, d\mu(x).
\end{equation*}
Thanks to these unitary equivalences, we will go back and forth between these formalisms without further explanation, and naturally treat any multiplication operator $f(X)$ on $L^2(Q,d\mu)$ as acting on $L^2(Q,d\mu; \Hi)$.

\bigbreak 
A particular non-trivial case arises when considering the graph norm of a self-adjoint operator $A$ on $\Hi$. Its domain $\D(A)$ is a Hilbert space when endowed with the norm $\norm{\cdot}^2_A = \norm{\cdot}^2_\Hi + \norm{A\cdot}^2_\Hi$. This leads to the following space:

\begin{definition}\label{sec:prelim:L2_domain}
Let $A$ be a self-adjoint operator on $\Hi$. We define $L^2\big(Q, d\mu; \D(A)\big)$ as the space of measurable functions $\psi: Q \to \D(A)$ such that $\int_{Q} \left(\norm{\psi(x)}^2_\Hi + \norm{A\psi(x)}^2_\Hi\right) d\mu(x) < + \infty$.
\end{definition}

\begin{proposition}   
The space $L^2\big(Q, d\mu; \D(A)\big)$ continuously and densely embeds into $L^2(Q,d\mu; \Hi)$ through the canonical injection $\iota$. Its range is exactly the domain of the operator $\int_{Q}^\oplus A \,d\mu $, and $\iota$ is a unitary operator between $L^2\big(Q, d\mu; \D(A)\big)$ and $\D\hspace{-2pt}\left(\int_{Q}^\oplus A \,d\mu \right)$ endowed with its graph norm.
\end{proposition}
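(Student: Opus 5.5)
The statement splits into three claims: (a) the embedding $\iota$ of $L^2(Q,d\mu;\D(A))$ into $L^2(Q,d\mu;\Hi)$ is continuous, (b) its range is exactly $\D\bigl(\int_Q^\oplus A\,d\mu\bigr)$ with $\iota$ isometric for the graph norm, and (c) that range is dense. I would prove (b) first and obtain (a) and (c) from it.

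For (b), I use the standard description of the domain of a constant-fiber decomposable operator (see \cite[Theorem 3.16]{Arai} and \cite[Section \cRM{13}.16]{RS}). It states that $\psi\in L^2(Q,d\mu;\Hi)$ lies in $\D\bigl(\int_Q^\oplus A\,d\mu\bigr)$ if and only if $\psi(x)\in\D(A)$ for $\mu$-a.e. $x$ and $\int_Q\norm{A\psi(x)}^2\,d\mu(x)<\infty$. In that case $\bigl(\int_Q^\oplus A\,d\mu\,\psi\bigr)(x)=A\psi(x)$.

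One measurability point has to be checked. If $\psi:Q\to\Hi$ is measurable with $\psi(x)\in\D(A)$ a.e., then $x\mapsto A\psi(x)$ should be measurable, so that measurability into $\Hi$ and into the Hilbert space $(\D(A),\norm{\cdot}_A)$ coincide. I would get this by writing $A\psi(x)$ as the pointwise limit of $A(I+\varepsilon^2A^2)^{-1/2}\psi(x)$ as $\varepsilon\to0$. Each of these is the composition of $\psi$ with a bounded operator, hence measurable, and strong convergence holds on $\D(A)$ by the functional calculus. The converse direction is immediate, since $\D(A)\hookrightarrow\Hi$ is continuous.

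With this settled, the graph norm of $\int_Q^\oplus A\,d\mu$ at $\iota\psi$ equals $\int_Q\bigl(\norm{\psi(x)}^2+\norm{A\psi(x)}^2\bigr)\,d\mu$. This is exactly the norm of $\psi$ in $L^2(Q,d\mu;\D(A))$. So $\iota$ is an isometric bijection onto the domain, i.e. unitary for the graph norm.

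Claims (a) and (c) then follow. Continuity holds because $\norm{\iota\psi}_{L^2(Q;\Hi)}\le\norm{\psi}_{L^2(Q;\D(A))}$. Density holds because the range of $\iota$ is the domain of the self-adjoint operator $\int_Q^\oplus A\,d\mu$, which is dense. As a sanity check without this citation, simple functions $\sum_k\ind_{E_k}\,h_k$ with $h_k\in\D(A)$ and $\mu(E_k)<\infty$ are dense in $L^2(Q;\Hi)$, because $\D(A)$ is dense in $\Hi$ and such simple functions with $h_k\in\Hi$ are already dense.

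The main obstacle is the measurability and identification of the domain in (b), i.e. matching "$\psi(x)\in\D(A)$ a.e. with square-integrable $A\psi$" with the abstract direct-integral domain. The plan rests on the resolvent-regularization argument above together with the cited theorem.
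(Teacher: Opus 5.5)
Your proposal is correct and follows essentially the same route as the paper: continuity from $\norm{\cdot}_\Hi \leq \norm{\cdot}_A$, density from the density of the domain of the self-adjoint operator $\int_Q^\oplus A\,d\mu$, identification of the range via the fiberwise description of that domain, and unitarity from isometry plus surjectivity. The only difference is how measurability is handled. You obtain measurability of $x\mapsto A\psi(x)$ through the bounded regularizations $A(I+\varepsilon^2A^2)^{-1/2}$, whereas the paper transfers measurability to $(\D(A),\norm{\cdot}_A)$ through the isometry $J:\mathcal{G}(A)\to(\D(A),\norm{\cdot}_A)$, $J(f,Af)=f$. That isometry (or Pettis' theorem) is also exactly the one-line step you still need to justify your claim that measurability into $\Hi$ and into $(\D(A),\norm{\cdot}_A)$ coincide.
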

\begin{proof}
Continuity follows from $\norm{\cdot}_\Hi \leq \norm{\cdot}_A$, density follows from the fact that the domain of a self-adjoint operator is dense. As $\iota$ and $A: \big( \D(A), \norm{\cdot}_A \big) \to \Hi$ are continuous, all the measurability conditions for the first inclusion with the range hold. For the reverse inclusion, we observe that the map $J: \mathcal{G}(A) \to \big( \D(A), \norm{\cdot}_A \big)$, defined by $J(f,Af) = f$ (where $\mathcal{G}(A)$ is the graph of $A$), is isometric, hence continuous. The measurability requirements then hold for the reverse inclusion. One sees that $\iota$ is isometric as shown above, surjective; hence it is unitary. 
\end{proof}

\begin{remark}
    In the following, the domain of $\int_{Q}^\oplus A \,d\mu $ will be denoted by $L^2(Q,d\mu; \D(A))$ by abuse of notation, implicitly understood through the canonical injection. Combining what has been said above and Proposition \ref{app:op:prop_tensor_domain_hilbert_space}, one can state:
    \begin{equation}
        L^2(Q,d\mu) \otimes \D(A) \cong \D(I \otimes A) \cong \D\hspace{-2pt}\left(\int_{Q}^\oplus A \,d\mu \right) \cong L^2\big(Q, d\mu; \D(A)\big).
    \end{equation}
\end{remark}

%==========================================================================
%           Vector-valued Sobolev spaces
%==========================================================================
\subsection{Vector-valued Sobolev spaces}

For the definition of vector-valued Sobolev spaces, we refer the reader to \cite{trace_vectorvalued_sobolev} or \cite[Section 3.1]{QuasiLinearParabolicPb}. In this subsection, we recall the canonical identifications used throughout the paper.

We first state the identification for an arbitrary open set. Let $\mathcal{O}$ be an open subset of $\R^n$. For any $m \in \N$, one has the following unitary isomorphism of Hilbert spaces (see \cite[Theorem 12.7.1]{aubin2000applied}):
\begin{align*}
    H^m(\mathcal{O}) \otimes \Hi \cong H^m(\mathcal{O},\Hi),
\end{align*}
where the isomorphism is induced by the standard one from $L^2(\mathcal{O}) \otimes \Hi \cong L^2(\mathcal{O},\Hi)$, discussed in the previous subsection.

In the particular case $\mathcal{O}=\R^n$, this abstract identification can be explicitly related to the domains of the usual differential operators. Let $p=-i\nabla$ be the momentum operator on $L^2(\R^n)$ and set $p^2=-\Delta$. When working on $\R^n$, we endow $H^m(\R^n)$ with the graph norm of $(p^2)^{m/2}$ and $H^m(\R^n,\Hi)$ with the graph norm of $(p^2)^{m/2}\otimes I$. By applying Proposition~\ref{app:op:prop_tensor_domain_hilbert_space} to the operators $p$ and $(p^2)^{m/2}$, along with these graph norm conventions, we obtain the following operator-theoretic form of vector-valued Sobolev spaces:
\begin{align*}
    H^1(\R^n) \otimes \Hi &\cong \left(\D(p\otimes I),\norm{\cdot}_{p\otimes I} \right) \cong H^1(\R^n,\Hi) \\
    H^m(\R^n) \otimes \Hi &\cong \left(\D\hspace{-2pt}\left((p^2)^{\frac{m}{2}}\otimes I\right), \norm{\cdot}_{(p^2)^{m/2}\otimes I}\right) \cong H^m(\R^n,\Hi),
\end{align*}
for $n \geq 2$. In particular, we identify $p\otimes I$ with the usual distributional derivative $p \psi = -i\nabla\psi$ on $L^2(\R^n,\Hi)$, understood in the sense of vector-valued distributions (see, e.g., \cite[Section III.1.1]{QuasiLinearParabolicPb}).
\begin{remark}
    Unless stated otherwise, we will not distinguish $L^2(Q, d\mu)\otimes \Hi$ and $L^2(Q,d\mu; \Hi)$, or $H^m(\mathcal{O})\otimes \Hi$ and $H^m(\mathcal{O},\Hi)$, and will write $=$ instead of $\cong$. Furthermore, we shall not distinguish between $p$ and $p\otimes I$, using either notation depending on context.
\end{remark}
\subsubsection{A computational property}

We denote $A \cdot x = \sum_{i=1}^n A_i \, x_i$ to write multi-parameter unitary groups compactly.

\begin{proposition}\label{app:func:derivation_unitary_direct_int}
    Let $A = (A_1, \dots, A_n)$ be an $n$-tuple of strongly commuting self-adjoint operators on $\Hi$, $j \in \llbracket 1,n\rrbracket$, and $\psi \in \D(p_{x_j}) \cap  L^2\hspace{-2pt}\left(\R^n, \D(A_j)\right)$. Then $\Lambda := \left( \int_{\R^n}^\oplus e^{- i A \cdot x} dx \right) \psi \in \D(p_{x_j})$ with:
    \begin{align*}  
         \partial_j \Lambda &= \left( \int_{\R^n}^\oplus e^{- i A \cdot x} dx \right) \partial_j \psi - i \left( \int_{\R^n}^\oplus e^{-iA\cdot x}\mspace{2mu}A_j \, dx \right) \psi = \left( \int_{\R^n}^\oplus e^{- i A \cdot x} dx \right) \big( \partial_j - i \, (I \otimes A_j)\big) \, \psi.
    \end{align*}
    Additionally, $\Lambda \in L^2\hspace{-2pt}\left(\R^n, \D(A_j)\right)$.
\end{proposition}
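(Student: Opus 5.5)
The plan is to prove the claim first for a product vector $\psi = f\otimes h$, with $f \in \ciz(\R^n)$ and $h \in \D(A_j)$, and then extend it by density and closedness of $p_{x_j}$. Write $\Lambda(x) = e^{-iA\cdot x}\psi(x)$. The strong commutativity of the $A_i$ gives the factorization
\begin{align*}
e^{-iA\cdot x} = \prod_{i} e^{-iA_i x_i}.
\end{align*}
Hence, as a function of $x_j$ with the other variables frozen, $x_j \mapsto e^{-iA\cdot x}h$ is strongly differentiable for $h\in\D(A_j)$, by Stone's theorem. Its derivative is $-i e^{-iA\cdot x}A_j h$, where we use that $e^{-iA_k x_k}$ commutes with $A_j$ for $k\neq j$. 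The product rule for $f(x)e^{-iA\cdot x}h$ then gives a classical, strongly continuous derivative equal to
\begin{align*}
e^{-iA\cdot x}\bigl(\partial_j f(x)\,h - i f(x)A_j h\bigr).
\end{align*}
This is square integrable. Pairing with test functions and integrating by parts in $x_j$, which is legitimate because $f$ has compact support, identifies it with the distributional derivative. So $\Lambda \in \D(p_{x_j})$ with the claimed formula. The same pointwise computation gives $\Lambda(x)\in\D(A_j)$ and $A_j\Lambda(x)=e^{-iA\cdot x}A_j\psi(x)$. Therefore $\Lambda\in L^2(\R^n,\D(A_j))$ with $\norm{A_j\Lambda}=\norm{A_j\psi}$.

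The formula is linear in $\psi$, so it extends to the algebraic tensor product $\ciz(\R^n)\alten\D(A_j)$. For general $\psi\in\D(p_{x_j})\cap L^2(\R^n,\D(A_j))$, I would first show that this algebraic tensor product is dense in the intersection for the norm
\begin{align*}
\norm{\psi}+\norm{\partial_j\psi}+\norm{(I\otimes A_j)\psi}.
\end{align*}
This intersection is $\D(p_{x_j}\otimes I)\cap\D(I\otimes A_j)$, and these two operators strongly commute. Hence the intersection is the domain of the self-adjoint operator $|p_{x_j}|\otimes I+I\otimes|A_j|$, whose norm is equivalent to the one above, and $\ciz\alten\D(A_j)$ is a core for it by the standard tensor-product core results (Proposition~\ref{app:op:standard_properties}). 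This density step is the main obstacle, and I expect it to reduce to that citation.

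Next, take $\psi_k\to\psi$ in this norm. Then $\Lambda_k\to\Lambda$, since $W:=\int^\oplus e^{-iA\cdot x}dx$ is unitary. By the formula, $\partial_j\Lambda_k=W(\partial_j-iA_j)\psi_k$, and this converges to $W(\partial_j-iA_j)\psi$. Since $\partial_j$ is closed, we get $\Lambda\in\D(p_{x_j})$ and the identity. In the same way, $A_j\Lambda_k = WA_j\psi_k$, so closedness of $I\otimes A_j$ yields $\Lambda\in L^2(\R^n,\D(A_j))$. Finally, the middle expression in the statement equals the last one because $W$ is linear and
\begin{align*}
\int^\oplus e^{-iA\cdot x}A_j\,dx=W(I\otimes A_j)
\end{align*}
on $L^2(\R^n,\D(A_j))$.
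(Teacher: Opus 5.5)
Your proof is correct, but it takes a different route from the paper.

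\textbf{The paper's route.} The paper never approximates $\psi$. For $f\in\ciz(\R^n)$ and $\Phi\in\D(A_j)$, it moves the unitary onto the test side:
$\int\iph{\Lambda(x)}{\Phi}\,\partial_jf(x)\,dx=\int\iph{\psi(x)}{e^{iA\cdot x}\Phi\,\partial_jf(x)}\,dx$.
It then observes that $h(x)=f(x)e^{iA\cdot x}\Phi$ is compactly supported and differentiable in $x_j$ (Stone's theorem, since $\Phi\in\D(A_j)$), so $h\in\D(p_{x_j})$. It integrates by parts against the general $\psi\in\D(p_{x_j})$. It concludes because an $\Hi$-valued distribution is determined by its pairings with the dense set $\D(A_j)$. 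So the paper puts the smoothness on the test side.

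\textbf{Your route.} You put the smoothness on the vector side, by classically differentiating $f\otimes h$. You pay for this with a density-plus-closedness step. That step is sound, but it is not contained in Proposition~\ref{app:op:standard_properties}, which says nothing about cores of tensor sums. To get density of $\ciz(\R^n)\alten\D(A_j)$ in your norm, you need two facts:
\begin{itemize}
\item $\D(p_{x_j})\alten\D(A_j)$ is a core for the self-adjoint operator $|p_{x_j}|\otimes I+I\otimes|A_j|$. This holds because it is dense and invariant under the unitary group $e^{it|p_{x_j}|}\otimes e^{it|A_j|}$ that this operator generates.
\item $\ciz(\R^n)$ is a core for $p_{x_j}$ on $L^2(\R^n)$, for instance via the Fourier transform, or via mollification and cut-off.
\end{itemize}

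\textbf{Comparison.} Your approach only ever differentiates nice vectors, and it makes the role of closedness transparent. The paper's duality argument is shorter and needs no core theorem at all.

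Finally, the claim $\Lambda\in L^2(\R^n,\D(A_j))$ needs no approximation. Since $e^{-iA\cdot x}$ commutes with $A_j$, one has pointwise $\Lambda(x)\in\D(A_j)$ and $A_j\Lambda(x)=e^{-iA\cdot x}A_j\psi(x)$, as in the paper.
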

\begin{proof}
    For almost all $x \in \R^n$, define $g_j(x) = e^{-i A \cdot x} \partial_j\psi(x) - i e^{-iA\cdot x}A_j \psi(x)$. To show that $g_j$ is the distributional derivative of $\Lambda$, let $f \in \ciz(\R^n)$. By \cite[Theorem E.1(ii)]{Arai}, we reduce to the scalar case by taking the inner product with $\Phi \in \D(A_j)$:
    \begin{align*}
        \int_{\R^n} \iph{\Lambda(x)}{\Phi}\partial_j f(x) \, dx  = \int_{\R^n} \iph{e^{-i A \cdot x} \psi(x)}{\Phi} \partial_j f(x) \, dx = \int_{\R^n} \iph{ \psi(x)}{e^{i A \cdot x}\mspace{2mu}\Phi \mspace{2mu}\partial_j f(x)} \, dx.  
    \end{align*}
    Let $h: x \mapsto f(x) e^{i A \cdot x} \Phi$, which is strongly differentiable with $\partial_j h(x) = e^{i A \cdot x} \Phi \partial_j f(x) + i f(x) e^{iA \cdot x} A_j \Phi $ by Stone's theorem. Since $\Phi \in \D(A_j)$ and $f \in \ciz(\R^n)$, we find $h \in \D(p_{x_j})$. Consequently, integrating by parts yields:
    \begin{align*}
        \int_{\R^n} \iph{\Lambda(x)}{\Phi}\partial_j f(x) \, dx
        &= \int_{\R^n}\iph{\psi(x)}{\partial_jh (x)} dx - i \int_{\R^n} \iph{\psi(x)}{ e^{iA \cdot x} A_j  \Phi} f(x) \, dx \\
        &= - \int_{\R^n}\iph{g_j(x)}{\Phi} f(x)\, dx. 
    \end{align*}
    By \cite[Theorem E.1(ii)]{Arai} and the density of $\D(A_j)$, we get the desired equality in the distribution sense. The fact that $g_j \in L^2(\R^n,\Hi)$ follows from its definition and $\psi \in \D(p_{x_j}) \cap  L^2\left(\R^n,  \D(A_j)\right)$, and the last claim follows again from Stone's theorem.
\end{proof} 
% Annexe E
\section{Operator theory}
\label{app:operator}

Throughout this section, let $\Hi$, $\Hi_1$ and $\Hi_2$ be separable Hilbert spaces. 

\smallbreak

We frequently rely on the joint functional calculus for strongly commuting self-adjoint operators (see, e.g., \cite[Section 1.8.4]{Arai} or \cite[Section 5.5.2]{Sch}). We summarize some useful operational rules in the following proposition.

\begin{proposition}\label{app:op:standard_properties}
    Let $A$ and $B$ be self-adjoint operators on $\Hi$.
    \begin{enumerate}[label=(\roman*)]
        \item Lower bounds: If $A$ and $B$ strongly commute and are positive, then $\norm{A\psi} \leq \norm{(A+B)\psi}$ for all $\psi \in \D(A) \cap \D(B)$. If they are only lower semi-bounded (with respective lower bounds $E_A, E_B$), then:
        \[ \norm{A\psi} \leq \norm{(A+B)\psi} + \left(\abs{E_A+E_B}+\abs{E_A}\right)\norm{\psi}. \]
        \item Square domains: If $A$ and $B$ strongly commute, then $\D(A^2) \cap \D(B^2) \subset \D(AB)$ and $(A+B)^2 = A^2 + 2AB + B^2$ on $\D(A^2) \cap \D(B^2) \subset \D\hspace{-2pt}\left((A+B)^2\right)$.
        \item Tensor products: For any Borel function $f$, $f(A \otimes I) = f(A) \otimes I$. If $A$ and $B$ strongly commute, so do $A \otimes I$ and $B \otimes I$. Furthermore, if $A$ and $B$ are lower semi-bounded, then $A \otimes I + B \otimes I = (A+B) \otimes I$. (The same holds symmetrically for $I \otimes A$).
        \item Unitary conjugation: For any unitary operator $U$, $U E^A U^\ast$ is the projection-valued measure of $U A U^\ast$.
        \item Multiplication operators: Self-adjoint multiplication operators always strongly commute.
    \end{enumerate}
\end{proposition}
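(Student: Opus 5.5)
The statement to prove is Proposition~\ref{app:op:standard_properties}, a list of five standard facts from the joint functional calculus. I would treat each item separately, working through the spectral theorem for strongly commuting families. By \cite[Section 1.8.4]{Arai}, there is a joint projection-valued measure $E$ on $\R^2$ such that $A=\int\lambda\,dE(\lambda,\mu)$ and $B=\int\mu\,dE(\lambda,\mu)$. Domains and norms then reduce to integrals of scalar functions against the measures $d\norm{E(\cdot)\psi}^2$.

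\textbf{Item (i).} For $\psi\in\D(A)\cap\D(B)$, the vector $(A+B)\psi$ equals $\int(\lambda+\mu)\,dE\,\psi$. So the claim reduces to the pointwise inequality $\lambda^2\le(\lambda+\mu)^2$ on $\supp E\subset[0,\infty)^2$, and integrating gives $\norm{A\psi}\le\norm{(A+B)\psi}$.

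In the semibounded case, I would shift: set $A':=A-E_A\geq0$ and $B':=B-E_B\geq0$, which still strongly commute. Apply the positive case to get $\norm{A'\psi}\le\norm{(A'+B')\psi}$. The triangle inequality then gives
\[
\norm{A\psi}\le\norm{(A+B)\psi}+|E_A+E_B|\norm\psi+|E_A|\norm\psi .
\]

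\textbf{Item (ii).} If $\psi\in\D(A^2)\cap\D(B^2)$, then $\int(\lambda^4+\mu^4)\,d\norm{E\psi}^2<\infty$. Since $\lambda^2\mu^2\le\tfrac12(\lambda^4+\mu^4)$ and $(\lambda+\mu)^4\le 8(\lambda^4+\mu^4)$, we get $\psi\in\D(AB)$ and $\psi\in\D((A+B)^2)$. Here I use the multiplicativity of the joint calculus on the natural domains (the standard identity $\Phi(f)\Phi(g)\subset\Phi(fg)$, with domain $\D(\Phi(g))\cap\D(\Phi(fg))$). The expansion $(\lambda+\mu)^2=\lambda^2+2\lambda\mu+\mu^2$ then transfers to the operators on that set.

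\textbf{Item (iii).} First, $A\otimes I$ has spectral measure $E^A\otimes I$, so $f(A\otimes I)=f(A)\otimes I$ by the uniqueness of the functional calculus. Strong commutation of $A\otimes I$ and $B\otimes I$ follows because their spectral projections are $E^A(\cdot)\otimes I$ and $E^B(\cdot)\otimes I$, and these commute.

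For the sum, I would use the joint measure $E\otimes I$ for the pair $(A\otimes I,B\otimes I)$. The operator $(A+B)\otimes I$ is $h(A\otimes I,B\otimes I)$ with $h(\lambda,\mu)=\lambda+\mu$. It remains to show that the operator sum $A\otimes I+B\otimes I$ on $\D(A\otimes I)\cap\D(B\otimes I)$ equals this, i.e.\ that the domains agree. For lower semibounded commuting operators, $\D(h)=\D(\lambda)\cap\D(\mu)$: after shifting to nonnegative variables, $\lambda^2+\mu^2\le(\lambda+\mu)^2$, and item (i) gives this. The mirrored statement for $I\otimes A$ is identical.

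\textbf{Item (iv).} If $U$ is unitary, then $F(\Delta):=UE^A(\Delta)U^\ast$ is a projection-valued measure. Moreover, $\int\lambda\,dF=U\left(\int\lambda\,dE^A\right)U^\ast=UAU^\ast$, with matching domains since $\norm{F(\cdot)U\psi}=\norm{E^A(\cdot)\psi}$. Uniqueness in the spectral theorem then identifies $F$ with the projection-valued measure of $UAU^\ast$.

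\textbf{Item (v).} Real multiplication operators $f(X),g(X)$ on $L^2(Q,\mu)$ have spectral projections $\ind_{f^{-1}(\Delta)}(X)$. These are multiplication operators, so they commute.

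The only delicate step is the domain identification in (iii), where the operator sum must be shown to be closed on the intersection of the domains. That is exactly where lower semiboundedness, via item (i), enters.
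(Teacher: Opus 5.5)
Your proposal is correct. For (i), (ii) and the first two claims of (iii) it follows the paper's argument exactly: the joint spectral measure, the same scalar inequalities, and $E^A\otimes I$ as the spectral measure of $A\otimes I$.

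The real difference is the identity $A\otimes I+B\otimes I=(A+B)\otimes I$. The paper does no domain computation there. It takes self-adjointness of $A\otimes I+B\otimes I$ from Arai: the tensored operators strongly commute, and a sum of strongly commuting lower semi-bounded self-adjoint operators is self-adjoint. It then notes that this operator agrees with $(A+B)\otimes I$ on $\D(A+B)\alten\Hi$, a core for the latter. Closedness and maximality of self-adjoint operators give equality.

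You instead realize both sides as $\int(\lambda+\mu)\,d(E\otimes I)$. You then show directly that $\int(\lambda+\mu)\,dF$ has domain $\D\bigl(\int\lambda\,dF\bigr)\cap\D\bigl(\int\mu\,dF\bigr)$ for $F=E$ and $F=E\otimes I$. This reproves the cited self-adjointness result and shows exactly where semi-boundedness enters. However, it relies on the two-variable version of $f(A\otimes I)=f(A)\otimes I$, which you only assert. You need that $E\otimes I$ is the joint spectral measure of $(A\otimes I,B\otimes I)$, and that $\bigl(\int h\,dE\bigr)\otimes I=\int h\,d(E\otimes I)$ for unbounded $h$. The natural proof of this is itself a core-plus-maximality or resolvent argument of the paper's type.

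A small imprecision: the domain inclusion does not come from item (i) as stated, since (i) only concerns vectors already in $\D(A)\cap\D(B)$. It comes from integrating the shifted pointwise inequality against $d\norm{F(\cdot)\psi}^2$ for an arbitrary $\psi$ in the domain of $\int(\lambda+\mu)\,dF$; the shift only changes $\lambda+\mu$ by a constant, so this step is immediate.

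In (iv) and (v) the differences are cosmetic. You use uniqueness of the spectral measure and commuting spectral projections. The paper uses Stone's formula with $U(A-z)^{-1}U^\ast$ and commuting unitary groups $e^{-itf}(X)$.
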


\begin{proof}
    (i) For positive operators, $A+B$ is self-adjoint (see \cite[Corollary 1.7(ii)]{Arai}). By the joint functional calculus, $\norm{A\psi}^2 = \int \lambda_1^2 \, dE^{A,B}_{\psi}(\lambda_1,\lambda_2) \leq \int (\lambda_1 + \lambda_2)^2 \, dE^{A,B}_{\psi}(\lambda_1,\lambda_2) = \norm{(A+B)\psi}^2$. The lower semi-bounded case follows by applying this to the positive operators $A-E_A$ and $B-E_B$. \\
    (ii) The claims on the domains follow from the joint functional calculus and the scalar inequalities $\lambda_1^2\lambda_2^2 \leq \frac{1}{2}(\lambda_1^4 + \lambda_2^4)$ and $(\lambda_1+\lambda_2)^4 \leq 8(\lambda_1^4 + \lambda_2^4)$. Strong commutativity enables the expansion of $(A+B)^2$. \\
    (iii) The first two claims follow from the fact that the projection-valued measures for $A\otimes I$ and $B\otimes I$ are $E^A \otimes I$ and $E^B \otimes I$ (see \cite[Theorem 3.5(ii)]{Arai}). For the sum, $A \otimes I + B \otimes I$ and $(A+B) \otimes I$ are both self-adjoint (the first one thanks to \cite[Proposition 3.5(ii) and Corollary 1.7(ii)]{Arai}) and agree on $\D(A+B) \alten \Hi$, which is a core for $(A+B) \otimes I$. Taking the closure preserves the inclusion, and since a self-adjoint operator has no proper self-adjoint extension, equality holds. \\
    (iv) This follows from Stone's formula and $(U A U^\ast -z)^{-1} = U(A-z)^{-1}U^\ast$ for all $z \in \rho(A)$ (see the proof of \cite[Theorem 1.32]{Arai}). \\
    (v) By functional calculus (see \cite[Example 5.3]{Sch}), $e^{-itf(X)} = e^{-itf}(X)$. These are multiplications by bounded functions and thus commute, yielding the strong commutativity (see \cite[Proposition 1.48(iii)]{Arai}).
\end{proof}

We now recall a result regarding perturbations of tensor product operators.

\begin{proposition}\label{app:op:tensor_prod_relative_boundedness}
    Let $A$ and $B$ be self-adjoint operators such that $B$ is $A$-bounded with relative bound $< 1$. Then $A \otimes I + B \otimes I = (A+B) \otimes I$ (the same holds symmetrically for $I \otimes A$).
\end{proposition}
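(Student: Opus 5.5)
The natural route is a sandwich between the sum form and the operator closure, using the Kato--Rellich theorem to identify domains.

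First, since $B$ is $A$-bounded with relative bound $<1$, Kato--Rellich shows that $A+B$ is self-adjoint on $\D(A)$. We will then show that $B\otimes I$ is $A\otimes I$-bounded with the same constants. For finite sums $\psi=\sum_k f_k\otimes g_k$ in the algebraic tensor product $\D(A)\alten\Hi$, choose the $g_k$ orthonormal. Then
\begin{align*}
\norm{(B\otimes I)\psi}^2=\sum_k\norm{Bf_k}^2\leq\sum_k\bigl(a\norm{Af_k}+b\norm{f_k}\bigr)^2,
\end{align*}
which after Cauchy--Schwarz gives $\norm{(B\otimes I)\psi}\leq a'\norm{(A\otimes I)\psi}+b'\norm{\psi}$ with $a'<1$. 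It is simplest to work with the squared form of the relative bound, $\norm{Bf}^2\leq a^2(1+\epsilon)\norm{Af}^2+C\norm{f}^2$, so the estimate passes through the orthogonal sum directly.

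Next we extend this bound from the algebraic tensor product to all of $\D(A\otimes I)$. The space $\D(A)\alten\Hi$ is a core for $A\otimes I$, so for $\psi\in\D(A\otimes I)$ take $\psi_n\to\psi$ in the graph norm. The bound makes $(B\otimes I)\psi_n$ Cauchy, and closedness of $B\otimes I$ gives $\psi\in\D(B\otimes I)$ with the same inequality. This density passage is the step needing the most care, since it requires $B\otimes I$ to be closed; this holds because $B$ is closed when the statement takes $B$ self-adjoint. Kato--Rellich then shows that $A\otimes I+B\otimes I$ is self-adjoint on $\D(A\otimes I)$.

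Finally, by Proposition~\ref{app:op:standard_properties}(iii), $(A+B)\otimes I$ is self-adjoint, with $\D(A+B)\alten\Hi=\D(A)\alten\Hi$ as a core. On this core the two operators $A\otimes I+B\otimes I$ and $(A+B)\otimes I$ agree. The operator $(A+B)\otimes I$ is the closure of its restriction to this core. Since $A\otimes I+B\otimes I$ is closed and extends that restriction, it extends $(A+B)\otimes I$. Both operators are self-adjoint, and a self-adjoint operator has no proper self-adjoint extension, so they are equal. The symmetric case $I\otimes A$ follows by the same argument with the factors swapped.
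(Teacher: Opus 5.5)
Your proof is correct and follows the paper's argument. You transfer the relative bound to $B\otimes I$ with respect to $A\otimes I$, apply Kato--Rellich to get self-adjointness of $A\otimes I+B\otimes I$, and compare with $(A+B)\otimes I$ on $\D(A)\alten\Hi$; the only differences are that you prove the tensorized relative bound by hand (orthonormal expansion and the squared form of the bound) where the paper cites Arai, and that you conclude via maximality of self-adjoint operators rather than by noting that $\D(A)\alten\Hi$ is a core for both operators.
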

\begin{proof}
    By \cite[Proposition 3.6(i)]{Arai}, $B\otimes I$ is $A\otimes I$-bounded with the same bound. Thus, $A \otimes I+B\otimes I$ is self-adjoint by the Kato-Rellich theorem. Since $\D(A+B) = \D(A)$, both operators agree on $\D(A)\alten\Hi$, which is a core for both $A\otimes I+B\otimes I$ and $(A+B)\otimes I$. Taking the closure on this common core yields the equality.
\end{proof}

\smallbreak

Continuing with the consequences of relative boundedness, the following standard lemma allows us to control operators of the form $(B-z)^{-1}A$:

\begin{proposition}\label{r_B A bounded if A B bounded}
    Let $A, B$ be self-adjoint operators such that $A$ is $B$-bounded. Then, for all $z \in \rho(B)$, the operator $(B-z)^{-1}A$ is bounded on $\D(A)$ and extends to a bounded operator on $\Hi$ with operator norm $\norm{A(B-\cc{z})^{-1}}$.
\end{proposition}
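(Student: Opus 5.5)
The plan is to prove this standard fact through the adjoint and the closed graph theorem. Fix $z\in\rho(B)$. Since $A$ is $B$-bounded, $\D(B)\subset\D(A)$, and the relative bound gives constants $a,b\geq0$ with $\norm{A\phi}\leq a\norm{B\phi}+b\norm{\phi}$ for $\phi\in\D(B)$. Also $(B-\cc z)^{-1}$ maps $\Hi$ onto $\D(B)$. Hence $A(B-\cc z)^{-1}$ is everywhere defined, and
\begin{align*}
\norm{A(B-\cc z)^{-1}\phi}\leq a\norm{B(B-\cc z)^{-1}\phi}+b\norm{(B-\cc z)^{-1}\phi}\leq \left(a\left(1+\abs{z}\norm{(B-\cc z)^{-1}}\right)+b\norm{(B-\cc z)^{-1}}\right)\norm{\phi}.
\end{align*}
So $T:=A(B-\cc z)^{-1}$ is bounded on $\Hi$. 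Alternatively, one can argue that $T$ is closed, as the product of a closed operator and a bounded one, and everywhere defined, so the closed graph theorem applies.

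Next, identify $(B-z)^{-1}A$ on $\D(A)$ with the restriction of $T^\ast$. Let $\psi\in\D(A)$ and $\phi\in\Hi$. Using the self-adjointness of $A$ and $B$, and the fact that $\left((B-z)^{-1}\right)^\ast=(B-\cc z)^{-1}$, we get
\begin{align*}
\ip{\phi}{(B-z)^{-1}A\psi}=\ip{(B-\cc z)^{-1}\phi}{A\psi}=\ip{A(B-\cc z)^{-1}\phi}{\psi}=\ip{T\phi}{\psi}=\ip{\phi}{T^\ast\psi}.
\end{align*}
The second equality is legitimate because $(B-\cc z)^{-1}\phi\in\D(B)\subset\D(A)$ and $\psi\in\D(A)$. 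Therefore $(B-z)^{-1}A\psi=T^\ast\psi$ for all $\psi\in\D(A)$. So $(B-z)^{-1}A$ is bounded on $\D(A)$ with norm at most $\norm{T^\ast}=\norm{T}$.

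Finally, $\D(A)$ is dense since $A$ is self-adjoint. The unique bounded extension is therefore $T^\ast$, and its norm is $\norm{T^\ast}=\norm{A(B-\cc z)^{-1}}$. Taking the supremum over the dense set does not reduce this norm, so the norm of the extension equals this value exactly.

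There is no real obstacle here. The only point requiring care is the domain bookkeeping in the adjoint computation, namely that $(B-\cc z)^{-1}$ lands in $\D(B)\subset\D(A)$ so that self-adjointness of $A$ can be used. Otherwise the argument is routine.
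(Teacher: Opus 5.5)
Your proof is correct and follows essentially the same route as the paper. You show that $T=A(B-\cc{z})^{-1}$ is bounded, then establish $(B-z)^{-1}A\subset T^\ast$ on the dense set $\D(A)$, which yields the extension $T^\ast$ with norm $\norm{T}$. The only difference is that you verify this inclusion by a direct inner-product computation, whereas the paper obtains it by quoting the general adjoint-of-product inclusion $((B-\cc{z})^{-1})^\ast A^\ast\subset (A(B-\cc{z})^{-1})^\ast$.
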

\begin{proof}
    The $B$-boundedness implies that $A(B-\cc{z})^{-1}$ is a bounded operator on $\Hi$ for all $z \in \rho(B)$. We apply \cite[Proposition 2.21(a)]{Amr} to the bounded operator $(B-\cc{z})^{-1}$ whose range is in $\D(B) \subset \D(A)$: taking the adjoints yields that $((B-\cc{z})^{-1})^\ast A^\ast \subset (A(B-\cc{z})^{-1})^\ast$. Therefore, $((B-\cc{z})^{-1})^\ast A^\ast$ is densely defined and has a bounded closed extension, which means its closure is bounded with norm $\norm{(A(B-\cc{z})^{-1})^\ast} = \norm{A(B-\cc{z})^{-1}}$. 
\end{proof}

%==========================================================================
%           Direct integrals and Fibering
%==========================================================================

Beyond standard algebraic manipulations, strong commutativity allows for simultaneous diagonalization. In our context, we frequently rely on direct integral decompositions to fiber the Hamiltonian. 

\begin{proposition}\label{app_op:simultaneous_diagonalize_direct_integral}
    Let $A$ and $B$ be strongly commuting self-adjoint operators. Assume there exists a unitary operator $U$, a separable Hilbert space $\Hi '$, and a measure space $(Q,\mu)$ such that $U$ diagonalizes $A$, i.e. transforms it into a multiplication operator on $L^2(Q,\mu; \Hi ') = \int_Q^\oplus \Hi ' \, d\mu$:
    \begin{align*}
        UAU^\ast = \int_{Q}^{\oplus} \eta \, d\mu(\eta).
    \end{align*}
    Then there exists a measurable map $Q \ni \eta \mapsto B(\eta)$ with values in the self-adjoint operators on $\Hi '$ such that:
    \begin{align*}
        UBU^\ast = \int_{Q}^{\oplus} B(\eta) \, d\mu(\eta).
    \end{align*}
\end{proposition}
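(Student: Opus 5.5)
The plan is to prove this through the spectral theorem in its multiplication-operator form, combined with the commutant characterization of decomposable operators on a constant-fiber direct integral. Conjugating by $U$, we may assume from the start that $A$ is the multiplication operator $M_\eta := \int_Q^\oplus \eta \, d\mu(\eta)$ on $L^2(Q,\mu;\Hi')$, and that $\tilde B := UBU^\ast$ is self-adjoint and strongly commutes with $M_\eta$. Strong commutativity is preserved by unitary conjugation, via Proposition~\ref{app:op:standard_properties}(iv) applied to the spectral measures.

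The first step is to pass to bounded operators. Take the resolvent $R := (\tilde B + i)^{-1}$. Strong commutativity means that $R$ commutes with every spectral projection of $M_\eta$, that is, with every multiplication operator $\ind_\Delta(\eta)$ for $\Delta$ measurable. By linearity and strong limits, $R$ then commutes with all multiplications by bounded measurable scalar functions $f(\eta)$; these are exactly the diagonalizable operators $L^\infty(Q,\mu)\otimes I$.

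Next I would invoke the standard theorem (see, e.g., \cite[Section \cRM{13}.16, Theorem \cRM{13}.84]{RS}, or \cite[Section 3.11]{Arai}). On $\int_Q^\oplus \Hi'\,d\mu$ with $\Hi'$ separable, and $\mu$ $\sigma$-finite, which I would assume as is implicit in the setting, a bounded operator commuting with all diagonalizable operators is decomposable. This yields a measurable, essentially bounded family $\eta\mapsto R(\eta)$ with $R = \int^\oplus R(\eta)\,d\mu$. Applying the same theorem to $(\tilde B - i)^{-1} = R^\ast$ gives $R^\ast = \int^\oplus R(\eta)^\ast\,d\mu$. The resolvent identity, which holds fibrewise for almost every $\eta$, shows that $R(\eta)$ is a.e.\ a resolvent-type operator, i.e.\ it satisfies the algebraic relations that characterize $(C+i)^{-1}$ for a self-adjoint $C$.

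The main obstacle is to recover a self-adjoint $B(\eta)$ from $R(\eta)$ for almost every $\eta$. This requires $R(\eta)$ to be injective with dense range. Here I would follow \cite[Theorem \cRM{13}.85]{RS}: a decomposable operator is injective, resp.\ has dense range, if and only if almost every fiber is. Both properties hold for $R$, since its kernel is trivial and its range $\D(\tilde B)$ is dense, and density of the range passes to the fibers through the adjoint's kernel. One then defines $B(\eta) := R(\eta)^{-1} - i$ on $\Ran R(\eta)$. Its self-adjointness follows from $R(\eta)^\ast = (R(\eta)^{-1}+ \cdot)$-type relations, namely $R(\eta)^\ast = (B(\eta)-i)^{-1}$ together with the normality identity $R(\eta)-R(\eta)^\ast = 2i\,R(\eta)^\ast R(\eta)$, valid a.e.\ because it holds for the direct integrals. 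Measurability of $\eta\mapsto B(\eta)$ is by definition measurability of its resolvent, which we have.

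Finally, $\int^\oplus B(\eta)\,d\mu$ is self-adjoint and its resolvent at $-i$ is $\int^\oplus R(\eta)\,d\mu = R$, by \cite[Theorem \cRM{13}.85]{RS}. Two self-adjoint operators with the same resolvent at one point are equal. This gives $UBU^\ast = \int_Q^\oplus B(\eta)\,d\mu(\eta)$.
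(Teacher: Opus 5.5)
Your argument is correct and is essentially the paper's proof. You pass to $R=(UBU^\ast+i)^{-1}$, show it commutes with all multiplications by bounded functions of $\eta$, invoke Theorem XIII.84 of \cite{RS} to get decomposability, and conclude with Theorem XIII.85(b) of \cite{RS}; your by-hand recovery of self-adjoint fibers $B(\eta)$ from $R(\eta)$ is exactly that theorem's content, so it can simply be cited, but fix one sign slip: the fiberwise identity is $R(\eta)-R(\eta)^\ast=-2i\,R(\eta)^\ast R(\eta)$, not $+2i$ (check with the scalar case $R=(b+i)^{-1}$).
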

\begin{proof}
    We want to use \cite[Theorem \cRM{13}.85(b)]{RS} on $(UBU^\ast + i)^{-1}$. Since $A$ and $B$ strongly commute, so do $UAU^\ast$ and $UBU^\ast$ (by Proposition \ref{app:op:standard_properties}(iv)) and so do $g(UAU^\ast)$ and $UBU^\ast$ for all measurable real-valued $g$ (see \cite[Theorem 1.36]{Arai}). In particular, this holds if $g$ is bounded. \\
    Using the equivalence (ii) $\iff$ (iv) in \cite[Proposition 5.15]{Sch}, we get that $g(UAU^\ast)$ and $(UBU^\ast + i)^{-1}$ commute for all bounded real-valued $g$. Upon writing $g = \Re g + i \Im g$, this extends to complex-valued $g$. Thus, $(UBU^\ast + i)^{-1}$ commutes with all bounded decomposable operators whose fibers are multiples of the identity. By \cite[Theorem \cRM{13}.84]{RS}, $(UBU^\ast + i)^{-1}$ is a bounded decomposable operator, and applying \cite[Theorem \cRM{13}.85(b)]{RS} concludes the proof.
\end{proof}

\noindent The spectrum of such direct integrals is given by the following result (see \cite[Theorem 2.2]{spectrum_direct_integral}):

\begin{proposition}\label{app:op:prop_spectrum_direct_integrals}
    Let $\R^d \ni \eta \mapsto A(\eta)$ be a strongly continuous map from $\R^d$ to the self-adjoint operators on $\Hi$, and let $A = \int_{\R^d}^{\oplus} A(\eta) \,d\eta$ be the associated self-adjoint operator on $\int_{\R^d}^{\oplus} \Hi \, d\eta$. Assume in addition that the operators $A(\eta)$ have a common domain $\D$ independent of $\eta \in \R^d$. Then:
    \[ \sigma(A) = \overline{\bigcup_{\eta \in \R^d} \sigma\hspace{-1pt}\left(A(\eta)\right)}. \]
\end{proposition}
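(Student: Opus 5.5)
The plan is to reduce everything to the standard characterization of the spectrum of a decomposable operator, \cite[Theorem \cRM{13}.85(d)]{RS}. For a measurable family of self-adjoint fibers $A(\eta)$ it reads:
\begin{align*}
    \lambda \in \sigma(A) \iff \forall \varepsilon>0, \ \left|\ens{\eta\in\R^d}{\sigma(A(\eta))\cap\, ]\lambda-\varepsilon,\lambda+\varepsilon[\, \neq \emptyset}\right| > 0,
\end{align*}
where $\abs{\cdot}$ is Lebesgue measure. Before using it, I would upgrade the hypotheses, which I interpret as: $\eta\mapsto A(\eta)\varphi$ is continuous for every $\varphi\in\D$.

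The first step is to show that $\eta\mapsto A(\eta)$ is continuous in the strong resolvent sense. Fix $z\in\C\setminus\R$ and $f\in\Hi$, and use the second resolvent identity
\begin{align*}
    (A(\eta)-z)^{-1}f-(A(\eta_0)-z)^{-1}f = (A(\eta)-z)^{-1}\bigl(A(\eta_0)-A(\eta)\bigr)(A(\eta_0)-z)^{-1}f .
\end{align*}
This is legitimate because $(A(\eta_0)-z)^{-1}f$ lies in the common domain $\D$. Its norm is at most $\abs{\Im z}^{-1}\,\bigl\|\bigl(A(\eta_0)-A(\eta)\bigr)\varphi\bigr\|$ with $\varphi=(A(\eta_0)-z)^{-1}f\in\D$. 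This tends to $0$ as $\eta\to\eta_0$ by the assumed strong continuity. Two consequences follow. First, $\eta\mapsto(A(\eta)+i)^{-1}$ is weakly measurable, so $A=\int^\oplus A(\eta)\,d\eta$ is a well-defined decomposable self-adjoint operator and the above theorem applies. Second, \cite[Theorem \cRM{8}.24(a)]{RS} applies to the fibers.

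For the inclusion $\sigma(A)\subset\overline{\bigcup_\eta\sigma(A(\eta))}$, take $\lambda\in\sigma(A)$ and $\varepsilon>0$. The characterization gives a set of positive measure, in particular a nonempty one, of $\eta$ with $\sigma(A(\eta))$ meeting $]\lambda-\varepsilon,\lambda+\varepsilon[$. Hence $\lambda$ is in the closure of the union.

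The converse inclusion is where the hypotheses are really used, and I expect it to be the main point. A single fiber is a null set, so the set of $\eta$ must be fattened. Since $\sigma(A)$ is closed, it suffices to show $\sigma(A(\eta_0))\subset\sigma(A)$ for each $\eta_0$. Let $\lambda\in\sigma(A(\eta_0))$ and $\varepsilon>0$. Strong resolvent continuity at $\eta_0$ and \cite[Theorem \cRM{8}.24(a)]{RS} (spectra cannot suddenly contract under strong resolvent convergence) give $\delta>0$ such that $\sigma(A(\eta))\cap\, ]\lambda-\varepsilon,\lambda+\varepsilon[\,\neq\emptyset$ for all $\abs{\eta-\eta_0}<\delta$. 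This is argued by contradiction along a sequence $\eta_n\to\eta_0$. The ball $B(\eta_0,\delta)$ has positive measure, so the characterization yields $\lambda\in\sigma(A)$.

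As a backup avoiding \cite[Theorem \cRM{13}.85(d)]{RS} for this direction, I would build a Weyl sequence directly. Choose $\varphi\in\D$ with $\norm{\varphi}=1$ and $\norm{(A(\eta_0)-\lambda)\varphi}<\varepsilon$; this is possible since $\D$ is a core for $A(\eta_0)$. Set $\psi=\abs{B(\eta_0,\delta)}^{-1/2}\ind_{B(\eta_0,\delta)}\otimes\varphi$. Choosing $\delta$ small via strong continuity on $\D$ gives $\norm{(A-\lambda)\psi}\le\sup_{B(\eta_0,\delta)}\norm{(A(\eta)-\lambda)\varphi}<2\varepsilon$, hence $\lambda\in\sigma(A)$.
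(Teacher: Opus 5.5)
Your proof is correct, but it takes a different route from the paper's. The paper gives no argument of its own. It invokes \cite[Theorem 2.2]{spectrum_direct_integral} and only remarks that, for self-adjoint fibers, that theorem's hypotheses (notably its resolvent condition) reduce to strong continuity on a common domain.

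You instead make the proof self-contained from two standard facts the paper already uses elsewhere: the spectral characterization \cite[Theorem \cRM{13}.85(d)]{RS} of a decomposable operator, and \cite[Theorem \cRM{8}.24(a)]{RS}, which says spectra cannot suddenly contract under strong resolvent convergence. The bridge between them is your second-resolvent-identity estimate, and that is exactly where the common domain enters.

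Your route makes the role of each hypothesis transparent. The inclusion $\sigma(A)\subset\overline{\bigcup_{\eta}\sigma(A(\eta))}$ holds for any measurable family, and the reverse inclusion only needs strong resolvent continuity. So the common-domain assumption serves solely to produce that continuity. The cited theorem, on the other hand, covers a setting beyond self-adjoint fibers, which is why the paper has to explain how its resolvent condition specializes here.

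Your Weyl-sequence backup is an even more elementary variant, with two small points. You should state that $\psi\in\D(A)$, which holds because $\eta\mapsto\|A(\eta)\varphi\|$ is continuous and hence bounded on the ball. And no core argument is needed to choose $\varphi$, since $\D$ is the full domain of $A(\eta_0)$.
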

\begin{proof}
    The hypotheses of \cite[Theorem 2.2]{spectrum_direct_integral} reduce to the ones stated here in the self-adjoint case (see in particular their remark below the theorem regarding the resolvent condition).
\end{proof}
We now discuss a useful unitary isomorphism to determine the domain of tensor product operators:

\begin{proposition}\label{app:op:prop_tensor_domain_hilbert_space}
    Let $A$ be an unbounded closed densely defined operator on $\Hi_1$. The identity map on the algebraic tensor product:
    \begin{align*}
        U: \left(\D(A)\alten\Hi_2, \norm{\cdot}_{A\otimes I}\right) &\longrightarrow (\D(A), \norm{\cdot}_A) \otimes \Hi_2 \\
        \psi &\longmapsto \psi
    \end{align*}
    extends to a unitary operator between $\left( \D(A\otimes I), \norm{\cdot}_{A\otimes I} \right)$ and $(\D(A), \norm{\cdot}_A) \otimes \Hi_2$. The proposition holds symmetrically for $I \otimes A$. \\
    Consequently, if $A$ is a self-adjoint operator, one has $\Q(A \otimes I) \cong \left(\D(|A|^{1/2}), \norm{\cdot}_{|A|^{1/2}}\right) \otimes \Hi_2$.
\end{proposition}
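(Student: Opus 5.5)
The plan is to reduce everything to the tensor-product structure of the graph norm and a density argument, treating $A$ as a closed densely defined operator on $\Hi_1$.

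\emph{Isometry on the algebraic tensor product.} The graph space $\mathcal G_A:=(\D(A),\norm{\cdot}_A)$ is a Hilbert space because $A$ is closed. On $\D(A)\alten\Hi_2$, I will compute the graph norm of $A\otimes I$. For $\psi=\sum_i f_i\otimes g_i$, expanding the inner product gives
\begin{align*}
\norm{\psi}^2+\norm{(A\otimes I)\psi}^2=\sum_{i,j}\big(\ip{f_i}{f_j}+\ip{Af_i}{Af_j}\big)\ip{g_i}{g_j}=\sum_{i,j}\ip{f_i}{f_j}_A\ip{g_i}{g_j}.
\end{align*}
The right-hand side is exactly the squared norm of $\psi$ in $\mathcal G_A\otimes\Hi_2$. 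So the identity map $U$ is well defined; in particular it respects the relations defining the algebraic tensor product, since both sides are quotients of the same free space by the same null vectors. It is also isometric.

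\emph{Extension to a unitary.} Recall that $A\otimes I$ is defined as the closure of $A\alten I$ on $\D(A)\alten\Hi_2$. Hence $\D(A\otimes I)$, equipped with its graph norm, is the completion of $\D(A)\alten\Hi_2$ in that norm. The isometry $U$ therefore extends uniquely to an isometry from $(\D(A\otimes I),\norm{\cdot}_{A\otimes I})$ into $\mathcal G_A\otimes\Hi_2$. Its range is closed and contains the algebraic tensor product $\mathcal G_A\alten\Hi_2$, which is dense in $\mathcal G_A\otimes\Hi_2$. The extension is therefore surjective, hence unitary. The symmetric statement for $I\otimes A$ follows by the same computation, or by composing with the flip unitary $\Hi_1\otimes\Hi_2\to\Hi_2\otimes\Hi_1$.

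The one point that needs care is that the extended map is consistent with the inclusion $\D(A\otimes I)\subset\Hi_1\otimes\Hi_2$, so that the identification is genuinely an identification of subspaces. For this I will note that the canonical map $\mathcal G_A\otimes\Hi_2\to\Hi_1\otimes\Hi_2$, induced by the contraction $\mathcal G_A\hookrightarrow\Hi_1$, is injective. Injectivity holds because tensoring an injective bounded map with the identity gives an injective map. One checks this on an orthonormal basis $(e_k)$ of $\Hi_2$: an element of $\mathcal G_A\otimes\Hi_2$ is $\sum_k h_k\otimes e_k$ with $h_k\in\mathcal G_A$, and its image is $\sum_k h_k\otimes e_k$ viewed in $\Hi_1\otimes\Hi_2$, which vanishes only if every $h_k=0$.

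\emph{Consequence for the form domain.} For self-adjoint $A$, the operator $\abs{A}^{1/2}$ is self-adjoint, hence closed. By Proposition~\ref{app:op:standard_properties}(iii), $\abs{A\otimes I}^{1/2}=\abs{A}^{1/2}\otimes I$. Since $\Q(A\otimes I)=\D\big(\abs{A\otimes I}^{1/2}\big)$, applying the first part to $\abs{A}^{1/2}$ yields the stated identification $\Q(A\otimes I)\cong\big(\D(\abs{A}^{1/2}),\norm{\cdot}_{\abs{A}^{1/2}}\big)\otimes\Hi_2$.
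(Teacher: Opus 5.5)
Your proposal is correct and follows essentially the same route as the paper. You both prove equality of the two norms on $\D(A)\alten\Hi_2$: the paper evaluates them on sums $\sum_n f_n\otimes e_n$ with $(e_n)$ orthonormal, while you use the sesquilinear expansion. Both then extend by density, using that $\D(A)\alten\Hi_2$ is a core for $A\otimes I$, obtain surjectivity from a closed range containing a dense subspace, and derive the form-domain statement from $\abs{A\otimes I}^{1/2}=\abs{A}^{1/2}\otimes I$ (Proposition~\ref{app:op:standard_properties}(iii)).

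Your additional injectivity check for the canonical map $\mathcal G_A\otimes\Hi_2\to\Hi_1\otimes\Hi_2$ is not in the paper. It is harmless but automatic: composing this map with the extended unitary gives, by continuity, the inclusion of $\D(A\otimes I)$, which is injective.
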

\begin{proof}
    Since $A$ and $A \otimes I$ are closed, $(\D(A), \norm{\cdot}_A)$ and $(\D(A\otimes I), \norm{\cdot}_{A\otimes I})$ are Hilbert spaces. Moreover, $\D(A) \alten \Hi_2$ is a core for $A \otimes I$. By definition of the Hilbert tensor product and of a core:
    \begin{align*}
        \left(\D(A\otimes I), \norm{\cdot}_{A\otimes I} \right) &= \overline{\D(A)\alten\Hi_2}^{\norm{\cdot}_{A\otimes I}} \\ 
        (\D(A), \norm{\cdot}_A) \otimes \Hi_2 &= \overline{\D(A)\alten\Hi_2}^{\norm{\cdot}_{\D(A)\otimes \Hi_2}}
    \end{align*}
    Both norms agree on $\D(A)\alten\Hi_2$: this is easily seen by evaluating them on $\psi = \sum_{n=1}^N f_n \otimes e_n$, where the $e_n$ form an orthonormal set (see \cite[Lemma 2.1]{Arai}). Thus, $U$ extends by boundedness to an isometry on $\D(A\otimes I)$. Its range is closed and contains the dense subspace $\D(A)\alten\Hi_2$, making it surjective, and hence unitary. The statement for quadratic forms follows from $\Q(A) = \D(|A|^{1/2})$ and $\D(|A \otimes I|^{1/2}) = \D(|A|^{1/2} \otimes I)$ (see Proposition~\ref{app:op:standard_properties}(iii)).
\end{proof}

%==========================================================================
%           About positivity preserving operators
%==========================================================================
\subsection{Positivity preserving operators}\label{app:operator:positivity}

We refer to \cite[Section \cRM{13}.12]{RS} or \cite[Section 5]{thomas_thesis} for background on positivity preserving operators. We state a preliminary result and two useful properties of our model:

\begin{proposition}
Let $(X,\mu)$ be a measure space with $\sigma$-finite measure $\mu$. Let $A$ be a lower semi-bounded self-adjoint operator on $L^2(X,d\mu)$ and $B$ a multiplication operator on $L^2(X,d\mu)$. Denote $B_- = \ind_{]-\infty,0]}(B)B$, $B_+ = \ind_{]0, +\infty[}(B)B$. Assume:
\begin{enumerate}[label=(\roman*)]
\item The operator $A$ generates a positivity-improving semigroup.
\item The form domain $\Q(B_+)$ contains a core for $\q_A$ and $\Q(A)\cap \Q(B_+) \subset \Q(B_-)$.
\item The quadratic form $\q := \q_A+\q_{B_+} + \q_{B_-}$ is closed and lower semi-bounded.
\end{enumerate}
Then the lower semi-bounded operator corresponding to $\q$, denoted $C$, generates a positivity-improving semigroup.
\end{proposition}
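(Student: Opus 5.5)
The statement is a Perron--Frobenius-type perturbation result. I would prove it through the Trotter product formula combined with monotone approximation of the potential, following the standard route of \cite[Theorems \cRM{13}.44--\cRM{13}.45]{RS}. Recall the criterion: a lower semi-bounded $C$ generates a positivity improving semigroup if and only if, for every pair $f,g\geq0$ with $f,g\neq0$, there is some $t>0$ such that $\ip{f}{e^{-tC}g}>0$. This is the weak form of \cite[Theorem \cRM{13}.44]{RS}.

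\emph{Step 1: bounded truncations.} Set $B_n:=\max(\min(B,n),-n)$. Since $B_n$ is bounded, $A+B_n$ is self-adjoint on $\D(A)$. The Trotter formula \cite[Theorem \cRM{8}.31]{RS} gives
\begin{align*}
e^{-t(A+B_n)}=\slim{k\to\infty}\bigl(e^{-tA/k}e^{-tB_n/k}\bigr)^k .
\end{align*}
Each factor $e^{-tB_n/k}$ is multiplication by a function bounded from below by $e^{-tn/k}>0$, so it maps nonzero positive functions to nonzero positive functions. Hence the semigroup is positivity preserving.

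For the improving property, I would use the Dyson/Duhamel expansion
\begin{align*}
e^{-t(A+B_n)}=e^{-tA}-\int_0^t e^{-(t-s)(A+B_n)}B_n e^{-sA}\,ds .
\end{align*}
Alternatively, and more cleanly, I would write $A+B_n=(A-n)+(B_n+n)$ with $B_n+n\geq0$. Then
\begin{align*}
e^{-t(A+B_n)}\geq e^{-tn}\,e^{-tA}\,e^{-t\|B_n+n\|}
\end{align*}
pointwise on positive functions. This follows from the Trotter formula together with the bound $e^{-s(B_n+n)}\geq e^{-2ns}$. As a consequence, $e^{-t(A+B_n)}$ is positivity improving.

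\emph{Step 2: removing the cutoffs.} First let the positive part grow: put $\q_{n,\ell}:=\q_A+\q_{\min(B_+,n)}+\q_{\max(B_-,-\ell)}$. For $\ell$ fixed and $n\uparrow\infty$, the forms increase monotonically. By the monotone convergence theorem for forms (Kato/Simon, \cite[Theorem S.14]{RS}), they converge in strong resolvent sense to the operator of the closure of $\q_A+\q_{B_+}+\q_{\max(B_-,-\ell)}$ on $\Q(A)\cap\Q(B_+)$. Assumption (ii), that $\Q(B_+)$ contains a core for $\q_A$, is used here to guarantee that this limit form is densely defined and equals the expected sum rather than something on a smaller space. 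Next let the negative part go: for $\ell\uparrow\infty$, the forms decrease. Assumption (iii), that $\q$ is closed and bounded below, together with the inclusion $\Q(A)\cap\Q(B_+)\subset\Q(B_-)$, is what makes the decreasing limit coincide with $\q$. The monotone convergence theorem for decreasing forms then gives strong resolvent convergence to $C$. Strong resolvent convergence with a uniform lower bound yields $e^{-tC_{n}}\to e^{-tC}$ strongly, and positivity preservation passes to the limit.

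\emph{Step 3: improving in the limit.} This is the main obstacle, because strict positivity is not preserved under limits. I would handle it in the manner of \cite[Theorem \cRM{13}.45]{RS}. Fix $f,g\geq0$, both nonzero, and write $D_n$ for the approximating operators. Suppose $\ip{f}{e^{-tC}g}=0$ for all $t>0$. Then $\ip{f}{(C+\lambda)^{-1}g}=0$ for large $\lambda$. At the first (increasing) stage, the Dyson expansion shows that $e^{-tD_n}$ is monotonically decreasing in $n$. Hence $e^{-tC}\geq$ the positive-part limit is bounded below by a quantity on which improvingness survives: for $\min(B_+,n)$, a second-resolvent-type argument gives $(D+\lambda)^{-1}\geq(A'+\lambda)^{-1}-\dots$. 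Concretely, I would show that the ergodicity criterion of \cite[Theorem \cRM{13}.44]{RS} holds: $C$ generates an improving semigroup if and only if the only bounded multiplication operators commuting with $e^{-tC}$ are constants. Commutation passes to the limit, and a multiplication operator commuting with $e^{-tC}$ commutes with the form $\q$. Because $\Q(B_+)$ contains a core for $\q_A$, it then commutes with $e^{-tA}$, and so it is constant by (i). The technical heart is to justify that commuting with $e^{-tC}$ implies commuting with $e^{-tA}$ via the form core of assumption (ii); that is where I expect the main difficulty.
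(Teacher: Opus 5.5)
Your proposal is correct in substance, but it obtains positivity improvement by a different mechanism than the paper. The paper gives no argument of its own. It invokes Dam's Theorem 3.4 and only adjusts its second step, in which $B_-$ is reached through bounded approximants $B_n=\ind_{[-n,0]}(B)B$ decreasing to $B_-$. The forms then decrease to $\q$, which is closed by (iii), so the limit operator is $C$.

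In a decreasing scheme of this kind, lowering the potential raises the semigroup in the positivity order. Strict positivity can therefore be read off from $e^{-tC}f\geq e^{-tC_n}f$, where $C_n$ is a bounded perturbation of an improving generator. Monotone limits cannot carry strict positivity in the increasing direction, so in such a scheme $B_+$ must be handled by a separate argument.

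In your route, the truncations of Steps 1--2 serve only to show that $e^{-tC}$ is positivity preserving, a property that survives strong limits. Strict positivity is then obtained directly for $C$ from irreducibility, with $B_+$ and $B_-$ treated on the same footing. The improving half of Step 1 and the Dyson/second-resolvent attempt at the start of Step 3 become superfluous.

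The price is that you need the ergodicity criterion in its general form. Reed--Simon XIII.44 assumes that $\inf\sigma$ is an eigenvalue, which you do not know for $C$. What you actually need is the Faris--Simon result: a self-adjoint positivity preserving semigroup commuting with no nontrivial $\ind_S$ is positivity improving for every $t>0$. It follows from the real-analyticity of $t\mapsto\langle\phi,e^{-tC}f\rangle$. This forces the support of $e^{-tC}f$, for $f\geq0$, to be independent of $t$, so that support gives an invariant subspace $L^2(S)$. Note that positivity preservation, which your Steps 1--2 supply, is indispensable for this direction.

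The step you flag as the main difficulty is in fact short, and no limiting procedure is involved. Let $P=\ind_S$ commute with $e^{-tC}$ for all $t$.
\begin{itemize}
\item $P$ reduces $C$, so it leaves $\Q(C)=\Q(A)\cap\Q(B_+)$ invariant, and $\q[\psi]=\q[P\psi]+\q[(1-P)\psi]$ there.
\item The $\q_{B_\pm}$ terms split in the same way trivially and are finite by (ii). Hence $\q_A$ splits on $\Q(A)\cap\Q(B_+)$, which is a form core for $\q_A$ by (ii).
\item Fix $c\geq-\inf\sigma(A)$. For $\psi\in\Q(A)$, pick $\psi_n$ in this core converging to $\psi$ in form norm. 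The splitting gives $(\q_A+c)[P(\psi_n-\psi_m)]\leq(\q_A+c)[\psi_n-\psi_m]$.
\item Closedness of $\q_A$ then yields $P\psi\in\Q(A)$ and the splitting on all of $\Q(A)$. Hence $P$ reduces $A$ and commutes with $e^{-tA}$.
\item Choose $w>0$ a.e. in $L^2$, which is possible by $\sigma$-finiteness. Then $e^{-tA}(\ind_S w)$ vanishes off $S$, so (i) forces $\mu(S)=0$ or $\mu(S^c)=0$.
\end{itemize}
This gives the irreducibility you need, and the Faris--Simon result concludes.
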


\begin{proof} 
This is \cite[Theorem 3.4]{thomas_thesis}. More precisely, in the second step of the proof, with our convention $B_- \leq 0$, the bounded approximants have to be taken as $ B_n := \mathbf{1}_{[-n,+\infty[}(B_-)B_- = \mathbf{1}_{[-n,0]}(B)B$. Then $B_n$ is a bounded multiplication operator, $B_n \to B_-$ in the monotone sense required in the proof, and the rest of Dam's argument applies unchanged. 
\end{proof}

\begin{corollary}\label{app:vector_valued:sum_pos_improving}
    Let $(X,\mu)$ be a measure space with $\sigma$-finite measure $\mu$.
    Let $A$ be a lower semi-bounded self-adjoint operator on $L^2(X,d\mu)$
    generating a positivity-improving semigroup and $B$ be a self-adjoint
    multiplication operator. Assume that $\q_B$ is $\q_A$-bounded with relative bound strictly smaller than $1$. Then the self-adjoint operator associated with $\q := \q_A+\q_B$ generates a positivity-improving semigroup.
\end{corollary}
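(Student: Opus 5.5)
We reduce the corollary to the preceding proposition (Dam's theorem), with the splitting $B_\pm$ of the multiplication operator $B$, and verify its three hypotheses (i)--(iii) in turn.

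Hypothesis (i) is the assumption that $A$ generates a positivity-improving semigroup. For (iii), the plan is to use the KLMN theorem. Since $\q_B$ is $\q_A$-bounded with relative bound $<1$, the form $\q_A+\q_B$ on $\Q(A)$ (which is contained in $\Q(B)$ by assumption) is closed and lower semi-bounded. It then remains to identify $\q_B$ with $\q_{B_+}+\q_{B_-}$ on $\Q(A)$. Because $B$ is a multiplication operator by a real function $b$, one has $\Q(B)=\Q(|B|)=\Q(B_+)\cap\Q(B_-)$, and
\begin{align*}
\q_B[\psi]=\int b_+\abs{\psi}^2\,d\mu-\int b_-\abs{\psi}^2\,d\mu,
\end{align*}
where $b_\pm\geq0$ denote the positive and negative parts of $b$ (with the paper's sign convention $B_-=-b_-\leq0$). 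So $\q=\q_A+\q_{B_+}+\q_{B_-}$ coincides with the KLMN form, and its associated operator $C$ is the form sum $A\fsum B$ named in the statement.

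For (ii), the inclusion $\Q(A)\cap\Q(B_+)\subset\Q(B_-)$ is immediate from $\Q(A)\subset\Q(B)\subset\Q(B_-)$. The other requirement, that $\Q(B_+)$ contains a core for $\q_A$, takes more care, and I expect it to be the main obstacle. Since $\Q(A)\subset\Q(B)\subset\Q(B_+)$, the whole form domain $\Q(A)$ lies inside $\Q(B_+)$. The form domain is trivially a core for its own closed form, so $\Q(A)$ itself is the required core. The only subtlety is whether Dam's theorem requires a core in the stronger sense, i.e.\ $\Q(B_+)\cap\Q(A)$ dense in $\Q(A)$ for the form norm of $\q_A$. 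This holds here because the intersection equals $\Q(A)$.

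A second point to check is that the relative-bound hypothesis is what controls $\q_{B_-}$ in the monotone approximation $B_n=\ind_{[-n,0]}(B)B$ in Dam's proof. Given the KLMN closedness established above, the monotone convergence theorem for forms applies unchanged. Applying the proposition then gives that $C$ generates a positivity-improving semigroup.
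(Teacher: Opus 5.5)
Your proposal is correct and follows the paper's proof: KLMN gives that $\q$ is closed and lower semi-bounded on $\Q(A)$, and the inclusion $\Q(A)\subset\Q(B)=\Q(B_+)\cap\Q(B_-)$ gives both parts of hypothesis (ii), with $\Q(A)$ itself serving as the form core. Your closing remark about the monotone approximation $B_n$ is not needed, because that step belongs to the proof of the preceding proposition rather than to the corollary.
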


\begin{proof}
    By the KLMN theorem, $\q=\q_A+\q_B$ is closed and lower semi-bounded on
    $Q(A)$. With the notation of the previous proposition, the assumption that $\q_B$ is $\q_A$-bounded implies that $\Q(A)\subset \Q(B_+)\cap \Q(B_-)$. Hence $\Q(B_+)$ contains the form core $\Q(A)$ for $q_A$, and
    $\Q(A)\cap \Q(B_+)\subset \Q(B_-)$. Therefore the previous proposition applies and yields the result.
\end{proof}

\begin{proposition}\label{app:vector_valued:tensor_sum_pos_improving}
Let $(X, \mu)$ and $(Y, \nu)$ be $\sigma$-finite measure spaces. Let $A$ and $B$ be lower semi-bounded self-adjoint operators on $L^2(X, \mu)$ and $L^2(Y, \nu)$, respectively. Assume that $(e^{-tA})_{t>0}$ and $(e^{-tB})_{t>0}$ are positivity improving. Then, the semigroup generated by $H = \overline{A \otimes I + I \otimes B}$ is positivity improving on $L^2(X \times Y, \mu \otimes \nu)$.
\end{proposition}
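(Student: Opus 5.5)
The statement is Proposition~\ref{app:vector_valued:tensor_sum_pos_improving}. My plan is to reduce it to the factorization of the semigroup and a product argument for positivity improvement.

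The first step is to identify the semigroup. Since $A\otimes I$ and $I\otimes B$ strongly commute and are lower semi-bounded, the closure $H$ of their sum is self-adjoint and lower semi-bounded. I will cite Proposition~\ref{app:op:standard_properties}(iii) together with \cite[Corollary 1.7(ii)]{Arai}, exactly as in the proof of Lemma~\ref{sec:prelim:hfree_self_adjoint}. The joint functional calculus applied to the function $(\lambda_1,\lambda_2)\mapsto e^{-t(\lambda_1+\lambda_2)}$ then gives
\begin{align*}
e^{-tH} = \bigl(e^{-tA}\otimes I\bigr)\bigl(I\otimes e^{-tB}\bigr) = e^{-tA}\otimes e^{-tB}, \qquad t>0,
\end{align*}
under the identification $L^2(X\times Y,\mu\otimes\nu)\cong L^2(X,\mu)\otimes L^2(Y,\nu)$.

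The second step is to show that $T\otimes S$ is positivity improving whenever $T$ and $S$ are. I will use the characterization that a bounded operator $K$ is positivity improving if and only if $\langle f, K g\rangle>0$ for all nonzero $f,g\geq 0$.
\begin{itemize}
\item I first write $T\otimes S=(T\otimes I)(I\otimes S)$. It then suffices to show that $I\otimes S$ preserves positivity and that $T\otimes I$ maps each nonzero $h\ge 0$ to a function that is strictly positive almost everywhere, provided $h$ is already strictly positive on enough of the space.
\item To make this precise, I take nonzero $f,g\ge0$ in $L^2(X\times Y)$ and apply Fubini. By $\sigma$-finiteness, the function $h:=(I\otimes S)g$ satisfies, for $\mu$-a.e.\ $x$, $h(x,\cdot)=S\,g(x,\cdot)$.
\item Let $E:=\{x : g(x,\cdot)\neq 0\}$, which has positive $\mu$-measure. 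For $x\in E$, $h(x,\cdot)>0$ holds $\nu$-a.e.
\item Next, for $\nu$-a.e.\ $y$, the function $h(\cdot,y)\ge 0$ is positive on a set containing $E$ up to a null set, so it is nonzero. Positivity improvement of $T$ then gives $(Th(\cdot,y))>0$ $\mu$-a.e.
\item Hence $(T\otimes S)g>0$ almost everywhere on $X\times Y$, and pairing with $f$ yields $\langle f,(T\otimes S)g\rangle>0$.
\end{itemize}
Applied with $T=e^{-tA}$ and $S=e^{-tB}$, this gives the conclusion.

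I expect the main obstacle to be the measure-theoretic justification in the second step. The points to check are that the slice identity $((I\otimes S)g)(x,\cdot)=S\,g(x,\cdot)$ holds almost everywhere, and that the slices lie in $L^2$, which follows from Fubini and Tonelli for $g\in L^2$. The cleanest route is to verify these identities on the algebraic tensor product and extend them by density and continuity, passing to almost everywhere convergent subsequences. I would also confirm that the preliminary result and Corollary~\ref{app:vector_valued:sum_pos_improving} are not needed here, so the proof relies only on the product structure.
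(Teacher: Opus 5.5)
Your proposal is correct and follows essentially the same route as the paper. Both arguments factorize $e^{-tH}=(e^{-tA}\otimes I)(I\otimes e^{-tB})$; the paper cites Theorem~3.15 of Arai's book for this, while you use the joint functional calculus. Both then apply $I\otimes e^{-tB}$ slicewise in $x$ to obtain strict positivity on $E\times Y$ with $\mu(E)>0$, and finally apply $e^{-tA}\otimes I$ slicewise in $y$. The pairing characterization with $f$ is superfluous, since you already prove $(T\otimes S)g>0$ almost everywhere, which is the definition of positivity improving.
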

\begin{proof}
    By \cite[Theorem 3.15]{Arai}, we can factorize the semigroup: $e^{-tH} = (e^{-tA} \otimes I)(I \otimes e^{-tB})$.
    Let $f \in L^2(X \times Y)$ be such that $f \ge 0$ a.e. and $f \neq 0$. Let $g = (I \otimes e^{-tB})f$. Since $f \neq 0$, there exists a subset $X_0 \subset X$ with $\mu(X_0)>0$ such that, for a.e. $x \in X_0$, the slice $f(x,\cdot)$ is a non-zero non-negative function in $L^2(Y)$. Since $e^{-tB}$ is positivity improving, $g(x,\cdot)=e^{-tB}f(x,\cdot)>0$ $\nu$-a.e. for a.e. $x \in X_0$. Hence, by Fubini's theorem, $g>0$ $\mu\otimes\nu$-a.e. on $X_0\times Y$. In particular, $g\ge 0$ a.e. and $g\neq 0$. \\
    Now let $h=(e^{-tA}\otimes I)g=e^{-tH}f$. Since $g>0$ $\mu\otimes\nu$-a.e. on $X_0\times Y$, the slice $g(\cdot,y)$ is a non-zero non-negative function in $L^2(X)$ for $\nu$-a.e. $y\in Y$. Since $e^{-tA}$ is positivity improving, $h(\cdot,y)=e^{-tA}g(\cdot,y)>0$ $\mu$-a.e. for $\nu$-a.e. $y\in Y$. Thus $e^{-tH}f>0$ $\mu\otimes\nu$-a.e.
\end{proof}

%%%%%%%%%%%%%%%%%%%%%%%%%%

%À FAIRE DANS CETTE SECTION

%%%%%%%%%%%%%%%%%%%%%
 
% --- Acknowledgments ---
\section*{Acknowledgments}
The authors wish to thank Jean-Marie Barbaroux and Philippe Briet, for their guidance and support. They are also grateful to Jérémy Faupin for a fruitful discussion on the binding condition.

%=======================================================================
% --- INDEX AND BIBLIOGRAPHY ---
%=======================================================================

\newpage
\newpage
\enlargethispage{3cm}
\section*{Index of notations}
\label{sec:notations}

\footnotesize
\setlength{\parindent}{0pt}
\newcommand{\notationpage}[1]{\hyperlink{page.#1}{#1}}
\newcommand{\notationgroup}[1]{\par\vspace{0.12cm}\noindent\textbf{#1}\par\vspace{0.04cm}}
\newenvironment{notationtable}[1][1.0]
{\renewcommand{\arraystretch}{#1}\begin{tabular}{@{}>{\raggedright\arraybackslash}p{0.27\linewidth}@{\hspace{0.025\linewidth}}>{\raggedright\arraybackslash}p{0.585\linewidth}@{\hspace{0.015\linewidth}}>{\raggedleft\arraybackslash}p{0.09\linewidth}@{}}}
{\end{tabular}}

The last column gives the page of first introduction; page numbers follow the order of the listed symbols.

\begin{minipage}[t]{0.505\textwidth}
\vspace{0pt}
\setlength{\extrarowheight}{3.4pt}

\noindent\textbf{General Conventions}
\par\vspace{0.02cm}
\begin{notationtable}
    $x=(\xu,\xd)$ & Particle position. & \notationpage{3} \\
    $k=(k_1,k_2)$ & Field momentum. & \notationpage{6} \\
    $L^2(\Rx^2)$, $L^2(\Rk^2)$, $L^2(\Ry^2)$ & Particle configuration, field momentum and field configuration spaces. & \notationpage{6} \\
    $f(X)$, $f(S,U)$, $f(K)$ & Associated multiplication operators in Cartesian, curvilinear and momentum coordinates. & \notationpage{4}, \notationpage{6} \\
    $f(\cdot)$ & Function with unspecified argument. & \notationpage{6} \\
    $I$ & Identity operator on the relevant Hilbert space. & \notationpage{5} \\
    $\D(A)$ & Domain of the operator $A$. & \notationpage{8} \\
    $\overline A$ & Closure of the closable operator $A$. & \notationpage{6} \\
    $\Ran(A)$ & Range of the operator $A$. & \notationpage{29} \\
    $\rho(A)$ & Resolvent set of the operator $A$. & \notationpage{29} \\
    $\ind_{\Delta}(A)$ & Spectral projection of $A$ associated with the Borel set $\Delta$. & \notationpage{31} \\
    $\D(\q)$; $\Q(A)$, $\q_A$ & Form domain; for self-adjoint $A$, its form domain and associated form. & \notationpage{9} \\
    $\otimes$, $\alten$, $\otimes_s$ & Hilbert, algebraic and symmetric tensor products. & \notationpage{5}, \notationpage{9}, \notationpage{6} \\
    $\int^\oplus$, $\oplus$ & Direct integral and direct sum. & \notationpage{7}, \notationpage{6} \\
\end{notationtable}

\notationgroup{Fibers and Extended Spaces}
\begin{notationtable}[1.0]
    $p_1$, $P_1$, $P_1^{\tot}$ & Particle, field and total longitudinal momenta. & \notationpage{18} \\
    $\eta$, $U_\fib$ & Fiber momentum and fibering operator. & \notationpage{17}, \notationpage{19} \\
    $\Hi^\ext$, $\Ha_S^\ext(\eta)$ & Extended Hilbert space and extended fiber Hamiltonian. & \notationpage{28} \\
    $j_R$, $\Gc_R$ & Fock-space localization tools. & \notationpage{27} \\
    $\dG^\ext(b)$ & Extended differential second quantization. & \notationpage{28} \\
    $P_1^\ext$, $\Nb^\ext$ & Extended longitudinal-momentum and number operators. & \notationpage{28}, \notationpage{29} \\
\end{notationtable}

\notationgroup{Spectral Quantities and Binding}
\begin{notationtable}[1.0]
    $\sigma$, $\sess$, $\sdisc$ & Spectrum, essential spectrum and discrete spectrum. & \notationpage{7}, \notationpage{7}, \notationpage{10} \\
    $\D_R$ & Functions supported outside $B(0,R)$. & \notationpage{11} \\
    $\Sigma_R$, $\Sigma$ & Localized Persson quantity and ionization threshold. & \notationpage{11}, \notationpage{16} \\
    $\Sigma_S$, $\Sigma_C$ & Straight and curved ionization thresholds. & \notationpage{21} \\
    $E(\eta)$, $\eta_0$ & Fiber energy and one of its minimizers. & \notationpage{20}, \notationpage{26} \\
    $\Sigma(\eta)$ & Fiber ionization threshold. & \notationpage{26} \\
    $E_0^{(1)}(\eta)$ & Lowest one-free-boson fiber threshold. & \notationpage{27} \\
\end{notationtable}

\end{minipage}%
\hfill
\hspace*{-3.0826pt}%
\begin{minipage}[t]{0.465\textwidth}
\vspace{0pt}
\setlength{\extrarowheight}{3.4pt}

\noindent\textbf{Waveguide Geometry}
\par\vspace{0.02cm}
\begin{notationtable}
    $\gamma$, $\kappa$ & Reference curve and its signed curvature. & \notationpage{3} \\
    $(s,u)$, $\varphi$ & Curvilinear coordinates and associated diffeomorphism. & \notationpage{3} \\
    $\mathcal O_\gamma$ & Maximal domain of the coordinate diffeomorphism. & \notationpage{4} \\
    $a$, $\Omega^a$ & Half-width and associated strip. & \notationpage{3} \\
    $\Cut(\gamma)$, $c_\pm(s)$ & Cut locus and cut radii. & \notationpage{3}, \notationpage{4} \\
    $B(0,R)$; $\Omega_{R,\pm}$ & Ball centered at $0$ with radius $R$; strips along the two straight ends. & \notationpage{11}, \notationpage{13} \\
    $\widetilde j_R$, $\widetilde j_{R,\pm}$ & Configuration-space localization functions. & \notationpage{13} \\
\end{notationtable}

\notationgroup{Waveguide Potentials}
\begin{notationtable}[1.0]
    $V$ & Transverse potential well. & \notationpage{4} \\
    $V_S$, $V_C$ & Straight and curved waveguide potentials. & \notationpage{4} \\
    $V_\bullet$ & Either $V_S$ or $V_C$. & \notationpage{4} \\
    $V_{\eff,\Psi}$ & Effective potential in the binding condition. & \notationpage{26} \\
\end{notationtable}

\notationgroup{Fock Space and Field}
\begin{notationtable}
    $\Fs(\Hi)$, $\Omega$ & Symmetric Fock space and vacuum vector. & \notationpage{6}, \notationpage{10} \\
    $a(f)$, $\ac(f)$, $\phi(f)$ & Annihilation, creation and Segal field operators. & \notationpage{6} \\
    $\Gamma(q)$, $\dG(b)$ & Second quantization and its differential version. & \notationpage{8}, \notationpage{6} \\
    $\Nb=\dG(I)$ & Boson number operator. & \notationpage{28} \\
    $\Fs^\infty$ & $\Fsf(\ciz(\Rk^2))$, the finite-particle momentum-space core. & \notationpage{28} \\
    $\omega(K)$, $m$ & Massive dispersion relation and boson mass. & \notationpage{6} \\
    $v_x$, $v$ & Interaction kernel and form factor. & \notationpage{6} \\
\end{notationtable}

\notationgroup{Hamiltonians}
\begin{notationtable}[1.0]
    $\ha_{\el,S}$, $\ha_{\el,C}$ & Straight and curved electronic Hamiltonians. & \notationpage{5} \\
    $\Ha_{\el}$, $\Ha_f$ & Electronic and free-field Hamiltonians. & \notationpage{7} \\
    $\Ha_I$, $\Ha_\free$; $g$ & Interaction and free Hamiltonians; coupling constant. & \notationpage{7} \\
    $\Ha$; $\Ha_C$, $\Ha_S$ & Full Hamiltonian and its curved and straight versions. & \notationpage{7}, \notationpage{3}, \notationpage{16} \\
    $\ha_{\el,0}$, $\Ha_{\el,0}$, $\Ha_{I,0}$ & Transverse particle Hamiltonian, its lift to the fiber space, and fiber interaction term. & \notationpage{19} \\
    $\dG_\eta$, $\Ha_S(\eta)$ & Free-field plus longitudinal fiber kinetic term, and fiber Hamiltonian. & \notationpage{19} \\
\end{notationtable}

\end{minipage}%
\hspace*{3.0826pt}
\normalsize
\par
\newpage
\bibliographystyle{amsalpha-doi} 
\providecommand{\bysame}{\leavevmode\hbox to3em{\hrulefill}\thinspace}
\providecommand{\MR}{\relax\ifhmode\unskip\space\fi MR }
% \MRhref is called by the amsart/book/proc definition of \MR.
\providecommand{\MRhref}[2]{%
  \href{http://www.ams.org/mathscinet-getitem?mr=#1}{#2}
}
\providecommand{\href}[2]{#2}

\end{document}